\colorlet{sdpdualvar}{gray}
\def\big{\bBigg@{1.2}}
\def\Big{\bBigg@{1.5}}
\def\bigg{\bBigg@{2}}
\def\Bigg{\bBigg@{2.5}}
\def\biggg{\bBigg@{3}}
\def\Biggg{\bBigg@{3.5}}
\preto\maketitle{%
  \begingroup\lccode`~=`,
  \lowercase{\endgroup
  \let\saved@breqn@active@comma~%
  \let~}\active@comma %
}
\appto\maketitle{%
  \begingroup\lccode`~=`,
  \lowercase{\endgroup
  \let~}\saved@breqn@active@comma %
}
\write\@auxout{\catcode`_=12 }%
\write\@auxout{\catcode`^=12 }%
\appto\phfthm@hookproof@proof@startafterdisplay{%
  \parskip=0.6em plus 0.1em minus 0.1em%
  \parindent=0pt%
  \allowdisplaybreaks%
}
\let\phf@old@leq\leq
\let\phf@old@geq\geq
\let\leq\leqslant
\let\geq\geqslant
  \let\leq\phf@old@leq
  \let\geq\phf@old@geq
\def\ee#1#2{%
  \edef\@tmpa{\detokenize{#1}}%
  \edef\@tmpb{\detokenize{^}}%
  \ifx\@tmpa\@tmpb%
    e^{#2}%
  \else%
    \PackageError{no-package}{SYNTAX ERROR:
      \string\ee expects \string\ee\string^{...}}{help here...}%
  \fi%
}%
\DeclareMathOperator\erfc{erfc}
\crefname{theorem}{Theorem}{Theorems}
\crefname{proposition}{Proposition}{Propositions}
\crefname{lemma}{Lemma}{Lemmas}
\crefname{corollary}{Corollary}{Corollaries}
\crefname{equation}{Eq.}{Eqs.}
\Crefname{equation}{Equation}{Equations}
\crefname{figure}{Fig.}{Figs.}
\crefname{section}{}{}
\Crefname{section}{Section}{Sections}
\crefname{subsection}{}{}
\Crefname{subsection}{Section}{Sections}
\crefname{subsubsection}{}{}
\Crefname{subsubsection}{Section}{Sections}
\renewcommand*{\eqref}[1]{%
  \hyperref[{#1}]{\textup{\tagform@{\ref*{#1}}}}%
}
\def\phfcrefnamelbl{\@ifstar{\@phfcrefnamelbl{cref}{@plural}}{\@phfcrefnamelbl{cref}{}}}
\def\phfCrefnamelbl{\@ifstar{\@phfcrefnamelbl{Cref}{@plural}}{\@phfcrefnamelbl{Cref}{}}}
\def\@phfcrefnamelbl#1#2#3{%
  \ifcsname #1@#3@name#2\endcsname
    \csname #1@#3@name#2\endcsname
  \else
    \PackageError{preamblesetup.tex}{Error, invalid \string\cref-label type: #3}{}%
  \fi
}
\def\appendixtableofcontents{%
  \vspace{2em}%
  \par\noindent{\itshape List of Appendices\par}%
  \vspace*{0ex}%
  \@starttoc{appendixtoc}%
}
\def\l@f@section{%
 \addpenalty{\@secpenalty}%
 \addvspace{0.4em plus\p@}%
}
\newcounter{appendix}
\def\theappendix{\Alph{appendix}}
\newcommand\appendixsection[2][]{%
  \@startsection
    {appendix}%
    {1}%
    {\z@}%
    {1.2cm \@plus1ex \@minus .2ex}%
    {.6cm}%
    {%
      \normalfont%
      \bfseries
      \centering
    }[#1]{#2}%
    \addcontentsline{appendixtoc}{section}{Appendix~\theappendix: #2}%
  }
\def\appx@phantomsection{%
  \Hy@MakeCurrentHrefAuto{appendix*}%
  \Hy@raisedlink{\hyper@anchorstart{\@currentHref}\hyper@anchorend}%
}
\apptocmd\appendix{%
  \let\phantomsection\appx@phantomsection%
  \let\section\appendixsection%
  \numberwithin{subsection}{appendix}
}{}{PATCH FAILURE}
\def\appendixparagraph{\@ifstar\appendixparagraph@\appendixparagraph@}
\def\appendixparagraph@{%
  \@startsection{paragraph}{4}{\z@}%
  {3.25ex \@plus1ex \@minus.2ex}%
  {-1em}%
  {\normalfont\normalsize\bfseries}*}
\apptocmd\appendix{%
  \let\paragraph\appendixparagraph}{}{PATCH FAILURE}
\apptocmd\appendix{%
  \numberwithin{equation}{appendix}%
}{}{PATCH FAILURE}
\newcommand{\CC}{\mathbb C}
\newcommand{\RR}{\mathbb R}
\newcommand{\ZZ}{\mathbb Z}
\newcommand{\EE}{\operatorname{\mathbb E}}
\newcommand{\Hil}{\Hs}%
\newcommand{\ot}{\otimes}
\global\long\def\xr{\overleftarrow{X}}
\global\long\def\xl{\overrightarrow{X}}
\begin{document}

\title{Continuous symmetries and approximate quantum error correction}

\date{February 19, 2019}

\author{Philippe Faist}
\thanks{These authors contributed equally to this work.}
\affiliation{Institute for Quantum Information and Matter, Caltech, Pasadena, CA, USA}

\author{Sepehr Nezami}
\thanks{These authors contributed equally to this work.}
\affiliation{Stanford Institute for Theoretical Physics, Stanford University, Stanford, CA, USA}

\author{Victor V. Albert}
\affiliation{Institute for Quantum Information and Matter, Caltech, Pasadena, CA, USA}
\affiliation{Walter Burke Institute for Theoretical Physics, Caltech, Pasadena, CA, USA}

\author{Grant~Salton}
\affiliation{Institute for Quantum Information and Matter, Caltech, Pasadena, CA, USA}
\affiliation{Stanford Institute for Theoretical Physics, Stanford University, Stanford, CA, USA}

\author{Fernando Pastawski}
\affiliation{Dahlem Center for Complex Quantum Systems, Freie Universit{\"a}t Berlin, Berlin, Germany}

\author{Patrick Hayden}
\affiliation{Stanford Institute for Theoretical Physics, Stanford University, Stanford, CA, USA}

\author{John Preskill}
\affiliation{Institute for Quantum Information and Matter, Caltech, Pasadena, CA, USA}
\affiliation{Walter Burke Institute for Theoretical Physics, Caltech, Pasadena, CA, USA}

\begin{abstract}
Quantum error correction and symmetry arise in many areas of physics, including many-body systems, metrology in the presence of noise, fault-tolerant computation, and holographic quantum gravity. Here we study the compatibility of these two important principles. If a logical quantum system is encoded into $n$ physical subsystems, we say that the code is covariant with respect to a symmetry group $G$ if a $G$ transformation on the logical system can be realized by performing transformations on the individual subsystems. For a $G$-covariant code with $G$ a continuous group, we derive a lower bound on the error correction infidelity following erasure of a subsystem. This bound approaches zero when the number of subsystems $n$ or the dimension $d$ of each subsystem is large. We exhibit codes achieving approximately the same scaling of infidelity with $n$ or $d$ as the lower bound. Leveraging tools from representation theory, we prove an approximate version of the Eastin-Knill theorem: If a code admits a universal set of transversal gates and corrects erasure with fixed accuracy, then, for each logical qubit, we need a number of physical qubits per subsystem that is inversely proportional to the error parameter. We construct codes covariant with respect to the full logical unitary group, achieving good accuracy for large $d$ (using random codes) or $n$ (using codes based on $W$-states). We systematically construct codes covariant with respect to general groups, obtaining natural generalizations of qubit codes to, for instance, oscillators and rotors. In the context of the AdS/CFT correspondence, our approach provides insight into how time evolution in the bulk corresponds to time evolution on the boundary without violating the Eastin-Knill theorem, and our five-rotor code can be stacked to form a covariant holographic code.
\end{abstract}

\maketitle

\section{Introduction}
\label{sec:introduction}
Quantum error-correcting codes protect fragile quantum states against
noise~\cite{BookNielsenChuang2000}.  
If quantum information is cleverly encoded in a highly entangled state of many physical subsystems, then damage inflicted by local interactions with the environment can be reversed by a suitable recovery operation.
Aside from their applications to resilient quantum computing, quantum
error-correcting codes appear in a wide variety of physical settings where quantum states are delocalized over many subsystems, such as
topological phases of
matter~\cite{Kitaev2003AoP_anyons,Dennis2002JMP_topological,%
  Nayak2008RMP_nonabelian,BookZeng2015arXiv_matter} and the AdS/CFT
correspondence in holographic quantum
gravity~\cite{Almheiri2015JHEP_bulk,Pastawski2015JHEP_holographic}.

On the other hand, naturally occurring physical systems often respect symmetries, and phases of matter can be classified according to how these symmetries are realized in equilibrium states. 
Likewise, quantum error-correcting codes often have approximate or exact symmetries with important implications. In the case of a time-translation-invariant many-body system, for example,
certain energy subspaces are known to form
approximate quantum error-correcting codes~\cite{Brandao2017arXiv_chainAQECC,Gschwendtner2019arXiv_lowenergies}, which are preserved under time evolution. Limits to sensitivity in quantum metrology are related to the degree of asymmetry of probe states, a notion formalized in the resource theory of asymmetry and
reference frames~\cite{Bartlett2007_refframes,Marvian2014NC_extending}.  Thus, reference frame information can be protected against noise using quantum codes with
suitable symmetry properties~\cite{Hayden2017arXiv_frame}.
Furthermore, recent developments in quantum gravity have shown that the
AdS/CFT correspondence can be viewed as a quantum error-correcting code which is
expected to be compatible with the natural physical symmetries of the system,
such as time-translation invariance~\cite{Harlow2018TASI_emergence,%
  Susskind2018PiTP_complexity1,Pastawski2015JHEP_holographic,%
  Harlow2018arXiv_constraints,Harlow2018arXiv_symmetries}.
Finally, the Eastin-Knill theorem~\cite{Eastin2009PRL_restrictions,%
  Zeng2011IEEETIT_transversality,Chen2008PRA_subsystem}, which complicates the
construction of fault-tolerant schemes for quantum computation by forbidding
quantum error-correcting codes from admitting a universal set of transversal
gates, can be viewed as the statement that finite-dimensional quantum codes
which correct erasure have no continuous
symmetries~\cite{Hayden2017arXiv_frame}. Thus, there are loopholes to the
Eastin-Knill theorem that are naturally exploited by holographic theories of
quantum gravity. This article provides a detailed quantitative investigation of
those loopholes, critically evaluating their potential for application to
quantum fault-tolerance.

A continuous symmetry, as opposed to a discrete symmetry, allows for
infinitesimally small transformations that are arbitrarily close to the identity
operation.  Such symmetry transformations are generated by conserved operators called \emph{charges}.
For instance, consider a particle in three-dimensional space that we rotate
about the $Z$-axis by an angle $\theta$.  Acting on the Hilbert space, this
symmetry transformation is represented by a unitary $U_\theta$ that is generated
by the $Z$-component of the Hermitian angular momentum operator $J_z$, i.e.,
$U_\theta = \ee^{-iJ_z\theta}$.  Crucially, a unitary operation $U$ that is
covariant with respect to rotations about the $Z$ axis must conserve the
physical quantity $J_z$.  In particular, if the initial state $\ket\psi$ is an
eigenstate of $J_z$ with eigenvalue $m$, then the transformed state $U\ket\psi$
must also be an eigenstate of $J_z$ with the same eigenvalue (up to a constant
shift in all the eigenvalues).

\begin{figure}
  \centering
  \includegraphics{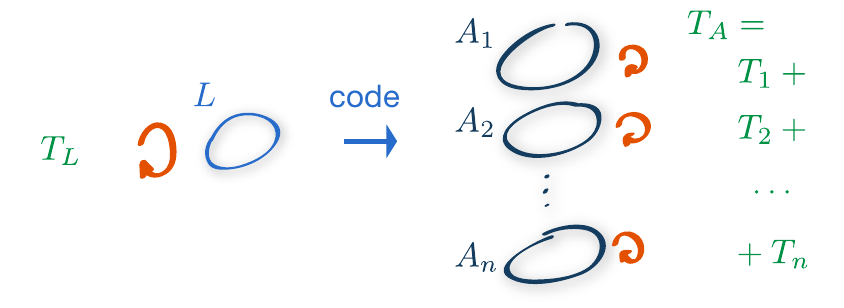}
  \caption{Quantum information represented on an abstract logical system $L$ is
    encoded on several physical subsystems $A_1\ldots A_n$ using a code.
    Suppose that the code is compatible with a continuous transversal symmetry,
    for instance, rotations in 3D space.  This means that by rotating all
    individual physical subsystems one induces the same transformation as if we
    had simply rotated the initial logical system $L$.  We show that such codes
    necessarily perform poorly as approximate error-correcting codes against
    erasures.  The reason is that the code must encode an eigenstate of the
    logical charge $T_L$ that generates the symmetry as a codeword that is a
    global eigenstate of the corresponding physical charge $T_A$.  Since the
    latter is a sum of local terms $T_A = \sum T_i$, and since the environment
    is handed the local reduced states of the codeword, the environment can
    deduce on average the total charge of the codeword.  Because logical
    information leaks to the environment, the code cannot be a good
    error-correcting code.}
  \label{fig:covariant-code}
\end{figure}

Here, we study the accuracy of quantum error-correcting codes that are covariant with
respect to continuous symmetries (\cref{fig:covariant-code}). Our results
build on earlier work showing that infinite-dimensional covariant quantum codes exist, while finite-dimensional covariant codes cannot correct erasure errors perfectly~\cite{Preskill2000arXiv_synchronization,%
Hayden2017arXiv_frame}.

A finite-dimensional error-correcting code that is covariant with respect to a continuous
symmetry cannot correct erasure of a subsystem exactly, because an adversary who steals the erased subsystem could acquire some information about the encoded state, hence driving irreversible decoherence of the logical quantum information~\cite{Preskill2000arXiv_synchronization,%
  Hayden2017arXiv_frame}.
More concretely, if $\Pi$ is the projector onto the code space, then the
error-correction
conditions~\cite{Knill1997PRA_correction,Bennett1996PRA_MSEntglQECorr} state
that any operator $O$ supported on the erased subsystem must act trivially within the codespace, i.e.,
$\Pi O\Pi\propto \Pi$.  If the symmetry acts transversally, the corresponding
generator $T_A$ is a sum of strictly local terms, $T_A = \sum T_i$, where each $T_i$ is supported on a single subsystem.  However, this
implies that $\Pi T_A \Pi = \sum \Pi T_i \Pi \propto \Pi$, and hence it follows from the error-correction condition that any such
$T_A$ must act trivially on the 
codewords.

Crucially for the considerations in this paper, the above argument makes two
implicit assumptions: that the sum over $i$ is finite (bounded number of
subsystems), and that the codewords are normalizable (finite-dimensional
subsystems). If both assumptions are relaxed, then quantum codes covariant with
respect to a continuous symmetry are possible, as shown
in~\cite{Hayden2017arXiv_frame}.
Our main task in this paper is to explore quantitatively the case where the number of subsystems and the dimension of each subsystem are finite, using the tools of approximate quantum
error correction~\cite{Leung1997PRA_better,Crepeau2005ECRY_AQECC,Beny2010PRL_AQECC}. That is, we will quantify the deviation from perfect correctability in this case, for a code covariant with respect to a continuous symmetry. Assuming that the symmetry acts transversally and that
the noise acts by erasing one or more subsystems, we provide upper bounds on the accuracy
of the code, characterized using either the average entanglement fidelity or the
worst-case entanglement fidelity of the error-corrected state.  Our proof strategy is to show that in the presence of a continuous symmetry, the environment necessarily learns some
information about the logical charge, which implies that the code necessarily
performs imperfectly as an error-correcting code~\cite{Hayden2008OSID_decoupling,%
  Beny2010PRL_AQECC,Beny2018arXiv_constraints}.
In fact, some of these assumptions may be relaxed in our main technical theorem; for
instance, the generating charge may be a sum of $k$-local terms, instead of a sum of strictly local terms as for a transversal symmetry action, and the code only needs to be
approximately rather than exactly covariant.

Our lower bound on infidelity vanishes in two interesting regimes: as the dimension $d$ of the physical
subsystems gets large, or as the number $n$ of physical subsystems gets large.  In these limits we can find error-correcting codes whose infidelity approximately matches the
scaling of our bound with $d$ or $n$.  We construct explicit examples based on normalized
versions of the rotor code presented in Ref.~\cite{Hayden2017arXiv_frame}, and
note that codes considered in Ref.~\cite{Brandao2017arXiv_chainAQECC} provide
further examples.  We also discuss a 5-rotor code that can be stacked to construct a
covariant holographic code~\cite{Pastawski2015JHEP_holographic}.

Furthermore, our results provide an approximate version of the
Eastin-Knill theorem~\cite{Eastin2009PRL_restrictions,%
  Zeng2011IEEETIT_transversality,Chen2008,Hayden2017arXiv_frame}, which states
that a universal set of transversal logical gates cannot exist for a
finite-dimensional encoding that protects perfectly against erasure.
By applying our bounds and exploiting the nonabelian nature of the full unitary
group on the logical space, we derive a lower bound on infidelity which scales
as $1/\log d$, where $d$ is the subsystem dimension, for a code that admits
universal transversal logical gates.  We also find that if a code admits a
universal set of transversal logical gates, then there are strong lower bounds
on the subsystem dimension $d$ that depend on the code's infidelity, and which
in some regimes are even exponential in the logical system dimension $d_L$.
Using randomized code constructions, we prove the existence of codes which approximately achieve this relationship between $d$ and $d_L$. In addition, we exhibit codes with universal transversal logical gates which achieve arbitrarily small infidelity when the number $n$ of subsystems becomes large with the logical dimension $d_L$ fixed. 

\begin{figure*}
  \centering
  \input{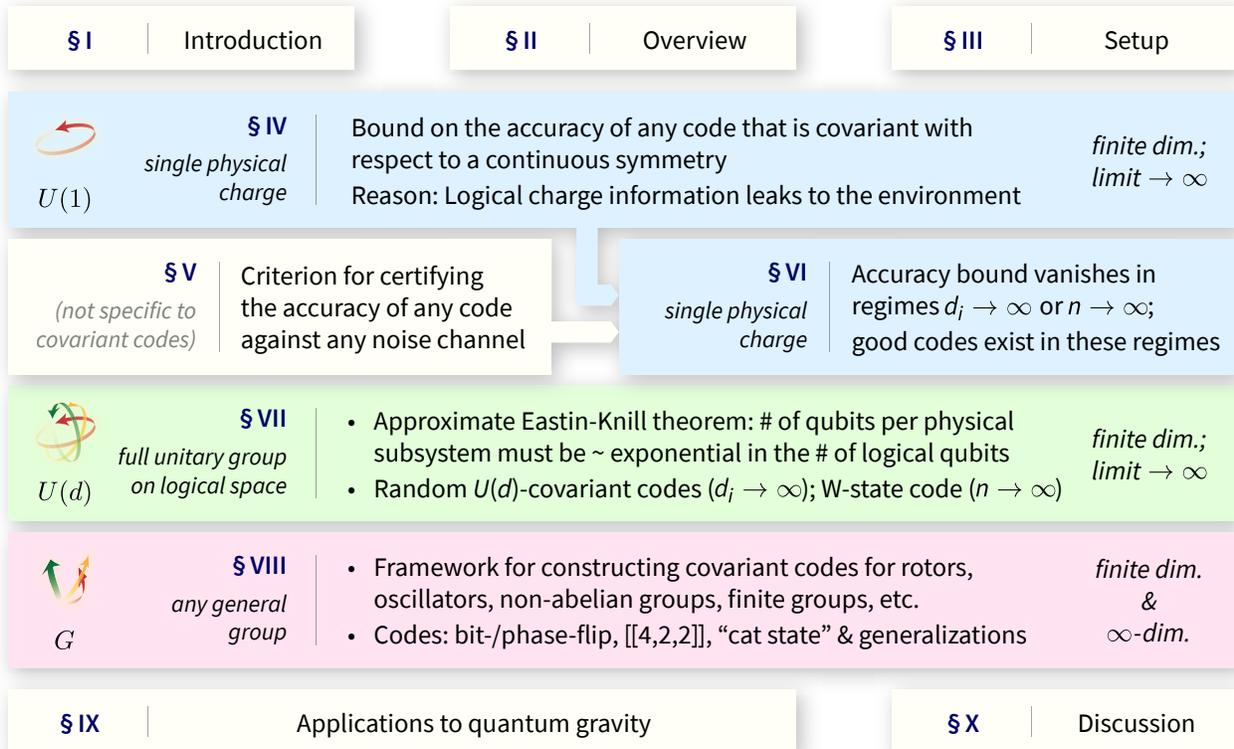}
  \caption{Our paper is built around three main technical results.  In the
    presence of a $\UU(1)$ symmetry, which is implied by any continuous
    symmetry, we prove a general bound on how well any covariant code can
    correct against erasures.  Our second main contribution is an approximate
    version of the Eastin-Knill theorem: If the code admits a universal set of
    transversal logical gates, i.e., if it is covariant with respect to the full
    unitary group $\UU(d)$ on the logical space, then our bound can be expressed
    in terms of the physical subsystem dimensions $d_i$.  Our third main
    contribution is a general framework for constructing codes that are
    covariant with respect to any general group $G$.  Along the way, we develop
    a new criterion for certifying the accuracy of any code, which we use to analyze our
    examples.}
  \label{fig:OverviewResults}
\end{figure*}

We also provide a general framework for constructing codes that are covariant with
respect to general symmetry groups, by encoding logical information into the
so-called \emph{regular representation} of the groups.  Using this framework
we can generalize several widely-known codes (bit-flip, phase-flip, $[[4,2,2]]$
code, etc.) to infinite-dimensional covariant codes based on oscillators or rotors.

Finally, we discuss the interpretation of our results in the context of quantum gravity and, in particular, the AdS/CFT correspondence. Time evolution itself provides an example of a symmetry that must be reconciled with the error-correcting properties of the system.

The remainder of the manuscript is organized as follows
(\cref{fig:OverviewResults}). In \cref{sec:summary}, we summarize our main
results. We set up notation in \cref{sec:mainsetup} and prove a bound on the
performance of codes covariant with respect to a $\UU(1)$ symmetry in
\cref{sec:resultsbounds}. A criterion certifying code performance is derived in
\cref{sec:criterioncode}. In \cref{sec:Uonecodes}, we apply our bounds and
criterion to the following examples of $\UU(1)$-covariant encodings: an
infinite-dimensional rotor extension of the qutrit $[[3,1,2]]$ and qubit
$[[5,1,3]]$ codes as well as a many-body Dicke-state code. We apply our bound to
codes admitting universal transversal gates in \cref{sec:repbound}, discussing a
$\UU(d)$-invariant encoding based on $W$-states in
\cref{sec:w-codes}. Erasure-correcting codes whose transversal gates form a
general group $G$ are introduced in \cref{subsec:Gcodes}.  In
\cref{sec:holography} we study applications to quantum gravity.  We conclude
with a discussion in \cref{sec:discussion}.


\section{Summary of main results}
\label{sec:summary}
\subsection{Bound on the accuracy of codes covariant with respect to a
  continuous symmetry}
  
Our first main result is a bound on the accuracy of any approximate quantum
error-correcting code that is covariant with respect to a continuous symmetry. We consider an encoding map from a logical system $L$ to a physical system $A$ consisting of $n$ subsystems denoted $A_1, A_2, \dots A_n$. A one-parameter family of continuous unitary symmetries acting on $L$ is generated by the logical charge observable $T_L$, which corresponds to the physical charge observable $T_A$ acting on $A$. We assume that the symmetry acts transversally, so that $T_A = \sum_{i=1}^n T_i$, where $T_i$ acts on subsystem $T_i$.

How well does this code protect the logical system against erasure of one of the
subsystems? To quantify the code's performance we may use the \textit{worst-case
  entanglement fidelity}, where ``worst-case'' means the minimal fidelity for
any entangled state shared by the logical system and a reference system. (See
\cref{sec:mainsetup} for a precise definition.) Then we consider the value
$f_{\mathrm{worst}}$ of this worst-case entanglement fidelity which is achieved
by the best possible recovery map applied after an erasure error. A measure of
the residual error after recovery is
\begin{equation}
  \epsilon_{\mathrm{worst}} = \sqrt{1 - f_{\mathrm{worst}}^2}.
\end{equation}
Our result is a lower bound on $\epsilon_{\mathrm{worst}}$ which limits the
performance of any covariant quantum code:
\begin{align}
  \epsilon_{\mathrm{worst}}
  &\geqslant  \frac{\Delta T_L}{2n\max_i \Delta T_i}\ ,
  \label{eq:overview-main-eps-worst-bound}
\end{align}
where $\Delta T$ denotes the difference between the maximal and minimal
eigenvalue of $T$.  That is, the code's accuracy is constrained by the range of
charges one wishes to be able to encode, by the size of the charge fluctuations
within each subsystem, and by the number of physical subsystems.

We also find that~\eqref{eq:overview-main-eps-worst-bound} can be generalized in a number of ways. We can express the limit on code performance in terms of other measures besides worst-case entanglement fidelity, such as average entanglement fidelity, or the entanglement fidelity of a fixed input state. We can derive bounds that apply in the case where more than one subsystem is erased, or where the erasure occurs for an unknown subsystem rather than a known subsystem. We can consider cases where the charge distribution for a subsystem has infinite range, but with a normalizable tail. We can also treat the case where the covariance of the code is approximate, or where the physical charge operator is not strictly transversal.

\subsection{Regimes where our bound is circumvented and criterion for code performance}

The idea underlying~\eqref{eq:overview-main-eps-worst-bound} is that for erasure
correction to work well one should not be able to learn much about the global
value of the charge by performing a local measurement on a subsystem. Hence, to
be able to correct the errors to good accuracy, we need either large local
charge fluctuations ($\Delta T_i\to\infty$), or many subsystems ($n\to\infty$)
so that the global charge is a sum of many local contributions. In fact, codes
can be constructed in either limit for which $\epsilon_{\mathrm{worst}}$
approximately matches the scaling in $\Delta T_i$ and $n$ of the lower
bound~\eqref{eq:overview-main-eps-worst-bound}.

To study the case of large $\Delta T_i$, we consider a normalized variant of the
infinite-dimensional covariant code constructed
in~\cite{Hayden2017arXiv_frame}. The infinite-dimensional version encodes one
logical rotor (with unbounded $U(1)$ charge) in a code block of three rotors. In
the modified version of this code, we either truncate the charge of the logical
system to $\{ -h, -h+1, \ldots, +h\}$ or use a Gaussian envelope of width $w$ to
normalize the physical codewords. The value of $\epsilon_{\mathrm{worst}}$
achieved by this code, and our lower bound, both scale like $h/w$ up to a
logarithmic factor.

Regarding the limit of a large number of subsystems, we observe that a code
discussed in Ref.~\cite{Brandao2017arXiv_chainAQECC} matches the $1/n$ scaling
of our lower bound on $\epsilon_{\mathrm{worst}}$.  Here the subsystems are
qubits, regarded as spin-$1/2$ particles, and the code space is two-dimensional,
spanned by two Dicke states with different values of the total angular momentum
$J_z$ along the $z$-axis. (A Dicke state is a symmetrized superposition of all
basis states with a specified $J_z$). This code is covariant with respect to
$z$-axis rotations by construction, and can be shown to achieve
$\epsilon_{\mathrm{worst}}$ scaling like $1/n$, where $n$ is the number of
physical qubits.

A further result of independent interest is a general criterion used in our
analysis for certifying the performance of an error-correcting code against
arbitrary noise.  Stated informally, this criterion asserts that if the reduced
density operator on each subsystem is approximately the same for all codewords,
and if the environment does not get any information from the off-diagonal terms
in the logical density operator, then the code performs well.  While this
criterion is sufficient to certify the performance of an approximate
error-correcting code, it is not necessary---there may be codes achieving small
$\epsilon_{\mathrm{worst}}$ that do not satisfy it.

\subsection{Approximate Eastin-Knill theorem and random $\UU(d)$-covariant codes}
Quantum error-correcting codes are essential for realizing scalable quantum computing using realistic noisy physical gates. In a fault-tolerant quantum computation, logical quantum gates are applied to encoded quantum data, and error recovery is performed repeatedly to prevent errors due to faulty gates from accumulating and producing uncorrectable errors at the logical level. For this purpose, transversal logical gates are especially convenient. For example, if a logical gate on an $n$-qubit code block can be achieved by applying $n$ single-qubit gates in parallel, then each faulty physical gate produces only a single error in the code block. Nontransversal logical gates, on other hand, either require substantially more computational overhead, or propagate errors more egregiously, allowing a single faulty gate to produce multiple errors in a code block. 

A nontrivial transversal logical gate can be regarded as a covariant symmetry operation acting on the code. If all the logical gates in a complete universal gate set could be chosen to be transversal, then the Lie group of transversal logical gates would coincide with the group $\UU(d_L)$ of unitary gates acting on the $d_L$-dimensional logical system (up to an irrelevant overall phase). It then follows that \textit{any} generator $T_A$ of $\UU(d_L)$ acting on the physical system $A$ could be expressed as a sum of terms, where each term in the sum has support on a single subsystem. 
Unfortunately, the Eastin-Knill theorem rules out this appealing scenario, if erasure of each subsystem is correctable and the code is finite-dimensional. But now that we have seen that there are parameter regimes in which covariance \textit{can} be compatible with good performance of approximate quantum error-correcting codes, one wonders whether a universal transversal logical gate set is possible after all, at the cost of a small but nonzero value of $\epsilon_{\mathrm{worst}}$.

We have found, however, that a fully $\UU(d_L)$-covariant code requires a value of $\epsilon_{\mathrm{worst}}$ which scales quite unfavorably with the local subsystem dimension.
Leveraging tools from representation theory, we show that the lower bound on
$\epsilon_{\mathrm{worst}}$ becomes
\begin{align}
  \epsilon_{\mathrm{worst}}
  \geqslant \frac1{2n \max_i \ln d_i} + O\mathopen{}\left(\frac{1}{n d_L}\right)
  \ ,\label{eq:udboundsimple}
\end{align}
where $d_i$ is the dimension of the $i$th physical subsystem.
We also find lower bounds for the local subsystem dimension that depend on the
number of logical qubits and the code's infidelity.  This result also applies to
the case when each gate can be approximated with a discrete sequence of
transversal operations to arbitrary accuracy, as in the context of the
Solovay-Kitaev theorem.

Furthermore, using randomized constructions, we prove the existence of
$\UU(d_L)$-covariant code families which achieve arbitrarily small infidelity in
the limit of large subsystem dimension.  In addition, we exhibit a simple
$\UU(d_L)$-covariant code family, whose codewords are generalized $W$-states,
such that $\epsilon_{\mathrm{worst}}$ approaches zero as the number of
subsystems $n$ approaches infinity.

\subsection{Framework for constructing covariant codes}
We also develop a general framework for constructing codes that are covariant
with respect to any group $G$ admitting a Haar measure, where both the logical
system and the physical subsystems transform as the regular representation of
$G$. In this construction, the dimension of each subsystem is the order $\abs{G}$ of
the group when $G$ is finite, and infinite when $G$ is a Lie group.

Using this formalism we construct natural generalizations of well-known families
of qubit codes, such as the bit-flip and phase-flip codes, with the qubits
replaced by $\abs{G}$-dimensional systems. These codes admit transversal logical
gates representing each element of $G$.

When $G$ is a Lie group, the qubits are replaced by infinite-dimensional systems
such as rotors or oscillators. These infinite-dimensional codes circumvent the
Eastin-Knill theorem---they are covariant with respect to a continuous symmetry
group, yet erasure of a subsystem is perfectly correctable.


\section{Setup \& notation}
\label{sec:mainsetup}
\subsection{Approximate error correction}

Consider a code, which to each logical state $\ket{x}_L$ on some abstract
logical system $L$ associates a state $\ket{\psi_x}_{A_1A_2\ldots A_n}$ on a
physical system $A$ consisting of $n$ subsystems
$A = A_1\otimes A_2\otimes \cdots \otimes A_n$ (\cref{fig:code}).  The span
of all codewords $\{ \ket{\psi_x}_{A} \}$ forms the \emph{code subspace}.  More
generally, we denote by $\mathcal{E}_{L\to A}(\cdot)$ the encoding channel which
associates to any logical state the corresponding encoded physical state.  In
this work, the encoding is usually an isometry, \textit{i.e.}, the encoding itself does
not introduce noise into the system.
\begin{figure}
  \centering
  \includegraphics{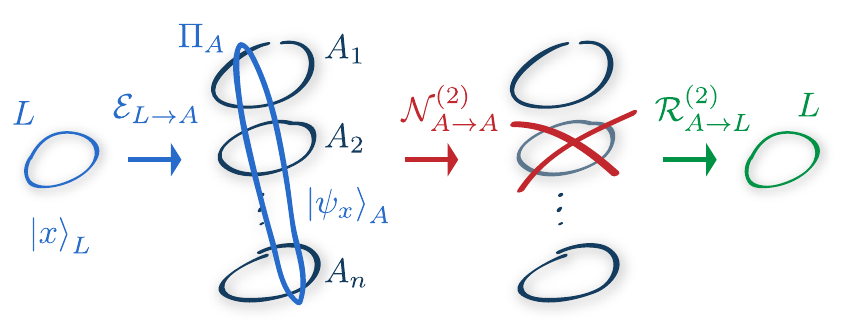}
  \caption{A code $\mathcal{E}_{L\to A}$ maps a logical state $\ket{x}$ on an
    abstract logical space $L$ to a state $\ket{\psi_x}_A$ on a physical system
    $A$.  Here, we consider a physical system composed of several subsystems
    $A=A_1\otimes A_2\otimes\cdots A_n$.  The code space, with associated
    projector $\Pi_A$, is the range of the encoding map.  The environment acts
    by erasing a subsystem, represented as a noise channel
    $\mathcal{N}_{A\to A}^{i}$.  A good error-correcting code is capable of
    recovering the original logical state $\ket{x}$ from the remaining
    subsystems, by applying a recovery map $\mathcal{R}^{i}_{A\to L}$.  In our
    analysis, we assume that the environment chooses randomly which subsystem is
    erased.  A record of which subsystem was chosen is provided, allowing to
    apply a different recovery map for each erasure situation.  The quality of
    the code is characterized by how close the overall process is to the
    identity process on the logical system, as measured by either the average or
    the worst-case entanglement fidelity.}
  \label{fig:code}
\end{figure}

The noise channel is the process to which the physical system is exposed, which
might cause the information encoded in it to get degraded.  It is a quantum
channel $\mathcal{N}_{A\to B}$ mapping the physical system to physical system
$B$.  (The system $B$ might be the same as $A$, but it might be different; for
instance, $B$ might include a register which remembers which type of error
occurred or which subsystem was lost.)

To study the approximate error correction properties of a code, we need to
quantify the approximation quality using distance measures between states and
channels.
Proximity between quantum states can be quantified using the trace distance
$\delta(\rho,\sigma) = \norm{\rho-\sigma}_1/2$, or using the fidelity\footnote{%
  Throughout this paper, we stick to the convention that the fidelity and its
  derived quantities refer to an amplitude rather than a probability, i.e., we
  use the convention of ref.~\cite{BookNielsenChuang2000}.  In the literature,
  the quantity that we denote by $F^2$ is also referred to as ``fidelity,''
  while the quantity we represent by $F$ is sometimes called ``root fidelity.''}
$F(\rho,\sigma) =
\norm{\sqrt{\rho}\sqrt{\sigma}}_1$~\cite{BookNielsenChuang2000}.  We need to
quantify how close a quantum channel $\mathcal{K}_{L\to L}$ is to the identity
channel.  Two standard measures to achieve this are the average entanglement
fidelity $F_{\mathrm{e}}$ and the worst-case entanglement fidelity
$F_{\mathrm{worst}}$~\cite{Schumacher1996PRA_sending,Gilchrist2005PRA_processes},
defined as
\begin{align}
  F_{\mathrm{e}}^2(\mathcal{K})
  &= \bra{\hat\phi} (\mathcal{K}\otimes\IdentProc[]{})(\proj{\hat\phi}) \ket{\hat\phi}\ ;
  \\
  F_{\mathrm{worst}}^2(\mathcal{K})
  &= \min_{\ket\phi}\;
    \bra{\phi} (\mathcal{K}\otimes\IdentProc[]{})(\proj{\phi}) \ket{\phi}\ .
\end{align}
Here the input state appearing in the definition of $F_{\mathrm{e}}$ is
$\ket{\hat\phi}_{LR} = \sum_{k=0}^{d_L -1} \ket{k}\otimes\ket{k} / \sqrt{d_L}$,
the maximally entangled state of $L$ and a reference system $R$; the system $R$
has the same dimension as $L$, which we denote by $d_L$. The optimization in the
definition of $F_{\mathrm{worst}}$ ranges over all bipartite states of $L$ and
$R$.  We may also use the state fidelity $F(\rho,\sigma)$ to compare two
channels $\mathcal{K}$ and $\mathcal{K'}$; the entanglement fidelity between
$\mathcal{K}$ and $\mathcal{K'}$ for a fixed bipartite input state
$\ket\phi_{LR}$ is defined as
\begin{align}
  F_{\ket\phi}^2(\mathcal{K}, \mathcal{K}')
  &= F^2 `\Big((\mathcal{K}\otimes\IdentProc[]{})(\proj\phi),
    (\mathcal{K}'\otimes\IdentProc[]{})(\proj\phi))\ ;
  \label{eq:fixed-input-entgl-fidelity-channels}
\end{align}
thus 
\begin{align}
F_{\mathrm{e}}(\mathcal{K}) = F_{\ket{\hat\phi}}(\mathcal{K}, \IdentProc[]{}),\quad
F_{\mathrm{worst}}(\mathcal{K}) = \min_{\ket\phi} F_{\ket\phi}(\mathcal{K},
\IdentProc[]{}).
\end{align}
By optimizing over the input state, we may define
$F_{\mathrm{worst}}(\mathcal{K},\mathcal{K'})$, which is closely related to the
diamond distance between the
channels~\cite{Schumacher1996PRA_sending,Gilchrist2005PRA_processes}.

We now ask how well one can recover the logical state after the
encoding and the application of the noise channel.  That is, we seek a
completely positive map $\mathcal{R}_{B\to L}$ (the \emph{recovery map}), such
that $\mathcal{R}_{B\to L}\circ\mathcal{N}_{A\to B}\circ\mathcal{E}_{L\to A}$ is
as close as possible to the identity channel $\IdentProc[L][L]{}$.  The
resilience of a code $\mathcal{E}_{L\to A}$ to errors caused by a noise map
$\mathcal{N}_{A\to B}$ is thus quantified by the proximity to the identity
channel of the combined process
$\mathcal{R}_{B\to L}\circ\mathcal{N}_{A\to B}\circ\mathcal{E}_{L\to A}$ for the
best possible recovery map $\mathcal{R}_{B\to L}$.
Using either the entanglement fidelity with fixed input $\ket\phi_{LR}$ or the
worst-case entanglement fidelity measures, the quality of the code
$\mathcal{E}_{L\to A}$ under the noise $\mathcal{N}_{A\to B}$ is quantified as
\begin{subequations}
  \label{eq:def-aqecc-f-maxR}
  \begin{align}
    f_{\mathrm{e}}(\mathcal{N}\circ\mathcal{E}) &= \max_{\mathcal{R}_{B\to L}}
    F_{\mathrm{e}}(\mathcal{R}\circ\mathcal{N}\circ\mathcal{E})\ ;
    \label{eq:def-aqecc-f-fixedinput-maxR}
    \\
    f_{\mathrm{worst}}(\mathcal{N}\circ\mathcal{E}) &=
    \max_{\mathcal{R}_{B\to L}}
    F_{\mathrm{worst}}(\mathcal{R}\circ\mathcal{N}\circ\mathcal{E})\ .
    \label{eq:def-aqecc-f-worst-maxR}
  \end{align}
\end{subequations}
We will also find it convenient to work with the alternative quantities
\begin{subequations}
  \begin{align}
    \epsilon_{\mathrm{e}}(\mathcal{N}\circ\mathcal{E})
    &= \sqrt{1 - f_{\mathrm{e}}^2(\mathcal{N}\circ\mathcal{E})}\ ;
      \label{eq:def-aqecc-epsilon-fixedinput}
    \\
    \epsilon_{\mathrm{worst}}(\mathcal{N}\circ\mathcal{E})
    &= \sqrt{1 - f_{\mathrm{worst}}^2(\mathcal{N}\circ\mathcal{E})}\ ,
      \label{eq:def-aqecc-epsilon-worst}
  \end{align}
\end{subequations}
which are closely related to the infidelity and Bures distance measures.  A code
which performs well has $f\approx 1$ and correspondingly $\epsilon\approx 0$.

\subsection{Erasures at known locations}
In this work, we consider the noise model consisting of erasures which occur at
known locations.  (Our bound then naturally applies also to erasures at unknown
locations, since the latter are necessarily harder to correct against.)  For
instance, if the $i$th physical subsystem is lost to the environment with
probability $q_i$, then the corresponding noise map is
\begin{align}
    \mathcal{N}^{(1)}_{A\to AC}(\cdot)
  = \sum q_i\, \proj{i}_C\otimes \proj{\phi_i}_{A_i}\otimes \tr_{A_i}(\cdot)\ ,
  \label{eq:noise-map-one-erasure}
\end{align}
where we have introduced a classical register $C$ which records which one of the
$n$ systems was lost, and where ${\ket\phi_i}$ are some fixed states.

One can also consider more general erasure scenarios, where any given
combination of subsystems can be lost with a given probability.  For instance,
one might assume that systems $A_1$ and $A_2$ are simultaneously lost with
probability $q_{\{1,2\}}$, systems $A_2$ and $A_3$ are simultaneously lost with
probability $q_{\{2,3\}}$, and systems $A_1$ and $A_3$ are lost with probability
$q_{\{1,3\}}$.  More generally, a combination of subsystems, which we label
generically by $\alpha$, can be lost with probability $q_\alpha$; we assume we
know exactly which systems were lost.  The corresponding general noise map is
then
\begin{subequations}
  \label{eq:noise-map-general-alpha}
  \begin{align}
  \mathcal{N}_{A\to AC}(\cdot)
  &= \sum_{\alpha\in K} q_\alpha
    \proj{\alpha}_C \otimes \mathcal{N}_{A\to A}^{\alpha}(\cdot)\ ;
    \\
  \mathcal{N}_{A\to A}^{\alpha}(\cdot)
  &= \proj{\phi_\alpha}_{A_\alpha} \!
    \otimes \tr_{A_\alpha}(\cdot)\ ,
    \label{eq:noise-map-general-alpha--Nalpha}
  \end{align}
\end{subequations}
where the register $C$ encodes the exact locations at which simultaneous
erasures have occurred, where $A_\alpha$ denotes the physical systems labeled by
$\alpha$ (for instance, if $\alpha=\{2,3\}$ then $A_\alpha = A_2\otimes A_3$),
and where $\{ \ket\phi_\alpha \}$ are fixed states. The sum ranges over
a set $K$ of possible $\alpha$'s corresponding to erasures which may occur.
Technically, $K$ is any set of subsets of $\{1, 2, \ldots, n\}$.  Situations
which can be described using this setting include for instance any $k$
consecutive erasures, or the erasure of any $k$ subsystems.

\subsection{Characterization via the environment}
A very useful characterization of the quantities~\eqref{eq:def-aqecc-f-maxR} is
provided by B\'eny and Oreshkov~\cite{Beny2010PRL_AQECC}, building upon the
decoupling approach to error correction~\cite{Hayden2008OSID_decoupling}.  The
recoverability of the logical information can be characterized by studying how
much information is leaked to the environment, as represented by a complementary
channel $\widehat{\mathcal{N}\circ\mathcal{E}}$ of
$\mathcal{N}\circ\mathcal{E}$.  Recall that a \emph{complementary channel}
$\hat{\mathcal{F}}_{A\to C}$ of a quantum channel $\mathcal{F}_{A\to B}$ is a
channel of the form
$\hat{\mathcal{F}}_{A\to C}(\cdot) = \tr_B(W_{A\to BC}(\cdot) W^\dagger)$, where
$W_{A\to BC}$ is a Stinespring dilation isometry for the map $\mathcal{F}$,
i.e., $\mathcal{F}_{A\to B}(\cdot) = \tr_C(W_{A\to BC}(\cdot) W^\dagger)$.
B\'eny and Oreshkov show that the fidelity with which one can reverse the
action of the encoding and the noise is exactly the fidelity of the total
complementary channel to a constant channel:
\begin{subequations}
  \label{eq:Beny-Oreshkov-f}
  \begin{align}
    f_{\mathrm{e}}(\mathcal{N}\circ\mathcal{E})
    &= \max_{ \zeta } F_{\ket{\hat\phi}}(
      \widehat{\mathcal{N}\circ\mathcal{E}}, \mathcal{T}_\zeta )\ ;
      \label{eq:Beny-Oreshkov-f-fixedinput}
    \\
    f_{\mathrm{worst}}(\mathcal{N}\circ\mathcal{E})
    &= \max_{ \zeta } \min_{\ket\phi} F_{\ket\phi}(
      \widehat{\mathcal{N}\circ\mathcal{E}},
      \mathcal{T}_\zeta )\ ,
      \label{eq:Beny-Oreshkov-f-worst}
  \end{align}
\end{subequations}
where $\mathcal{T}_\zeta(\cdot) = \tr(\cdot)\,\zeta$ is the constant channel
outputting the state $\zeta$ and where the maximizations range over all quantum
states $\zeta$ on the output system of $\widehat{\mathcal{N}\circ\mathcal{E}}$.

Now we determine a complementary channel $\widehat{\mathcal{N}\circ\mathcal{E}}$
to the encoding and noise channels.  Consider first the single-erasure noise
channel~\eqref{eq:noise-map-one-erasure}.  A Stinespring dilation of
$\mathcal{N}^{(1)}_{A\to AC}$ on two additional systems $C'\otimes E$ is given
as $\mathcal{N}^{(1)}_{A\to AC} = \tr_{C'E}(W (\cdot) W^\dagger)$, with
\begin{multline}
  W_{A\to ACC'E} =
  \\
  \sum \sqrt{q_i}\,\ket{i}_C \otimes \ket{i}_{C'} \otimes \ket{\phi_i}_{A_i}
  \otimes \Ident_{A_i\to E} \otimes \Ident_{A\setminus A_i}\ ,
\end{multline}
where $\Ident_{A_i\to E}$ is an isometric embedding of $A_i$ into $E$ and
$\Ident_{A\setminus A_i}$ is the identity operator on all systems $A$ except
$A_i$.  Now consider a Stinespring dilation of $\mathcal{E}_{L\to A}$ as
$\mathcal{E}_{L\to A} = \tr_{F}(V_{L\to AF}\,(\cdot)\,V^\dagger)$.  Then, we
may take
\begin{align}
  \widehat{\mathcal{N}\circ\mathcal{E}}_{L\to C'EF}(\cdot)
  &= \tr_{AC}(W\, V\, (\cdot)\, V^\dagger\, W^\dagger)
    \nonumber\\
  &= \sum q_i \, \proj{i}_{C'}\otimes 
    \tr_{A\setminus A_i}(V(\cdot)V^\dagger) \ ,
    \label{eq:compl-channel-N-E-one-erasure}
\end{align}
where $\tr_{A\setminus A_i}$ denotes the partial trace over all systems except
$A_i$ (the latter is then embedded in the $E$ system).
Hence, the complementary channel to the single erasure channel simply gives the
erased information to the environment with the corresponding erasure
probability.  It is straightforward to see that for the more general noise
channel~\eqref{eq:noise-map-general-alpha} a complementary channel is given by
\begin{align}
  \widehat{\mathcal{N}\circ\mathcal{E}}_{L\to C'EF}(\cdot)
  &= \sum q_\alpha \, \proj{\alpha}_{C'}\otimes 
    \tr_{A\setminus A_\alpha}(V\,(\cdot)\,V^\dagger) \ ,
    \label{eq:compl-channel-N-E-general-alpha}
\end{align}
where the register $C'$ now remembers which combination of systems were lost.
This channel provides the environment with the systems that were erased, where
each erasure combination $\alpha$ appears with probability $q_\alpha$.

\subsection{Covariant codes}

The final ingredient we introduce is covariance with respect to a symmetry group
(\cref{fig:covariant-code}).  Let $G$ be any Lie group acting unitarily on
the logical and physical systems, with representing unitaries $U_L(g)$ and
$U_A(g)$, respectively, for any $g\in G$.  A code $\mathcal{E}_{L\to A}$ is
\emph{covariant} if it commutes with the group action:
\begin{align}
  \mathcal{E}_{L\to A}`\big( U_L(g)\,(\cdot)\,U_L^\dagger(g) )
  = 
  U_A(g) \, \mathcal{E}_{L\to A}(\cdot)\,U_A^\dagger(g) \ .
  \label{eq:condition-covariant-code-cpm}
\end{align}

On either logical and physical systems, we can expand the unitary action of $G$
in terms of generators of the corresponding Lie algebra, i.e., for a given $g$
there is a generator $T_L$ on $L$ and a generator $T_A$ on $A$ such that
\begin{align}
  U_L(g) &= \ee^{-i\theta T_L}\ ;
  &
    U_A(g) &= \ee^{-i\theta T_A}\ ,
\end{align}
for some $\theta\in\mathbb{R}$ that we can choose to normalize our generators.
The generators are Hermitian matrices, and they can be interpreted as physical
observables.  (For instance, the generators of the rotations in 3-D space are
the angular momenta.)

If the encoding map is isometric,
$\mathcal{E}_{L\to A}(\cdot) = V_{L\to A}\,(\cdot)\, V^\dagger$, then any
eigenstate $\ket{t}$ of $T_L$ with eigenvalue $t$ must necessarily be encoded
into an eigenstate of $T_A$ with the same eigenvalue $t$ (up to a constant
offset).  This can be seen as follows.  Expanding the
condition~\eqref{eq:condition-covariant-code-cpm} for small $\theta$ yields
\begin{align}\label{eq:expanded_covariance}
  V\,[T_L, (\cdot)]\, V^\dagger = [T_A, V `*(\cdot) V^\dagger]\ .
\end{align}
Let $\{ \ket{t,j}_L \}$ be a basis of eigenstates of $T_L$ where $t$ is the eigenvalue and
where $j$ is a degeneracy index.  Inserting in the place of $(\cdot)$ the operator
$\ketbra{t,j}{t',j'}$, we obtain
\begin{align}
  (t-t')\ketbra{\psi_{t,j}}{\psi_{t',j'}} = [T_A,\ketbra{\psi_{t,j}}{\psi_{t',j'}}]\ ,
  \label{eq:condition-covariant-code-derivation-commutator-L-A}
\end{align}
where $\ket{\psi_{t,j}} = V\ket{t,j}$.  Setting $t=t', j=j'$, we see that
$\ket{\psi_{t,j}}$ is necessarily an eigenstate of $T_A$; let $u_{t,j}$ be its
corresponding eigenvalue.  Setting $t=t',j\neq j'$ in
\eqref{eq:condition-covariant-code-derivation-commutator-L-A} implies
$0 = (u_{t,j}-u_{t,j'})\ketbra{\psi_{t,j}}{\psi_{t,j'}}$ and hence
$u_{t,j} = u_{t,j'} =: u_t$.  Now
\eqref{eq:condition-covariant-code-derivation-commutator-L-A} tells us for any $t,t',j,j'$
that $t - t' = u_t - u_{t'}$.  It follows that $u_t = t - \nu$ for all $t$, for some
constant offset $\nu$; in other words, the codewords must have the same charge as the
logical state, except for a possible constant offset $\nu$.  We may condense this
condition into the constraint $[T_A, VV^\dagger]=0$ along with the identity
\begin{align}
  V^\dagger T_A V = T_L - \nu\Ident_L\ .
  \label{eq:condition-covariant-code-isometry-charge}
\end{align}
Equivalently, acting with $V$ on
\eqref{eq:condition-covariant-code-isometry-charge} we have
\begin{align}
  T_A V =V (T_L - \nu\Ident).
  \label{eq:covarianve_and_generators}
\end{align}
This is a crucial property of covariant codes, and is a central ingredient of
the proof of our main result.

Our main result, in its simplified form, further assumes that the action of the
group is transversal on the physical systems, meaning that
$U_A(g) = U_1(g)\otimes U_2(g)\otimes \cdots U_n(g)$.  In this case, the
corresponding generator is strictly local, $T_A = T_1 + T_2 + \cdots + T_n$,
where each of the $T_i$'s act only on
$A_i$.

As opposed to covariant isometries, covariant channels in general do not
conserve charge since they may exchange charge with the environment.  For
instance, the fully depolarizing channel is covariant with respect to any
symmetry but it changes the charge of its input.  Our main result in its fully
general form is formulated for approximately charge-conserving channel
encodings, which is a superset of covariant isometries.


\section{Inaccuracy of covariant codes for a continuous symmetry}
\label{sec:resultsbounds}
Our first main result is a general characterization of how poorly a code
necessarily performs against erasures at known locations, given that the code
must be covariant with respect to a continuous symmetry.
For the sake of clarity, we first present a simplified version of our general
bound.  Consider an encoding map $\ket{x}_L\to\ket{\psi_x}_{A}$ with respect to
some basis $\{ \ket{x}_L \}$, which we may represent by an isometry
$V_{L\to A} = \sum_x \ket{\psi_x}_A\bra{x}_L$.  Denote the corresponding
encoding channel by $\mathcal{E}_{L\to A}(\cdot) = V(\cdot) V^\dagger$.

Pick any generator $T_L$ from the Lie algebra of the symmetry acting on $L$.
Let $T_{i}$ be the corresponding generator acting on the $i$th physical
subsystem $A_i$, with the total generator on $A$ being $T_A = \sum_i T_{i}$.  As
Hermitian matrices, these are quantum mechanical observables whose eigenvalues
we may think of as abstract ``charges.'' (These charges might
correspond to the component of angular momentum in a given direction, the number of particles, or
some other physical quantity.)  Crucially, since the code $\mathcal{E}$ is
covariant, a logical charge eigenstate $\ket{t}_L$ must be encoded into a
codeword $\ket{\psi_t}_A$ which is an eigenvector of $T_A$ with the same
eigenvalue $t$, up to a constant offset $\nu$.  Let us assume for simplicity
that $\nu=0$.

Assume the environment erases a subsystem $i$ chosen at random with probability
$q_i = 1/n$.  Then the environment gets the information represented by the
complementary channel~\eqref{eq:compl-channel-N-E-one-erasure}. That is, if the
original state was $\ket{x}_L$, then the environment gets the state
$\rho^x_i = \tr_{A\setminus A_i}(\proj{\psi_x}_L)$ on subsystem $i$ with
probability $1/n$.
Yet, because the charge observable is local, the environment can learn the
expectation value of the charge.  Indeed, for any $\ket{x}_L$,
\begin{align}
  \tr(T_L \proj{x}_L)
  = \tr(T_A \proj{\psi_x}_A)
  = \sum_i \tr(T_i \rho_i^x)\ ,
\end{align}
where the first equality holds because the code is covariant, and the second
because the charge is local.  Hence, if we define the observable
$Z_{C'E} = n \sum_i \proj{i}_{C'}\otimes T_i$ on the environment systems, we
have
\begin{align}
  \tr(T_L \proj{x}_L) =
  \tr(Z_{C'E}\,\widehat{\mathcal{N}\circ\mathcal{E}}(\proj{x}_L))\ ,
\end{align}
making it clear that the environment can measure the average charge using the
information it has available.

Surely, if the charge expectation value leaks to the environment, then the code
must be bad.  However, the accuracy of the code is measured in terms of an
entanglement fidelity (worst-case or fixed input) to the identity channel.
Hence, it still remains to relate the accuracy of the code to the environment's
ability to access the codeword's total charge.
On one hand, we observe that the difference in expectation value of $Z_{C'E}$
on the environment can be translated into a distinguishability of codewords in
terms of the trace distance.  More precisely and in general, for any two states
$\rho,\sigma$, if there is an observable $Q$ for which $\rho,\sigma$ have
different expectation values, then
$\delta(\rho,\sigma) \geqslant \abs{\tr(Q\rho) -
  \tr(Q\sigma)}/(2\norm{Q}_\infty)$.  In our case, consider two logical charge
eigenstates $\ket{\phi_\pm}_L$ corresponding to the maximal and minimal
eigenvalues of $T_L$; then it holds that
\begin{align}
  \delta`\Big(\widehat{\mathcal{N}\circ\mathcal{E}}(\proj{\phi_-}_L),
  \widehat{\mathcal{N}\circ\mathcal{E}}(\proj{\phi_+}_L))
  \geqslant \frac{ \Delta T_L }{2\,\norm{Z_{C'E}}_\infty}\ ,
\end{align}
where $\Delta T_L$ is the spectral range of $T_L$, i.e., the difference between
the maximal and minimal eigenvalue of $T_L$.  We assume here for simplicity that
the maximal and minimal eigenvalues of $T_i$ are equal in magnitude, such that
$\Delta T_i = 2\norm{T_i}_\infty$; hence
$2\,\norm{Z_{C'E}}_\infty = 2n\max_i \norm{T_i}_\infty = n \max_i\Delta T_i$.
On the other hand, if the environment's states are distinguishable for
different codewords, then the accuracy of the code is bad; specifically, we show
in the Appendix (\cref{lemma:aqecc-environtrdist}) that for any two logical
states $\ket{x}_L, \ket{x'}_L$, we have
\begin{align}
  \epsilon_{\mathrm{worst}}(\mathcal{N}\circ\mathcal{E})
  \geqslant
  \frac12
  \delta\Bigl( \widehat{\mathcal{N}\circ\mathcal{E}}(\proj{x}_L),
  \widehat{\mathcal{N}\circ\mathcal{E}}(\proj{x'}_L) \Bigr)\ .
\end{align}
Finally, we have proven our simplified main result.
\begin{theorem}
  \label{thm:simple-main-result-one-erasure}
  \noproofref%
  The performance of the covariant code $\mathcal{E}(\cdot)=V(\cdot)V^\dagger$ under the above
  assumptions, quantified by the worst-case entanglement fidelity, is bounded as
  follows:
  \begin{align}
  \epsilon_{\mathrm{worst}}(\mathcal{N}\circ\mathcal{E})
  \geqslant
  \frac1{2n} \frac{\Delta T_L}{\max_i \Delta T_i}\ . \label{eq:simple-bound}
\end{align}
\end{theorem}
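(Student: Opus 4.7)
The plan is to follow the information-theoretic route sketched by the Bény–Oreshkov characterization: a code is good for correction precisely when the environment learns little, so to show the code is bad we show that the environment can measure something nontrivial about the logical charge.

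First I would use covariance to fix a channel-side identity. By \eqref{eq:covarianve_and_generators}, if $\ket{t}_L$ is an eigenstate of $T_L$ with eigenvalue $t$, then its image $\ket{\psi_t}_A = V\ket{t}_L$ is an eigenvector of $T_A$ with eigenvalue $t - \nu$. Since $T_A = \sum_i T_i$ is a sum of local terms, the total charge of any codeword decomposes as a sum of one-subsystem charge expectations on the reduced states $\rho_i^{x} = \tr_{A\setminus A_i}(\proj{\psi_x})$. This means that the complementary channel $\widehat{\mathcal{N}\circ\mathcal{E}}$ from \eqref{eq:compl-channel-N-E-one-erasure}, which hands the environment a random $\rho_i^{x}$ together with the label $i$, contains enough classical data to estimate the logical charge via the observable
\begin{align}
Z_{C'E} \;=\; n \sum_i \proj{i}_{C'} \otimes T_i,
\end{align}
whose expectation on $\widehat{\mathcal{N}\circ\mathcal{E}}(\proj{x}_L)$ equals $\tr(T_L\proj{x}_L)$ (up to the irrelevant offset $\nu$).

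Second, I would pick the two logical eigenstates $\ket{\phi_\pm}_L$ realizing the extreme eigenvalues of $T_L$, so that the two expectation values of $Z_{C'E}$ differ by exactly $\Delta T_L$. The standard observable-vs-trace-distance inequality $\delta(\rho,\sigma) \geq |\tr(Q\rho) - \tr(Q\sigma)|/(2\|Q\|_\infty)$ then produces
\begin{align}
\delta\bigl(\widehat{\mathcal{N}\circ\mathcal{E}}(\proj{\phi_-}),\,\widehat{\mathcal{N}\circ\mathcal{E}}(\proj{\phi_+})\bigr)
\;\geq\; \frac{\Delta T_L}{2\|Z_{C'E}\|_\infty}.
\end{align}
Because $Z_{C'E}$ is block-diagonal in $C'$, $\|Z_{C'E}\|_\infty = n\max_i \|T_i\|_\infty$, which under the simplifying assumption that the local charge spectrum is symmetric equals $\tfrac{n}{2}\max_i \Delta T_i$.

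Third and last, I would invoke the appendix lemma \cref{lemma:aqecc-environtrdist}, which asserts that for any pair of pure logical inputs $\ket{x},\ket{x'}$,
\begin{align}
\epsilon_{\mathrm{worst}}(\mathcal{N}\circ\mathcal{E}) \;\geq\; \tfrac{1}{2}\,\delta\bigl(\widehat{\mathcal{N}\circ\mathcal{E}}(\proj{x}),\,\widehat{\mathcal{N}\circ\mathcal{E}}(\proj{x'})\bigr).
\end{align}
Applying this to $\ket{\phi_\pm}$ and chaining the two bounds yields \eqref{eq:simple-bound} immediately. The genuinely nontrivial step here is this last one: turning a mere distinguishability statement about the environment into a lower bound on the worst-case entanglement infidelity. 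This is where the Bény–Oreshkov rewriting \eqref{eq:Beny-Oreshkov-f-worst}, monotonicity of fidelity under partial trace, and the Fuchs–van de Graaf inequalities between fidelity and trace distance must be combined carefully; everything upstream of it is either a direct consequence of covariance or a one-line observable trick. The remaining steps (charge extraction, operator-norm estimate) are essentially bookkeeping once the decoupling lemma is in hand.
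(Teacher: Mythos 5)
Your proposal reproduces the paper's proof of \cref{thm:simple-main-result-one-erasure} essentially step for step: the same charge observable $Z_{C'E}=n\sum_i\proj{i}_{C'}\otimes T_i$, the same observable-vs-trace-distance inequality applied to the extremal logical eigenstates, and the same invocation of \cref{lemma:aqecc-environtrdist} to convert environment distinguishability into a lower bound on $\epsilon_{\mathrm{worst}}$. Nothing to add.
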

A similar analysis leads to a bound for the figure of merit
$\epsilon_{\mathrm{e}}$ based on the average entanglement fidelity,
\begin{align}
  \epsilon_{\mathrm{e}}(\mathcal{N}\circ\mathcal{E})
  \geqslant
  \frac1{n} \frac{\norm{T_L - \tr(T_L) \Ident_L/d_L}_1/(2d_L)}{\max_i \Delta T_i} \ ,
  \label{eq:simple-bound-e}
\end{align}
The right hand side of~\eqref{eq:simple-bound-e} is simply a different measure
for the spread of eigenvalues; unlike $\Delta T_L$, it takes contributions from
all eigenvalues of $T_L$.  The argument of the norm is simply the charge
operator $T_L$ with a global shift that makes the operator traceless.
\Cref{eq:simple-bound-e} is proven as a special case of
\cref{thm:main-result-full}.

In \cref{appx:corrbound}, we provide an alternative proof for the
bound~\eqref{eq:simple-bound-e} using a different approach: We quantify the
information leaked to the environment by studying the connected correlation
functions between the subsystems.  In fact, we lower bound the sum of the
correlation functions between the logical qubit and individual physical
subsystems, and since this total correlation is non-zero, we deduce that the
environment is correlated with the logical information, which translates to an
upper bound on the fidelity of recovery.

In short, a covariant code with respect to a local charge may not perform
well for correcting a single erasure at a known location, unless it either
encodes the information into large physical systems, with a large range of
possible charge values ($\max_i \Delta T_{i} \to\infty$), or it encodes the
information into many physical systems ($n\to\infty$).

The following theorem generalizes \cref{thm:simple-main-result-one-erasure} in a
number of ways.  It allows for the code to only approximately conserve charge,
considers erasures affecting multiple systems with arbitrary erasure
probabilities, and does not require the charge to be strictly local; finally, it
can be applied in situations in which the codewords have most of their weight on
a finite charge range (but may have distribution tails extending to arbitrarily
large charge values).  The setting of \cref{thm:main-result-full} is depicted in
\cref{fig:main-result-thm-setting}.
\begin{figure}
  \centering
  \includegraphics{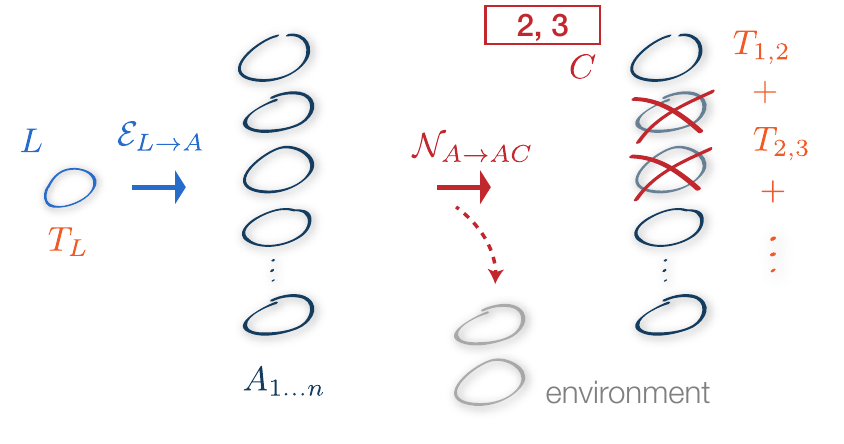}
  \caption{The general setting of \cref{thm:main-result-full}.  An
    approximately charge-conserving encoding maps a logical state onto
    several physical systems.  The noise acts by randomly erasing some
    subsystems, and storing which systems were erased in a register $C$.  Which
    combinations of subsystems can be lost and with which probability can be
    chosen arbitrarily.  The continuous symmetry is assumed to have a generator
    represented by $T_L$ on the logical system and by $T_A$ on the physical
    systems. We assume that $T_A$ can be written as a sum of terms
    $T_A = \sum T_\alpha$, where each $T_\alpha$ acts on a combinations of
    subsystems that could possibly be lost to the environment.  For instance,
    $T_A$ may include a term $T_{3,4,7}$ acting on systems $A_3 A_4 A_7$ only if
    the noise model is such that the systems $3,4,7$ have a nonzero probability
    of being simultaneously erased.}
  \label{fig:main-result-thm-setting}
\end{figure}
\begin{theorem}
  \label{thm:main-result-full}
  Let $L$ and $A=A_1\otimes\cdots\otimes A_n$ be the logical and physical
  systems, respectively, and let $\mathcal{E}_{L\to A}$ be any completely
  positive, trace-preserving map.  Consider logical and physical observables
  $T_L$ and $T_A$.  We assume that:
  \begin{enumerate}[label=(\alph*)]
  \item\label{item:main-thm-general-condition-approx-charge-conserving} There is
    a $\nu\in\mathbb{R}$ and a $\delta\geqslant 0$ such that
    $\norm{(T_L - \nu\Ident_L) - \mathcal{E}^\dagger(T_A)}_\infty \leqslant
    \delta$, i.e., the code is approximately charge-conserving up to a constant
    shift;
  \item We can write $T_A = \sum_{\alpha} T_\alpha$, where each term
    $T_\alpha$ acts on a subset of physical systems labeled by $\alpha$;
  \item Fixing cut-offs $t_{\alpha}^\pm$ for each $\alpha$, there is
    $\eta\geqslant 0$ such that for any state $\sigma_L$, we have
    \begin{align}
      \abs*{\tr`*( \sum (T_\alpha -
      t_\alpha)\Pi_\alpha^\perp\;\mathcal{E}(\sigma_L))} \leqslant \eta\ ,
      \label{eq:main-result-full-condition-charge-cutoffs}
    \end{align}
    where $\Pi_\alpha^\perp$ projects onto the eigenspaces of $T_\alpha$ whose
    eigenvalues are outside $\intervalc{t_\alpha^-}{t_\alpha^+}$, and where
    $t_\alpha = (t_\alpha^-+t_\alpha^+)/2$.  That is, when chopping off parts
    of the codewords exceeding charge $t_\alpha^\pm$ on term $\alpha$ and
    shifting the charge term to center it around zero, the total average
    charge chopped off does not exceed $\eta$;
  \item The noise acts as per~\eqref{eq:noise-map-general-alpha} by erasing
    subsystems labeled by $\alpha$ with probability $q_\alpha>0$, for each
    $\alpha$ for which there is a corresponding term $T_\alpha$ in the global
    generator $T_A$.
  \end{enumerate}
  Then the accuracy of the code $\mathcal{E}_{L\to A}$ against the noise
  $\mathcal{N}$ is bounded as
  \begin{subequations}
    \begin{align}
      \left.\begin{array}{r}
              \epsilon_{\mathrm{e}}(\mathcal{N}\circ\mathcal{E})
              \\[1ex]
              \avg[\big]{ \epsilon_{\mathrm{e}}(\mathcal{N}^{\alpha}\circ\mathcal{E}) }_\alpha
            \end{array}\right\}
      &\geqslant  \frac{\norm{T_L - \mu(T_L)\Ident_L}_1/d_L - \delta - \eta}%
        {\max_\alpha (\Delta T_\alpha / q_\alpha) }\ ;
        \label{eq:main-thm-bound-avg-entgl-fid}
      \\
      \epsilon_{\mathrm{worst}}(\mathcal{N}\circ\mathcal{E}) 
      &\geqslant \frac{\Delta T_L / 2 - \delta - \eta}%
        {\max_\alpha (\Delta T_\alpha / q_\alpha) }\ ,
        \label{eq:main-thm-bound-worst-entgl-fid}
    \end{align}
  \end{subequations}
  where $\Delta T_\alpha = t_\alpha^+ - t_\alpha^-$, where $d_L$ is the
  dimension of $L$, where $\avg{\cdot}_\alpha = \sum_\alpha q_\alpha (\cdot)$,
  and where $\mu(T_L)$ is a median eigenvalue of $T_L$. We define a median
  eigenvalue of $T_L$ to be a number $\mu$ such that the length-$d_L$ vector of
  eigenvalues of $T_L$ counted with multiplicity has at least
  $\lceil d_L/2\rceil$ components that are less than or equal to $\mu$, and at
  least $\lceil d_L/2\rceil$ components that are greater than or equal to $\mu$.

  Additionally, the first term in the numerator
  of\/~\eqref{eq:main-thm-bound-avg-entgl-fid} may be replaced by
  $\norm{T_L - \tr(T_L)\Ident/d_L}_1/(2d_L)$.
\end{theorem}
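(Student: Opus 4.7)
The plan is to extend the argument behind \cref{thm:simple-main-result-one-erasure} by constructing an observable on the environment whose expectation value reveals the logical charge up to errors controlled by $\delta$ and $\eta$, then feeding this into the B\'eny--Oreshkov characterization \eqref{eq:Beny-Oreshkov-f}. The key tool throughout is the observable--trace-distance inequality $\delta(\rho,\sigma) \geqslant \abs{\tr(Z\rho) - \tr(Z\sigma)}/(2\norm{Z}_\infty)$ applied to environment states. Concretely, for each $\alpha$ I introduce the truncated, centered charge $\tilde T_\alpha = (T_\alpha - t_\alpha)\Pi_\alpha$, whose spectrum lies in $[-\Delta T_\alpha/2, \Delta T_\alpha/2]$, and set
\begin{equation*}
Z = \sum_\alpha q_\alpha^{-1}\,\proj{\alpha}_{C'}\otimes \tilde T_\alpha, \qquad \norm{Z}_\infty \leqslant \frac{1}{2}\max_\alpha \Delta T_\alpha/q_\alpha.
\end{equation*}
Using the explicit complementary channel \eqref{eq:compl-channel-N-E-general-alpha} together with the decomposition $T_\alpha = \tilde T_\alpha + t_\alpha\Ident + (T_\alpha - t_\alpha)\Pi_\alpha^\perp$, a direct computation gives $\tr(Z\,\widehat{\mathcal{N}\circ\mathcal{E}}(\rho_L)) = \tr(\mathcal{E}^\dagger(T_A)\rho_L) - \sum_\alpha t_\alpha - \tr\bigl(\sum_\alpha(T_\alpha - t_\alpha)\Pi_\alpha^\perp\,\mathcal{E}(\rho_L)\bigr)$ for any state $\rho_L$. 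Invoking (a) to replace $\mathcal{E}^\dagger(T_A)$ by $T_L - \nu\Ident_L$ up to operator-norm error $\delta$, and (c) to bound the tail term by $\eta$, yields $\tr(Z\,\widehat{\mathcal{N}\circ\mathcal{E}}(\rho_L)) = \tr(T_L\rho_L) + \kappa + \xi(\rho_L)$ with $\abs{\xi(\rho_L)} \leqslant \delta + \eta$ and $\kappa = -\nu - \sum_\alpha t_\alpha$ a state-independent constant.

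For the worst-case bound, take $\ket{\phi_\pm}_L$ as eigenvectors of $T_L$ with maximal and minimal eigenvalues; the difference of $Z$-expectations on the two resulting environment states is at least $\Delta T_L - 2(\delta + \eta)$. The observable inequality combined with the appendix lemma cited as $\epsilon_{\mathrm{worst}} \geqslant \frac{1}{2}\delta(\widehat{\mathcal{N}\circ\mathcal{E}}(\proj{x}_L), \widehat{\mathcal{N}\circ\mathcal{E}}(\proj{x'}_L))$ then immediately gives \eqref{eq:main-thm-bound-worst-entgl-fid}.

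For the average-case bound, start from \eqref{eq:Beny-Oreshkov-f-fixedinput} and the Fuchs--van de Graaf inequality $\sqrt{1-F^2} \geqslant \delta$ to obtain $\epsilon_{\mathrm{e}} \geqslant \min_\zeta \delta(\widehat{\mathcal{N}\circ\mathcal{E}}(\proj{\hat\phi}_{LR}), \zeta\otimes\tau_R)$. Dephasing $R$ in the $T_L$-eigenbasis $\{\ket{x}_L\}$ is a quantum channel and so cannot increase trace distance, reducing the bound to $\min_\zeta d_L^{-1}\sum_x \delta(\widehat{\mathcal{N}\circ\mathcal{E}}(\proj{x}_L), \zeta)$. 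Substituting the observable inequality with $Z$ turns the $\zeta$-minimization into the scalar problem $\min_c \sum_x \abs{t_x + \kappa - c}$, modulo an overall error $d_L(\delta + \eta)$; this scalar minimum is attained at any median of $\{t_x\}$, which by the theorem's definition equals $\mu(T_L) + \kappa$, so the minimum equals $\norm{T_L - \mu(T_L)\Ident_L}_1$. This gives \eqref{eq:main-thm-bound-avg-entgl-fid} for $\epsilon_{\mathrm{e}}$. For $\avg{\epsilon_{\mathrm{e}}(\mathcal{N}^\alpha\circ\mathcal{E})}_\alpha$, the same dephasing + observable argument applied per $\alpha$ is equivalent to restricting $\zeta$ in the preceding bound to classical-quantum form over $C'$ with weights $q_\alpha$; this is a subset of all $\zeta$'s, so the restricted minimum is at least as large and the same lower bound holds. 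The ``additionally'' variant follows from $\norm{T_L - \tr(T_L)\Ident_L/d_L}_1 \leqslant 2\norm{T_L - \mu(T_L)\Ident_L}_1$, a triangle-inequality consequence of $\abs{\tr(T_L)/d_L - \mu(T_L)}\,d_L \leqslant \norm{T_L - \mu(T_L)\Ident_L}_1$.

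The main obstacle is the correct additive bookkeeping of the two independent approximation sources: the operator-norm slack $\delta$ in approximate charge conservation (a) must combine cleanly with the state-averaged cutoff slack $\eta$ from (c), with the right numerical constants in both figures of merit. A secondary subtlety in the average case is identifying the optimal $\zeta$ effectively as an eigenstate concentrated at the median eigenvalue of the shifted $Z$; this invokes the standard variational characterization of the median as the minimizer of the $\ell^1$-deviation, and requires checking that restricting $c = \tr(Z\zeta)$ to the achievable range only strengthens the bound.
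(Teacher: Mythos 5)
Your proposal is correct and reaches the same bounds as the paper, but the second half of your argument takes a genuinely different and more elementary route. The observable $Z$ you construct, its expectation value computation, and the norm bound $\norm{Z}_\infty\leqslant\frac{1}{2}\max_\alpha\Delta T_\alpha/q_\alpha$ are essentially identical to the paper's \cref{lemma:main-result-thm-exists-environ-observable}. Where you diverge is in deducing the fidelity bounds from the observable. The paper proves a general intermediate result, \cref{lemma:main-result-thm-environ-obs-implies-poor-code}, which lower-bounds $\epsilon_{\ket\phi}$ for an \emph{arbitrary} input $\ket\phi_{LR}$ in terms of the auxiliary quantities $C_{\phi_L,T_L}=\min_\mu\norm{\phi_L^{1/2}(T_L-\mu\Ident)\phi_L^{1/2}}_1$ and $C^0_{\phi_L,T_L}$, evaluated via an SDP-duality computation; the main theorem then specializes to $\hat\phi$ and to a two-dimensional entangled state built from the extremal eigenvectors. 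You skip that machinery entirely: for the worst-case bound you go through \cref{lemma:aqecc-environtrdist} directly with the two extremal eigenstates, and for the average-case bound you dephase $R$ in the $T_L$-eigenbasis, exploit block-diagonality to reduce the trace distance to $\frac{1}{d_L}\sum_x\delta(\widehat{\mathcal{N}\circ\mathcal{E}}(\proj{x}_L),\zeta)$, and then solve a scalar $\ell^1$-median problem in $c=\tr(Z\zeta)$. Both routes produce the same numerators and denominators, so the buyoff is clarity rather than sharpness: your dephasing trick replaces an SDP dual by an elementary median argument, at the cost of losing the paper's more general per-$\ket\phi$ statement (which the paper reuses nowhere else).

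Two small notes. First, the paper derives the ``additionally'' variant independently, from the $C^0$ branch of its general lemma; your observation that it also follows from the median bound via $\norm{T_L-\tr(T_L)\Ident_L/d_L}_1\leqslant 2\norm{T_L-\mu(T_L)\Ident_L}_1$ (itself a one-line triangle inequality) is correct and slightly cheaper, though it produces the variant as a \emph{weakening} rather than an alternate bound of independent strength. Second, your handling of $\avg{\epsilon_{\mathrm{e}}(\mathcal{N}^\alpha\circ\mathcal{E})}_\alpha$ is terse; the cleanest phrasing of your idea is that by block-diagonality in $C'$, $\sum_\alpha q_\alpha\,\delta(\widehat{\mathcal{N}^\alpha\circ\mathcal{E}}(\hat\phi_{LR}),\zeta_\alpha\otimes\tau_R) = \delta(\widehat{\mathcal{N}\circ\mathcal{E}}(\hat\phi_{LR}),\tilde\zeta\otimes\tau_R)$ for the classical-quantum $\tilde\zeta=\sum_\alpha q_\alpha\proj{\alpha}_{C'}\otimes\zeta_\alpha$, to which your general-$\zeta$ lower bound applies verbatim; the paper obtains the same thing via joint convexity of the trace distance. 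It is worth being explicit that the per-$\alpha$ observable argument does \emph{not} apply directly (since $\tr(q_\alpha^{-1}\tilde T_\alpha\,\widehat{\mathcal{N}^\alpha\circ\mathcal{E}}(\proj{x}_L))$ alone does not approximate $t_x+\kappa$), so the reduction to a single c-q $\tilde\zeta$ is the step doing the work.
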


The bound~\eqref{eq:main-thm-bound-avg-entgl-fid} is intuitively sensitive to
the ``average amount of logical charge'' in absolute value, up to an arbitrary
charge offset; this makes sense since the average entanglement fidelity ``only
samples the average case.''  On the other hand, the worst-case entanglement
fidelity picks up the worst possible situation, noticing that there are two
states with maximally different charges; the
bound~\eqref{eq:main-thm-bound-worst-entgl-fid} reflects that the code will
perform the worst for those input states.
The median eigenvalue in~\eqref{eq:main-thm-bound-avg-entgl-fid} appears as an
optimal solution to the optimization $\min_\mu \norm{T_L - \mu \Ident_L}_1$.
For an operator $T$ that has the same number of positive eigenvalues as negative
ones (with multiplicity), such as a component of spin, we can set $\mu(T) = 0$.

For isometric encodings,
condition~\ref{item:main-thm-general-condition-approx-charge-conserving} really
means that the encoding is approximately covariant.  However our theorem holds
more generally for encodings that are not an isometry, as long as they
approximately conserve charge.  The latter condition is stricter than being
covariant.  However, an approximately covariant channel encoding that does not
approximately preserve charge can still fit in the context of
\cref{thm:main-result-full}, by explicitly considering instead its covariant
Stinespring dilation~\cite{Scutaru1979RMP_covariant,%
  Keyl1999JMP_cloning,PhDMarvian2012_symmetry,Faist2018arXiv_thcapacity} into an
ancilla system which is then erased by the environment with certainty as part of
the noise channel.

The proof of \cref{thm:main-result-full} is provided in
\cref{appx:ProofBounds}.  The proof is split into two parts.  A first part
shows that there exists an observable accessible to the environment which is
able to infer the global logical charge to a good approximation.  The second
part deduces from the existence of such an observable that the code must
necessarily have limited performance, as quantified by various entanglement
fidelity measures.


\section{Criterion for certifying code performance}
\label{sec:criterioncode}
Here we introduce a criterion that allows us to certify a given encoding as
performing accurately as an approximate error-correcting code against any given
noise channel, as measured by the worst-case entanglement fidelity.
Proving that a code has a good average-case entanglement fidelity (i.e., showing
that $\epsilon_{\mathrm{e}}(\mathcal{N}\circ\mathcal{E})$ is small) is perhaps
comparatively easier, as one can attempt to guess a suitable recovery map for a
maximally entangled input state and directly compute the fidelity of recovery.
The method we present provides an upper bound to the stricter measure
$\epsilon_{\mathrm{worst}}(\mathcal{N}\circ\mathcal{E})$ and does not require us to
come up with explicit recovery procedures.

Intuitively, if we consider erasures at known locations, we can expect that if
all local reduced states of codewords look alike independently of the logical
information, then the code performs well.  That is, if for each individual
subsystem each codeword has the same reduced state, then because the environment
gets access only to those individual reduced states, it obtains no information
about the codeword and the erasure is thus correctable.  This intuition is
correct in the exact case, but in the approximate case the fact that the
entanglement fidelity is defined with a ``stabilization'' over a reference
system poses an additional challenge~\cite{CedricBenyEmail}.  Our solution is to
consider how logical operators of the form $\ketbra{x}{x'}$ are encoded, where
$\{ \ket{x} \}$ is any fixed basis of the logical system.  In the case of a
single erasure at a known location, we define
\begin{align}
  \label{eq:criterion-simple-reduced-state-i-codeword-x-xp}
  \rho_i^{x,x'} = \tr_{A\setminus A_i}(\mathcal{E}(\ketbra{x}{x'}))\ ,
\end{align}
noting that $\rho_i^{x,x'}$ is a quantum state if $x=x'$ but is not even
necessarily Hermitian for $x\neq x'$.  Our criterion then states the following:
If the states $\rho_i^{x,x}$ are approximately independent of $x$, and if each
$\rho_i^{x,x'}$ for $x\ne x'$ has a very small norm, then the code is a good approximate
error-correcting code against erasure of subsystem $i$.
\begin{theorem}
  \noproofref
  \label{prop:criterion-simple-main}
  For any encoding channel $\mathcal{E}$ and for any noise channel
  $\mathcal{N}$, let $\widehat{\mathcal{N}\circ\mathcal{E}}$ be a complementary
  channel of $\mathcal{N}\circ\mathcal{E}$.
  Fixing a basis of logical states $\{ \ket{x} \}$, we define
  \begin{align}
    \rho^{x,x'} = \widehat{\mathcal{N}\circ\mathcal{E}}(\ketbra{x}{x'})\ .
  \end{align}
  Assume that there exists a state $\zeta$, as well as constants
  $\epsilon,\nu \geqslant 0$ such that
  \begin{subequations}
    \begin{align}
      F(\rho^{x,x}, \zeta) &\geqslant \sqrt{1 - \epsilon^2} \\
      \norm{\rho^{x,x'}}_1 &\leqslant \nu \quad\text{ for $x\neq x'$}.
    \end{align}
  \end{subequations}
  Then, the code $\mathcal{E}$ is an approximate error-correcting code with an
  approximation parameter satisfying
  \begin{align}
    \epsilon_{\mathrm{worst}}(\mathcal{N}\circ\mathcal{E})
    \leqslant \epsilon + d_L\sqrt{\nu}\ ,
    \label{eq:criterion-certify-code-epsilon-worst}
  \end{align}
  where $d_L$ is the logical system dimension.

  If one of several noise channels is applied at random but it is known which
  one occurred, then~\eqref{eq:criterion-certify-code-epsilon-worst} holds for
  the overall noise channel if the assumptions above are satisfied for each
  individual noise channel.
\end{theorem}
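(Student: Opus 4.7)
The plan is to invoke the B\'eny-Oreshkov characterization in~\eqref{eq:Beny-Oreshkov-f-worst}: to lower-bound $f_{\mathrm{worst}}$, I need only exhibit a single state $\zeta$ -- namely, the one given by the hypothesis -- and show that $\widehat{\mathcal{N}\circ\mathcal{E}}$ is close to the constant channel $\mathcal{T}_\zeta$ uniformly over pure bipartite inputs $\ket{\phi}_{LR}$. It is convenient to work with the purified distance $P(\rho,\sigma):=\sqrt{1-F^2(\rho,\sigma)}$, which is a bona fide metric on normalized states, so the goal becomes
\begin{align*}
P\bigl((\widehat{\mathcal{N}\circ\mathcal{E}}\otimes\mathrm{Id}_R)(\proj{\phi}),\,\zeta\otimes\phi_R\bigr) \leq \epsilon + d_L\sqrt{\nu}
\end{align*}
for every pure $\ket{\phi}_{LR}$, where $\phi_R=\tr_L(\proj{\phi})$.

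Expanding $\ket{\phi}_{LR}=\sum_x\sqrt{p_x}\,\ket{x}_L\ket{\tilde\psi_x}_R$ in the given logical basis with $\ket{\tilde\psi_x}$ unit vectors, one has $(\widehat{\mathcal{N}\circ\mathcal{E}}\otimes\mathrm{Id}_R)(\proj{\phi})=\sum_{x,x'}\sqrt{p_x p_{x'}}\,\rho^{x,x'}\otimes\ketbra{\tilde\psi_x}{\tilde\psi_{x'}}$. I would introduce the ``diagonal'' intermediate state
\begin{align*}
\tilde\sigma := \sum_x p_x\,\rho^{x,x}\otimes\proj{\tilde\psi_x}
\end{align*}
and apply the triangle inequality for $P$ to separate the error into an off-diagonal contribution $P\bigl((\widehat{\mathcal{N}\circ\mathcal{E}}\otimes\mathrm{Id}_R)(\proj{\phi}),\tilde\sigma\bigr)$ and a diagonal contribution $P(\tilde\sigma,\zeta\otimes\phi_R)$.

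For the off-diagonal piece, the difference is a sum over $x\neq x'$ whose trace norm is bounded, via the hypothesis $\norm{\rho^{x,x'}}_1\leq\nu$ together with the Cauchy-Schwarz estimate $(\sum_x\sqrt{p_x})^2\leq d_L$, by $\nu d_L$; Fuchs-van de Graaf's inequality then upgrades this to a purified-distance bound of $\sqrt{\nu d_L}\leq d_L\sqrt{\nu}$. The main delicate point is the diagonal piece, because the $\ket{\tilde\psi_x}$ may be non-orthogonal so that $\tilde\sigma$ and $\zeta\otimes\phi_R$ do not decompose as direct sums indexed by $x$ on $R$, precluding a naive fidelity factorization. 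My remedy is to adjoin a fresh classical register $X$ recording the value of $x$ and to compare the classical-quantum extensions
\begin{align*}
\hat\sigma = \sum_x p_x\,\proj{x}_X\otimes\rho^{x,x}\otimes\proj{\tilde\psi_x}, \qquad \hat\tau = \sum_x p_x\,\proj{x}_X\otimes\zeta\otimes\proj{\tilde\psi_x}.
\end{align*}
On these, fidelity factorizes classically, giving $F(\hat\sigma,\hat\tau)=\sum_x p_x F(\rho^{x,x},\zeta)\geq\sqrt{1-\epsilon^2}$; tracing out $X$ and applying monotonicity of $F$ under the partial trace yields $P(\tilde\sigma,\zeta\otimes\phi_R)\leq\epsilon$.

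Adding the two contributions via the triangle inequality produces the claimed bound uniformly in $\ket{\phi}$, which translates back to the stated inequality on $\epsilon_{\mathrm{worst}}$ through~\eqref{eq:Beny-Oreshkov-f-worst} and the definition~\eqref{eq:def-aqecc-epsilon-worst}. The extension to the case where one of several noise channels is applied at random follows immediately by choosing $\zeta$ for the overall noise channel to be the corresponding convex combination of the per-branch $\zeta$'s and observing that both the fidelity and trace-norm hypotheses compose correctly through the classical register recording which noise branch occurred.
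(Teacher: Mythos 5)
Your proof is correct and follows essentially the same route as the paper's: invoke B\'eny--Oreshkov, introduce the diagonal intermediate state, control the off-diagonal piece with a trace-norm estimate upgraded to purified distance via Fuchs--van de Graaf, control the diagonal piece via concavity/data-processing, and finish with the triangle inequality. The only differences are bookkeeping: the paper works with the operator $B_R$ directly rather than normalizing to $p_x\proj{\tilde\psi_x}$; it bounds $\|B_R\ketbra{x}{x'}B_R^\dagger\|_1 \leqslant 1$ rather than by $\sqrt{p_x p_{x'}}$ (your version incidentally yields the tighter off-diagonal estimate $\sqrt{\nu d_L}$ before relaxing to $d_L\sqrt{\nu}$); and it invokes joint concavity of the fidelity directly where you re-derive it through an auxiliary classical register plus monotonicity, which is a standard equivalent argument.
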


Note that the criterion holds for any arbitrary noise channel, not only for
erasures at known locations.  The proof of \cref{prop:criterion-simple-main}
is given in \cref{appx:criterion-certify-code}.

Our criterion can be seen as an expression of the approximate Knill-Laflamme
conditions~\cite{Beny2010PRL_AQECC} in a particular basis, but where we provide
simple and practical conditions on how to bound the error parameter
$\epsilon_{\mathrm{worst}}$ of the code.

Our criterion is a sufficient condition for a code to be approximately
error-correcting, but the condition is not necessary. When the
criterion does not apply we cannot draw any conclusion about the code's
performance.

We note that our criterion does not make reference to individual Kraus operators
of the noise channel, as the Knill-Laflamme conditions or their approximate
verisons do~\cite{Knill1997PRA_correction,Beny2010PRL_AQECC}.  This property
eases its application to large-dimensional physical quantum systems.


\section{Examples of covariant codes}
\label{sec:Uonecodes}
Here we study three classes of covariant codes that illustrate the behavior of
our bound in either regimes of large subsystem dimensions, or large number of
physical subsystems (\cref{tab:examples-codes}).

\subsection{Three-rotor secret-sharing code}
\label{sec:maintext-truncated-Hayden-code}

In this subsection, we apply our criterion to a truncated version of a code
introduced by Hayden \emph{et al.\@}~\cite{Hayden2017arXiv_frame}, linking that
code to the well-known three-qutrit secret-sharing quantum polynomial
code~\cite{Aharonov1997,Cleve1999PRL_secret}.  While illustrating how to use our
criterion, it also provides a covariant code which performs well in the limit of
codewords covering a large range of physical charge on the subsystems.

\subsubsection{Rotor version of the qutrit secret-sharing code}
\label{subsubsec:rotor-version}

\begin{table}
\begin{tabular}{lccc}
\hline\hline
 & Covariance & ~~Dimen.~~ & Error correction\tabularnewline
\hline
$[[3,1,2]]_{\mathbb{Z}}$ &  &  & \tabularnewline
sharp cutoff & $U(1)$ & Finite & Approximate\tabularnewline
smooth cutoff & $U(1)$ & Infinite & Approximate\tabularnewline
\hline
$[[5,1,3]]_{\mathbb{Z}}$ &  &  & \tabularnewline
qudit version & $\mathbb{Z}_{D}$ & Finite & Exact\tabularnewline
smooth cutoff & $U(1)$ & Infinite & Approximate\tabularnewline
\hline
\multicolumn{2}{l}{$[[n,a\log n,b\log n]]$} &  & \tabularnewline
finite $n$ & $U(1)$ & Finite & Approximate\tabularnewline
\hline\hline
\end{tabular}
\caption{Summary of the codes considered in \cref{sec:Uonecodes}: the
  three-rotor secret-sharing code, the five-rotor perfect code, and an $n$-qubit
  ``thermodynamic code'' with codewords consisting of Dicke states (and $a,b$
  chosen appropriately).}
\label{tab:examples-codes}
\end{table}

For our purposes, a quantum rotor (also, an $O(2)$ or planar quantum rotor) is simply a
system with a basis $`{\ket{x}}$ that is labeled by an integer $x\in\mathbb{Z}$
indexing representations of $U(1)$~\cite{Albert2017}.  Consider the three-rotor code given in
Ref.~\cite{Hayden2017arXiv_frame} defined by the isometry from $L$ to
$A=A_1\otimes A_2\otimes A_3$ given as
\begin{align}
  V_{L\to A} :\quad \ket{x}_L ~\to~ \sum_{y\in\mathbb{Z}} \, \ket{-3y,y-x,2(y+x)}_A
  \ ,
  \label{eq:threerotor}
\end{align}
where the states $`{\ket{x}}$ are eigenstates of the angular momentum operators
$T_L$ and $T_{A} = T_1 + T_2 + T_3$.  This code can correct against the loss of
any of the three subsystems~\cite{Hayden2017arXiv_frame}.  Moreover, the code is
covariant with respect to the charge $T$: A logical state $\ket{x}_L$ is mapped
onto a codeword with the same total charge $x$.

Interestingly, this code is a natural rotor generalization of the three-qutrit secret-sharing code~\cite{Aharonov1997,Cleve1999PRL_secret}.
The three-qutrit code maps the basis vectors $\ket{j}_L$ ($j=0,1,2$) of a
logical qutrit into the codewords $\sum_{k} \ket{k,k-j,k+j}$ where the addition
is modulo $3$.  Now, substitute each qutrit subsystem with a rotor. 
We obtain a code defined by the following encoding map:
\begin{align}
  \tilde{V}_{L\to A} :\quad
  \ket{x}_L ~\to~ \sum_{y\in\mathbb{Z}} \, \ket{y,y-x,y+x}\ .
  \label{eq:three-rotor-secret-sharing-notcovariant}
\end{align}

This code is not yet covariant with respect to the charge states $\ket{x}$, as
the charge of the codeword corresponding to $\ket{x}_L$ is not $x$.  However, we
may apply the isometry mapping $\ket{(.)}\to\ket{-3(.)}$ on the first rotor and
$\ket{(.)}\to\ket{2(.)}$ on the second, yielding the encoding
map~\eqref{eq:threerotor}.  (In fact, the
code~\eqref{eq:three-rotor-secret-sharing-notcovariant} is covariant
with respect to a different physical charge generator,
$T'_A = -3T_1 + T_2 + 2T_3$, whereas the code~\eqref{eq:threerotor} is covariant
with respect to the natural physical charge carried by three rotors,
$T_A = T_1+T_2+T_3$.)
In this sense, the code~\eqref{eq:threerotor} is a natural $\UU(1)$-covariant
generalization of the qutrit secret-sharing code.

In the following sections, we address the problem that the codewords
in~\eqref{eq:threerotor} are not normalizable, by building suitable wave packet
states.
We normalize the codewords in two different ways: the sharp cutoff selects a
range of charges to use for each rotor and discards the rest, while the smooth
cutoff imposes a Gaussian envelope on each rotor, thereby keeping all the states
but making them less prominent as the charge
increases~\cite{Gottesman2001PRA_oscillator} (see also related recent
work~\cite{AlvaroMischa-inprep}).  Our noise model is one single erasure at a
known location with probabilities $q_1,q_2,q_3=1/3$, as given
by~\eqref{eq:noise-map-one-erasure}.

\subsubsection{Sharp cutoff}
\label{subsec:sharp}

Let us now truncate the logical system $L$ to a dimension of $2h+1$ for some
fixed $h$, so the charge with respect to which the system is $\UU(1)$-covariant
becomes $T_L = \sum_{x=-h}^{h} x \proj{x}_L$. The physical subsystems are
truncated in turn to $2m+1$ dimensions, so there are in total two parameters
$\{h,m\}$ that determine the ranges of the logical and physical
charges. Normalizing the codewords, the isometry
becomes
\begin{align}
  V^{(m)}_{L\to A} :\quad
  \ket{x}_L ~\to~ \frac1{\sqrt{2 m +1}}
  \sum_{y=-m}^{+m}
  \ket{ -3y, y-x, 2(x+y)}\ ,
\end{align} 
for $x\in{-h,\cdots,h}$.

Since the code is covariant and finite-dimensional, it does not allow for
perfect error-correction.  In \cref{appx:AppendixCalcCodes}, we show that the
code has an accuracy parameter which satisfies
\begin{align}
    \epsilon_{\mathrm{worst}}(\mathcal{N}\circ\mathcal{E}^{(m)})
  \lesssim
  \sqrt2 \sqrt{\frac{h}{m}}\ .
  \label{eq:fijdskafnodajf}
\end{align}
By comparison, our bound~\eqref{eq:simple-bound} in this case reads
\begin{align}
  \epsilon_{\mathrm{worst}}(\mathcal{N}\circ\mathcal{E}^{(m)})
  \geqslant \frac12 \frac{\Delta T_L}{\max_i q_i^{-1} \Delta T_i}
  \approx \frac1{18}\,\frac{h}{m}\ .
\end{align}
There is a difference of a square root between the scaling of our actual code
performance and of our bound.  This is due to switching between the trace
distance and a fidelity-based distance in both of our bounds, and in the way we
have applied our criterion to derive~\eqref{eq:fijdskafnodajf}.

\subsubsection{Smooth cutoff}
\label{subsec:three-rotor-smooth-cutoff}

We now consider a different approach to normalizing the codewords: by using a
Gaussian envelope we can achieve a ``smoother'' cut-off in contrast to the sharp
cut-off considered above (such an envelope is known to be optimal for
finite-sized quantum clocks~\cite{Woods2019AHP_autonomous}).
We impose an envelope controlled by a parameter $w>0$ on the code states to make
them normalizable.  The encoding isometry $V_{L\to A}^{(w)}$ now acts as
\begin{align}
  \ket{x}_L \to \frac1{\sqrt{c_w}}
  \sum_{y=-\infty}^{\infty} \ee^{-\frac{y^2}{4w^2}} \,
  \ket{-3y, y-x, 2(x+y)}\ ,\label{eq:threerotorsmooth}
\end{align}
with a normalization factor $c_w = \sum_{y=-\infty}^\infty \ee^{-y^2/(2w^2)}$.
Note that the envelope does not disturb the symmetry---the code remains
covariant since all of the basis states used to write each logical state still
have the same charge.  We still consider a $(2h+1)$-dimensional logical system
$L$ in order to see how the bound scales in terms of $h/w$.  Deferring calculations
to \cref{appx:AppendixCalcCodes}, the present code has an accuracy parameter
satisfying
\begin{align}
  \epsilon_{\mathrm{worst}}(\mathcal{N}\circ\mathcal{E}^{(w)})
  \leqslant \sqrt{1 - \ee^{-\frac{h^2}{4w^2}}}
  \approx \frac{h}{2w}\ \label{eq:lowerbound3rot}.
\end{align} 
Our bound~\eqref{eq:simple-bound} in this case reads
\begin{align}
  \epsilon_{\mathrm{worst}}(\mathcal{N}\circ\mathcal{E}^{(w)})
  \gtrsim \frac{h/w}{12\sqrt{2\ln(w/h)}}\ ,
\end{align}
where we have kept only the first order in $h/w$, and where the logarithmic term
results from cutting off the infinite tails of our codewords.  Hence, we see
that the present code achieves approximately the scaling of our bound, as both
expressions scale as $h/w$ up to a logarithmic factor.

We may ask for the reason of the discrepancy in the accuracy between the
sharp and the smooth cut-off versions of our code.  For the sharp cut-off, the
error parameter scales as $\epsilon_{\mathrm{worst}} \sim \sqrt{h/w}$, while for
the smooth cut-off it scales approximately as
$\epsilon_{\mathrm{worst}} \sim h/w$.  This can be explained from the following
property of the infidelity.  Loosely speaking, the error parameter
$\epsilon_{\mathrm{worst}}$ is related to how much the local reduced state on a
single system varies as a function of the logical state, as measured in terms of
the infidelity [this can be seen from~\eqref{eq:Beny-Oreshkov-f}].  While in
both normalized versions of the above code, using either the sharp or the smooth
cut-off, we are careful to ensure that all codewords are close to each other, it
turns out that codewords with a sharp cut-off are in a regime where the
infidelity is more sensitive to differences than the smooth cut-off.  This is
because those codewords have incompatible supports.  More precisely, for any
state $\rho$, the infidelity $\sqrt{1 - F^2(\rho,\rho+\varepsilon X)}$, for
a small perturbation $\rho\to \rho+\varepsilon X$, can grow like the square root
of $\varepsilon$ if $\rho+\varepsilon X$ has overlap outside of the support of
$\rho$, while it grows linearly in $\varepsilon$ in well-behaved cases.  The
sharp cut-off belongs to the former regime, while in the case of the smooth
cut-off the infidelity is better behaved.

\subsection{Five-rotor perfect code}
\label{subsec:fivecode}

Here, we provide a rotor extension of the five-qubit perfect
code~\cite{Bennett1996PRA_MSEntglQECorr,Laflamme1996} that can be tiled to
construct holographic codes~\cite{Pastawski2015JHEP_holographic}.  While
qudit~\cite{Chau1997PRA_five} and oscillator~\cite{Braunstein1998PRL_CV}
extensions have been considered, a rotor extension is not as straightforward
because one has to take care of preserving the phases in the code states as
needed to error-correct erasures. Our rotor code is the limit of a sequence of
qudit codewords whose constituent phases approach multiples of an
\textit{irrational} number. This same trick has been used to obtain an
irrational magnetic flux via a sequence of rational fluxes in the
two-dimensional electron gas problem~\cite{Hofstadter1976} as well as rotor
limits of other Hamiltonians~\cite{Albert2017}.  This limit is meant to be an
idealization since there is not enough storage space to measure an irrational
number to infinite precision.

Let the dimension $D$ of each of the five physical subsystems be
finite for the qudit $[[5,1,3]]_{\mathbb{Z}_{D}}$ code and infinite
for the rotor $[[5,1,3]]_{\mathbb{Z}}$ code. The general form of
the unnormalized encoding for both codes is

\begin{align}
\ket x & \rightarrow\sum_{j,k,l,m,n\in\mathbb{Z}_{D}}T_{jklmnx}^{(D)}\ket{j,k,l,m,n}.\label{eq:fivecode}
\end{align}
We introduce the rotor code as a limiting case of the qudit code,
obtaining a concise expression for the qudit perfect tensor $T^{(D)}$
in the process. 

\subsubsection{Qudit version}

Consider first the known finite-$D$ case, for which\footnote{This formula was obtained by constructing the codespace projection
out of powers of products of the code stabilizers, applying it to
canonical basis states $\protect\ket{x,0,0,0,0}$, and calculating
the overlap of the resulting codeword with basis states $\protect\ket{j,k,l,m,n}$.} 
\begin{equation}
T_{jklmnx}^{(D)}=\delta_{x,j+k+l+m+n}^{(D)}\omega^{jk+kl+lm+mn+nj}\,,\label{eq:tensor}
\end{equation}
where $\delta_{a,b}^{(D)}=1$ if $a=b$ modulo $D$ and $\omega$
is a primitive $D$-th root of unity. Notice how the above expression
makes the cyclic permutation symmetry naturally manifest. The delta
function encodes the state label $x$ into the sum of the physical
qudit variables, with the key difference from the sharply-cutoff $[[3,1,2]]_{\mathbb{Z}}$
code being that the sum is modulo $D$. This property makes this code
exactly error-correcting and \textit{not} covariant with respect to
a $U(1)$ symmetry. Instead, this code is covariant with respect to
a $\mathbb{Z}_{D}$ symmetry generated by $Z^{\otimes5}$, where $Z=\sum_{k\in\mathbb{Z}_{D}}\omega^{k}\proj k$
is the qudit Pauli matrix.

\subsubsection{Smooth cutoff}

To take the qudit-to-rotor limit, pick $\omega=\exp(2\pi iL/D)$ with
incommensurate integers $L,D\rightarrow\infty$ such that $L/D$ approaches
a positive irrational number $\Phi$. The indices in \cref{eq:fivecode}
now range over $\mathbb{Z}$,
\begin{equation}
T_{jklmnx}^{(\infty)}=\delta_{x,j+k+l+m+n}e^{2\pi i\Phi(jk+kl+lm+mn+nj)}\,,\label{eq:tensor2}
\end{equation}
and $\delta$ is the usual Kronecker delta function. The final
ingredient is to normalize the states, which can be done via a sharp
or a smooth cutoff as in the $[[3,1,2]]_{\mathbb{Z}}$ code. We perform
the latter using a cyclically-symmetric Gaussian envelope with spread
$w$, prepending $\exp[-\frac{1}{4w^{2}}(j^{2}+k^{2}+l^{2}+m^{2}+n^{2})]$
to the tensor $T_{jklmn}^{(\infty)}$ in \cref{eq:fivecode} and then
normalizing the codewords. The resulting code is covariant with respect
to a $U(1)$ symmetry generated by the total physical charge $T_{A}=\sum_{i=1}^{5}T_{A_{i}}$,
analogous to the three-rotor code~\eqref{eq:threerotorsmooth}. With the addition of the envelope, the resulting tensor becomes approximately perfect. This rotor version can be stacked to form a approximately error-correcting $U(1)$-covariant holographic code in the same way as the qubit perfect tensors were connected in Ref.~\cite{Pastawski2015JHEP_holographic}.

One can apply the certification criteria to this code to yield the same scaling
as for the three-rotor code~\eqref{eq:lowerbound3rot} for the model of a single
erasure (see \cref{appx:AppendixCalcCodes} for details),
\begin{align}
  \epsilon_{\mathrm{worst}}(\mathcal{N}_{\text{1 erasure}}\circ\mathcal{E}^{(w)})
  \lesssim \frac{1}{\sqrt{160}}\frac{h}{w}\ . \label{eq:fiverotorepsone}
\end{align} 
However, this code is capable of correcting any single-subsystem error, so it
can correct for known erasure of any two subsystems. Calculating the bound for the noise channel 
$\cal{N}$ consisting of erasure of any two sites with equal probability yields
the same scaling,
\begin{align}
  \epsilon_{\mathrm{worst}}(\mathcal{N}_{\text{2 erasures}}\circ\mathcal{E}^{(w)})
  \lesssim \frac{1}{\sqrt{60}}\frac{h}{w}\ .
  \label{eq:fiverotorepstwo}
\end{align} 
The larger coefficient is sensible since a code approximately correcting at most
two erasures should be better at correcting only one. In both cases, there are
additional corrections of order $O(h e^{-c w^2})$ for $c>0$ arising from a
detailed application of our criterion.

\subsection{Thermodynamic codes for $n\to\infty$}
\label{sec:thermodynamic-codes}

We now investigate a class of covariant codes in the limit where the number of
subsystems $n$ grows large.  We exploit the codes developed in
Ref.~\cite{Brandao2017arXiv_chainAQECC}, relevant for quantum computing with
atomic ensembles~\cite{Saffman2010}.

For these codes the basis vectors for the code space can be chosen to be energy eigenstates of a many-body system, with the property that the reduced state on a subsystem appears to be thermal with a nonzero temperature; we therefore call them \textit{thermodynamic codes}. This thermal behavior of local subsystems is expected for closed quantum systems that satisfy the eigenstate thermalization hypothesis~\cite{Srednicki1994PRE_ETH} or dynamical typicality~\cite{Popescu2006NPhys_entanglement,Riera2012_thermalization}. Energy eigenstates with slightly different values of the total energy also have slightly different values of the locally measurable temperature; thus the identity of a codeword is imperfectly hidden from a local observer, and therefore erasure of a subsystem is imperfectly correctable.  

Consider a many-body system, such as a one-dimensional spin chain, and pick out
two global energy levels $\ket{E}_A, \ket{E'}_A$ in the middle of the spectrum,
with a given energy difference $\Delta E = E'-E$.  Assume, in the spirit of the
eigenstate thermalization hypothesis, that the reduced states of both
$\ket{E}_A$ and $\ket{E'}_A$ on each individual system $A_i$ are approximately
thermal.  The corresponding temperature scales as $T \propto E/n$ since the
temperature is an intensive thermodynamic variable.  Then, the temperature
difference vanishes for $n\to\infty$, and the resulting reduced thermal states
for these two states are very close.  Intuitively, this means that if a system
$A_i$ is provided to the environment, the latter cannot tell whether the global
state is $\ket{E}$ or $\ket{E'}$, and hence the two energy levels form a
two-dimensional code space that is approximately error-correcting against
erasures at known locations.

For example, consider the code developed
in~\cite[Appendix~D]{Brandao2017arXiv_chainAQECC}, in the context of a 1D
translation-invariant Heisenberg spin chain.  Here we consider as relevant
charge the total magnetization $M = \sum \sigma_Z^i$ of the spin chain.  The
codewords $\ket{h_m^n}$ in~\cite[Appendix~D]{Brandao2017arXiv_chainAQECC} are
Dicke states with respect to total magnetization---i.e., they are a
superposition of canonical $n$-spin basis states that all have some fixed
magnetization $m$:
\begin{align}
  \ket{h_m^n} = \begin{pmatrix} n \\ n/2+m/2 \end{pmatrix}^{-1/2}\,
  \sum_{\boldsymbol{s}:\; \sum s_j = m} \ket{\boldsymbol{s}}_n\ .
\end{align}
The code is covariant with respect to total magnetization by construction, by
defining the magnetization charge operator in the abstract logical system to
correspond to the magnetization of the corresponding codeword.  The values $m$
are spaced out by steps of $2d+1$, thus ensuring that any errors which change
the magnetization by at most $2d$ cannot cause logical bit flips. This
trick---using a sufficiently large spacing between codewords so that they are
not mapped into each other by errors---has analogues in CSS codes, related
multi-qubit codes~\cite{Ouyang2014}, and bosonic
error-correction~\cite{Albert2018PRA_bosonic}.  However, to show that such
errors are indeed correctable, one still has to make sure that expectation
values of errors with each codeword do not depend on the codeword in the
large-$n$ limit.

In \cref{appx:AppendixCalcCodes}, we show that this code's approximation
parameter as an approximate error-correcting code against the erasure of a
constant number of sites scales as
\begin{align}
  \epsilon_{\mathrm{worst}}(\mathcal{N}\circ\mathcal{E})
  =O(1/n)\ .
\end{align}
On the other hand, our bound~\eqref{eq:simple-bound} also displays the same 
scaling,
\begin{align}
  \epsilon_{\mathrm{worst}}(\mathcal{N}\circ\mathcal{E})
 = \Omega(1/n)\ .
\end{align}
In consequence, this code has an approximation parameter that displays
the same scaling as our bound, meaning that our bound is approximately tight in
the regime $n\to\infty$.


\section{Approximate Eastin-Knill theorem}
\label{sec:repbound}
Our second main technical contribution is an approximate version of the
Eastin-Knill theorem.
The Eastin-Knill theorem states that it is not possible for an error-correcting
code to admit a universal set of transversal logical gates, imposing severe
restrictions on fault-tolerant quantum
computation~\cite{Eastin2009PRL_restrictions}.  In fact, an approximate version
of the Eastin-Knill theorem naturally follows from our bounds in
\cref{thm:simple-main-result-one-erasure}.  This is intuitively seen in the
setup of our main theorem depicted in \cref{fig:covariant-code}, by choosing the
transformation group to be the full unitary group $\UU(d_L)$ on the logical
system: To any logical unitary we require that there correspond an transversal
unitary on the physical system that achieves the same logical transformation.
Hence, our bound provides a limitation to the accuracy of codes that admit a
universal set of transversal logical gates.  The goal of this section is to
specialize our main bound~\eqref{eq:simple-bound} to this situation, in order to
obtain a limit expressed in terms of the dimensions of the local physical
subsystems.

There is a subtlety worth noting in the argument above.  In the setting of the
Eastin-Knill theorem, it is not necessarily required that the mapping of logical
to physical unitaries forms an actual representation, i.e., that it is
compatible with the group structure.  However, it turns out that we may assume
this without loss of generality.  Intuitively, as long as one can generate
logical unitaries that are close to the identity with a transversal physical
unitary, one can show that there are corresponding physical generators which
span a \emph{bona fide} representation (\cref{appx:ProofUdBounds}).  That is, if
a code admits a universal transversal gate set, then it is necessarily covariant
with respect to the full logical unitary group for some transversal
representation on the physical systems.

The bounds derived in
\cref{thm:simple-main-result-one-erasure,thm:main-result-full} cannot in general be directly
related to the dimension of the local physical subsystems.
Indeed, there is no restriction on how large $\Delta T_{i}$ can be.  The only
restriction that enters the statement of our main theorem is that a logical
charge eigenstate must be mapped onto a global physical eigenstate of the same
charge (up to a constant offset); the logical charge operator and the local
physical charge operators may otherwise be chosen arbitrarily.  For example, the
repetition code spanned by $\{\ket{000},\ket{111}\}$ with logical charge
$\delta\sigma_{z}$, physical charge
$M\sigma_{z}^{(1)}-M\sigma_{z}^{(2)}+\delta\sigma_{z}^{(3)}$, and $M\gg\delta$
can have a very large range $M$ of charges on each local physical subsystem
despite the systems having only two levels.  In the other extreme, a completely
degenerate local physical system will have zero charge range despite a possibly
huge dimension.

The above observation is an expression of the fact that the covariance is with
respect to an abelian symmetry group ($\UU(1)$).  In contrast, for non-abelian
Lie groups, one may no longer choose the generators arbitrarily because they
have to obey nontrivial commutation relations with each other.  Consider for
instance a code that is covariant with respect to spin, where the group is
$\SU(2)$.  The three generators of the corresponding Lie algebra, $J_x$, $J_y$,
and $J_z$, satisfy the commutation relations $[J_x, J_y] =i J_z$ along with the
corresponding cyclic permutations of $x,y,z$.  We know in the case of $\SU(2)$
that the irreducible representations are labeled by a spin quantum number $j$
that is a positive integer or half-integer, that the generator $J_z$ in this
representation has nondegenerate eigenvalues $m = -j,-j+1, \ldots, +j$, and
hence that the dimension of the irreducible representation labeled by $j$ is
$2j+1$.  By rotational symmetry, the same holds for any other standard generator
in that irreducible representation by choosing an appropriate basis.  In other
words, if the dimension of a physical subsystem is small, we cannot ``fit'' a
generator on that system with a large range of angular momentum values.
More precisely, if $T_i^z$ is the spin generator corresponding to $J_z$ on the
$i$th physical subsystem, the largest irreducible representation that can
appear in the action of $T_i^z$ must fit in the physical subsystem, that is, we
may not have any $j$ larger than $(d_i-1)/2$, where $d_i$ is the dimension of
the $i$th physical subsystem, or else the representation is too big.  In turn,
this bounds the range of $J_z$ charge values as $\Delta T_i^z \leqslant d_i-1$.
Hence, if we encode a qubit using a code that admits a universal set of
transversal logical unitaries, we may apply our bound~\eqref{eq:simple-bound},
choosing $T_L = J_z = \diag(1/2,-1/2)$ on the logical level with $\Delta T_L = 1$,
with the corresponding $\Delta T_i = \Delta T_i^z \leqslant d_i-1$; we then
obtain
\begin{equation}\label{eq:su2bound}
  \epsilon_{\mathrm{worst}}(\text{$\SU(2)$-covariant code})
  \geqslant \frac{1}{2n\,\max_{i}(d_{i}-1)}\ .
\end{equation}
Thus, remarkably, the non-abelian nature of the group $\SU(2)$ allows us to bound the
expression in~\eqref{eq:simple-bound} directly in terms of the dimensions of the
local physical subsystems.
This is because physically, the generators $J_{x,y,z}$ of $\SU(2)$ correspond to
rotations around different axes, and the Lie algebra commutation relations
require all of them to be of a similar scale.  No such requirement was present
for $\UU(1)$ since we were free to rotate around a chosen axis arbitrarily quickly.

In the case of a code that is covariant with respect to $\SU(d)$ with $d>2$, the
dependence on the physical subsystem dimensions
becomes considerably more restrictive.  We provide an overview of our argument,
leaving technical details to \cref{appx:ProofUdBounds}.  Irreducible
representations, or \emph{irreps}, of $\SU(d)$ are indexed by $d-1$ nonnegative
integers $(\lambda_{1},\lambda_{2},\cdots,\lambda_{d-1})\equiv\lambda$ arranged
in decreasing order. These integers determine the largest eigenvalues of the now
$d-1$ commuting generators of $\SU(d)$. For $\SU(2)$, only one generator $J_z$
is diagonal in the canonical basis and the integer $\lambda=\lambda_1=2j$
determines the highest spin attainable in that irrep. For the fundamental
representation $\lambda=(1,0)$ of $\SU(3)$, the two simultaneously
diagonalizable generators are the two Gell-Mann matrices that are diagonal in
the canonical basis. Since the entries in $\lambda$ are decreasing, the largest
eigenvalue that any generator could have is $\lambda_{1}$, i.e.,
$\norm[\big]{T_{\lambda}^{(i)}}_{\infty}\leq\lambda_{1}$.  It turns out that the
irrep that minimizes the dimension out of all irreps with fixed $\lambda_{1}$ is
the completely symmetric irrep $(\lambda_{1},0,0,\cdots,0)$.  The dimension of
this irrep is the dimension of the symmetric subspace on $\lambda_1$ number of
$d$-dimensional systems, which is a polynomial of degree $d-1$ in
$\lambda_1$. Therefore, in order to fit in a system of dimension $d_i$, the
largest possible $\lambda_{1}$ is of order $O`\big(d_{i}^{1/(d-1)})$.  Now, any
general representation can be decomposed into irreps, and a generator $T$ is
simply $T = \bigoplus T_\lambda$, where $T_\lambda$ is the corresponding
generator for each irrep.  We then have
$\norm{T}_\infty = \max_\lambda \norm{T_\lambda}_\infty$.  So, if a
representation fits in the system dimension $d_i$, then it cannot contain any
irrep $\lambda$ with $\lambda_1$ larger than $O`\big(d_i^{1/(d-1)})$.
For a code that is covariant with respect to the full unitary group on the
logical space, we have $d=d_L$, and picking a simple standard generator for our
earlier bound~\eqref{eq:simple-bound}, we obtain the following theorem.
\begin{theorem}[Approximate Eastin-Knill theorem]
  \label{thm:approximate-EK-main}
  The performance of an $SU(d_{L})$-covariant code, quantified by the worst-case
  entanglement fidelity, is bounded as follows:
  \begin{multline}
    \label{eq:appek1}
    \epsilon_{\mathrm{worst}}(\text{\upshape $\SU(d_L)$-covariant code})
    \\ \geqslant \quad
    \frac {1}{2n}\frac{1}{\max_{i}\ln d_{i}}+O`*(\frac{1}{n\,d_L})\ .
  \end{multline}
  The following bounds also hold:
  \begin{subequations}
    \label{eq:approximate-EK-di-geq-expressions}
    \begin{align}
      \max_i \ln d_i
      & \geqslant \frac{\ln `*(d_L-1)}{2n\epsilon_{\mathrm{worst}}}
        - \frac{\ln`*(1 + (2n\epsilon_{\mathrm{worst}})^{-1})}{2n\epsilon_{\mathrm{worst}}}\ ;
        \label{eq:approximate-EK-di-geq-poly-dL}
      \\
      \max_i \ln d_i
      &\geqslant 
        (d_L-1) \ln`*(
        \frac{1}{2\epsilon_{\mathrm{worst}} n d_L})
        \ .
        \label{eq:approximate-EK-di-geq-exp-dL}
    \end{align}
  \end{subequations}
  Similar bounds can be obtained for the figure of merit
  $\epsilon_{\mathrm{e}}$, based on the average entanglement fidelity, by making
  in~\eqref{eq:appek1} and~\eqref{eq:approximate-EK-di-geq-exp-dL} the replacement
  $\epsilon_{\mathrm{worst}} \rightarrow d_L\epsilon_{\mathrm {e}}/2$.
\end{theorem}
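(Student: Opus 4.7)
The plan is to chain \cref{thm:simple-main-result-one-erasure} with a representation-theoretic bound on the local charge operators that exploits the non-abelian structure of $\SU(d_L)$. A preliminary reduction, which I would defer to the appendix referenced by the paper (\cref{appx:ProofUdBounds}), allows me to assume that any code with an approximately universal set of transversal logical gates is in fact $\SU(d_L)$-covariant with a genuine physical representation $\pi = \bigotimes_i \pi_i$; without this step one only has an assignment $g \mapsto U_A(g)$ that need not respect composition, so the Lie-algebra generators $T_i$ that drive the argument below are not literally available.

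Next I would fix a standard Cartan generator $T_L$ of $\mathfrak{su}(d_L)$ acting in the fundamental representation so that $\Delta T_L$ is a constant of order one, and use covariance to write $T_A = \sum_i T_i$ with $T_i = \pi_i(T_L)$. The key representation-theoretic input has two parts: decomposing $\pi_i$ into irreps of $\SU(d_L)$ labeled by Young diagrams $\lambda = (\lambda_1, \ldots, \lambda_{d_L-1})$, the operator norm of a standard Cartan generator on the irrep $\lambda$ is controlled by $\lambda_1$; and among all irreps with a fixed $\lambda_1$ the dimension is minimized by the fully symmetric irrep $(\lambda_1, 0, \ldots, 0)$, whose dimension is $\binom{\lambda_1 + d_L - 1}{d_L - 1}$. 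Since every irrep appearing in $\pi_i$ must fit into $d_i$ dimensions, the largest $\lambda_1$ that can occur in $\pi_i$ is constrained by $\binom{\lambda_1 + d_L - 1}{d_L - 1} \leq d_i$, which bounds $\Delta T_i$ in terms of $d_i$ and $d_L$.

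Combining with \cref{thm:simple-main-result-one-erasure} immediately yields a bound of the shape $\epsilon_{\mathrm{worst}} \gtrsim 1/(n \lambda_{1,\max})$, and the three forms displayed in the theorem follow by inverting the binomial coefficient inequality in different asymptotic regimes. A crude upper bound such as $\binom{\lambda_1 + d_L - 1}{d_L - 1} \leq (\lambda_1 + d_L - 1)^{d_L - 1}/(d_L - 1)!$ produces the $(d_L - 1)\ln(\cdot)$ scaling of \cref{eq:approximate-EK-di-geq-exp-dL}, while a sharper lower estimate, valid in the regime $\lambda_1 \ll d_L$, converts the binomial constraint into a relation of the form $\lambda_1 \ln d_L \lesssim \ln d_i$ that yields the $1/\ln d_i$ scaling of \cref{eq:appek1} together with the polynomial bound \cref{eq:approximate-EK-di-geq-poly-dL}. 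The hard part is not any single step but two pieces of bookkeeping: the preliminary reduction to a genuine representation, which requires care near the identity to differentiate an approximate homomorphism into a bona fide Lie-algebra action; and the case analysis needed to extract all three sharp forms, with matching constants and the claimed $O(1/(n d_L))$ correction, from the same binomial inequality.
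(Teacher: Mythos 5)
Your plan matches the paper's proof essentially step for step: the reduction to a genuine $\SU(d_L)$-representation (handled in \cref{appx:ProofUdBounds}), the two representation-theoretic lemmas (the paper's \cref{lem:opnorm_length}, giving $\norm{T_\lambda}_\infty \leq \lambda_1$, and \cref{lem:hard_bound}, giving $D_\lambda \geq \binom{d_L-1+\lambda_1}{d_L-1}$ via the symmetric irrep), and the combination with \cref{thm:simple-main-result-one-erasure}, which the paper packages as the binomial constraint $\max_i d_i \geq \binom{d_L-1+\lceil(2n\epsilon_{\mathrm{worst}})^{-1}\rceil}{d_L-1}$ in \cref{thm:approximate-EK-stronger-version} before extracting the three displayed forms. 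One correction to your final paragraph: since the chain runs $\max_i d_i \geq \binom{\cdots}{\cdots}$, all three bounds require \emph{lower} bounds on the binomial coefficient rather than the upper bound you quote; the paper uses $\binom{a+b}{a}\geq(1+a/b)^b$ for \cref{eq:approximate-EK-di-geq-poly-dL} and $\binom{a+b}{a}\geq(1+b/a)^a$ for both \cref{eq:approximate-EK-di-geq-exp-dL} and \cref{eq:appek1}, with the $O(1/(n\,d_L))$ term in the latter coming from a Taylor expansion of $(d_L-1)(e^{\ln d_i/(d_L-1)}-1)$ rather than from a separate small-$\lambda_1$ estimate.
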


In other words, any code that (a)~stores a large amount of quantum information,
and (b)~admits universal transversal gates, has severe restrictions on its
ability to recover from erasure errors.

The bound~\eqref{eq:appek1} is useful to determine the precision limit of a code
that has a universal set of transversal gates.  If we imagine that each physical
subsystem is composed of $m_i$ qubits lumped together, then the error parameter
of the code scales at least inversely in the largest number of qubits $m_i$ that
were lumped together.  If we set for instance $d_L\sim 10^3$ (10 logical qubits)
that are encoded into $n$ systems consisting of 10 qubits each, i.e.,
$d_i \sim 10^3$, we obtain the rather prohibitive error parameter
$\epsilon_{\mathrm{worst}} \gtrsim 0.14/n$.  (This estimate can be improved to
$\epsilon_{\mathrm{worst}} \geqslant 0.5/n$ using a tighter bound given in
\cref{appx:ProofUdBounds}.)

The bound~\eqref{eq:approximate-EK-di-geq-poly-dL} shows that if
$\epsilon_{\mathrm{worst}}$ is kept constant and for $d_L\to\infty$, we must
have that $d_i$ grows polynomially in $d_L$, where the exponent is
$1/(2n\epsilon_{\mathrm{worst}})$.  If, for instance, we wish to achieve a
precision of $\epsilon_{\mathrm{worst}}\sim 10^{-3}$, then we must have the
scaling $d_i\sim (d_L)^{500/n}$.  Concretely, for $d_L\sim 10^3$ (10 logical
qubits) encoded into $n=10$ subsystems, the physical subsystems need to be of a
respectable dimension $d_i\sim 10^{65} \sim 2^{216}$, i.e., the physical
subsystem must comprise $216$ qubits lumped together.

Our third bound is interesting in the regime of extremely high accuracy.
Suppose we wish to accurately resolve individual logical basis states of a
highly mixed logical state.  The logical information might, for instance, be
entangled with a large reference system.  In such a situation, we require
$\epsilon_{\mathrm{worst}}\lesssim d_L^{-1}$.
Bound~\eqref{eq:approximate-EK-di-geq-exp-dL} then asserts that the physical
subsystem dimension must grow \emph{exponentially} in the logical system
dimension.

Finally, we note that
\cref{eq:appek1,eq:approximate-EK-di-geq-poly-dL,eq:approximate-EK-di-geq-exp-dL}
are obtained from a more general, tighter bound on $\max d_i$ which is expressed
as a binomial coefficient (see \cref{appx:ProofUdBounds} for details).  In some
cases, this bound allows to obtain tighter estimates on the physical dimension
of the subsystems.


\subsection{Random Constructions}
\label{sec:random-codes}
The bounds of \cref{thm:approximate-EK-main} severely limit the error correction
capability of the unitary $\SU(d_L)$-covariant codes.  We now show that it is
possible to find good $\SU(d_L)$-covariant codes in regimes of large physical
systems that are not excluded by \cref{thm:approximate-EK-main}.

The constructions we present are randomized as well as asymptotic in the
dimension of the physical subsystems. More precisely, we consider the encoding
of one $d_L$-dimensional Hilbert space $\Hil_L$ in a physical space which is a
tensor product of three Hilbert spaces
$\Hil_A=\Hil_{A_1} \otimes \Hil_{A_2} \otimes \Hil_{A_3}$. The encoding is done
via an isometry $V_{L\to A}$, which is $\UU(d_L)$ covariant: For all
$U\in U(d_L)$,
\begin{equation}
\label{eqn:cov_def}
V U =r_1(U) \ot r_2(U) \ot r_3 (U)\,V\ .
\end{equation} 
Here $r_1, r_2,\text{and } r_3$ are three irreps of $\UU(d_L)$. Our
constructions are randomized in the following way:
\begin{itemize}
\item $V$ is chosen randomly from all possible isometries satisfying the covariance condition~\eqref{eqn:cov_def}.
\item The irreps $r_1, r_2,\text{and } r_3$ are chosen randomly, or at least
  \emph{generically}. In fact, we only need that the irreducible representation
  does not belong to a small subset of all possible irreducible representations.
\end{itemize}
In \cref{appx:Random_Consts}, we use randomized constructions to prove existence
of $U(d_L)$-covariant codes with small error (measured by $\epsilon_e$ based on
the average entanglement fidelity), as summarized in the following theorem:
\begin{theorem}
  \label{thm:randomcode}\noproofref
  For $d_L\geqslant 4$ and every $\epsilon>0$, there exists a
  $\UU(d_L)$-covariant code with error
  $\epsilon_{\mathrm{e}} \leqslant \epsilon$ and physical dimensions $d_i$,
  $i\in \{1,2,3\}$, such that
  \begin{equation}
    \label{eq:randcode}
    \max_i \ln d_i \leqslant
    d_L(d_L-1) \ln`*(\frac{1}{\epsilon_{\mathrm{e}}}) +C_2,
  \end{equation}
  for some $C_2$ which is only a function of $d_L$.
\end{theorem}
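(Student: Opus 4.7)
The plan is to realize the isometry $V$ as a Haar-random element of the space of $\UU(d_L)$-covariant embeddings in the sense of \cref{eqn:cov_def}, for a triple of irreducible representations $r_1,r_2,r_3$ of $\UU(d_L)$ chosen so that the defining representation on $L$ occurs inside $r_1\otimes r_2\otimes r_3$ with a large multiplicity $m$. By Schur's lemma, the space of $\UU(d_L)$-equivariant embeddings $L\to\mathcal{H}_{A_1}\otimes\mathcal{H}_{A_2}\otimes\mathcal{H}_{A_3}$ is precisely the multiplicity space of $L$ in $r_1\otimes r_2\otimes r_3$, so choosing an orthonormal basis $\{V_k\}_{k=1}^{m}$ of covariant embeddings (normalized via Schur orthogonality so that $V_k^\dagger V_j=\delta_{jk}\,\Ident_L$), every covariant isometry has the form $V=\sum_{k=1}^{m}\alpha_k V_k$ for some unit vector $\alpha\in\mathbb{C}^{m}$; ``Haar-random $V$'' then just means $\alpha$ uniform on $S^{2m-1}$. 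The guiding intuition is that covariance already forces the ``irrep part'' of each local reduced state to be the maximally mixed state on the relevant isotypic blocks, independently of the logical state; only the ``multiplicity part'' can leak logical information, and a random $\alpha$ with large $m$ will scramble that part.

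To bound the error I would apply the B\'eny--Oreshkov characterization \cref{eq:Beny-Oreshkov-f}, which reduces $\epsilon_{\mathrm{e}}$ to the distance between the complementary channel $\widehat{\mathcal{N}^{(i)}\circ\mathcal{E}}$ and a suitable constant channel, for each erased subsystem $i\in\{1,2,3\}$. By covariance of $\mathcal{E}$, the output of the complementary channel decomposes on each isotypic block of the traced-out factor as the maximally mixed ``irrep part'' (independent of the input) tensored with a ``multiplicity-part'' operator that depends quadratically on $\alpha$. A direct Haar integral over $\alpha$, using the standard second-moment identity on $S^{2m-1}$, shows that this expected multiplicity-part state is itself maximally mixed, so the expected complementary channel is exactly a constant channel. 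L\'evy's concentration-of-measure lemma on $S^{2m-1}$ then shows that the fluctuations around this mean are of order $1/\sqrt{m}$ in trace norm with overwhelming probability, which via the Fuchs--van~de~Graaf inequalities translates into $\epsilon_{\mathrm{e}}=O(\mathrm{poly}(d_L)/\sqrt{m})$.

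To convert this multiplicity bound into the announced bound on $\max_i\ln d_i$ I would restrict to a symmetric family of irreps---e.g.\ symmetric powers $r_i=\mathrm{Sym}^{k}(\mathbb{C}^{d_L})$ or more generally rectangular-shape Young-diagram irreps with a single growing parameter $k$---so that both $m$ and $\dim r_i$ grow as polynomials in $k$ whose degrees depend only on $d_L$. The Weyl dimension formula gives $\dim r_i=\Theta(k^{d_L-1})$ for symmetric powers, while the multiplicity $m$ of the defining representation in the triple tensor product is itself polynomial in $k$ of a degree computable from the Littlewood--Richardson / Pieri rules; taking logarithms and eliminating $k$ between the two relations produces the desired bound $\ln d_i \lesssim d_L(d_L-1)\ln(1/\epsilon_{\mathrm{e}})+C_2(d_L)$. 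The hardest step will be this last piece of representation-theoretic bookkeeping: identifying a family of irreps for which the ratio of the growth exponent of $\dim r_i$ to that of $m$ matches the announced exponent $d_L(d_L-1)$, and confirming that all $d_L$-dependent subleading factors can be absorbed into a single constant $C_2(d_L)$. A secondary subtlety is picking the three irreps in a symmetric fashion so that the single concentration bound on $\alpha$ controls the infidelity uniformly for each of the three possible erased subsystems, and verifying that the operator-norm concentration passes cleanly to the entanglement-fidelity quantity $\epsilon_{\mathrm{e}}$ without further $d_L$-dependent losses.
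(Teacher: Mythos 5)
Your high-level frame is right — pick a generic triple of irreps $(r_1,r_2,r_3)$ of $\UU(d_L)$, observe that the space of covariant embeddings is the multiplicity space (dimension $d_P$) of the defining representation inside $r_1\otimes r_2\otimes r_3$, parametrize a random covariant isometry by a Haar-random unit vector $\alpha\in S^{2d_P-1}$, bound $\epsilon_{\mathrm e}$ via B\'eny--Oreshkov, and split into a mean term plus a fluctuation term — and that is indeed the paper's strategy (Lemma~\ref{lem:avg_fidelity}). Your concentration argument correctly handles the fluctuation term, which in the paper's notation is the $\sqrt{d_Ld_i}\sqrt{\tr(\Pi_{\widehat{RA_i}}^2)}/d_P$ piece.

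But there is a genuine gap in how you treat the mean. You assert that averaging over $\alpha$ makes the expected complementary-channel output maximally mixed, so that only fluctuations remain to be controlled. This is false. The Haar average gives $\EE_\alpha[\Psi_{RA_i}] = \Pi_{RA_i}/d_P$, and by the Pieri rule and Schur orthogonality this operator is
\begin{equation*}
  \Pi_{RA_1}/d_P \;=\; \bigoplus_{j\in\mathcal I}\frac{c_{\mu\nu(\lambda-e_j)}}{d_P\,d_{\lambda-e_j}}\,I_{d_{\lambda-e_j}},
\end{equation*}
which is \emph{not} the product state $\Ident_{RA_1}/(d_L d_1)$ unless $c_{\mu\nu(\lambda-e_j)}/d_{\lambda-e_j}$ is constant in $j$. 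This is exactly the ``constant offset'' term $\tfrac12\|\Pi_{RA_i}/d_P-\Ident_{RA_i}/(d_Ld_i)\|_1$ in Lemma~\ref{lem:avg_fidelity}, and in the paper's regime it is the \emph{dominant} contribution to $\epsilon_{\mathrm e}$ (the concentration piece you compute is subdominant once $d_L\geq4$). Controlling it is the technical core of the argument: one needs the Littlewood--Richardson coefficients $c_{\mu\nu\lambda}$ and the Weyl dimensions to be ``smooth'' under $\lambda\mapsto\lambda-e_j$, which the paper gets by invoking Rassart's chamber-complex polynomiality and taking $(\mu,\nu,\lambda)=(N\hat\mu+e_1,N\hat\nu,N\hat\lambda)$ for a base triple $(\hat\mu,\hat\nu,\hat\lambda)$ in the \emph{interior} of a chamber. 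Your proposal has no analogue of this step.

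Two further consequences of the gap. First, your proposed irrep family (symmetric powers $\mathrm{Sym}^k(\mathbb C^{d_L})$) is a poor choice precisely because $(k,0,\dots,0)$ lies on the boundary of the chamber complex, so the smoothness you would need to borrow is unavailable there; the paper avoids this by choosing a strictly decreasing $\hat\lambda$. Second, your account of where the exponent $d_L(d_L-1)$ comes from — a ratio of growth rates of $\dim r_i$ and the multiplicity $m$ — is not how the paper gets it. In the paper, smoothness gives $\delta=O(1/N)$, hence $\epsilon_{\mathrm e}^2=O(1/N)$ and $\epsilon_{\mathrm e}=O(N^{-1/2})$, while the Weyl formula for a generic scaled irrep gives $\ln d_i=\tfrac{d_L(d_L-1)}2\ln N+O(1)$; eliminating $N$ yields $\ln d_i\leq d_L(d_L-1)\ln(1/\epsilon_{\mathrm e})+C_2$. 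The multiplicity $m=d_P\sim N^{(d_L-1)(d_L-2)/2}$ only needs to grow superlinearly so that the fluctuation term is subdominant; its exact exponent never enters the final bound.
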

It is not clear how to compare the performance of our code given
by~\eqref{eq:randcode} to our
bounds~\eqref{eq:approximate-EK-di-geq-expressions} because our nonconstructive
proof does not specify the behavior of $C_2$ as a function of $d_L$, which is
given by details of the representation theory of $\UU(d_L)$.  It remains open
whether the lower bound can be strengthened or the constructions can be
improved.

Our proof technique does not immediately work for $\UU(2)$-covariant codes, as
it is harder to bound the fluctuations of the fidelity of recovery when the
logical Hilbert space is too small.  For $\UU(3)$-covariant codes, our methods
lead to codes with a slightly different scaling from \cref{eq:randcode}.  In
fact, for the $\UU(3)$-case, one can provide randomized and
\emph{non-asymptotic} constructions (which work for known finite physical
dimensions) using the explicit formulas for the Littlewood-Richardson
coefficients~\cite{rassart2004polynomiality}. These constructions are not
included in the present paper, as there is little specific interest in the
$d_L=3$ case.

The proof of \cref{thm:randomcode} is technical and relies on the
representation theory of the unitary group (see \cref{appx:Random_Consts} for
details).  The proof starts by connecting the average fidelity
recovery of erasure of a fixed subsystem to the smoothness of the
\emph{Littlewood-Richardson coefficients}. 
Littlewood-Richardson coefficients are representation theory quantities that
count the degeneracy of a particular irrep of $\UU(d_L)$ in the tensor product
of two other irreps, and their smoothness follows from modern results in
representation theory of the unitary group~\cite{rassart2004polynomiality}
(\cref{fig:chamberc}).
\begin{figure}
  \centering
  \includegraphics[width= 6cm]{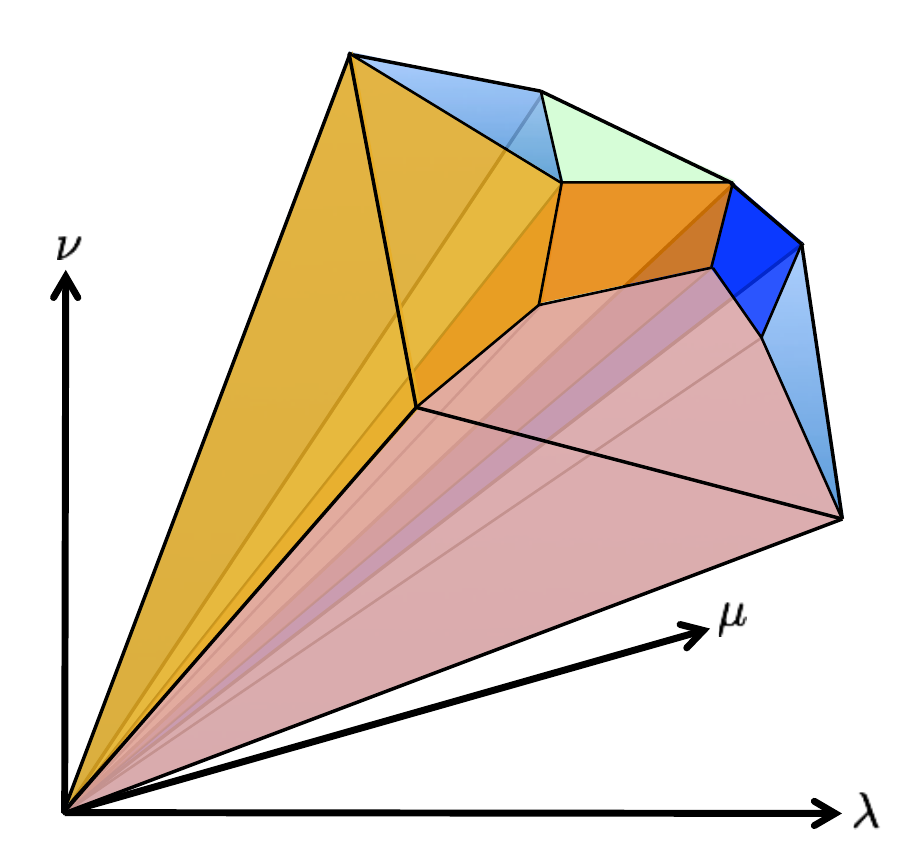}
  \caption{Smoothness of the Littlewood-Richardson coefficients, required for
    our proof that random covariant codes can asymptotically correct against
    errors.  A Littlewood-Richardson coefficient $c_{\mu\nu}^\lambda$ is the
    coefficient that counts the degeneracy of the $\UU(d_L)$-irrep labeled by
    the Young diagram $\lambda$ in the tensor product of two other irreps
    labeled by $\mu$ and $\nu$. The Littlewood-Richardson coefficients are
    non-zero in a convex cone in the space of $(\mu,\nu,\lambda)$. This
    cone---or \emph{chamber complex}---is divided into several smaller convex
    cones---or chambers---in which $c_{\mu\nu}^\lambda$ is a polynomial of
    $\mu,\nu$, and $\lambda$.  Hence, a generic choice of irreps on which we
    chose a random code will have corresponding coefficients that are smooth,
    which we show implies good asymptotic performance of the code.}
  \label{fig:chamberc}
\end{figure}


\subsection{Generalized $W$-state encoding}
\label{sec:w-codes}
Here we consider another example of an approximate quantum error-correcting
code, covariant with respect to the full unitary group on the logical system.
It is based on the $W$ state, and achieves an arbitrarily small
$\epsilon_{\mathrm{worst}}$. in the limit of a large number of subsystems,
$n\to\infty$. The logical system $L$ of dimension $d_L$ is encoded into a
physical system composed of $n$ copies of a $(d_L+1)$-dimensional space, where
each subsystem is a copy of the logical system with an additional basis vector
$\ket\perp$.  The encoding is
\begin{multline}
  \ket\psi_L \to \frac1{\sqrt n}\bigl( \ket{\psi,\perp,\ldots,\perp} + \ket{\perp,\psi,\perp,\ldots} + {}
  \\
  \cdots\  + \ket{\perp,\ldots,\perp,\psi}\bigr)\ .
  \label{eq:W-state-codeword-psi}
\end{multline}
Any logical unitary $U$ can be carried out on the encoded state transversally by
applying the unitary $U\otimes U\otimes\cdots \otimes U$, where we let $U$ act
trivially on the extra state $\ket\perp$.

Remarkably, aside from being $\UU(d_L)$-covariant, this trivial code is also effective against random erasures.
Intuitively, this is because the environment will only receive access to the logical state $\ket\psi$ with
probability $1/n$ if it gets access to a single
subsystem; that is, the environment is unlikely to learn anything about
the logical information.  This can be formalized with a direct application of
our criterion (\cref{prop:criterion-simple-main}).  Given a basis $\ket{x}_L$
of $L$, the reduced state $\rho_1^{x}$ on any single physical subsystem of the
codeword~\eqref{eq:W-state-codeword-psi} corresponding to $\ket{x}_L$ is
\begin{align}
  \rho_1^x = \frac1n\proj{x} + `*(1-\frac1n)\proj\perp\ ,
\end{align}
and thus $F^2(\rho_1^x,\proj\perp) = 1 - (1/n)$ for all $x$, and it follows that
$\sqrt{1 - F^2(\rho_1^x,\proj\perp)} \leqslant \sqrt{2/n} =: \epsilon$.
For $x\neq x'$ we have according
to~\eqref{eq:criterion-simple-reduced-state-i-codeword-x-xp},
\begin{align}
  \rho_1^{x,x'} = \frac1{n}\ketbra{x}{x'}\ ,
\end{align}
and thus $\norm{\rho_1^{x,x'}}_1 = 1/n =: \nu$.  The corresponding reduced
states on the other physical subsystems are the same by symmetry of the
codeword~\eqref{eq:W-state-codeword-psi}.  Then
\cref{prop:criterion-simple-main} asserts that this code has an error
parameter that is at most
\begin{align}
  \epsilon_{\mathrm{worst}} \leqslant \frac{\sqrt{2}+d_L}{\sqrt n}\ .
\end{align}
That is, for fixed $d_L$, the code becomes a good error-correcting code in the
limit $n\to\infty$.

In contrast to the thermodynamic codes presented above, this $W$-state code does
not saturate our bound on $\epsilon_{\mathrm{worst}}$, which is inversely
proportional to $n$ rather than the square root of $n$.  The reason for this
discrepancy is the same as for the difference between a sharp and a smooth
cut-off for the three-rotor code, discussed in
\cref{subsec:three-rotor-smooth-cutoff}.  Again, here, as $n$ becomes large, the
local reduced state grows close to the rank-deficient state $\proj\perp$, which
is a regime where the infidelity is particularly sensitive to small
perturbations.  In contrast, for instance, our thermodynamic codes of
\cref{sec:thermodynamic-codes} have reduced states that are full-rank, allowing
the code to achieve the same scaling as our accuracy bound as $n\to\infty$.
While this code does not achieve the same $1/n$ scaling as the thermodynamic
codes, it does exhibit covariance with repect to the full logical unitary group
$\UU(d_L)$.


\section{Error-correcting codes for general groups}
\label{subsec:Gcodes}
In this section, we develop a framework for constructing codes that are
covariant with respect to any group $G$ admitting a left- and right-invariant
Haar measure, encompassing in particular codes that are based on rotors,
oscillators, and qudits.
Our construction is based on quantum systems that transform as the \emph{regular
  representation} of  $G$. Orthonormal basis states $`{\ket{g}}_{g\in G}$ for this representation are labeled by group elements; if the group has an infinite number of elements, then the
quantum system is infinite-dimensional.

A qubit can transform as the regular representation of
the group $\mathbb{Z}_2$, and a qudit as the regular representation of
$\mathbb{Z}_d$.  An oscillator provides a regular representation of the (noncompact) group
$\mathbb{R}$, with the group acting by translation in either its position basis
$`{\ket{x}}$ or its momentum basis $`{\ket{p}}$.  Similarly, a rotor provides a
regular representation of the group $\UU(1)$, with orthonormal basis states $`{\ket{\ee^{i\phi}}}$; when Fourier transformed, it can transform as a regular representation of
$\mathbb{Z}$, where the basis states are the eigenstates of angular momentum 
$`{\ket{\ell}}_{\ell\in\mathbb{Z}}$.

For ease of presentation, we will consider codes whose logical system
and whose physical subsystems transform as the regular representation of any compact group $G$, commenting on noncompact groups in \cref{subsec:last}. 
Well-known qubit codes such as the bit-flip, phase-flip, and
$[[4,2,2]]$ codes, naturally extend to this setting. More generally,
we will also discuss extensions of the $[[m^2,1,m]]$ and $[[2m,2m-2,2]]$ qubit codes.

\subsection{Bit- and phase-flip codes}

For simplicity, let us review bit-flip and phase-flip codes first.  An $M$-qubit
bit-flip encoding copies the logical basis state index $x\in\mathbb{Z}_{2}$ in
each of the $M$ subsystems. An $M$-qubit phase-flip encoding hides the logical
index in the sum of the physical qubit states.  Taking $M=3$ for concreteness,
the two encodings are
\begin{subequations}
\label{eq:quditcodes}
\begin{align}
\ket x_{L}^{\text{bit}} & \rightarrow\ket{x,x,x}\\
\ket x_{L}^{\text{phs}} & \rightarrow \frac{1}{2} \sum_{y_{1},y_{2},y_{3}\in\mathbb{Z}_{2}}\delta_{x,y_{1}+y_{2}+y_{3}}\ket{y_{1},y_{2},y_{3}}\,,
\end{align}
\end{subequations}
where $\delta_{x,y}=1$ if $x=y$ modulo 2. Bit-flip codes protect
against single-qubit shifts $x\rightarrow x+1$ while phase-flip codes
protect against single-qubit operators which are diagonal in the canonical basis. 

By viewing a qubit as a regular representation of the group $G=\mathbb{Z}_{2}$, we can see how to generalize this construction to other groups. For a finite group $G$ with order $|G|$, consider the $|G|$-dimensional Hilbert space $V$ spanned by $\{\ket g\,|\,g\in G\}$ with inner product $\braket gh=\delta_{g,h}$, where $\delta_{g,h}=1$ if $g$ and $h$ are the same group element and zero otherwise. For compact continuous groups, the Hilbert space is infinite-dimensional, and $\delta_{g,h}$ becomes the Dirac delta function---infinite when $g=h$ and zero otherwise---and sums $\frac{1}{\left|G\right|}\sum_{h\in G}$ are replaced by integrals $\int dg$, where $dg$ is the group's normalized Haar measure~\cite{sternberg_book,Arovas}. We'll write sums below for simplicity, with the understanding that the sum is to be replaced by an integral when $G$ is a compact Lie group. 

The respective $M=3$-subsystem bit- and phase-flip generalizations
of \cref{eq:quditcodes} for finite groups are
\begin{subequations}
\label{eq:groupcodes}
\begin{align}
\ket g_{L}^{\text{bit}} & \rightarrow\ket{g,g,g}\label{eq:bitflip}\\
\ket g_{L}^{\text{phs}} & \rightarrow\frac{1}{\left|G\right|}\sum_{h_{1},h_{2},h_{3}\in G}\delta_{g,h_{1}h_{2}h_{3}}\ket{h_{1},h_{2},h_{3}}\,.\label{eq:phaseflip}
\end{align}
\end{subequations}
The bit-flip encoding records a group element redundantly, while the phase-flip
encoding hides $g$ in a product of three group elements. The error-correction
properties of these codes are analogous to those for $G=\mathbb{Z}_{2}$:
the bit-flip codes correct against errors which take individual subsystems
into states orthogonal to $\ket g$ while phase-flip codes correct
against single-subsystem errors diagonal in the $\ket g$-basis. 

To perform an $X$-type gate on these codes, introduce left and right-multipliers,
$\xl_{g}$ and $\xr_{g}$, which act as 
\begin{align}
  \xl_{g}\ket h &=\ket{gh}\qquad\text{and}\qquad
  \xr_{g}\ket h =\ket{hg}\ .
\end{align}
The sets $\{\xl_{g}\}_{g\in G}$ and $\{\xr_{g}\}_{g\in G}$ are permutation matrices forming
the left and right regular representations of $G$. Note that the
arrow points towards $h$ from the side that $g$ acts. Since multiplying
from the left commutes with multiplying from the right, the two sets
commute with each other. 

For the bit-flip code~\eqref{eq:bitflip}, the logical left multiplication gate
\begin{equation}
\xl_{L,k}^{\text{bit}}: \ket g_{L}^{\text{bit}}\rightarrow\ket{kg}_{L}^{\text{bit}}
\end{equation}
can be implemented transversally:
\begin{equation}
\xl_{L,k}^{\text{bit}} = \xl_{k}\otimes \xl_{k}\otimes \xl_{k}.
\end{equation}
For the phase-flip code, which provides no protection against bit-flips at all, logical left multiplication is implemented by acting on a single subsystem:
\begin{equation}
\xl_{L,k}^{\text{phs}} = \xl_{k}\otimes I\otimes I,
\end{equation}
where $I$
is the subsystem identity. Similar constructions hold for logical right multipliers.

For continuous $G$, the code states become nonnormalizable, but the
gates work the same way. Therefore, the logical operators $\xl_{L,k}$ define exact continuous symmetries of theses codes. However, these codes do not correct erasure of a subsystem; rather, each code corrects only a limited set of single-subsystem errors. The same is true for the qubit codes that inspired this construction.

We can concatenate the bit-flip code and the phase-flip code for qubits to
obtain Bacon-Shor codes~\cite{Ralph2005,Bacon2006PRA_subsystems}, which have the
parameters $[[m^{2},1,m]]_{\mathbb{Z}_{2}}$. This notation means that one
logical qubit is encoded in a code block of $m^2$ physical qubits, and that the
code distance is $m$; hence erasure of any $m-1$ of the qubits can be
corrected. Of the codes in this family, the best known are the
$[[4,1,2]]_{\mathbb{Z}_{2}}$ error-detecting code~\cite{Steane1996,Steane1996a}
and Shor's nine-qubit $[[9,1,3]]_{\mathbb{Z}_{2}}$ error-correcting
code~\cite{Shor1995}.

Likewise, by concatenating the $G$-covariant bit-flip and phase-flip codes, we obtain the $G$-covariant $[[m^{2},1,m]]_{G}$ code. For finite $G$, this is a $G$-covariant encoding of a $|G|$-dimensional logical system in $m^2$ $|G|$-dimensional subsystems, protected against erasure of any $m-1$ of the subsystems. If $G$ is a compact Lie group, this code has continuous $G$ symmetry. In that case, as the Eastin-Knill theorem requires, the encoding is infinite-dimensional. 

Rather than discussing this generalized Bacon-Shor code construction more explicitly here, in \ref{subsec:[[4,2,2]]} we'll provide a more detailed discussion of a related code, with two rather than just one $|G|$-dimensional logical subsystems.

\subsection{The $[[4,2,2]]_{G}$ code and its generalizations}
\label{subsec:[[4,2,2]]}
There is also a $[[4,2,2]]_{\mathbb{Z}_{2}}$ qubit code~\cite{Grassl1997}, which
can be extended to a covariant $[[4,2,2]]_{G}$ code, with encoding map
\begin{equation}
\ket{g_{1},g_{2}}_{L}\rightarrow\frac{1}{\sqrt{\left|G\right|}}\sum_{g\in G}\ket{g,\,g^{-1}g_1,\, gg_2,\, g^{-1}g_1g_2}\,.\label{eq:422}
\end{equation}
In fact, the $[[4,2,2]]_{\mathbb{Z}_{2}}$ code can be viewed as a mimimal
version of Kitaev's toric code~\cite{Kitaev2003AoP_anyons}, defined by just one
plaquette operator and one star operator, and~\eqref{eq:422} defines the
corresponding quantum double code with group $G$.

Given $l\in G$, the physical operator $I\otimes I\otimes \xr_{l}\otimes \xr_{l}$ has the effect of replacing $g_2$ by $g_2l$ in  \cref{eq:422}, hence mapping the logical state to $|g_1,g_2 l\rangle_L$. The physical operator $\xl_{l}\otimes I\otimes \xl_{l}\otimes I$, after a redefinition of the summation variable ($g \rightarrow l^{-1} g'$), has the effect of replacing $g_1$ by $lg_1$, hence mapping to the logical state to
$|lg_1,g_2\rangle_L$. Since the left and right multipliers commute, and both logical operations are transversal, the code is covariant with respect to the group $G\times G$.

Using the quantum error-correction
conditions~\cite{Bennett1996PRA_MSEntglQECorr,Knill1997PRA_correction}, we can
check that this code corrects one erasure. Let $O_1$ be an operator acting on
the first subsystem, and consider its matrix element between code states.
Plugging into \cref{eq:422} and contracting indices we find
\begin{equation}
{}_L\bra{g_{1},g_{2}}O_{1}\ket{g_{1}^{\prime},g_{2}^{\prime}}_{L}=\delta_{g_{1},g_{1}^{\prime}}\delta_{g_{2},g_{2}^{\prime}}\tr(O_{1})/\left|G\right|\,.\label{eq:ercond}
\end{equation}
This means that the code satisfies the condition for correctability of erasure of the first subsystem.
A similar calculation can be performed for operators acting on any of the other subsystems; therefore erasure is correctable for each of the four subsystems.

The $[[4,2,2]]_{\mathbb{Z}_{2}}$ qubit code can be generalized to a $[[2m,2m-2,2]]_{\mathbb{Z}_{2}}$ code, which can also be extended to a covariant $[[2m,2m-2,2]]_{G}$ code for any group $G$. To understand this construction, first consider a different $[[4,2,2]]_{G}$ code, which has a smaller covariance group than the code described above. Now we use the encoding map
\begin{equation}
\ket{g_{1},g_{2}}_{L}\rightarrow\frac{1}{\sqrt{\left|G\right|}}\sum_{g\in G}
\ket{g,\,gg_1,\, g g_2g_1 ,\, gg_2}\,.\label{eq:422-1}
\end{equation}
Unlike the previously considered code, this code has the property of being invariant under the action of a ``stabilizer'' operator $S_{l}=\xl_{l}\otimes \xl_{l}\otimes \xl_{l}\otimes \xl_{l}$
for each $l\in G$. 
The price we pay for this invariance property is a reduction in the number of independent transversal operations which act nontrivially on the code space. There is no nontrivial symmetry of the code acting from the left, but the operator $I\otimes \xr_{l}\otimes \xr_{l}\otimes I$ maps $|g_1,g_2\rangle_L$ to $|g_1l,g_2 \rangle_L$. Therefore, this code is $G$-covariant. We can also check that it satisfies the condition for correctability of erasure for each one of the four subsystems.

To illustrate how this code generalizes to a higher-length code with more physical subsystems, we will, to be concrete, describe the corresponding $[[2m,2m-2,2]]_{G}$ code with $m=4$. This code has the stabilizer $S_{l}=\xl{}_{l}^{\otimes 8}$ for each $l\in G$, and the encoding map
\begin{align}
&|g_1, g_2, g_3, g_4, g_5, g_6\rangle_L \notag\\
\rightarrow 
&\frac{1}{\sqrt{\left|G\right|}}\sum_{g\in G} S_g 
\ket{1,\,g_1,\,  g_2g_1 ,\, g_2 g_3, \, g_4g_3,\,g_4 g_5,\, g_6 g_5,\, g_6 }\,.
\end{align}
Aside from being invariant under the action of $S_l$, the code has another important property: each codeword is a superposition of states $\ket{h_1,h_2,h_3,h_4,h_5,h_6,h_7,h_8}$ of the eight physical subsystems having the property $h_1^{-1} h_2 h_3^{-1} h_4 h_5^{-1} h_6 h_7^{-1} h_8 = 1$ (for this to work the code has to have even length). These two properties together suffice to ensure that erasure of each subsystem is correctable.

This code is covariant under the group $G^3$. The operator
\begin{equation}
I\otimes \xr_{h_1}\otimes \xr_{h_1}\otimes \xr_{h_3}\otimes \xr_{h_3}\otimes \xr_{h_5}\otimes \xr_{h_5}\otimes I
\end{equation}
acts on the code's basis states according to
\begin{equation}
|g_1, g_2, g_3, g_4, g_5, g_6\rangle_L \rightarrow
|g_1h_1, g_2, g_3h_3, g_4, g_5h_5, g_6\rangle_L .
\end{equation}
In general, the $[[2m,2m-2,2]]_{G}$ code has a transversal $G^{m-1}$ symmetry, acting similarly.

\subsection{Further extensions and some limitations}\label{subsec:last}

One can extend these constructions to noncompact groups. For example, the
oscillator $[[9,1,3]]_{\mathbb{R}}$ code was noticed early
on~\cite{Lloyd1998,Braunstein1998PRL_CV} (see
also~\cite{Barnes2004,Bermejo-Vega2016}). Another example is the rotor
$[[4,2,2]]_{\mathbb{Z}}$ encoding
\begin{equation}
\ket{a,b}\rightarrow\sum_{j,k,l\in\mathbb{Z}}\delta_{a,j+k}\delta_{b,l}\ket{j,k,j+l,k+l}~.
\end{equation}
As done in \cref{sec:maintext-truncated-Hayden-code}, one can impose an envelope
so that the codewords are normalizable. In general, a bi-invariant Haar measure
is sufficient to perform the left- and right-multiplier transversal gates as
well as the error-correction, but one would have to approximate the codewords to
avoid infinities due to non-normalizable Haar measures. For the oscillator code
$[[4,2,2]]_{\mathbb{R}}$, for which the above is an integral over oscillator
position states, we additionally need to approximate the position states with a
displaced and finitely squeezed vacuum~\cite{Gottesman2001PRA_oscillator}. In
other words, noncompactness \textit{and} the continuous nature of the group may
each require approximations to achieve normalizability of the codewords.

One may also ask if it is possible to extend the secret-sharing code
$[[3,1,2]]_{\mathbb{Z}_{3}}$ from \cref{sec:maintext-truncated-Hayden-code} to a
more general group $G$. An extension does indeed work for
$G\in\{\mathbb{R},\mathbb{Z},\mathbb{Z}_{2D+1},U(1)\}$, but the code breaks down
at, e.g., $\mathbb{Z}_{2D}$ due to there being a non-measure-zero set of
order-two elements in the group. Writing a natural guess for the encoding,
\begin{equation}
  \ket g_{L}\rightarrow\frac{1}{\sqrt{\abs{G}}}\sum_{h\in G}\ket{h,gh^{-1},gh^{-2}}\,,
\end{equation}
we see that the third subsystem stores the logical index ``in plain sight''
whenever $h^{2}=1$.  Roughly speaking, for groups with too many such elements,
the environment can extract logical information from the code.


\section{Symmetries and error correction in quantum gravity}
\label{sec:holography}
The interplay between continuous symmetries and quantum error correction has implications for holography and quantum gravity. 
The AdS/CFT correspondence~\cite{Maldacena1999IJTP_largeN,Witten1998ATMP_AdSCFT}
is a duality between quantum gravity in Anti-de Sitter (AdS) space, and a
conformal field theory (CFT) in one fewer spatial dimensions, where the CFT
resides on the boundary of the AdS space.  It was recently discovered that the
duality map from bulk operators to boundary operators may be regarded as the
encoding map of a quantum error-correcting code, where the code space is spanned
by low-energy states of the CFT. Specifically, local operators deep inside the
bulk AdS are encoded as highly nonlocal operators in the boundary CFT which are
robust against erasure errors in the boundary
theory~\cite{Almheiri2015JHEP_bulk,Harlow2016RMP_Jerusalem,%
  Harlow2018TASI_emergence}.  Here, we discuss symmetries of this AdS/CFT
code. First, we reprise a recent analysis
from~\cite{Harlow2018arXiv_constraints,Harlow2018arXiv_symmetries}, which rules
out exact global symmetries for quantum gravity in the bulk AdS space. Then we
explain how our results in this paper clarify the correspondence between time
evolution in the bulk and boundary theories.

\subsection{No bulk global symmetries}
\label{subsec:no-global}
A longstanding conjecture holds that quantum gravity is incompatible with global symmetry. One argument supporting this claim goes as follows~\cite{preskill1992black,Kallosh1995PRD_global}. According to semiclassical theory, which should be reliable for large black holes, the Hawking radiation emitted by a black hole is not affected by the amount of global charge the black hole might have previously consumed. Therefore, a process in which a black hole arises from the gravitational collapse of an object with large charge, and then evaporates completely, will not obey charge conservation.

This argument may not be trustworthy if the symmetry group is a small finite group, in which case the total charge cannot be ``large,'' and any missing charge might reappear in the late stages of black hole evaporation when semiclassical theory does not apply. 
But recently, Harlow and Ooguri used AdS/CFT technology to show that even discrete global symmetries are disallowed in the bulk~\cite{Harlow2018arXiv_symmetries}. Here we will reprise their argument, expressing it in language that emphasizes the conceptual core of the proof, and that may be more accessible for those familiar with the formalism of quantum error correction. 

To quantum coding theorists, it sounds strange to hear that the AdS/CFT code
cannot have discrete symmetries, because typical quantum codes do. To illustrate
this point we'll revisit a simple quantum-error correcting code that is often
used to exemplify the structure of the AdS/CFT code: the three-qutrit
code~\cite{Almheiri2015JHEP_bulk}, which we already discussed in
\cref{subsubsec:rotor-version}. This encodes a single logical qutrit in a block
of three physical qutrits, and protects against the erasure of any one of the
three qutrits.

The three-qutrit code is an example of a stabilizer code---the code space may be defined as the simultaneous eigenspace of a set of generalized Pauli operators. For a qutrit with basis states $\{|j\rangle, j =0,1,2\}$, the generalized Pauli group is generated by operators $X$ and $Z$ defined by
\begin{equation}
X|j\rangle = |j+1~({\mathrm{mod}}~3)\rangle, \quad Z|j\rangle = \omega^j|j\rangle, 
\end{equation}
where $\omega = e^{2\pi i / 3}$, which obey the commutation relations
\begin{equation}
ZX = \omega XZ, \quad Z^{-1}X = \omega^{-1} XZ^{-1}.
\end{equation}
The code space of the three-qutrit code is the simultaneous eigenspace with eigenvalue 1 of the operators
\begin{equation}
S_X = X\otimes X\otimes X, \quad S_Z = Z\otimes Z\otimes Z
\end{equation}
acting on the three qutrits in the code block. Note that, although $X$ and $Z$ do not commute, $S_X$ and $S_Z$ do commute, and can therefore be simultaneously diagonalized. Any nontrivial weight-one Pauli operator (supported on a single qutrit and distinct from the identity) must fail to commute with at least one of $S_X$ or $S_Z$. Therefore no nontrivial weight-one operator preserves the code space, which is why erasure of a single qutrit is correctable. 

However, there are weight-two Pauli operators that commute with both $S_X$ and $S_Z$, and therefore preserve the code space; for example, 
\begin{equation}
 X_L = X\otimes X^{-1} \otimes I, \quad  Z_L = Z\otimes I\otimes Z^{-1}.
\end{equation}
Because they preserve the code space, and act nontrivially on the code space, we say that $X_L$ and $Z_L$ are nontrivial logical operators for this code. Furthermore, $X_L$ and $Z_L$ obey the same commutation relations as $X$ and $Z$; they generate the logical Pauli group acting on an encoded qutrit. Note that because $S_X$ acts trivially on the code space, the operator $X_L$, which is supported on the first two qutrits, acts on the code space in the same way as $X_LS_X$, which is supported on the first and third qutrit, and also in the same way as $X_L S_X^{-1}$, which is supported on the second and third qutrit. A similar observation also applies to $Z_L$ and $S_Z$. This feature illustrates a general property: if $O_L$ is a logical operator, and $A$ is a subset of the qutrits in the code block such that erasure of $A$ is correctable, then we may represent $O_L$ as a physical operator supported on the complementary set $A^c$.

Our purpose in describing this code is just to point out that the transversal logical operators $X_L$ and $Z_L$ may be viewed as global symmetries of the code. The action of each of these operators on the logical system can be realized as a tensor product of single-qutrit operators. Such a symmetry is what Harlow and Ooguri rule out. We need to understand why their argument applies to the AdS/CFT code, but not to the qutrit code or to other stabilizer codes.

Harlow and Ooguri use special properties of AdS/CFT in two different ways, and their argument proceeds in two steps. The first step (explained in more detail below) appeals to \textit{entanglement wedge reconstruction},
together with the structure of global symmetries in quantum field theory, to show that any global symmetry acting on the bulk acts transversally on the boundary. That is, the boundary can be expressed as a union of disjoint subregions $\{A_k\}$ such that erasure of each $A_k$ is correctable, and any bulk global symmetry operator $U_L$, when reconstructed on the boundary, can be expressed as a tensor product $\bigotimes_k W_k$, where $W_k$ is supported on $A_k$. (Here we ignore a correction factor supported only where the regions touch, which is inessential to the argument.) This is just the property that we have assumed throughout this paper, and which is exemplified by the three-qutrit code discussed above. 

The second step of the argument (also explained further below) is the crucial one, which invokes a property of the AdS/CFT code which is not shared by the typical quantum codes which arise in work on fault-tolerant quantum computation. Harlow and Ooguri argue that each $W_k$ is itself a logical operator; that is, each $W_k$ maps the code space to the code space. The essence of this part of the argument is that the code space is the span of low-energy states in the CFT, and the $W_k$'s, perhaps after suitable smoothing, can be chosen so that they do not increase the energy of the CFT by very much. As we have already emphasized, this property does not apply to the three-qutrit code, where $X_L$ is a logical operator, yet its weight-one factors $X\otimes I\otimes I$ and $I\otimes X^{-1}\otimes I$ are not logical. Indeed, because $X\otimes I\otimes I$ changes the eigenvalue of the unitary operator $S_Z$ by the multiplicative factor $\omega$, it maps the code space (the simultaneous eigenspace of $S_Z$ and $S_X$ with eigenvalue 1) to a subspace orthogonal to the code space (the eigenspace of $S_Z$ with eigenvalue $\omega$).

A logical operator supported on a region $A$, where erasure of $A$ is correctable, must be the logical identity. We can easily see that's true, because otherwise an adversary could steal region $A$ and apply a nontrivial logical operator, altering the encoded state and therefore introducing an uncorrectable error. Now the conclusion of Harlow and Ooguri follows easily. The bulk global symmetry operator $U_L$ is a product of logical operators, each of which is trivial; therefore $U_L$ must be the identity. 

As Harlow and Ooguri note (Footnote~68 in~\cite{Harlow2018arXiv_symmetries}),
their argument, which excludes discrete symmetries of the AdS/CFT code as well
as continuous symmetries, is quite different than the Eastin-Knill argument,
which excludes only continuous symmetries of a code. Both arguments apply in a
framework where the symmetry of the code can be applied transversally, as a
product of local operators. But for the Eastin-Knill argument, there is no need
to assume that these local operators preserve the codespace, and therefore the
argument applies to general codes. In contrast, Harlow and Ooguri assert that
for the AdS/CFT code in particular, the local operators \textit{do} preserve the
code space. Therefore, their argument excluding discrete symmetries applies to
the AdS/CFT code, but not to the typical codes studied by quantum information
theorists.

For completeness, we'll now sketch the two key steps of the Harlow-Ooguri argument in slightly greater detail, starting with the step which shows that a bulk global symmetry acts transversally on the boundary.  We begin by noting that a global symmetry in the bulk AdS space implies a corresponding symmetry acting on the boundary; to see this we need only consider the action of the bulk global symmetry on bulk local operators in the limit where the support of the bulk local operators approaches the boundary. Furthermore, a global symmetry operator of the boundary CFT is \textit{splittable}; that is, it can be expressed as a tensor product of many operators, each supported on a small region. 
In coding theory language, the encoding isometry $V$ which maps bulk to boundary  has the property 
\begin{equation}
V U_L = U_{\mathrm{CFT}} V,
\end{equation}
where $U_L$ is the bulk symmetry operator and $U_{\mathrm{CFT}}$ is the corresponding CFT symmetry operator. Because the CFT symmetry is splittable, we may consider decomposing the CFT into small spatial subregions $\{A_k\}$, and infer that
\begin{equation}\label{eq:UCFT-Wk}
U_{\mathrm{CFT}} = \bigotimes_k W_k,
\end{equation}
where $W_k$ is a CFT operator supported on $A_k$.

Next we would like to see that the boundary subregions can be chosen so that erasure of any $A_k$ is correctable. This point is most naturally discussed using the language of operator algebra quantum error correction~\cite{Almheiri2015JHEP_bulk}. We consider the subalgebra $\mathcal{A}$ of logical operators which are supported on a subregion of the bulk. Each logical operator $O_L\in \mathcal{A}$ can be ``reconstructed'' as a physical operator $O_{\mathrm{CFT}}$ acting on the boundary using the encoding isometry $V$:
\begin{equation}
V O_L = O_{\mathrm{CFT}} V.
\end{equation}
What we wish to show is that, for any $O_L$ in $\mathcal{A}$, and for each boundary subregion $A_k$, the reconstructed boundary operator $O_{\mathrm{CFT}}$ can be chosen to have support on the complementary boundary subregion $A_k^c$. This property ensures that, for the bulk subalgebra $\mathcal{A}$, erasure of boundary region $A_k$ is correctable. 

The argument showing that erasure of boundary subregion $A_k$ is correctable is illustrated in \cref{AdSCFTDecomposition}. 
Associated with each boundary subregion $A_k$ is a bulk subregion $a_k$ which is called the entanglement wedge of $A_k$. The AdS/CFT code has these important properties~\cite{Almheiri2015JHEP_bulk}: (1) A bulk operator supported in bulk subregion $a_k$ can be reconstructed as a boundary operator supported in boundary subregion $A_k$. This property is called \textit{subregion duality}. (2) Furthermore, a bulk operator supported in the bulk complement $a_k^c$ of bulk subregion $a_k$ can be reconstructed as a boundary operator supported in the boundary complement $A_k^c$ of boundary subregion $A_k$. This property is called \textit{complementary recovery}. 

It follows from complementary recovery that if the bulk subalgebra $\mathcal{A}$
is supported in $a_k^c$, then erasure of boundary subregion $A_k$ is correctable
for the subalgebra $\mathcal{A}$. This is the key fact that we need. As in
\cref{AdSCFTDecomposition}, for any fixed subregion $a_0$ of the bulk, we can
choose the decomposition of the boundary into subregions $\{A_k\}$ such that
$a_0$ lies outside the entanglement wedge of each $A_k$. Therefore, the algebra
$\mathcal{A}$ of bulk operators supported on $a_0$ has the feature that erasure
of each $A_k$ is correctable for the algebra $\mathcal{A}$. This completes the
first step of the Harlow-Ooguri argument, showing that a bulk global symmetry
operator $U_L$ must be transversal in the sense we have assumed in this
paper---it factorizes as a tensor product of boundary operators, each of which
is supported on a correctable boundary subregion.

Actually, so far we have ignored a subtlety in this argument associated with general covariance in the bulk~\cite{Harlow2018arXiv_symmetries}. Operators acting in the bulk are not really strictly local; rather a bulk ``local'' operator is accompanied by gravitational \textit{dressing} which connects it to the boundary. This dressing is needed in order to enforce invariance under bulk diffeomorphisms. Because the dressing extends to the boundary, it has support on at least one of the $a_k$, and its reconstructed counterpart has support on at least one boundary subregion. However, this complication does not invalidate the argument, because the dressing is purely gravitational, and is therefore oblivous to the global charge defined within the bulk subalgebra $\mathcal{A}$.

\begin{figure}
\centering
\includegraphics{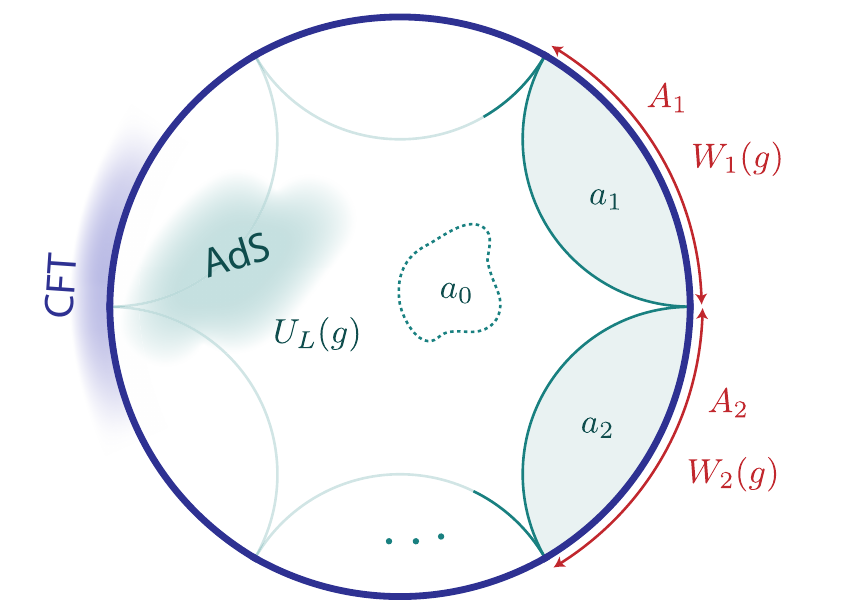}
\caption{\label{AdSCFTDecomposition} Nontrivial bulk global symmetries are
  incompatible with the AdS/CFT quantum error-correcting code. A bulk global
  symmetry operator $U_L(g)$ corresponds to a boundary global symmetry operator
  $U_{\mathrm{CFT}}(g)$ which is transversal with respect to the decomposition
  of the boundary into subregions $\{A_k\}$:
  $U_{\mathrm{CFT}}(g) = \bigotimes_k W_k(g)$, where $W_k(g)$ is supported on
  $A_k$.  The bulk subregion $a_0$ is outside the entanglement wedge $a_k$ of
  each boundary subregion $A_k$; therefore erasure of each $A_k$ is correctable
  for the algebra $\mathcal{A}$ of bulk local operators on $a_0$. Furthermore,
  each $W_k(g)$ maps low-energy states of the boundary CFT to low-energy
  states. This means that $W_k(g)$ is a logical operator which preserves the
  code subspace of the CFT. A logical operator $W_k(g)$ supported on a
  correctable boundary subregion $A_k$ must be the logical identity. Therefore
  the global symmetry operator $U_L(g)$ acts trivially on bulk local operators.}
\end{figure}

Now we come to the second part of the Harlow-Ooguri argument, which establishes that the operator $W_k$ supported on boundary subregion $A_k$ is actually a \textit{logical} operator. In the holographic correspondence, the choice of code space is actually rather flexible. One possible procedure~\cite{Almheiri2015JHEP_bulk} is to pick a set of local operators deep in the bulk, corresponding to highly nonlocal operators when reconstructed in the CFT. Then the code space is spanned by polynomials of bounded degree in these operators acting on the CFT vacuum state. The motivation for this choice is that each of the highly nonlocal CFT operators raises the energy of the CFT by only a small amount, hence producing only very weak back reaction on the bulk geometry. Logical operators are those that preserve this low-energy sector of the CFT, and Harlow and Ooguri assert that each operator $W_k$ can be chosen to have this property. 
Since $W_k$ preserves the code space, and is supported on the correctable boundary subregion $A_k$, it must act trivially on the code space. This assertion is affirmed if the code's logical operators may be regarded as bulk operators which are supported in a bulk region which is outside the entanglement wedge of the $A_k$ (such as the region $a_0$ in \cref{AdSCFTDecomposition}), since in that case each logical operator can be reconstructed on the complementary boundary region $A_k^c$, where $W_k$ acts trivially. Therefore, since each $W_k$ is a trivial logical operator, we conclude that the global symmetry operator $U_L$ is the identity acting on the code space. 

In this argument, we assumed that subregion duality and complementary recovery are \textit{exact} properties of the AdS/CFT code, and thus inferred that erasure of boundary region $A_k$ is \textit{exactly} correctable. 
In fact, though, these properties of the code hold precisely only in the leading order of a systematic expansion in Newton's gravitational constant $G_N$, and can be modified when corrections higher order in $G_N$ are included. Nevertheless, the conclusion that bulk global symmetries are disallowed continues to hold even when these higher-order corrections are taken into account, assuming the corrections are small. A nontrivial global symmetry operation (if one were allowed), acting on a bulk local operator $\phi$, should modify $\phi$ by an amount $\delta \phi$ which is $O(1)$, independent of $G_N$. But we have argued that $\delta \phi = 0$ to leading order in $G_N$ (since exact correctability of $A_k$ holds to this order).  Higher-order corrections might make an $O(G_N)$ contribution  to $\delta \phi$, but these small corrections do not suffice to restore the proper nontrivial action on $\phi$ of the putative global symmetry.

Now we have found that exact bulk local symmetries cannot occur in AdS/CFT. But
what can we say about whether \textit{approximate} discrete global symmetries
are allowed? As we've discussed, finite-dimensional quantum error-correcting
codes \textit{can} have exact discrete symmetries, even though the AdS/CFT code
does not.  In this respect, discrete symmetries are essentially different than
continuous symmetries, which are disallowed by the Eastin-Knill theorem for any
finite-dimensional quantum code that can correct erasure of subsystems exactly.
Therefore, we can't expect to make general statements which are directly
analogous to \cref{thm:main-result-full} about limitations on approximate
\textit{discrete} symmetries that apply to general codes.

Nevertheless, it may be instructive to study further the properties of approximate quantum error-correcting codes which are approximately covariant with respect to a discrete symmetry. In the setting of AdS/CFT, it is of particular interest to consider the case where the local transformations $\{W_k\}$ in \cref{eq:UCFT-Wk} are either precisely or approximately logical.

\subsection{Bulk time evolution}
\label{subsec:time-evolution}

A natural symmetry arising in AdS/CFT is the time-translation invariance of the
boundary CFT, which is governed by a local Hamiltonian. Time evolution in the
bulk AdS space is a bit subtle because of the general covariance of the bulk
theory, but if we fix the gauge by choosing a preferred sequence of bulk time
slices, then time evolution in the bulk corresponds to time evolution on the
boundary. From the perspective of quantum error correction, this correspondence
is puzzling, because covariance of the AdS/CFT code with respect to time
evolution seems to be incompatible with perfect correctability of erasure on the
boundary~\cite{Preskill2000arXiv_synchronization,Hayden2017arXiv_frame}.
Indeed, the analysis of bulk global symmetries in \cref{subsec:no-global}, which
is applicable to both discrete and continuous symmetries, builds on the
observation that a boundary global symmetry operator, when restricted to a
correctable boundary subregion, preserves the code space and therefore must be a
trivial logical operator. Why can't we apply similar reasoning to the action of
the boundary Hamiltonian, concluding (incorrectly) that bulk time evolution is
trivial?

The answer hinges on a crucial distinction, emphasized
in~\cite{Harlow2018arXiv_constraints,Harlow2018arXiv_symmetries}, between global
symmetry and \textit{long-range gauge symmetry} in the bulk.
As we've noted, a ``local'' operator in the bulk is not truly local; it requires
gravitational dressing connecting it to the boundary. For the analysis of bulk
global symmetries, this dressing could be ignored, because the dressing
transforms trivially under the global symmetry. For the analysis of bulk time
evolution, the dressing cannot be ignored, because the dressing depends on the
energy-momentum of a bulk quantum state. It is the nontrivial action of the
boundary Hamiltonian on the asymptotic gravitational dressing of bulk ``local''
operators which is responsible for the bulk time evolution. Furthermore, because
the dressing can be detected by localized boundary observers, erasure of
boundary subregions can really be corrected only approximately rather than
exactly.

Our \cref{thm:main-result-full} clarifies the situation by quantifying the
incompatibility between continuous symmetries and error correction.  In the
regime of sufficiently large physical subsystems, or for a large enough number
of subsystems, covariant codes can provide arbitrarily good protection against
erasure errors.  The AdS/CFT setting fulfills both of these criteria. The
boundary theory is a field theory, which formally has an unbounded number of
local physical subsystems. Furthermore, in the ``large $N$'' limit of the CFT,
which corresponds to semiclassical gravity in the bulk, the Hilbert space
dimension of each local subsystem is very large~\cite{Maldacena1999IJTP_largeN}.

We note that holographic quantum codes, toy models of the bulk which capture
some of the properties of full blown AdS/CFT, have been constructed in which
local Hamiltonian evolution in the bulk is realized approximately by a local
Hamiltonian in the boundary theory~\cite{kohler2018complete}. However, although
these codes are approximately covariant, the boundary Hamiltonian is far from
uniform.


\section{Discussion}
\label{sec:discussion}
In this paper we have studied quantum error-correcting codes that are exactly or
approximately covariant with respect to a continuous symmetry group. A special
case of our main result applies if the logical charge operator $T_L$ which
generates a continuous symmetry is a \textit{transversal} logical operator of
the code. This means that the logical system $L$ is encoded in a physical system
$A$ which can be decomposed as a tensor product of physical subsystems $\{A_i\}$
such that erasure of each $A_i$ is correctable, and that the physical symmetry
generator $T_A$ is a sum $T_A = \sum_i T_i$, such that $T_i$ is a \textit{local}
charge operator supported only on subsystem $A_i$.

The Eastin-Knill theorem~\cite{Eastin2009PRL_restrictions,%
  Zeng2011IEEETIT_transversality,Chen2008PRA_subsystem} asserts that no quantum
error-correcting code can be covariant with respect to a continuous symmetry if
the number of physical subsystems is finite, each subsystem is
finite-dimensional, and erasure of each subsystem is exactly
correctable. However, it was shown in~\cite{Hayden2017arXiv_frame} that this
conclusion can be evaded by infinite-dimensional codes. Our main objective here
has been to clarify the properties of covariant quantum codes in which the
dimension of each physical subsystem is large but finite, and in which the
number of subsystems is large but finite.

In \cref{thm:simple-main-result-one-erasure}, we consider codes that can correct
erasure of a subsystem only \textit{approximately}, and we derive a lower bound
on the worst case entanglement infidelity $\epsilon_{\mathrm{worst}}$ that can
be achieved by the best recovery map after an erasure. In keeping with the
findings of~\cite{Hayden2017arXiv_frame}, this lower bound approaches zero when
the number $n$ of subsystems approaches infinity, or when the fluctuations of
the local charge of individual subsystems grow without bound. The idea behind
the lower bound is that, if the number of subsystems and the local charge
fluctuations are both finite, then some information about the value of the
global logical charge is available to an adversary who takes possession of a
single physical subsystem, resulting in irreversible decoherence of the logical
state. In Theorem \ref{thm:main-result-full}, we extend the result by relaxing
the assumptions. This more general theorem applies when the code is not exactly
covariant, when the logical charge operator is not exactly transversal, and when
more than one subsystem is erased.

While originally derived in the context of fault-tolerant quantum computing, the
Eastin/Knill theorem has a variety of other applications, for example to quantum
reference frames and quantum
clocks~\cite{Preskill2000arXiv_synchronization,Hayden2017arXiv_frame} (cf.\@
also recent related work~\cite{AlvaroMischa-inprep}), and to the holographic
dictionary relating bulk and boundary physics in the AdS/CFT
correspondence~\cite{Hayden2017arXiv_frame}.  When applied to these settings,
our results provide limitations on transmission of reference frames over noisy
channels, and help to clarify the relationship between bulk and boundary time
evolution for the AdS/CFT quantum code. Our lower bounds on infidelity also
apply to the recently discovered quantum codes arising in one-dimensional
translation-invariant spin chains~\cite{Brandao2017arXiv_chainAQECC}.

Our main result hinges on an interplay between the noise model and the structure
of the local charge observables.  Specifically, \Cref{thm:main-result-full}
applies under the following condition: For any term $T_\alpha$ that appears in
the physical charge $T_A=\sum_\alpha T_\alpha$, there is a nonzero probability
that \textit{all} physical subsystems supporting $T_\alpha$ are
\textit{simultaneously} lost to the environment.  One may wonder whether this
condition is really necessary---\textit{e.g.}, would a code with a $2$-local
charge operator be allowed if it could correct only a single erasure?  It turns
out that such codes do exist, showing that our condition is necessary.
As a simple example, the erasure of a single qubit is correctable for the
$[[4,2,2]]$ quantum code, but there is also a nontrivial logical operator
$Q =X\otimes X \otimes I\otimes I$ supported on the first two
qubits~\cite{Gottesman2016arXiv_fault}.  We can exponentiate this 2-local
operator to generate a logical rotation of the first logical qubit.  This
provides an example of a code that is exactly error-correcting against a single
located erasure and that is nevertheless exactly covariant with respect to a
two-local charge.

In the lower bound~\eqref{eq:overview-main-eps-worst-bound}, the range
$\Delta T_L$ of the logical charge operator and corresponding range $\Delta T_i$
of the physical charge are not directly related to the corresponding system
dimensions if the symmetry is abelian.  The situation is different when we apply
our bound~\eqref{eq:simple-bound} to codes that are covariant with respect to
the full unitary group $\UU(d_L)$. In that case, there is a minimal subsystem
dimension for each value of $\Delta T_i$, and \cref{thm:approximate-EK-main}
therefore follows from \cref{thm:simple-main-result-one-erasure}.

While \cref{thm:simple-main-result-one-erasure,thm:main-result-full} pertain to correction of erasure errors, similar conclusions should apply for more general errors. For dephasing errors in particular, the information leaked to the environment can be explicitly characterized in accord with recent results~\cite{Beny2018arXiv_constraints,dephasing_inprep}.

Our work builds on Ref.~\cite{Hayden2017arXiv_frame}, where covariant quantum codes arose in the study of reference frames; {\it i.e.}, asymmetric states which convey ``physical''
information~\cite{Bartlett2007_refframes,Kitaev2004PRA_superselection}. As shown in~\cite{Hayden2017arXiv_frame}, exact error correction of reference frames is impossible for finite-dimensional systems, yet in the real world reference frames are always finite-dimensional and communication channels are always imperfect. Nevertheless, in practice we routinely share reference frames over noisy channels, easily reaching agreement about which direction is ``up'' or what time it is; furthermore quantum technologists can distribute entanglement between nodes of a quantum network, which is possible only if the nodes share a common phase reference. Our results clarify, quantitatively, why accurate communication of reference information is achievable in practice. A quantum reference frame of sufficiently high dimensionality becomes effectively classical, quite robust against the ravages of environment noise. Examples of such systems include highly excited oscillators and rotors, Bose-Einstein condensates, superconductors, and other macroscopic phases of quantum matter. 

In metrology, quantum error correction provides a promising tool for
improving sensitivity by protecting a probe system against a noisy
environment~\cite{Preskill2000arXiv_synchronization,%
  Kessler2014PRL_qec4metrology,Arrad2014PRL_increasing,%
  Dur2014PRL_improved,Ozeri2013arXiv_Heisenberg}. However, there is a delicate
balance to achieve between error-correcting against the noise while still
being sensitive to the physical observable $H$ one wishes to measure. In order
to correct against errors, one needs to encode in an appropriate
codespace. Furthermore, in order to measure $H$, it needs to act nontrivially
within that codespace. The ability to measure $H$ directly by local
observations corresponds, in the language of this paper, to covariance of the
code with respect to the physical charge $H$. In other words, adapting our
setup to one from quantum metrology is straightforward: the goal now is to
estimate the continuous parameter $\omega$ in $H=\omega T_A$ as accurately as
possible while at the same time being able to correct against relevant
noise. Recent efforts have determined that it is possible to measure $H$ at
the Heisenberg limit using an error-correcting code if $H$ is not a sum of the
operators characterizing the correctable
noise~\cite{Zhou2018NatComm_heisenberg,%
  Demkowicz-Dobrzanski2017,layden2018spatial,Gorecki2019arXiv_multiparameter}. But 
  if the physical charge $T_A$ is a sum of local charges, 
  the Eastin-Knill
theorem 
poses a challenge
to the application of error-correcting techniques;
namely, we cannot measure what we can correct. The infinite dimensional
counterexamples of~\cite{Hayden2017arXiv_frame} show that it is nonetheless
possible to correct against local noise \textit{and} admit a charge that is a
sum of such noise operators, granted one has non-normalizable codewords. The
bounds and example codes of this paper provide a quantitative version of this
infinite-dimensional limit.

Our results suggest the possibility that one could sacrifice some error
correction precision to achieve better sensitivity with physical, i.e.\@
normalizable, states (cf.\@ also~\cite{AlvaroMischa-inprep}).  However, to
properly apply our results to quantum metrology, there are some additional
steps that need to be taken, which is the subject of ongoing follow-up work.
First, since we are trying to measure an unknown parameter (and not necessarily to protect quantum information \emph{per se}), we should account for the fact that a code is only required to
reconstruct a logical state that would yield a precise reading of said parameter.  
Second, our results are stated in terms of the worst-case entanglement fidelity,
but for applications to metrology one would prefer different figures of merit, such as the precision
at which the probe can sense magnetic fields, or the ability of a quantum clock to tell time accurately.  
Finally, it would be desirable to consider noise models that are more relevant to
quantum metrology, such as fluctuating background magnetic fields that induce
dephasing errors.  
B\'eny's characterization of approximate quantum error correction of
algebras~\cite{Beny2009TQC_conditions} provides a promising tool for addressing these challenges because one can specify precisely
which observables need to be faithfully reproduced after action by the noise and a possible recovery operation.

Approximate quantum error-correcting codes also arise naturally in many-body
quantum
systems~\cite{Brandao2017arXiv_chainAQECC,Gschwendtner2019arXiv_lowenergies}. We
anticipate that constraints on correlation functions of many-body quantum states
can be derived from the covariance properties of the corresponding codes.

Finally, the interplay of symmetry and quantum error correction has a prominent
role in the AdS/CFT holographic correspondence.  Although covariance with
respect to a continuous symmetry is incompatibile with perfect correctability of
erasure of physical subsystems for any finite-dimensional quantum code,
nevertheless we expect that in the AdS/CFT code continuous time evolution of the
boundary system corresponds to continuous time evolution of the encoded logical
bulk system. Our results relieve the tension between these two observations,
because near perfect correctability can be achieved if either the number of
physical subsystems, or the dimension of each physical subsystem, becomes very
large. Both these provisos apply to the continuum limit of a regulated
holographic boundary conformal field theory, as the number of lattice sites per
unit volume is very large in this limit, and the number of degrees of freedom
per site is also very large if semiclassical gravity accurately describes the
bulk geometry (the ``large-$N$ limit'').

Recent results indicate that not just exact continuous symmetries, but also
exact discrete symmetries, are incompatible with the quantum error correction
properties of the AdS/CFT
code~\cite{Harlow2018arXiv_constraints,Harlow2018arXiv_symmetries}. An
intriguing topic for further research will be investigation of approximate
symmetries, both continuous and discrete, in the context of quantum gravity.

\begin{acknowledgments}
During the preparation of this work, the authors became aware of an
independent effort by \'Alvaro Alhambra and Mischa Woods to analyze how well
the Eastin-Knill theorem can be evaded by allowing for a small recovery
error~\cite{AlvaroMischa-inprep}. We thank them for collegially agreeing to
synchronize our arXiv posts.

The authors also thank
C\'edric B\'eny,
Fernando Brand\~ao,
Elizabeth Crosson,
Steve Flammia,
Daniel Harlow,
Liang Jiang,
Tomas Jochym-O'Connor,
Iman Marvian,
Hirosi Ooguri,
Burak \c{S}ahino\u{g}lu, and
Michael Walter
for discussions.
PhF acknowledges support from the Swiss National Science Foundation (SNSF)
through the Early PostDoc.Mobility fellowship No.\@ P2EZP2\_165239 hosted by the
Institute for Quantum Information and Matter (IQIM) at Caltech, from the IQIM
which is a National Science Foundation (NSF) Physics Frontiers Center (NSF Grant
PHY-1733907), and from the Department of Energy (DOE) Award DE-SC0018407.
VVA acknowledges support from the Walter Burke Institute for Theoretical Physics at Caltech.
GS acknowledges support from the IQIM at Caltech, and the Stanford Institute for Theoretical Physics.
PH acknowledges support from CIFAR, DOD and the Simons Foundation.
JP acknowledges support from ARO, DOE, IARPA, NSF, and the Simons Foundation.
Some of this work was done during the 2017 program on ``Quantum Physics of
Information'' at the Kavli Institute for Theoretical Physics (NSF Grant
PHY-1748958).
\end{acknowledgments}


\clearpage
\newgeometry{hmargin=1.5in,vmargin=1.25in}
\onecolumngrid
\setstretch{1.1}
\appendix

{\par{\centering\bfseries \uppercase{Supplemental Material}\par}

\appendixtableofcontents

\section{Proof of our bounds for a covariant code}
\label{appx:ProofBounds}
The proof of \cref{thm:main-result-full} is split into two lemmas.  A first
lemma deduces that the environment has access to the logical charge, to a good
approximation.
\begin{lemma}
  \label{lemma:main-result-thm-exists-environ-observable}
  Under the assumptions of \cref{thm:main-result-full}, and following the
  latter's notation, there exists an observable $Z_{C'E}$ satisfying
  \begin{align}
    \norm[\big]{ \widehat{\mathcal{N}\circ\mathcal{E}}{}^\dagger(\Ident_F\otimes Z_{C'E})
    - (T_L - \nu'\Ident_L) }_\infty
    &\leqslant \delta + \eta\ ;
    \\
    \norm{ Z_{C'E} }_\infty
    &\leqslant \max_\alpha \frac{\Delta T_\alpha}{2q_\alpha}\ ,
  \end{align}
  where $\nu' = \nu + \sum (t_\alpha^- + t_\alpha^+)/2$ and where the
  complementary channel $\widehat{\mathcal{N}\circ\mathcal{E}}{}_{L\to C'EF}$ to
  the combined encoding and noise is given
  by~\eqref{eq:compl-channel-N-E-general-alpha}.
\end{lemma}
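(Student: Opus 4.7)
The plan is to construct $Z_{C'E}$ directly as a rescaled sum of centered truncations of the local charges $T_\alpha$, and then verify both bounds by a short calculation using the adjoint of the complementary channel. For each $\alpha$, let $\Pi_\alpha$ project onto the eigenspaces of $T_\alpha$ with eigenvalues in $[t_\alpha^-,t_\alpha^+]$, set $t_\alpha = (t_\alpha^- + t_\alpha^+)/2$, and define the centered truncation $\tilde T_\alpha := \Pi_\alpha (T_\alpha - t_\alpha)\Pi_\alpha$, which satisfies $\norm{\tilde T_\alpha}_\infty \leq (t_\alpha^+ - t_\alpha^-)/2 = \Delta T_\alpha/2$. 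I take
\begin{align*}
Z_{C'E} = \sum_\alpha \frac{1}{q_\alpha}\,\proj{\alpha}_{C'} \otimes \tilde T_\alpha,
\end{align*}
with $\tilde T_\alpha$ understood to act on the $A_\alpha$ factors embedded into $E$. Since the $C'$-blocks are orthogonal, the operator-norm bound $\norm{Z_{C'E}}_\infty \leq \max_\alpha \Delta T_\alpha/(2q_\alpha)$ follows immediately.

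For the approximation bound, I fix a Stinespring dilation $V_{L\to AF}$ of $\mathcal{E}$ and expand $\widehat{\mathcal{N}\circ\mathcal{E}}^\dagger(\Ident_F\otimes Z_{C'E})$ using~\eqref{eq:compl-channel-N-E-general-alpha}. The prefactor $1/q_\alpha$ in $Z$ exactly cancels the $q_\alpha$ coming from the noise model, and the adjoint of the partial trace over $A\setminus A_\alpha$ reinserts the identity on those factors, leaving
\begin{align*}
\widehat{\mathcal{N}\circ\mathcal{E}}^\dagger(\Ident_F\otimes Z_{C'E}) = V^\dagger \bigl(\tilde T_A \otimes \Ident_F\bigr) V = \mathcal{E}^\dagger\bigl(\tilde T_A\bigr),\qquad \tilde T_A := \sum_\alpha \tilde T_\alpha.
\end{align*}
Because $\Pi_\alpha$ is a spectral projector of $T_\alpha$ (hence commutes with it), a one-line spectral manipulation gives the identity $T_\alpha - \tilde T_\alpha = \Pi_\alpha^\perp(T_\alpha - t_\alpha)\Pi_\alpha^\perp + t_\alpha\,\Ident$. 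Summing over $\alpha$ and using unitality of $\mathcal{E}^\dagger$ yields
\begin{align*}
\mathcal{E}^\dagger(\tilde T_A) = \mathcal{E}^\dagger(T_A) \;-\; \mathcal{E}^\dagger\bigl(\textstyle\sum_\alpha \Pi_\alpha^\perp(T_\alpha - t_\alpha)\Pi_\alpha^\perp\bigr) \;-\; \bigl(\textstyle\sum_\alpha t_\alpha\bigr)\Ident_L.
\end{align*}

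The proof then closes with two triangle-inequality steps. Hypothesis~\ref{item:main-thm-general-condition-approx-charge-conserving} controls the first term to within $\delta$ of $T_L - \nu\Ident_L$. Hypothesis~(c) states that $\abs{\tr\bigl(\sum_\alpha (T_\alpha - t_\alpha)\Pi_\alpha^\perp\,\mathcal{E}(\sigma_L)\bigr)} \leq \eta$ for every state $\sigma_L$; using cyclicity of the trace, the fact that $\Pi_\alpha^\perp$ commutes with $T_\alpha$ so that each summand is Hermitian, and the adjoint identity $\tr(X\,\mathcal{E}(\sigma)) = \tr(\mathcal{E}^\dagger(X)\,\sigma)$, this amounts to a uniform bound by $\eta$ on the expectation of the Hermitian operator $\mathcal{E}^\dagger\bigl(\sum_\alpha \Pi_\alpha^\perp(T_\alpha-t_\alpha)\Pi_\alpha^\perp\bigr)$ in every pure state, which is precisely an operator-norm bound by $\eta$. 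Plugging both estimates into the displayed identity above and recalling $\nu' = \nu + \sum_\alpha t_\alpha$ gives the claimed $\delta + \eta$ bound. The only real subtlety, and the step I expect to require the most care, is the bookkeeping in the adjoint computation: tracking the Stinespring ancilla $F$ and the isometric embedding of each $A_\alpha$ into the common environment register $E$, so that the cancellation of the $q_\alpha$'s produces exactly $\mathcal{E}^\dagger(\tilde T_A)$ with no leftover factors.
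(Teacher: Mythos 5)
Your proposal is correct and follows essentially the same route as the paper: you choose the identical observable $Z_{C'E} = \sum_\alpha q_\alpha^{-1}\,\proj{\alpha}_{C'}\otimes\tilde T_\alpha$, establish the same norm bound from block-orthogonality, and derive the approximation bound by combining hypotheses~(a) and~(c) via the triangle inequality. The only stylistic difference is that you compute the adjoint $\widehat{\mathcal{N}\circ\mathcal{E}}^\dagger(\Ident_F\otimes Z_{C'E}) = \mathcal{E}^\dagger(\tilde T_A)$ as an operator identity up front, whereas the paper arrives at the same conclusion by evaluating $\tr(Z_{C'E}\,\widehat{\mathcal{N}\circ\mathcal{E}}(\sigma_L))$ for arbitrary $\sigma_L$ and invoking $\norm{X}_\infty = \max_\sigma\abs{\tr(X\sigma)}$ for Hermitian $X$.
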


\begin{proof}[*lemma:main-result-thm-exists-environ-observable]
  Let $\Pi_\alpha = \Ident - \Pi_\alpha^\perp$ be the projector which projects
  onto the eigenspaces of $T_\alpha$ whose corresponding eigenvalues are in the
  range $\intervalc{t_\alpha^-}{t_\alpha^+}$.  Recall that
  $t_\alpha = (t_\alpha^- + t_\alpha^+)/2$ is the midpoint of the interval
  $\intervalc{t_\alpha^-}{t_\alpha^+}$.
  Define the observables
  $\tilde{T}_\alpha = \Pi_\alpha (T_\alpha - t_\alpha\Ident)$, and observe that
  $\tilde{T}_\alpha$ has eigenvalues between $-\Delta T_\alpha/2$ and
  $+\Delta T_\alpha/2$, and hence
  $\norm{\tilde{T}_\alpha}_\infty \leqslant \Delta T_\alpha/2$.  Define the
  observable
  \begin{align}
    Z_{C'E} =
    \sum_\alpha \proj{\alpha}_{C'} \otimes (q_\alpha^{-1}\, \tilde{T}_{\alpha})\ .
    \label{eq:full-proof-Q_CpE-obs-environ}
  \end{align}
  Then, for any logical state $\sigma_L$, and writing
  $\rho_A = \mathcal{E}(\sigma_L)$,
  \begin{align}
    \tr`\big(Z_{C'E}\, \widehat{\mathcal{N}\circ\mathcal{E}}(\sigma_L))
    &= \sum \tr`\big( \tilde{T}_{\alpha} \,
    \tr_{A\setminus A_\alpha}(\rho_A))
    = \sum \tr`\big( \tilde{T}_{\alpha} \, \rho_A)
    = \sum \tr`\big( \Pi_\alpha (T_\alpha - t_\alpha\Ident) \, \rho_A)
      \nonumber\\
    &=
      \sum \tr`\big( (\Ident-\Pi_\alpha^\perp)\, (T_\alpha - t_\alpha\Ident) \, \rho_A)
    \nonumber\\
    &= \sum `\big( \tr(T_\alpha \, \rho_A) - t_\alpha
    - \tr`\big(\Pi_\alpha^\perp\,(T_\alpha-t_\alpha\Ident)\,\rho_A)
    )
      \nonumber\\
    &= \tr( T_A \rho_A) - \sum t_\alpha
    - \sum \tr(\Pi_\alpha^\perp \,(T_\alpha-t_\alpha\Ident)\,\rho_A) \ ,
  \end{align}
  thus
  \begin{align}
    \tr`\Big{\widehat{\mathcal{N}\circ\mathcal{E}}{}^\dagger(\Ident_F\otimes Z_{C'E})
    \, \sigma_L} -
    `\Big(\tr( T_A \rho_A) - \sum t_\alpha)
    = - \sum \tr(\Pi_\alpha^\perp \,(T_\alpha-t_\alpha\Ident)\,\rho_A)\ .
  \end{align}
  Noting that
  $\tr`\big(T_A\mathcal{E}(\sigma_L)) - \sum t_\alpha =
  \tr`\big(`\big[\mathcal{E}^\dagger(T_A) - \sum t_\alpha\Ident]\,\sigma_L)$, we
  have
  \begin{align}
    \abs[\Big]{
    \tr`\Big{`\Big[\widehat{\mathcal{N}\circ\mathcal{E}}{}^\dagger(\Ident_F\otimes Z_{C'E})
    - `*(\mathcal{E}^\dagger(T_A) - \sum t_\alpha\Ident)] \sigma_L}
    }
    = \abs*{\sum \tr`\big(\Pi^\perp_\alpha(T_\alpha - t_\alpha\Ident)\rho_A) }
    \leqslant \eta\ ,
    \label{eq:jkiaugyrtw9aufosjbw}
  \end{align}
  where we have used
  condition~\eqref{eq:main-result-full-condition-charge-cutoffs}.  Recall that
  for any Hermitian operator $X$, we have
  $\norm{X}_\infty = \max_\sigma \abs{\tr(X\sigma)}$ with an optimization over
  all density matrices $\sigma$.  Since~\eqref{eq:jkiaugyrtw9aufosjbw} holds for
  all $\sigma_L$, we have
  \begin{align}
    \norm[\Big]{
    \widehat{\mathcal{N}\circ\mathcal{E}}{}^\dagger(\Ident_F\otimes Z_{C'E}) -
    `*(\mathcal{E}^\dagger(T_A) - \sum t_\alpha\Ident)
    }_\infty \leqslant \eta\ .
  \end{align}
  Using the approximate charge conservation condition
  $\norm{(T_L - \nu\Ident) - \mathcal{E}^\dagger(T_A)}_\infty \leqslant \delta$
  and the triangle inequality for the infinity norm, we finally obtain
  \begin{align}
    \norm[\Big]{
    \widehat{\mathcal{N}\circ\mathcal{E}}{}^\dagger(\Ident_F\otimes Z_{C'E}) -
    `*(T_L - \nu'\Ident)
    }_\infty \leqslant \delta + \eta\ ,
  \end{align}
  setting $\nu' = \nu + \sum t_\alpha$.

  Since the infinity norm picks out the largest eigenvalue in absolute value, we
  see from~\eqref{eq:full-proof-Q_CpE-obs-environ} that
  $\norm{Z_{C'E}}_\infty = \max_\alpha q_\alpha^{-1}
  \norm{\tilde{T}_\alpha}_\infty \leqslant \max_\alpha q_\alpha^{-1}
  \Delta{T}_\alpha/2$.
\end{proof}

The second part of the proof of \cref{thm:main-result-full} is to deduce from
the environment's access to the global charge that the code performs poorly
with respect to the various entanglement fidelity measures.
We phrase this statement as a more general lemma that applies in fact to any
noise model, and can be used to bound the fixed-input entanglement fidelity for
any given fixed input state $\ket\phi_{LR}$, as long as the environment has
access to an observable which yields some information about the logical state.
In analogy with $\epsilon_{\mathrm{e}}$ and $\epsilon_{\mathrm{worst}}$, we
define for any $\ket\phi_{LR}$ and for any channel $\mathcal{N}'$,
\begin{align}
  \epsilon_{\ket\phi}(\mathcal{N}')
  = \sqrt{ 1 - F_{\ket\phi}^2(\mathcal{N}', \IdentProc[]{}) } \ .
\end{align}

This lemma can be seen as a refinement of B\'eny's characterization of
approximate error correction using operator
algebras~\cite{Beny2009TQC_conditions}.  To formulate the lemma, we define two
auxiliary quantities that depend on a state $\sigma$ and an observable $T$:
\begin{subequations}
  \begin{align}
    C^0_{\sigma,T}
    &=
      \norm[\big]{ \sigma^{1/2} \,`(T - \tr`(T\sigma)\Ident)\, \sigma^{1/2} }_1\ ,
      \label{eq:unifying-bound-auxiliary-CsigmaT-0}
    \\
    C_{\sigma,T}
    &=
      \min_\mu \, \norm[\big]{ \sigma^{1/2} \,`(T - \mu\Ident)\, \sigma^{1/2} }_1\ ,
      \label{eq:unifying-bound-auxiliary-CsigmaT}
  \end{align}
\end{subequations}
where in the second line the optimization ranges over all $\mu\in\mathbb{R}$.
Intuitively, both these quantities $C_{\sigma,T}$ pick up the average charge
absolute value (where $T$ is the charge and according to the state $\sigma$), up
to a constant charge offset $\mu$ or $\tr(T\sigma)$.  Special cases of these
quantities will be discussed in the proof of \cref{thm:main-result-full}.

\begin{lemma}
  \label{lemma:main-result-thm-environ-obs-implies-poor-code}
  Let $(\mathcal{N}\circ\mathcal{E})_{L\to A'}$ be the combined encoding and
  noise channel with total output system(s) $A'$, where both encoding and noise
  channels may be any completely positive, trace-preserving maps.  Let
  $\widehat{\mathcal{N}\circ\mathcal{E}}_{L\to E'}$ be a complementary channel
  with combined output system(s) $E'$. (In the context of
  \cref{thm:main-result-full}, we set $A'=A\otimes C$ and
  $E'=E\otimes C'\otimes F$, but this lemma holds more generally.)  Suppose that
  there exists observables $T_L$ and $Z_{E'}$ on the input and environment
  systems respectively, as well as $\nu'\in\mathbb{R}$, $\delta'\geqslant 0$,
  such that:
  \begin{align}
    \norm[\big]{ \widehat{\mathcal{N}\circ\mathcal{E}}{}^\dagger(Z_{E'})
    - (T_L - \nu'\Ident_L) }_\infty
    &\leqslant \delta'\ .
  \end{align}
  Then, for any $\ket\phi_{LR}$, both
  $\epsilon_{\mathrm{worst}}(\mathcal{N}\circ\mathcal{E})$ and
  $\epsilon_{\ket\phi}(\mathcal{N}\circ\mathcal{E})$ are lower bounded by two
  different independent bounds:
  \begin{subequations}
    \begin{align}
      \epsilon_{\mathrm{worst}}(\mathcal{N}\circ\mathcal{E})
      &\geqslant
        \epsilon_{\ket\phi}(\mathcal{N}\circ\mathcal{E})
        \geqslant
        \frac{C_{\phi_L,T_L} - \delta'}{2\norm{Z_{E'}}_\infty}
        \label{eq:unifying-bound-eps-phi}
      \\
      \epsilon_{\mathrm{worst}}(\mathcal{N}\circ\mathcal{E})
      &\geqslant
        \epsilon_{\ket\phi}(\mathcal{N}\circ\mathcal{E})
        \geqslant
        \frac{\frac12\,C^0_{\phi_L,T_L} - \delta'}{2\norm{Z_{E'}}_\infty}
        \label{eq:unifying-bound-eps-phi-C0}
    \end{align}
  \end{subequations}
  Finally, if
  $\widehat{\mathcal{N}\circ\mathcal{E}}(\cdot) = \sum q_\alpha \,
  \widehat{\mathcal{N}_\alpha\circ\mathcal{E}}(\cdot)$ for a probability
  distribution $`{q_\alpha}$ and a set of noise channels
  $`{ \mathcal{N}_\alpha }$, then for any $\ket\phi_{LR}$, the same bounds apply
  to the average of the individual error parameters corresponding to each
  erasure event:
  \begin{align}
    \sum q_\alpha \epsilon_{\ket\phi}(\mathcal{N}_\alpha\circ\mathcal{E})
    &\geqslant
      \left\{\begin{array}{l}
               \displaystyle \frac{C_{\phi_L,T_L} - \delta'}{2\norm{Z_{E'}}_\infty}
               \\[3ex]
               \displaystyle \frac{\frac12\,C^0_{\phi_L,T_L} - \delta'}{2\norm{Z_{E'}}_\infty}
             \end{array}\right.
    \label{eq:unifying-bound-avgalpha-epsphi}
  \end{align}
\end{lemma}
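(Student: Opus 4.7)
The plan is to combine three standard ingredients. First, the Beny-Oreshkov characterization~\eqref{eq:Beny-Oreshkov-f}, specialized to a fixed input $\ket\phi_{LR}$, gives $\epsilon_{\ket\phi}(\mathcal{N}\circ\mathcal{E}) = \min_\zeta \sqrt{1 - F_{\ket\phi}^2(\widehat{\mathcal{N}\circ\mathcal{E}},\mathcal{T}_\zeta)}$. Second, Fuchs-van de Graaf in the form $\sqrt{1-F^2(\rho,\sigma)} \geqslant D(\rho,\sigma)$ converts this into a trace-distance statement. Third, the elementary bound $D(\rho_1,\rho_2) \geqslant |\tr(M\rho_1) - \tr(M\rho_2)|/(2\|M\|_\infty)$ holds for any Hermitian observable $M$. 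Writing $\mathcal{K} = \widehat{\mathcal{N}\circ\mathcal{E}}$, chaining these yields
\begin{equation}
\epsilon_{\ket\phi}(\mathcal{N}\circ\mathcal{E}) \;\geqslant\; \min_\zeta \frac{\bigl|\tr\bigl(M\,(\mathcal{K}\otimes\mathrm{id})(\proj{\phi}_{LR})\bigr) - \tr\bigl(M\,\zeta\otimes\phi_R\bigr)\bigr|}{2\,\|M\|_\infty}
\end{equation}
for any Hermitian $M$ on $E'\otimes R$. The task then reduces to designing $M$ so that (i) $\tr(M\,\zeta\otimes\phi_R)$ is controlled uniformly in $\zeta$, and (ii) $\tr(M(\mathcal{K}\otimes\mathrm{id})(\proj{\phi}))$ can be estimated using the hypothesis on $Z_{E'}$.

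For bound~\eqref{eq:unifying-bound-eps-phi}, I would take $M = Z_{E'}\otimes X_R$ with $X_R$ Hermitian on $R$, $\|X_R\|_\infty\leqslant 1$, and $\tr(X_R\phi_R) = 0$. The traceless condition forces $\tr(M\,\zeta\otimes\phi_R) = \tr(Z_{E'}\zeta)\tr(X_R\phi_R) = 0$ for every $\zeta$. On the actual channel output, adjointness and the hypothesis give
\begin{equation}
\tr\bigl(M\,(\mathcal{K}\otimes\mathrm{id})(\proj{\phi})\bigr) = \tr\bigl((\mathcal{K}^\dagger(Z_{E'})\otimes X_R)\,\proj{\phi}\bigr) = \tr\bigl((T_L\otimes X_R)\,\proj{\phi}\bigr) + O(\delta')\,,
\end{equation}
where the shift $\nu'\Ident_L$ again drops out due to $\tr(X_R\phi_R) = 0$. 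Schmidt-decomposing $\ket\phi_{LR}$ in the eigenbasis of $\phi_L$ converts the leading term into $\tr(\phi_L^{1/2}T_L\phi_L^{1/2}\,X_R^{T})$. Maximizing this over admissible $X_R$ is a trace-norm duality problem: the traceless constraint is equivalent to permitting any scalar shift $T_L \to T_L - \mu\Ident$, and the supremum is exactly $C_{\phi_L,T_L}$.

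Bound~\eqref{eq:unifying-bound-eps-phi-C0} is established by a parallel construction with $X_R$ proportional to the transpose of the sign operator of $\phi_L^{1/2}(T_L - \tr(T_L\phi_L))\phi_L^{1/2}$, but without the traceless constraint. Here $\tr(M\,\zeta\otimes\phi_R)$ does depend on $\zeta$, but its optimal value is a single scalar which is absorbed into the fixed offset $\mu = \tr(T_L\phi_L)$; balancing the positive and negative spectral parts of the centered charge operator then costs the characteristic factor of $1/2$, yielding $\tfrac12 C^0_{\phi_L,T_L}$ instead of $C_{\phi_L,T_L}$. Getting this factor right, and synchronizing the choice of $X_R$ with the Beny-Oreshkov minimization over $\zeta$, is the main technical obstacle of the proof, since the sign operator has to be assembled from a spectral decomposition of a non-diagonal operator while keeping the adjoint estimate in line.

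The averaged bound~\eqref{eq:unifying-bound-avgalpha-epsphi} follows by applying the same observable construction branch-by-branch. The block decomposition $Z_{E'} = \sum_\alpha \proj{\alpha}_{C'}\otimes Z_\alpha$ gives $\|Z_\alpha\|_\infty \leqslant \|Z_{E'}\|_\infty$ and $\sum_\alpha q_\alpha \tr(Z_\alpha\mathcal{K}_\alpha(\rho)) = \tr(Z_{E'}\mathcal{K}(\rho))$. Applying the single-$\alpha$ inequality with observable $Z_\alpha\otimes X_R$ and the same $X_R$ as above, then summing with weights $q_\alpha$ and using the triangle inequality inside the absolute value, reproduces $C_{\phi_L,T_L}$ (resp.\ $\tfrac12 C^0_{\phi_L,T_L}$) on the numerator and $\max_\alpha \|Z_\alpha\|_\infty \leqslant \|Z_{E'}\|_\infty$ on the denominator, yielding the stated bound on $\sum_\alpha q_\alpha \epsilon_{\ket\phi}(\mathcal{N}_\alpha\circ\mathcal{E})$.
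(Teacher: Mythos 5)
The first bound~\eqref{eq:unifying-bound-eps-phi} is argued along the same lines as the paper: trace distance via an observable $Z_{E'}\otimes X_R$ with $\tr(X_R\phi_R)=0$, Schmidt-transpose, and trace-norm duality converting the traceless constraint into an optimization over the offset $\mu$. That part is fine.

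The argument for~\eqref{eq:unifying-bound-eps-phi-C0}, however, has a genuine gap. You propose dropping the traceless constraint on $X_R$ and absorbing the resulting $\zeta$-dependent term $\tr(Z_{E'}\zeta)\,\tr(X_R\phi_R)$ ``into the fixed offset $\mu=\tr(T_L\phi_L)$,'' attributing the factor of $\tfrac12$ to ``balancing the positive and negative spectral parts.'' This does not work. Once $\tr(X_R\phi_R)\neq 0$, the infimum over $\zeta$ inherited from the B\'eny--Oreshkov duality is an \emph{adversarial} choice: $\tr(Z_{E'}\zeta)$ ranges over the convex hull of the spectrum of $Z_{E'}$, and the adversary can pick $\zeta$ so that the subtracted scalar cancels the leading contribution $\|\phi_L^{1/2}(T_L-\tr(T_L\phi_L)\Ident)\phi_L^{1/2}\|_1$ up to something of order $\|Z_{E'}\|_\infty\,|\tr(X_R\phi_R)|$. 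There is no scalar shift of $T_L$ that fixes this, because the scalar is not yours to choose. The paper avoids this entirely by \emph{not} comparing to the adversarial $\zeta\otimes\phi_R$: it instead lower-bounds the trace distance $\delta\bigl(\widehat{\mathcal{N}\circ\mathcal{E}}(\phi_{LR}),\,\rho_{E'}\otimes\phi_R\bigr)$ against the \emph{fixed} marginal $\rho_{E'}=\widehat{\mathcal{N}\circ\mathcal{E}}(\phi_L)$ (so that no adversarial minimization appears, and the $X_R$-optimization is unconstrained), and then relates that quantity to $\epsilon_{\ket\phi}$ via the chain
\begin{equation}
\delta\bigl(\widehat{\mathcal{N}\circ\mathcal{E}}(\phi_{LR}),\,\rho_{E'}\otimes\phi_R\bigr)
\leqslant P\bigl(\widehat{\mathcal{N}\circ\mathcal{E}}(\phi_{LR}),\zeta^*\otimes\phi_R\bigr) + P(\zeta^*,\rho_{E'})
\leqslant 2\,\epsilon_{\ket\phi}\,,
\end{equation}
using the triangle inequality for the purified distance and monotonicity under partial trace. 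The factor of $\tfrac12$ in front of $C^0_{\phi_L,T_L}$ comes precisely from this factor of $2$, not from any spectral balancing. Also, because the adjoint estimate $\|\widehat{\mathcal{N}\circ\mathcal{E}}^\dagger(Z_{E'})-(T_L-\nu'\Ident)\|_\infty\leqslant\delta'$ is applied to both $\phi_{LR}$ and $\phi_L\otimes\phi_R$, the intermediate bound carries $2\delta'$; your $O(\delta')$ bookkeeping papers over the fact that after dividing by the factor of $2$, the final constant is exactly $\delta'$ again.

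For the averaged bound~\eqref{eq:unifying-bound-avgalpha-epsphi}, you invoke a block decomposition $Z_{E'}=\sum_\alpha\proj{\alpha}_{C'}\otimes Z_\alpha$. This is present in the intended application (it is built in \cref{lemma:main-result-thm-exists-environ-observable}), but it is \emph{not} among the hypotheses of the lemma itself, which only assumes $\widehat{\mathcal{N}\circ\mathcal{E}}=\sum_\alpha q_\alpha\,\widehat{\mathcal{N}_\alpha\circ\mathcal{E}}$. The paper's proof is both cleaner and more general: it applies joint convexity of the trace distance to $\sum_\alpha q_\alpha\,\delta\bigl(\widehat{\mathcal{N}_\alpha\circ\mathcal{E}}(\phi_{LR}),\zeta_\alpha^*\otimes\phi_R\bigr)$, pushing the average inside to recover $\delta\bigl(\widehat{\mathcal{N}\circ\mathcal{E}}(\phi_{LR}),\zeta'\otimes\phi_R\bigr)$ with $\zeta'=\sum_\alpha q_\alpha\zeta_\alpha^*$, and then reuses the two trace-distance inequalities already established. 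Your per-$\alpha$ observable route could be repaired (e.g.\ by taking the same $Z_{E'}$ in each branch rather than block components), but as written it needs a structural assumption the lemma does not grant, and it still does not resolve the $C^0$ gap described above.
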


In summary: There are two figures of merit we are interested in,
$\epsilon_{\ket\phi}(\mathcal{N}\circ\mathcal{E})$ and
$\avg[\big]{\epsilon_{\ket\phi}(\mathcal{N}_\alpha\circ\mathcal{E})}_\alpha$,
and both are bounded from below by the same bound expressed in terms of the
auxiliary quantities~\eqref{eq:unifying-bound-auxiliary-CsigmaT-0}
and~\eqref{eq:unifying-bound-auxiliary-CsigmaT}.

\begin{proof}[*lemma:main-result-thm-environ-obs-implies-poor-code]
  We start by showing the following two statements: For any $\ket\phi_{LR}$, and
  for any state $\zeta_{E'}$, it holds that
  \begin{align}
    \delta`*( \widehat{\mathcal{N}\circ\mathcal{E}}(\phi_{LR}),
    \zeta_{E'}\otimes\phi_R )
    &\geqslant
      \frac{C_{\phi_L,T_L} - \delta'}{2\norm{Z_{E'}}_\infty}\ ;\text{ and}
      \label{eq:unifying-bound-anyphi-tr-dist-gtr-Cphi}
    \\
    \delta`*( \widehat{\mathcal{N}\circ\mathcal{E}}(\phi_{LR}),
    \rho_{E'}\otimes\phi_R )
    &\geqslant
      \frac{C^0_{\phi_L,T_L} - 2\delta'}{2\norm{Z_{E'}}_\infty}\ ,
      \label{eq:unifying-bound-anyphi-tr-dist-phiL-gtr-Cphi0}
  \end{align}
  where $\rho_{E'} = \widehat{\mathcal{N}\circ\mathcal{E}}(\phi_L)$.

  We recall the following expressions for the one-norm of any Hermitian operator $A$:
  \begin{subequations}
    \label{eq:one-norm-HermitianA-optimization-primal-dual}
    \begin{align}
      \norm{A}_1
      &= \max_{\norm{X}_\infty \leqslant 1} \tr`(XA)
      \label{eq:one-norm-HermitianA-optimization-maxX}
      \\
      &= \min_{\substack{\Delta_\pm\geqslant 0\\A = \Delta_+ - \Delta_-}}
      \tr(\Delta_+) + \tr(\Delta_-)\ ,
      \label{eq:one-norm-HermitianA-optimization-minDeltas}
    \end{align}
  \end{subequations}
  where the first optimization ranges over operators Hermitian $X$, and the
  second over positive semidefinite operators $\Delta_{\pm}$.
  We start form the left-hand side
  of~\eqref{eq:unifying-bound-anyphi-tr-dist-gtr-Cphi}.  By choosing a candidate
  $X$ in~\eqref{eq:one-norm-HermitianA-optimization-maxX} of the form
  $(Z/\norm{Z}_\infty)\otimes X'$ with $\norm{X'}_\infty\leqslant 1$, then for
  any $\ket\phi_{LR}$ and for any $\zeta_{E'}$ we have that
  \begin{align}
    \hspace*{2em}
    &\hspace*{-2em}
    \frac12\norm[\big]{
      \widehat{\mathcal{N}\circ\mathcal{E}}(\phi_{LR}) - \zeta_{E'}\otimes \phi_R
      }_1
    \nonumber\\
      &\geqslant
        \max_{\norm{X'_R}_\infty \leqslant 1}
        \frac1{2\norm{Z_{E'}}_\infty}
        \tr`*{ (Z_{E'}\otimes X_R') `*(
        \widehat{\mathcal{N}\circ\mathcal{E}}(\phi_{LR}) - \zeta_{E'}\otimes \phi_R
        ) }
    \nonumber\\
    &=
      \max_{\norm{X'_R}_\infty \leqslant 1}
      \frac1{2\norm{Z_{E'}}_\infty}
      \tr`*{
      `\big(\widehat{\mathcal{N}\circ\mathcal{E}}{}^\dagger(Z_{E'})\otimes X_R')\,
      \phi_{LR}
      - (Z_{E'} \zeta_{E'})\otimes (X_R'\,\phi_R)
      }\ ,
      \label{eq:y8gfueu}
  \end{align}
  where the optimization ranges over Hermitian operators $X_R'$ on the $R$
  system.  Making use of the main assumption of this lemma, and restricting the
  optimization to $X_R'$ such that $\tr(X_R'\phi_{R}) = 0$ yields
  \begin{align}
    \text{\eqref{eq:y8gfueu}}
    &\geqslant
      \max_{\substack{\norm{X_R'}_\infty \leqslant 1 \\ \tr(X'_R\phi_{R}) = 0}}
    \frac1{\norm{Z_{E'}}_\infty}
    `\Big[ \tr`\big{ (T_L - \nu\Ident_L)\, \tr_R(X'_R\,\phi_{LR})
    }  - \delta']\ ,
    \label{eq:rueifbosd}
  \end{align}
  using the fact that if $\norm{A-B}_\infty\leqslant \delta'$, then
  $\tr(A Y) \geqslant \tr(B Y) - \delta'\tr(Y)$ for any Hermitian $A,B$ and
  positive semidefinite $Y$, and that furthermore here
  $\tr(Y) = \tr`(X'_R\phi_{LR}) \leqslant \tr`(\phi_{LR}) = 1$.  Without loss of
  generality, we may assume that $R\simeq L$ (if $R$ is smaller, then embed it
  trivially in a larger system of same dimension as $L$; if $R$ is larger, then
  remove unused dimensions on which $\phi_R$ has no support, noting that the
  support of $\phi_R$ may not exceed the dimension of $L$).  Let
  $`*{ \ket k_L }$, $`*{\ket k_R }$ be Schmidt bases of $L$ and $R$
  corresponding to $\ket\phi_{LR}$, and recall that we have the relations
  $\ket{\phi}_{LR} = \phi_L^{1/2}\,\ket{\Phi}_{L:R} =
  \phi_R^{1/2}\,\ket{\Phi}_{L:R}$, where
  $\ket{\Phi}_{L:R} = \sum \ket k_L\otimes\ket k_R$ and where as before
  $\phi_L = \tr_R(\phi_{LR})$ and $\phi_R = \tr_L(\phi_{LR})$.  Note that for
  any operator $X'_R$, we have $X'_R\ket\Phi_{L:R} = X_L\ket\Phi_{L:R}$ where
  $X_L$ is related to $X'_R$ by a transpose with respect to the bases used to
  define $\ket\Phi_{L:R}$, which implies also
  $\norm{X_L}_\infty=\norm{X_R}_\infty$.  Consequently,
  $\tr_R(X'_R\phi_{LR}) = \tr_R(X'_R\, \phi_{L}^{1/2}\,\Phi_{L:R}\,\phi_L^{1/2})
  = \phi_L^{1/2}\,X_L\,\phi_L^{1/2}$.  Finally, note that
  $\tr(X_R'\phi_R) = \tr(X_R'\phi_{LR}) = \tr(\phi_L^{1/2}\,X_L\,\phi_L^{1/2}) =
  \tr(X_L\phi_L)$.  So we obtain
  \begin{align}
    \text{\eqref{eq:rueifbosd}}
    &= \max_{\substack{\norm{X_L}_\infty \leqslant 1 \\ \tr(X_L\phi_{L}) = 0}}
    \frac1{2\norm{Z_{E'}}_\infty}
    `\Big[ \tr`\big(\phi_L^{1/2}\,`(T_L - \nu'\Ident_L) \phi_L^{1/2}\,X_L) - \delta' ]\ .
    \label{eq:relkr92txyhoirgheuhfa}
  \end{align}
  The optimization~\eqref{eq:relkr92txyhoirgheuhfa} is a semidefinite program,
  and we proceed to compute its dual program~\cite{Watrous2009_sdps}.  In terms
  of the variables $X_L=X_L^\dagger$, $A,B\geqslant 0$, and $\mu\in\mathbb{R}$,
  and writing for short $T_L' = T_L - \nu'\Ident_L$, we have
  \begin{subequations}
  \begin{align}
    \hspace*{5em}
    &\hspace*{-5em}
    \max_{\substack{\norm{X}_\infty\leqslant 1\\ \tr(X\phi_L) = 0}}
    \tr`\big[ \phi_L^{1/2} \, T_L' \, \phi_L^{1/2} \, X_L ]
    \nonumber\\
    &=\qquad
      \def\arraystretch{1.2}
      \begin{array}[t]{@{}r@{}l@{}}
        \mathrm{maximize:}\quad
        &\tr`\big[ \phi_L^{1/2} \, T_L' \, \phi_L^{1/2} \, X_L ] \\
        \textcolor{sdpdualvar}{A:}\quad
        &X_L \leqslant \Ident_L\\
        \textcolor{sdpdualvar}{B:}\quad 
        &X_L \geqslant -\Ident_L\\
        \textcolor{sdpdualvar}{\mu:}\quad
        &\tr(X_L\phi_L) = 0
      \end{array} \\
    &=\qquad
      \def\arraystretch{1.2}
      \begin{array}[t]{@{}r@{}l@{}}
        \mathrm{minimize:}\quad
        & \tr(A) + \tr(B) \\
        \textcolor{sdpdualvar}{X_L:}\quad
        & \phi_L^{1/2} T_L' \phi_L^{1/2} = \mu\phi_L + A - B\ .
      \end{array}
          \label{eq:oijghrueysfdjk}
  \end{align}
  \end{subequations}
  Strong duality holds because of Slater's conditions. Indeed $X_L=0$ is
  strictly feasible in the primal problem; the dual is actually also strictly
  feasible by choosing (say) $\mu=0$ and $A$ and $B$ to be the positive and
  negative parts respectively of the Hermitian operator
  $\phi_L^{1/2}T_L'\phi_L^{1/2}$ plus a constant times the identity.
  For fixed $\mu$ in~\eqref{eq:oijghrueysfdjk}, we recognize the dual
  semidefinite program for the one-norm of a Hermitian
  matrix~\eqref{eq:one-norm-HermitianA-optimization-minDeltas}, and hence we
  actually obtain the same expression as
  in~\eqref{eq:unifying-bound-auxiliary-CsigmaT},
  \begin{align}
    \max_{\substack{\norm{X}_\infty\leqslant 1\\ \tr(X\phi_L) = 0}}
    \tr`\big[ \phi_L^{1/2} \, T_L' \, \phi_L^{1/2} \, X_L ]
    = \min_{\mu\in\mathbb{R}} \,
    \norm[\big]{ \phi_L^{1/2}`*(T_L' - \mu\Ident) \phi_L^{1/2} }_1
    = C_{\phi_L,T_L'}\ .
    \label{eq:unifying-bound-aux-CphiT-equivalent-optimizations}
  \end{align}
  Then
  \begin{align}
    \text{\eqref{eq:relkr92txyhoirgheuhfa}}
    &= \frac1{2\norm{Z_{E'}}_\infty}
      `\Big( \min_\mu\, \norm[\big]{\phi_L^{1/2}\,`\big(T_L - \nu'\Ident_L -
      \mu\Ident_L)\,\phi_L^{1/2}}_1
      - \delta' ) 
       = \frac{C_{\phi_L,T_L} - \delta'}{2\norm{Z_{E'}}_\infty}\ ,
  \end{align}
  noting that the constant shift $\nu'\Ident_L$ can be absorbed into the
  optimization over $\mu$.  This
  proves~\eqref{eq:unifying-bound-anyphi-tr-dist-gtr-Cphi}.

  Now we show~\eqref{eq:unifying-bound-anyphi-tr-dist-phiL-gtr-Cphi0}.
  Similarly to how we started above, we write
  \begin{align}
    \hspace*{3em}
    &\hspace*{-3em}
    \frac12\norm[\big]{
      \widehat{\mathcal{N}\circ\mathcal{E}}(\phi_{LR}) - \rho_{E'} \otimes \phi_R
      }_1
    \nonumber\\
      &\geqslant
        \max_{\norm{X'_R}_\infty \leqslant 1}
        \frac1{2\norm{Z_{E'}}_\infty}
        \tr`*{ (Z_{E'}\otimes X_R') `*(
        \widehat{\mathcal{N}\circ\mathcal{E}}(\phi_{LR} - \phi_L \otimes \phi_R)
        ) }
    \nonumber\\
      &=
        \max_{\norm{X'_R}_\infty \leqslant 1}
        \frac1{2\norm{Z_{E'}}_\infty}
        \tr`*{
        `\big(\widehat{\mathcal{N}\circ\mathcal{E}}{}^\dagger(Z_{E'})\otimes X_R')\,
        `\big(\phi_{LR} - \phi_L\otimes\phi_R) }
        \label{eq:lkuisguhkndhjsvbh}
  \end{align}
  Define $Z_L = \widehat{\mathcal{N}\circ\mathcal{E}}{}^\dagger(Z_{E'})$, and
  using the same procedure to define $\ket\Phi_{L:R}$ as above with $X_L$ in
  one-to-one correspondence with $X_R'$ via the transpose operation and with
  $\tr(X_L\phi_L) = \tr(X_R\phi_R)$, we obtain
  \begin{align}
    \text{\eqref{eq:lkuisguhkndhjsvbh}}
      &=
        \max_{\norm{X_L}_\infty \leqslant 1}
        \frac1{2\norm{Z_{E'}}_\infty} `*[
        \tr`\big{ Z_{L} \phi_L^{1/2} X_L \phi_L^{1/2} } - \tr`\big{ Z_L\phi_L\,\tr(X_L\phi_L) } ]\ .
        \label{eq:kljigfyfbkjldskglasd}
  \end{align}
  By assumption, we have
  $Z_L = \widehat{\mathcal{N}\circ\mathcal{E}}{}^\dagger(Z_{E'})
  = T'_L + \Delta_L$ with $T_L' = T_L - \nu'\Ident_L$ and
  $\norm{\Delta_L}_\infty\leqslant \delta'$, so this implies that
  \begin{align}
    \text{\eqref{eq:kljigfyfbkjldskglasd}}
    &\geqslant
      \max_{\norm{X_L}_\infty \leqslant 1}
      \frac1{2\norm{Z_{E'}}_\infty} `*[
      \tr`\big{ T'_{L} \phi_L^{1/2} X_L \phi_L^{1/2} } - \tr`\big{ T'_L\phi_L\,\tr(X_L\phi_L) }
      - 2\delta']
         \nonumber\\
       &=
         \frac1{2\norm{Z_{E'}}_\infty} `*[
         \norm[\big]{ \phi_L^{1/2} `\big(T'_L - \tr`(T'_L\phi_L)\Ident_L) \phi_L^{1/2} }_1
         - 2\delta' ]
         \nonumber\\
       &=
         \frac1{2\norm{Z_{E'}}_\infty} `*[
         \norm[\big]{ \phi_L^{1/2} `\big(T_L - \tr`(T_L\phi_L)\Ident_L) \phi_L^{1/2} }_1
         - 2\delta' ]\ ,
  \end{align}
  where in the last line we use the fact that
  $\tr(T'_L\phi_L) = \tr(T_L\phi_L) - \nu'$.  This
  proves~\eqref{eq:unifying-bound-anyphi-tr-dist-phiL-gtr-Cphi0}.

  Now, following B\'eny and Oreshkov~\cite{Beny2010PRL_AQECC}, we have the
  duality also for a fixed input state, and there exists a state $\zeta_{E'}$
  such that%
  \footnote{The statement with fixed input state is only briefly stated towards
    the end of their paper, as that claim is in fact easier to prove than their
    main theorem for the worst-case entanglement fidelity.\par}
  \begin{align}
    F_{\ket\phi}(\mathcal{N}\circ\mathcal{E}, \IdentProc{})
    = F(\widehat{\mathcal{N}\circ\mathcal{E}}(\phi_{LR}), \zeta_{E'}\otimes \phi_R)\ ,
  \end{align}
  and thus
  \begin{align}
    \epsilon_{\ket\phi}(\mathcal{N}\circ\mathcal{E})
    = P(\widehat{\mathcal{N}\circ\mathcal{E}}(\phi_{LR}), \zeta_{E'}\otimes \phi_R)\ ,
    \label{eq:yiuhofjkbpoai}
  \end{align}
  where $P(\sigma,\rho) = \sqrt{1 - F^2(\sigma,\rho)}$ denotes the ``purified
  distance'' or ``root infidelity'' between the two
  states~\cite{Tomamichel2010IEEE_Duality,PhDTomamichel2012}.  Now, using known
  inequalities between this distance measure and the trace
  distance~\cite{Tomamichel2010IEEE_Duality}, we have
  \begin{align}
    \epsilon_{\ket\phi}(\mathcal{N}\circ\mathcal{E})
    &=
      P`\big(\widehat{\mathcal{N}\circ\mathcal{E}}(\phi_{LR}), \zeta_{E'}\otimes \phi_R)
      \geqslant
      \delta`\big(\widehat{\mathcal{N}\circ\mathcal{E}}(\phi_{LR}), \zeta_{E'}\otimes \phi_R)
      \ ,
  \end{align}
  which in combination with~\eqref{eq:unifying-bound-anyphi-tr-dist-gtr-Cphi}
  proves~\eqref{eq:unifying-bound-eps-phi}.  The first part
  of~\eqref{eq:unifying-bound-eps-phi} trivially follows from the fact that
  $\epsilon_{\mathrm{worst}}(\cdot) = \max_{\ket\phi}
  \epsilon_{\ket\phi}(\cdot)$.

  From~\eqref{eq:yiuhofjkbpoai}, and using the fact that the purified distance
  cannot increase under partial trace, we find with
  $\rho_{E'} = \widehat{\mathcal{N}\circ\mathcal{E}}(\phi_{L})$,
  \begin{align}
    P(\rho_{E'}, \zeta_{E'}) \leqslant \epsilon_{\ket\phi}(\mathcal{N}\circ\mathcal{E})\ .
  \end{align}
  By triangle inequality, and using again the known inequality between trace
  distance and purified distance, we obtain
  \begin{align}
    \delta`\big(\widehat{\mathcal{N}\circ\mathcal{E}}(\phi_{LR}), \rho_{E'}\otimes\phi_R)
    &\leqslant 
    P`\big(\widehat{\mathcal{N}\circ\mathcal{E}}(\phi_{LR}), \rho_{E'}\otimes\phi_R)
      \nonumber\\
    &\leqslant 
    P`\big(\widehat{\mathcal{N}\circ\mathcal{E}}(\phi_{LR}), \zeta_{E'}\otimes\phi_R)
    +
    P`\big(\zeta_{E'}\otimes\phi_R, \rho_{E'}\otimes\phi_R)
      \nonumber\\
    &\leqslant
    2 \, \epsilon_{\ket\phi}(\mathcal{N}\circ\mathcal{E})\ .
  \end{align}
  Combining this with~\eqref{eq:unifying-bound-anyphi-tr-dist-phiL-gtr-Cphi0}
  proves~\eqref{eq:unifying-bound-eps-phi-C0}.

  Now we further assume that
  $\widehat{\mathcal{N}\circ\mathcal{E}} = \sum q_\alpha\,
  \widehat{\mathcal{N}_\alpha\circ\mathcal{E}}$ for some set of $\alpha$'s and a
  probability distribution $`{q_\alpha}$.  Then as above, invoking B\'eny and
  Oreshkov for each $\alpha$ with corresponding optimal states $\zeta_{E'}^\alpha$,
  we have
  \begin{multline}
    \sum q_\alpha
    \epsilon_\phi(\mathcal{N}_\alpha\circ\mathcal{E})
    = 
    \sum q_\alpha
    P`\big(\widehat{\mathcal{N}_\alpha\circ\mathcal{E}}(\phi_{LR}),
    \zeta_{E'}^\alpha\otimes\phi_R)
    \geqslant 
    \sum q_\alpha
    \delta`\big(\widehat{\mathcal{N}_\alpha\circ\mathcal{E}}(\phi_{LR}),
    \zeta_{E'}^\alpha\otimes\phi_R)
    \\
    \geqslant 
    \delta`*(\sum q_\alpha
    \widehat{\mathcal{N}_\alpha\circ\mathcal{E}}(\phi_{LR}), \sum
    q_\alpha\zeta_{E'}^\alpha\otimes\phi_R)
    =
    \delta`*(\widehat{\mathcal{N}\circ\mathcal{E}}(\phi_{LR}),
    \zeta'_{E'}\otimes\phi_R)\ ,
    \label{eq:ioiugrehbkjhois}
  \end{multline}
  using the joint convexity of the trace distance and defining
  $\zeta'_{E'} = \sum q_\alpha \zeta_{E'}^\alpha$.  Directly
  invoking~\eqref{eq:unifying-bound-anyphi-tr-dist-gtr-Cphi} then proves the
  first bound in~\eqref{eq:unifying-bound-avgalpha-epsphi}.  We also have
  $\text{\eqref{eq:ioiugrehbkjhois}} \geqslant \delta`*(\rho_{E'},
  \zeta'_{E'})$, and hence by triangle inequality
  \begin{align}
    \delta`*(\widehat{\mathcal{N}\circ\mathcal{E}}(\phi_{LR}),
    \rho_{E'}\otimes\phi_R)
    \leqslant 2\, \sum q_\alpha
    \epsilon_\phi(\mathcal{N}_\alpha\circ\mathcal{E})\ .
  \end{align}
  Combining with~\eqref{eq:unifying-bound-anyphi-tr-dist-phiL-gtr-Cphi0} then
  yields the second bound in~\eqref{eq:unifying-bound-avgalpha-epsphi}.
\end{proof}

We may now combine these two lemmas to finally prove
\cref{thm:main-result-full}.

\begin{proof}[*thm:main-result-full]
  Thanks to \cref{lemma:main-result-thm-exists-environ-observable} there
  exists $Z_{C'E}$ and $\nu'\in\mathbb{R}$ such that
  \begin{subequations}
    \begin{gather}
      \norm[\big]{\widehat{\mathcal{N}\circ\mathcal{E}}{}^\dagger(Z_{C'E}) - (T_L -
      \nu'\Ident_L)}_\infty
      \leqslant \delta + \eta\ ;
      \\
      \norm{Z_{C'E}}_\infty 
      \leqslant \max_\alpha \frac{\Delta T_\alpha}{2 q_\alpha}\ .
      \label{eq:tyfiguihfwklkad}
    \end{gather}
  \end{subequations}
  We may directly plug this observable into
  \cref{lemma:main-result-thm-environ-obs-implies-poor-code} to deduce that
  the bound~\eqref{eq:unifying-bound-eps-phi} applies to our approximately
  covariant code.  We now need to compute the form of the bound for the
  particular quantities $\epsilon_{\mathrm{e}}(\mathcal{N}\circ\mathcal{E})$,
  $\avg{\epsilon_{\mathrm{e}}(\mathcal{N}^\alpha\circ\mathcal{E})}_\alpha$ and
  $\epsilon_{\mathrm{worst}}(\mathcal{N}\circ\mathcal{E})$.

  First, let $\ket\phi_{LR} = \ket{\hat\phi}_{LR}$ be the maximally entangled
  state between $L$ and $R\simeq L$.  Then by definition, and recalling the
  alternative expression
  in~\eqref{eq:unifying-bound-aux-CphiT-equivalent-optimizations} for
  $C_{\phi_L,T_L}$ with a maximization, we have
  \begin{align}
    C_{\phi_L,T_L} =
    C_{\Ident_L/d_L,T_L} =
    \frac1{d_L} \min_{\mu} \, \norm[\big]{T_L - \mu\Ident_L}_1
    =
    \frac1{d_L} \,
    \max_{\substack{\norm{X}_\infty \leqslant 1 \\ \tr(X)=0}}
    \tr`\big[ T_L X ]\ .
    \label{eq:yihkjfdlsjfhpiuovuq}
  \end{align}
  Let $\mu(T_L)$ denote a median eigenvalue of $T_L$ counted with multiplicity,
  which implies the following.  Let $`{ \ket{k}_L }$ for $k=1,\ldots,d_L$ be an
  eigenbasis of $T_L$ with its elements arranged such that the eigenvalues of
  $T_L$ are nonincreasing in $k$,
  $\dmatrixel{1}{T_L} \geqslant \dmatrixel{2}{T_L} \geqslant \cdots \geqslant
  \dmatrixel{d_L}{T_L}$.  Let
  \begin{align}
    P_+ &= \sum_{k=1}^{\lfloor d_L/2\rfloor} \proj{k}_L\ ;
    &
    P_- &= \sum_{k=\lceil d_L/2\rceil+1}^{d_L} \proj{k}_L\ ,
  \end{align}
  noting that the two projectors are orthogonal and that
  $\rank(P_+) = \tr(P_+) = \tr(P_-) = \rank(P_-)$.  That is, we divide all basis
  vectors into two sets of equal size, corresponding to the smallest eigenvalues
  and the largest eigenvalues respectively, possibly leaving out the middle
  basis vector if the space dimension is odd.  Then, the eigenvalues
  corresponding to the eigenbasis vectors included in $P_+$ (respectively,
  $P_-$) are all greater than or equal to (respectively less than or equal to)
  $\mu(T_L)$.  If $d_L$ is odd, then the basis vector that was left out
  corresponds to the eigenvalue $\mu(T_L)$.

  Now set $X = P_+ - P_-$, satisfying $\norm{X}_\infty\leqslant 1$.  We have
  $\norm{T_L - \mu(T_L)\Ident}_1 = \tr`\big[X (T_L - \mu(T_L)\Ident)]$: Indeed,
  the one-norm is equal to the sum of the absolute values of the eigenvalues of
  its argument, which is precisely taken care of by our careful choice of $X$.
  Then
  $\norm{T_L - \mu(T_L)\Ident}_1 = \tr`\big(X T_L) - \mu(T_L)\tr(X) = \tr`\big(X
  T_L)$ because $\tr(X)=0$ by construction.  Now because both $\mu(T_L)$ and $X$
  are optimization candidates in~\eqref{eq:yihkjfdlsjfhpiuovuq}, we have
  \begin{align}
    \frac1{d_L}\,\norm[\big]{T_L - \mu(T_L)\Ident}_1
    \geqslant
    C_{\Ident/d_L,T_L}
    \geqslant
    \frac1{d_L}\, \tr`*(X  T_L)
    = 
    \frac1{d_L}\,\norm[\big]{T_L - \mu(T_L)\Ident}_1\ ,
  \end{align}
  which implies that
  $C_{\Ident/d_L,T_L} = d_L^{-1}\,\norm[\big]{T_L - \mu(T_L)\Ident}_1$.
  \Cref{lemma:main-result-thm-environ-obs-implies-poor-code} states that
  $`\big[C_{\Ident/d_L,T_L} - (\delta+\eta)]/(2\norm{Z_{C'E}}_\infty)$ is a
  lower bound both to $\epsilon_{\mathrm{e}}(\mathcal{N}\circ\mathcal{E})$ and
  to $\avg{\epsilon_{\mathrm{e}}(\mathcal{N}^\alpha\circ\mathcal{E})}_\alpha$,
  which proves~\eqref{eq:main-thm-bound-avg-entgl-fid} as we recall the
  property~\eqref{eq:tyfiguihfwklkad}.
  
  That the norm term in~\eqref{eq:main-thm-bound-avg-entgl-fid} can be replaced
  by $\norm{T_L - \tr(T_L)\Ident_L/d_L}_1/(2d_L)$ follows from the alternative
  bound in \cref{lemma:main-result-thm-environ-obs-implies-poor-code}, stating
  that
  $`\big[(C^0_{\Ident/d_L,T_L}/2) - (\delta+\eta)]/(2\norm{Z_{C'E}}_\infty)$
  [cf.~\eqref{eq:unifying-bound-auxiliary-CsigmaT-0}] is also a lower bound to
  both error parameters considered in~\eqref{eq:main-thm-bound-avg-entgl-fid}.

  For $\epsilon_{\mathrm{worst}}(\mathcal{N}\circ\mathcal{E})$, we get to pick
  $\ket\phi_{LR}$ freely and this will yield a valid bound.  Let
  $\ket{\psi^\pm}_L$ be eigenstates of $T_L$ corresponding to the maximal and
  minimal eigenvalues $T_L$, respectively, with
  $\dmatrixel{\psi^+}{T_L} - \dmatrixel{\psi^-}{T_L} = \Delta T_L$.  Now choose
  two arbitrary orthogonal states $\ket{\pm}_R$ on $R$ and set
  \begin{align}
    \ket\phi_{LR} = \frac1{\sqrt 2}`*[ \ket{\psi^+}_L \ket{+}_R + \ket{\psi^-}_L \ket{-}_R ]\ ,
  \end{align}
  with $\phi_L = \Pi_L/2$, where we write
  $\Pi_L = \proj{\psi^+} + \proj{\psi^-}$.  Recall the alternative expression
  in~\eqref{eq:unifying-bound-aux-CphiT-equivalent-optimizations} for
  $C_{\phi_L,T_L}$ with a maximization.  We can choose as candidate
  $X_L = \proj{\psi^+}_L - \proj{\psi^-}_L$, since we have indeed
  $\tr`(\phi_L X_L) = 0$ and $\norm{X_L}_\infty\leqslant 1$, and we obtain
  \begin{align}
    C_{\phi_L,T_L} \geqslant \frac12\tr`\big(\Pi_L X_L \Pi_L T_L)
    = \frac{\Delta T_L}{2}\ .
  \end{align}
  \Cref{lemma:main-result-thm-environ-obs-implies-poor-code} then asserts that
  \begin{align}
    \epsilon_{\mathrm{worst}}(\mathcal{N}\circ\mathcal{E})
    \geqslant
    \frac{\Delta T_L/2 - \delta - \eta}%
    {\max_\alpha \Delta T_\alpha/q_\alpha}\ ,
  \end{align}
  where we recall~\eqref{eq:tyfiguihfwklkad}. This
  proves~\eqref{eq:main-thm-bound-worst-entgl-fid}.
\end{proof}

At this point we comment on
Condition~\eqref{eq:main-result-full-condition-charge-cutoffs} in the statement
of \cref{thm:main-result-full}.  It may look a bit awkward, but its meaning
is intuitively simple: First, we need to shift the charge values to center them
at zero for each $\alpha$ for our proof.  Second, we need to make sure that if
we project any codeword into the given range of physical charge values for each
$\alpha$, then the total error we make when attempting to determine the
expectation value of the actual (possibly unbounded) charge observable $T_A$ is
small.  In practice, this just means that the part of the codewords outside of
the given range of charge values only has a small contribution to the total
expectation value of charge.  For convenience we may use the following
simplified criterion, where we simply fix a charge cut-off value $t$:
\begin{proposition}
  \label{prop:criterion-phys-charge-codewords}
  Consider $V_{L\to A}$, $T_L$, $K$ and $T_A=\sum_{\alpha\in K} T_\alpha$ as in
  \cref{thm:main-result-full}.  Let $t>0$.  Set
  $t_\alpha^+ = - t_\alpha^- = t$ and define $\Pi_\alpha, \Pi_\alpha^\perp$ as
  in the statement of \cref{thm:main-result-full}.  Let
  $\{ \ket{\phi_\alpha^{t',j}} \}$ be an eigenbasis of $T_\alpha$ corresponding
  to eigenvalues $t'$ with a possible degeneracy index $j$.
  Suppose that there is an $\eta'\geqslant 0$ such that for any logical state
  $\psi_L$ and for any $\alpha$,
  \begin{align}
    \sum_{t',j:\; \abs{t'}>t}
    \, \abs{t'}\, \dmatrixel{\phi_{\alpha}^{t',j}}{ \rho_\alpha } \leqslant \eta'\ ,
    \label{eq:criterion-phys-charge-codewords-sumtp}
  \end{align}
  where we write $\rho_\alpha = \tr_{A\setminus A_\alpha}(V \psi_L V^\dagger)$
  and where the sum ranges over the eigenstate labels $(t',j)$ such that
  $\abs{t'}>t$.
  Then, condition~\eqref{eq:main-result-full-condition-charge-cutoffs} is
  satisfied with $\eta=\abs{K}\,\eta'$, and furthermore $\Delta T_\alpha = 2t$
  for all $\alpha$.
\end{proposition}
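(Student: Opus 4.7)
The plan is to verify condition \eqref{eq:main-result-full-condition-charge-cutoffs} by decomposing the sum term-by-term and showing that each term contributes at most $\eta'$. The symmetric cutoff $t_\alpha^\pm = \pm t$ makes the midpoint $t_\alpha = (t_\alpha^- + t_\alpha^+)/2 = 0$, which simplifies the expression we need to bound to $\tr(\sum_\alpha T_\alpha \Pi_\alpha^\perp \mathcal{E}(\sigma_L))$. Similarly, $\Delta T_\alpha = t_\alpha^+ - t_\alpha^- = 2t$ follows immediately.

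First I would fix a single $\alpha$ and a pure logical input $\psi_L$. Because $T_\alpha$ and $\Pi_\alpha^\perp$ both act on $A_\alpha$ only, the trace reduces to $\tr(T_\alpha \Pi_\alpha^\perp \rho_\alpha)$ with $\rho_\alpha$ the local reduced state. Expanding in the given eigenbasis of $T_\alpha$, the operator $T_\alpha \Pi_\alpha^\perp$ equals $\sum_{t',j:\,|t'|>t} t'\, \proj{\phi_\alpha^{t',j}}$, so that
\begin{align}
\abs[\big]{\tr(T_\alpha \Pi_\alpha^\perp \rho_\alpha)}
= \abs[\bigg]{\sum_{t',j:\,|t'|>t} t'\, \dmatrixel{\phi_\alpha^{t',j}}{\rho_\alpha}}
\leqslant \sum_{t',j:\,|t'|>t} \abs{t'}\, \dmatrixel{\phi_\alpha^{t',j}}{\rho_\alpha}
\leqslant \eta'\ .
\end{align}
The key move here is taking the absolute value inside the sum, which is justified because $\dmatrixel{\phi_\alpha^{t',j}}{\rho_\alpha}\geqslant 0$ for every eigenstate; the hypothesis \eqref{eq:criterion-phys-charge-codewords-sumtp} then directly provides the bound.

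Next I would promote this pure-state bound to arbitrary mixed $\sigma_L$ by linearity: decomposing $\sigma_L = \sum_i p_i \proj{\psi_i}$ and applying the triangle inequality, $\abs{\tr(T_\alpha \Pi_\alpha^\perp \mathcal{E}(\sigma_L))} \leqslant \sum_i p_i\, \eta' = \eta'$, since each pure term is controlled by the hypothesis.

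Finally, summing over the $\abs{K}$ values of $\alpha$ and using the triangle inequality one more time yields
\begin{align}
\abs[\bigg]{\tr`*(\sum_\alpha (T_\alpha - t_\alpha)\Pi_\alpha^\perp\,\mathcal{E}(\sigma_L))}
\leqslant \sum_{\alpha\in K} \eta' = \abs{K}\,\eta'\ ,
\end{align}
which is exactly \eqref{eq:main-result-full-condition-charge-cutoffs} with $\eta = \abs{K}\,\eta'$. There is no substantive obstacle; the only point worth care is the dual use of the triangle inequality (once within a single $\alpha$ against the signed eigenvalue $t'$, once across $\alpha \in K$), with the former relying on the positivity of the diagonal matrix elements.
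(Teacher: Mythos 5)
Your argument matches the paper's own proof: one application of the triangle inequality across $\alpha$, then the hypothesis term-by-term inside each $\alpha$, using positivity of the diagonal matrix elements $\dmatrixel{\phi_\alpha^{t',j}}{\rho_\alpha}$ to pass the absolute value inside the inner sum. The extra step you include to promote from pure to mixed $\sigma_L$ is harmless but not needed, since the hypothesis~\eqref{eq:criterion-phys-charge-codewords-sumtp} already quantifies over arbitrary logical states.
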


\begin{proof}[*prop:criterion-phys-charge-codewords]
  We have $t_\alpha=(t_\alpha^+ + t_\alpha^-)/2 = 0$. 
  For any $\psi_L$, calculate
  \begin{align}
    \abs[\Big]{\sum \tr`\big(\Pi_\alpha^\perp T_\alpha\, V\psi_L V^\dagger) }
    &\leqslant \sum \abs{  \tr`\big(\Pi_\alpha^\perp T_\alpha\, V\psi_L V^\dagger) }
      \nonumber\\
    &\leqslant \sum_\alpha
    \abs*{ \sum\nolimits_{t',j:\;\abs{t'}>t} t'\,
      \dmatrixel{\phi_\alpha^{t',j}}{\tr_{A\setminus A_\alpha}(V\psi_L V^\dagger)} }
      \nonumber\\
    &\leqslant \sum_\alpha
      \sum\nolimits_{t',j:\;\abs{t'}>t} \abs{t'}\,
      \dmatrixel{\phi_\alpha^{t',j}}{\tr_{A\setminus A_\alpha}(V\psi_L V^\dagger)}
      \nonumber\\
    &\leqslant \sum_\alpha \eta' \leqslant \abs{K}\,\eta'\ .
  \end{align}
  Note by the way that the left hand side
  of~\eqref{eq:criterion-phys-charge-codewords-sumtp} is exactly
  $\tr(\Pi_\alpha^\perp\,\abs{T_\alpha} V \psi_L V^\dagger)$.
\end{proof}


\section{Correlation functions and bounds}
\label{appx:corrbound}
In this section we present an alternative strategy for proving the
bound~\eqref{eq:simple-bound-e}, by studying the connected correlation functions
between the physical subsystems and the logical information.

The covariance of the codes can be seen as a linear constraint, which can be
easily employed to obtain a second order constraints. To start, we again assume
the simpler case of isometric encoding.  We construct the state corresponding to
the encoding isometry $V_{L\rightarrow A}$ by injecting a maximally entangled
state $\ket{\hat \phi}_{LR}$ to $V_{L\rightarrow A}$ (\cref{fig:corrstate}):
\begin{align}
  \ket \Psi_{AR} = V\ket{\hat \phi}_{LR}\ .
\end{align}

\begin{figure}
  \centering
  \includegraphics[width=3.5in]{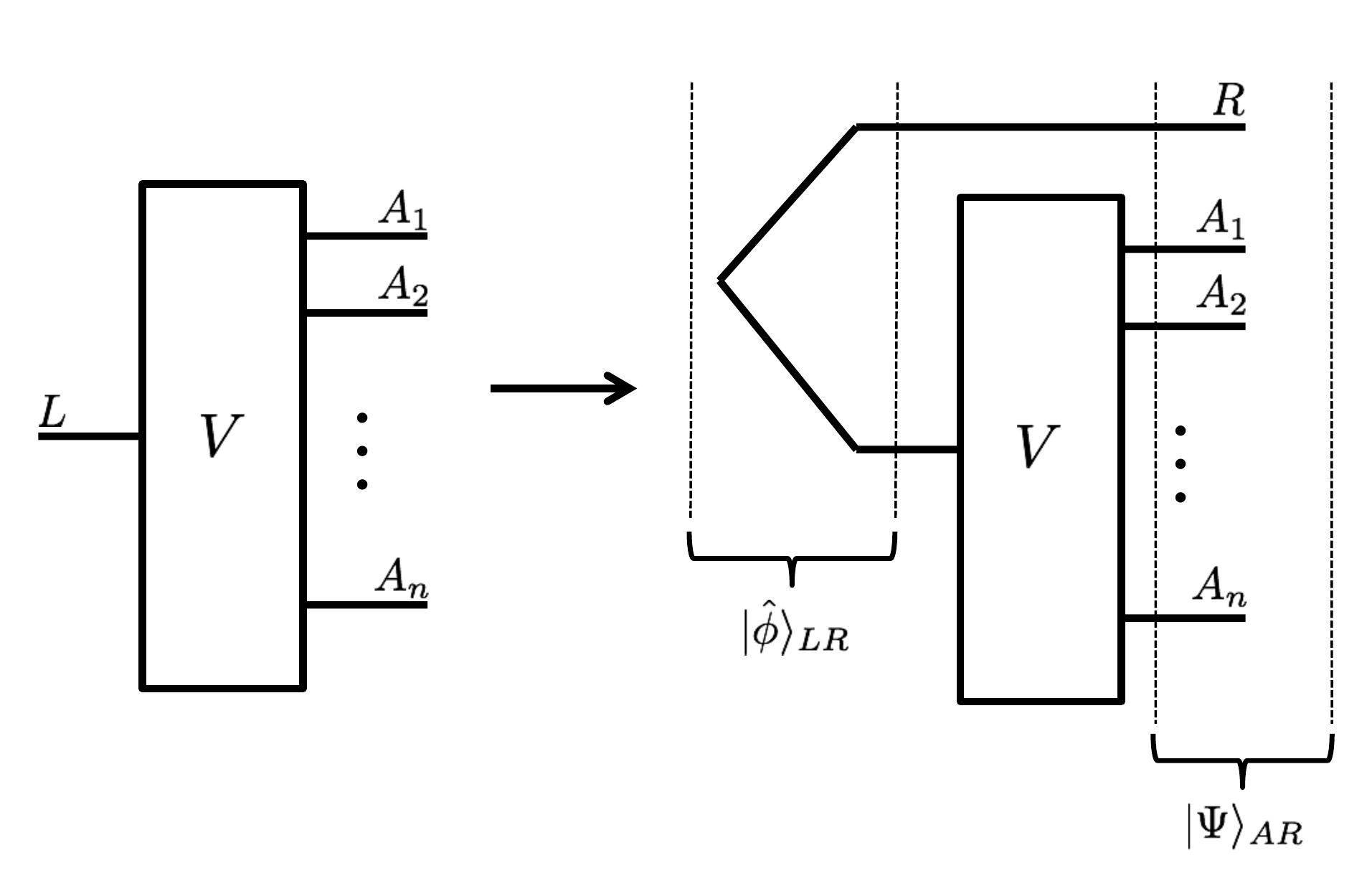}
  \caption{Depiction of the construction of the state $\ket{\Psi}_{AR}$ by injecting the maximally entangled state $\ket{\hat \phi}_{LR}$ into the encoding isometry $V_{L\rightarrow A}$.}
  \label{fig:corrstate}
\end{figure}

We have
$T_A \ket \Psi_{LA}= T_A V \ket {\hat \phi}_{LR}= V (T_L-\nu\Ident_L)
\ket{\hat\phi}_{LR}$ for some constant $\nu$.  Define
$T_R = (T_L-\nu\Ident_L)^T$ where the transpose is taken as a matrix ignoring
the Hilbert space label; this ensures that
$(T_L-\nu\Ident_L) \ket{\hat \phi}_{LR} = T_R \ket{\hat \phi}_{LR}$.  Therefore,
the covariance of $V$ translates to the invariance of $\ket \Psi$:
\begin{align}
  \label{eq:linear_invariance}
  \left(\sum_{i=1}^n T_{A_i}\right) \ket{\Psi}_{RA}
  = T_A \ket{\Psi}_{RA}=T_R\ket{\Psi}_{RA}\ .
\end{align}
We define the \emph{connected correlator} between two operators $A,B$ as
\begin{align}
  \langle A, B \rangle := \tr{(A B\Psi)}-\tr{(A\Psi)}\tr{(B\Psi)}\ .
\end{align}
Consider an arbitrary operator $X_R$.  It be seen
from~\eqref{eq:linear_invariance} that
\begin{align*}
  \langle X_R,T_R \rangle =\sum_{i=1}^n {\langle X_R,T_{A_i} \rangle }\ .
\end{align*}
Using the triangle inequality, we obtain
\begin{align}
  \label{eq:correlation_function}
  \abs{\langle X_R,T_R \rangle} \leq \sum_{i=1}^n \abs{\langle X_R,T_{A_i} \rangle}
  \qquad \text{ for all }X_R.
\end{align}
Although the derivation of~\eqref{eq:correlation_function} is very simple, it
provides a general lower bound to the amount of correlations between the
reference system and the physical subsystems, from which we can draw physical
consequences.
The correlation functions measure how close the state $\Psi_{RA_i}$ is to the
product state $\Psi_R \ot\Psi_{A_i}$:
\begin{align}
  \label{eq:corr_func_rhs}
  \abs*{ \langle X_R,T_{A_i} \rangle}
  &=\abs*{ \tr[X_R T_{A_i} (\Psi_{RA_i}-\Psi_R \ot\Psi_{A_i})] } \nonumber\\
  &\leqslant \norm{X_R}_\infty \, \norm{T_{A_i}}_\infty \,
    \norm{ \Psi_{RA_i}-\Psi_R \ot\Psi_{A_i} }_1\ ,
\end{align}
where we used H\"older's inequality.  We can replace
$T_{A_i}\to T_{A_i} - t\Ident$ in~\eqref{eq:corr_func_rhs} without changing the
left hand side of the inequality as
$\langle X_R, T_{A_i} - t\Ident \rangle = \langle X_R, T_{A_i} \rangle$:
\begin{align}
  \abs*{ \langle X_R,T_{A_i} \rangle}
  &\leqslant  \norm{X_R}_\infty \, \norm{ T_{A_i} - t\Ident}_\infty \,
    \norm{ \Psi_{RA_i}-\Psi_R \ot\Psi_{A_i} }_1
  \nonumber\\
  &= \frac12 \norm{X_R}_\infty \, \Delta T_{A_i} \,
    \norm{ \Psi_{RA_i}-\Psi_R \ot\Psi_{A_i} }_1\ ,
  \label{eq:corr_func_rhs-2-DeltaTAi}
\end{align}
where the second line follows by a suitable choice of $t$, and where
$\Delta T_{A_i}$ is the difference between the maximal and minimal eigenvalue of
$T_{A_i}$.

The accuracy to which the code $V$ can correct against errors is precisely
determined by how close $\Psi_{RA_i}$ is to a product state.  Indeed, consider
the noise channel $\mathcal{N}^{i}_{A\to A}$
in~\eqref{eq:noise-map-general-alpha--Nalpha} that erases the system $A_i$.  By
B\'eny and Oreshkov~\eqref{eq:Beny-Oreshkov-f-fixedinput}, we have
\begin{align}
  \epsilon_{\mathrm{e}}(\mathcal{N}^i\circ\mathcal{E})
  &= \min_{\zeta} \sqrt{ 1 - F^2`\big(
    \widehat{\mathcal{N}^i\circ\mathcal{E}}(\hat\phi_{LR}), \zeta\otimes\Psi_R
    ) } 
    \nonumber\\
  &\geqslant  \min_{\zeta}\frac12\norm[\big]{
    \widehat{\mathcal{N}^i\circ\mathcal{E}}(\hat\phi_{LR}) - \zeta\otimes\Psi_R
    }_1
    \nonumber\\
  &=  \min_{\zeta}\frac12\norm[\big]{ \Psi_{RA_i} - \zeta_{A_i}\otimes\Psi_R }_1
\end{align}
where $\widehat{\mathcal{N}^i\circ\mathcal{E}}(\hat\phi_{LR}) = \Psi_{RA_i}$ and
$\Psi_R = \Ident_R/d_L$, and where we have used the known relation
$\delta(\cdot,\cdot)\leqslant\sqrt{1 - F^2(\cdot,\cdot)}$ between the trace
distance and the fidelity.  Because the trace distance cannot increase under the
partial trace, and if we set $\zeta_{A_i}$ to be the optimal state in the
expression above, also have
$(1/2)\norm{ \Psi_{A_i} - \zeta_{A_i} }_1 \leqslant
\epsilon_{\mathrm{e}}(\mathcal{N}^i\circ\mathcal{E})$ and thus by triangle
inequality,
\begin{align}
  \frac12\norm[\big]{ \Psi_{RA_i} - \Psi_{A_i}\otimes\Psi_R }_1
  \leqslant
  2\epsilon_{\mathrm{e}}(\mathcal{N}^i\circ\mathcal{E})\ .
  \label{eq:trdist-PsiARi-product-inequality-epsilon-e}
\end{align}

It remains to combine~\eqref{eq:trdist-PsiARi-product-inequality-epsilon-e}
with~\eqref{eq:corr_func_rhs} and~\eqref{eq:correlation_function} and to choose
the best possible $X_R$ to get our final result.

\begin{theorem}
  \label{thm:corr-thm}
  The individual entanglement fidelities of recovery of a covariant code
  $\mathcal{E}(\cdot) = V (\cdot)V^\dagger$ against single erasures at known
  locations satisfy the following inequality:
  \begin{align}
    \label{eq:corrbound1}
    \frac{1}{2d_L} \norm*{ T_L - \tr(T_L)\frac{\Ident}{d_L}}_1
    \leq  
    \sum_{i=1}^n \Delta T_{i} \, \epsilon_{\mathrm{e}}(\mathcal{N}^i\circ\mathcal{E})\ .
  \end{align}
  Furthermore, this can be used to show that
  \begin{align}
    \label{eq:corrbound2}
    \epsilon_{\mathrm{e}}(\mathcal{N}\circ\mathcal{E})
    \geq
    \frac{1}{2d_L} \frac{\norm{T_L - \tr(T_L)\Ident/d_L }_1 }{\max_i q_i^{-1} \Delta T_i} \ .
  \end{align}
\end{theorem}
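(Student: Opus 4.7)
The plan is to specialize the general correlation inequality~\eqref{eq:correlation_function} by choosing an optimal $X_R$ tied to the dual witness of the trace norm on the logical side, then daisy-chain~\eqref{eq:corr_func_rhs-2-DeltaTAi} and~\eqref{eq:trdist-PsiARi-product-inequality-epsilon-e} to pass from correlators to fidelities. First I would note that since $\ket{\hat\phi}_{LR}$ is maximally entangled we have $\Psi_R=\Ident_R/d_L$, and that for every operator $Y_L$ on $L$ there is an operator $X_R$ on $R$ (its transpose in the Schmidt basis of $\ket{\hat\phi}$) with $\|X_R\|_\infty=\|Y_L\|_\infty$ and $X_R\ket{\hat\phi}=Y_L\ket{\hat\phi}$. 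A direct computation using $T_R\ket{\hat\phi}=(T_L-\nu\Ident_L)\ket{\hat\phi}$ and the isometry $V^\dagger V=\Ident_L$ then gives
\begin{equation}
\langle X_R,T_R\rangle \;=\; \frac{1}{d_L}\,\tr\!\Bigl[Y_L\bigl(T_L-\tfrac{\tr(T_L)}{d_L}\Ident_L\bigr)\Bigr]\ ,
\end{equation}
the constant $\nu$ dropping out automatically. Taking $Y_L$ to be the optimal dual witness of the trace norm, with $\|Y_L\|_\infty\leq 1$ and $\tr[Y_L(T_L-\tr(T_L)\Ident_L/d_L)]=\|T_L-\tr(T_L)\Ident_L/d_L\|_1$, the left-hand side of~\eqref{eq:correlation_function} becomes $\|T_L-\tr(T_L)\Ident_L/d_L\|_1/d_L$. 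Combining this with the right-hand side bound~\eqref{eq:corr_func_rhs-2-DeltaTAi} (which replaces $\|T_{A_i}\|_\infty$ by $\Delta T_{A_i}/2$ via a constant shift) and then~\eqref{eq:trdist-PsiARi-product-inequality-epsilon-e} yields~\eqref{eq:corrbound1} up to the stated factor of $1/2$.

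For~\eqref{eq:corrbound2} the key auxiliary fact I would establish is the Pythagorean identity
\begin{equation}
\epsilon_{\mathrm{e}}^2(\mathcal{N}\circ\mathcal{E}) \;=\; \sum_i q_i\,\epsilon_{\mathrm{e}}^2(\mathcal{N}^i\circ\mathcal{E})\ .
\end{equation}
This follows from the B\'eny--Oreshkov characterization~\eqref{eq:Beny-Oreshkov-f-fixedinput}: the complementary channel has the block-diagonal form $\widehat{\mathcal{N}\circ\mathcal{E}}(\hat\phi_{LR})=\bigoplus_i q_i\,\Psi_{RA_i}$ across the classical register $C'$, and by block-diagonality of the fidelity the optimal $\zeta$ may be taken block-diagonal too; a Cauchy--Schwarz optimization in the weights on $C'$ then saturates at $p_i\propto q_i(f_{\mathrm{e}}^i)^2$ and yields $f_{\mathrm{e}}^2(\mathcal{N}\circ\mathcal{E})=\sum_i q_i\,(f_{\mathrm{e}}^i)^2$. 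Given this, I would apply~\eqref{eq:corrbound1} and estimate
\begin{equation}
\sum_i \Delta T_i\,\epsilon_{\mathrm{e}}^i \;=\; \sum_i q_i\,(\Delta T_i/q_i)\,\epsilon_{\mathrm{e}}^i \;\leq\; \bigl(\max_i q_i^{-1}\Delta T_i\bigr)\sum_i q_i\,\epsilon_{\mathrm{e}}^i \;\leq\; \bigl(\max_i q_i^{-1}\Delta T_i\bigr)\,\epsilon_{\mathrm{e}}(\mathcal{N}\circ\mathcal{E})\ ,
\end{equation}
where the last step uses Jensen/Cauchy--Schwarz on the concave square-root and the Pythagorean identity above. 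Rearranging gives~\eqref{eq:corrbound2}.

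The main obstacle is the Pythagorean identity for $\epsilon_{\mathrm{e}}$: the block-diagonal structure has to be handled with care, because a priori optimizing over $\zeta$ and then over the weighting on $C'$ could lead to cross terms, but the block-additivity of root fidelity and a single Cauchy--Schwarz step resolves this cleanly. Once this is in hand, the remainder is a routine assembly of the correlation inequality, the choice of dual witness, and standard norm inequalities, so I would expect no further subtleties.
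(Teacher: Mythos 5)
Your argument for~\eqref{eq:corrbound1} is essentially the paper's: you write the correlator $\langle X_R, T_R\rangle$ in terms of $\tr\!\bigl[X_R\bigl(T_R - \tfrac{\tr T_R}{d_L}\Ident\bigr)\bigr]$ using $\Psi_R=\Ident_R/d_L$, pick the dual witness saturating the trace norm, and then push through the chain~\eqref{eq:correlation_function} $\to$~\eqref{eq:corr_func_rhs-2-DeltaTAi} $\to$~\eqref{eq:trdist-PsiARi-product-inequality-epsilon-e}. The transpose/dropping-$\nu$ bookkeeping is right. Where you genuinely diverge from the paper is in how you prove the Pythagorean identity $\epsilon_{\mathrm{e}}^2(\mathcal{N}\circ\mathcal{E})=\sum_i q_i\,\epsilon_{\mathrm{e}}^2(\mathcal{N}^i\circ\mathcal{E})$, which is the paper's \cref{lem:inv-fid-to-global-fid}. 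The paper proves it on the primal side by unpacking the optimization over recovery maps and noting that the optimal global recovery measures the classical register $C$ and then applies the optimal per-erasure recovery; the forward inequality follows by relaxing the optimization, and the reverse by exhibiting that strategy. You instead work on the dual (B\'eny--Oreshkov) side: you observe that $\widehat{\mathcal{N}\circ\mathcal{E}}(\hat\phi_{LR})$ is block-diagonal in $C'$, restrict the optimal $\zeta$ to be block-diagonal without loss of generality (justified by fidelity monotonicity under pinching in the $C'$ basis, since the state being compared is already block-diagonal), and then a Cauchy--Schwarz optimization over the block weights $p_i$ yields the identity with saturation at $p_i\propto q_i(f_{\mathrm{e}}^i)^2$. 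Both proofs are correct and short; yours trades the explicit recovery-map construction for the structural block-diagonality of the complementary channel, and the final concavity step ($\sum q_i\epsilon_{\mathrm{e}}^i\leq\sqrt{\sum q_i(\epsilon_{\mathrm{e}}^i)^2}$) then closes the argument exactly as in the paper, giving~\eqref{eq:corrbound2}.

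One small remark in your favor: the paper's displayed chain~\eqref{eq:fkdiugyfhodklsajbhsi} starts from $\frac{1}{d_L}\|T_L-\tr(T_L)\Ident/d_L\|_1$ rather than $\frac{1}{2d_L}\|\cdot\|_1$, which does not quite follow from~\eqref{eq:corrbound1} as stated (an extra factor of $2$ appears when multiplying through). Your assembly, which keeps the $\tfrac{1}{2d_L}$ from~\eqref{eq:corrbound1} all the way through, lands on the advertised bound~\eqref{eq:corrbound2} without that blemish.
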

Note that $T_L-\tr(T_L)\Ident/d_L$ is just a shift of $T_L$ by a multiple of
identity to make it traceless.  Therefore, $\norm{ T_L - \tr(T_L)\Ident/d_L}_1$
is a 1-norm measure for the spread of eigenvalues of $T_L$.  The bounds of
\cref{thm:corr-thm} and \cref{eq:simple-bound-e} have a very similar nature.

\begin{proof}[*thm:corr-thm]
  We start with the correlator in the left hand side
  of~\eqref{eq:correlation_function}:
  \begin{align}
    \langle X_R, T_R \rangle
    &= \tr(X_R T_R \Psi_R) - \tr(X_R \Psi_R) \tr(T_R\Psi_R)
    = \frac1{d_L}\tr`*(X_R `*[ T_R - \frac{\tr(T_R)}{d_L}\Ident ]) \ .
  \end{align}
  Now, choose the optimal $X_R$ such that $\norm{X_R}_\infty\leqslant 1$ and
  that
  $\norm{T_R - \tr(T_R) \Ident/d_L}_1 = \tr`*[X_R`*(T_R-\tr(T_R)\Ident/d_L)]$.
  Plugging into~\eqref{eq:correlation_function}, and combining
  with~\eqref{eq:corr_func_rhs-2-DeltaTAi}
  and~\eqref{eq:trdist-PsiARi-product-inequality-epsilon-e}, immediately
  gives~\eqref{eq:corrbound1}.

  Furthermore from~\eqref{eq:corrbound1} we have
  \begin{align}
    \frac1{d_L}\norm*{ T_L - \tr(T_L) \frac\Ident{d_L} }_1
    &\leqslant
    \sum (q_i^{-1} \Delta T_i) (q_i \epsilon_{\mathrm{e}}(\mathcal{N}^i\circ\mathcal{E}))
    \nonumber\\
    &\leqslant `*(\max_i \, (q_i^{-1} \Delta T_i) )
      \sum q_i \epsilon_{\mathrm{e}}(\mathcal{N}^i\circ\mathcal{E})\ .
      \label{eq:fkdiugyfhodklsajbhsi}
  \end{align}
  By convexity of $x\mapsto x^2$, and by \cref{lem:inv-fid-to-global-fid}, we
  have
  \begin{align}
    \sum q_i \epsilon_{\mathrm{e}}(\mathcal{N}^i\circ\mathcal{E})
    &\leqslant
      \sqrt{ \sum q_i \epsilon^2_{\mathrm{e}}(\mathcal{N}^i\circ\mathcal{E}) }
      = \epsilon_{\mathrm{e}}(\mathcal{N}\circ\mathcal{E})\ .
      \label{eq:uioekbodsuid}
  \end{align}
  Combining~\eqref{eq:uioekbodsuid} with~\eqref{eq:fkdiugyfhodklsajbhsi}
  proves~\eqref{eq:corrbound2}.
\end{proof}


\section{Criterion for approximate codes}
\label{appx:criterion-certify-code}
When we come up with a new code, how can we show that it forms an
$\epsilon$-approximate error-correcting code against erasures at known
locations?
Here we provide a criterion that, when it can be applied, certifies that a given
code performs well.

Let $L$ be the logical space and $A$ be the physical space, and consider an
encoding operation $\mathcal{E}_{L\to A}$ that can be any completely positive,
trace-preserving map.  Note that in the case of a more general noise model, $A$
does not necessarily have to be composed of several subsystems.  Consider a
collection of noise channels $\{ \mathcal{N}^\alpha \}$ and probabilities
$\{ q_\alpha \}$.  We assume that the environment applies a random noise channel
from this set with the corresponding probability, while providing a record of
which noise channel was applied in a separate register $C$. The overall noise
channel that is applied by the environment is then
\begin{align}
  \mathcal{N}_{A\to AC}(\cdot) =
  \sum q_\alpha \proj\alpha_C \otimes \mathcal{N}^\alpha_{A\to A}(\cdot)\ .
\end{align}
Given complementary channels $\widehat{\mathcal{N}^\alpha\circ\mathcal{E}}$ of
$\mathcal{N}^\alpha\circ\mathcal{E}$, we can construct a complementary channel
of $\mathcal{N}\circ\mathcal{E}$ as
\begin{align}
  \widehat{\mathcal{N}\circ\mathcal{E}}_{A\to C'E}(\cdot)
  = \sum q_\alpha \proj{\alpha}_{C'} \otimes
  \widehat{\mathcal{N}^\alpha\circ\mathcal{E}}(\cdot)\ ,
\end{align}
with an additional register $C'$ and where the outputs of the individual
complementary channels for each $\alpha$ are embedded into a system $E$.

We fix any basis $\{\ket{x}_L\}$ of $L$, and we define for each $\alpha$ the
operators
\begin{align}
  \rho_\alpha^{x,x'} = \widehat{\mathcal{N}^\alpha \circ\mathcal{E}}
  (\ketbra{x}{x'}_L)\ .
  \label{eq:criterion-cert-aqecc-rho-xxp}
\end{align}
Note that $\rho_\alpha^{x,x}$ is a quantum state for each $\alpha$ and for each
$x$, but that $\rho_\alpha^{x,x'}$ is not necessarily even Hermitian for $x\neq x'$.

For an isometric encoding $\mathcal{E}$, and in the noise $\mathcal{N}$
acts by erasing a collection of subsystems labeled by $\alpha$ and chosen with
probability $q_\alpha$, the operators $\rho_\alpha^{x,x'}$ are simply the
reduced operators on the sites labeled by $\alpha$ of the logical operator
$\ketbra{x}{x'}$:
\begin{align}
  \rho_\alpha^{x,x'} = \tr_{A\setminus A_\alpha}`*(\mathcal{E}(\ketbra{x}{x'}))\ .
\end{align}

\begin{proposition}
  \label{prop:new-criterion-cert-aqecc-2}
  Assume that there exists $\nu,\epsilon' \geqslant 0$, and that there exists a
  quantum state $\zeta_\alpha$ for each $\alpha$, such that for all $\alpha$,
  \begin{subequations}
    \label{eq:new-criterion-cert-2-assumption-rho-x-xp}
    \begin{alignat}{2}
      F(\rho_\alpha^{x,x},\zeta_\alpha) &\geqslant \sqrt{1 - {\epsilon'}^2}
      &
      &\quad \text{for all}~x \ ;\quad\text{and}
        \label{eq:new-criterion-cert-2-assumption-rho-x-x-purif-close}
        \\
      \norm{\rho_\alpha^{x,x'}}_1 &\leqslant \nu
      &
      &\quad \text{for all}~x\neq x'\ .
        \label{eq:new-criterion-cert-2-assumption-rho-x-xp-small}
    \end{alignat}
  \end{subequations}
  Then $\mathcal{E}_{L\to A}$ is an approximate error-correcting code against
  the noise $\mathcal{N}$, with approximation parameter
  \begin{align}
    \epsilon_{\mathrm{worst}}(\mathcal{N}\circ\mathcal{E})
    &\leqslant \epsilon' + d_L\sqrt{\nu} \ ,
  \end{align}
  where $d_L$ is the dimension of the logical system $L$.
\end{proposition}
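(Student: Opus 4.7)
The plan is to invoke the Bény--Oreshkov characterization \eqref{eq:Beny-Oreshkov-f-worst}, which expresses $f_{\mathrm{worst}}$ as an optimization of the worst-case fidelity between $\widehat{\mathcal{N}\circ\mathcal{E}}$ and a constant channel $\mathcal{T}_\zeta$. It suffices to exhibit a single state $\zeta$ such that for every bipartite input $\ket\phi_{LR}$, the purified distance $P(\widehat{\mathcal{N}\circ\mathcal{E}}(\phi_{LR}), \zeta\otimes\phi_R)$ is at most $\epsilon'+d_L\sqrt\nu$. Since the complementary channel has the classical-quantum form $\sum_\alpha q_\alpha\,\proj{\alpha}_{C'}\otimes\widehat{\mathcal{N}^\alpha\circ\mathcal{E}}(\cdot)$, the natural choice is $\zeta=\sum_\alpha q_\alpha\,\proj{\alpha}_{C'}\otimes\zeta_\alpha$, and the fidelity splits additively over the classical register: $F(\widehat{\mathcal{N}\circ\mathcal{E}}(\phi_{LR}),\zeta\otimes\phi_R)=\sum_\alpha q_\alpha F(\widehat{\mathcal{N}^\alpha\circ\mathcal{E}}(\phi_{LR}),\zeta_\alpha\otimes\phi_R)$. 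I would therefore fix $\alpha$, prove the same bound separately, and recombine.

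With $\alpha$ fixed (and dropped from the notation), expand $\ket\phi_{LR}=\sum_{x,r}c_{x,r}\ket{x}_L\ket{r}_R$, set $\pi_x=\bra{x}\phi_L\ket{x}$, and let $\ket{\psi_x}_R=\pi_x^{-1/2}\bra{x}_L\ket{\phi}_{LR}$ be the normalized conditional state on $R$. A direct calculation gives
\begin{align*}
\sigma := \widehat{\mathcal{N}\circ\mathcal{E}}(\phi_{LR}) = \sigma^{(d)}+\sigma^{(o)},
\qquad
\sigma^{(d)}=\sum_x \pi_x\,\rho^{x,x}\otimes\proj{\psi_x}_R,
\end{align*}
with $\sigma^{(o)}=\sum_{x\neq x'}\sqrt{\pi_x\pi_{x'}}\,\rho^{x,x'}\otimes\ket{\psi_x}\bra{\psi_{x'}}_R$ and $\phi_R=\sum_x \pi_x\proj{\psi_x}_R$. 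The diagonal part $\sigma^{(d)}$ is positive semidefinite (being a nonnegative sum of tensor products of PSD operators) and of unit trace, so it is a bona fide quantum state. The triangle inequality for $P$ then gives
\begin{align*}
P\bigl(\sigma,\zeta\otimes\phi_R\bigr) \leq P\bigl(\sigma,\sigma^{(d)}\bigr) + P\bigl(\sigma^{(d)},\zeta\otimes\phi_R\bigr).
\end{align*}

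For the first term I would use the assumption $\|\rho^{x,x'}\|_1\leq\nu$ together with Cauchy--Schwarz $(\sum_x\sqrt{\pi_x})^2\leq d_L$ to obtain $\|\sigma^{(o)}\|_1\leq\nu(d_L-1)$, whence $P(\sigma,\sigma^{(d)})\leq\sqrt{\|\sigma^{(o)}\|_1}\leq d_L\sqrt\nu$ via the Fuchs--van de Graaf bound $P^2\leq 2\delta$. For the second term I note that both $\sigma^{(d)}$ and $\zeta\otimes\phi_R=\sum_x\pi_x\,\zeta\otimes\proj{\psi_x}_R$ are classical mixtures with \emph{identical} weights $\pi_x$ and identical $R$-factors $\proj{\psi_x}_R$; joint concavity of the fidelity then yields
\begin{align*}
F\bigl(\sigma^{(d)},\zeta\otimes\phi_R\bigr) \geq \sum_x \pi_x\,F\bigl(\rho^{x,x},\zeta\bigr) \geq \sqrt{1-\epsilon'^2},
\end{align*}
i.e., $P(\sigma^{(d)},\zeta\otimes\phi_R)\leq\epsilon'$. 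Adding the two estimates and recombining over $\alpha$ finishes the proof. The main subtlety is in the second term: since $\{\ket{x}\}_L$ need not align with the Schmidt basis of $\ket\phi_{LR}$, the vectors $\ket{\psi_x}_R$ are generally not mutually orthogonal, so one cannot ``pin'' the index $x$ classically on $R$ and factor the fidelity trivially. What rescues the argument is that joint concavity of the fidelity requires only a shared convex decomposition, not orthogonality of the conditional states---this is the key mechanism by which the criterion accommodates arbitrary bipartite inputs as demanded by $\epsilon_{\mathrm{worst}}$.
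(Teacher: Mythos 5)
Your proof is correct and follows essentially the same route as the paper's: decompose the environment's output on $R$ into the diagonal-in-$x$ part $\sigma^{(d)}$ and the off-diagonal remainder, bound the latter in trace norm (yielding $P\leq d_L\sqrt\nu$ via Fuchs--van de Graaf), bound the former via joint concavity against $\zeta_\alpha\otimes\phi_R$ (yielding $P\leq\epsilon'$), and combine by the triangle inequality for the purified distance; the paper's $B_R\ket\Phi_{L:R}$ parameterization of the input is just a reindexing of your $\pi_x,\ket{\psi_x}_R$. Your trace-norm bound $\|\sigma^{(o)}\|_1\leq(d_L-1)\nu$ via Cauchy--Schwarz is in fact slightly sharper than the paper's $d_L^2\nu$, though both round off to the stated $d_L\sqrt\nu$.
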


\begin{proof}[*prop:new-criterion-cert-aqecc-2]
  Let
  \begin{align}
    \zeta_{C'E} =
    \sum_{\alpha} q_\alpha \proj{\alpha}_{C'}\otimes \zeta_\alpha\ .
  \end{align}
  Using the B\'eny-Oreshkov property~\eqref{eq:Beny-Oreshkov-f}, the proof
  strategy is to find a lower bound to the entanglement fidelity of the channel
  $\widehat{\mathcal{N}\circ\mathcal{E}}$ to the constant channel
  $\mathcal{T}_\zeta$ outputting the state $\zeta_{C'E}$ defined above.

  Consider a reference system $R\simeq L$, and let $\{\ket{x}_R\}$ be any fixed basis
  of $R$.  Let $\ket{\Phi}_{L:R} = \sum_x \ket{x}_L\otimes\ket{x}_R$.
  For any state $\ket{\sigma}_{LR}$, there exists a complex matrix $B_R$ such
  that $\ket{\sigma}_{LR} = B_R\,\ket{\Phi}_{L:R}$ and
  $\sigma_R = \tr_L(\sigma_{LR}) = B_R B_R^\dagger$ (choose
  $B_R = \sum_{x,x'} \braket{x,x'}{\sigma}_{LR} \ketbra{x'}{x}_R$). Note that
  $\norm{B_R B_R^\dagger}_\infty = \norm{B_R^\dagger B_R}_\infty \leqslant 1$.
  We have
  \begin{align}
    (\widehat{\mathcal{N}\circ\mathcal{E}}\otimes\IdentProc[R]{})(\sigma_{LR})
    &= B_R\, \widehat{\mathcal{N}\circ\mathcal{E}}(\Phi_{L:R})\,
      B_R^\dagger
      \nonumber\\
    &= \sum_\alpha q_\alpha \proj\alpha_{C'} \otimes
      (B_R\; \widehat{\mathcal{N}^\alpha\circ\mathcal{E}}(\Phi_{L:R}) \, B_R^\dagger)
      \nonumber\\
    &= \sum_{\alpha,x,x'} q_\alpha\,\proj{\alpha}_{C'}\otimes
      `\big( B_R\,  \rho^{\alpha}_{ER} \, B_R^\dagger ) \ ,
  \end{align}
  where we have defined for each $\alpha$ the positive semidefinite operator
  \begin{align}
    \rho_{ER}^\alpha
    = \widehat{\mathcal{N}^\alpha\circ\mathcal{E}}(\Phi_{L:R})
    = \sum_{x,x'} \rho_\alpha^{x,x'}\otimes\ketbra{x}{x'}_R
    \ .
  \end{align}
  While the $\rho_{ER}^\alpha$'s are positive semidefinite, they are not
  normalized to unit trace as proper quantum states.  Recalling that the
  fidelity is jointly concave, we have
  \begin{align}
    F`\big(\widehat{\mathcal{N}\circ\mathcal{E}}(\sigma_{LR}),
    \zeta_{C'E}\otimes\sigma_R)
    &= F`*( \sum\nolimits_\alpha q_\alpha \proj{\alpha}_{C'}
    \otimes (B_R \rho^\alpha_{ER} B_R^{\dagger}),
    \sum\nolimits_\alpha q_\alpha \proj{\alpha}_{C'}\otimes
    \zeta_\alpha\otimes \sigma_R )
    \nonumber\\
    &\geqslant
      \sum_\alpha\, q_\alpha\,
      F`\big( B_R \rho^\alpha_{BR} B_R^\dagger, \zeta_\alpha\otimes\sigma_R)\ .
      \label{eq:prop-new-criterion-cert-aqecc-2-calc-1}
  \end{align}
  At this point, we define for each $\alpha$ the positive semidefinite
  operator
  \begin{align}
    \tilde{\rho}_{ER}^\alpha
    &= \sum_{x} \rho_\alpha^{x,x}\otimes\proj{x}_R\ .
  \end{align}
  Note that $B_R \tilde\rho_{ER}^\alpha B_R^\dagger$ is a quantum state, because
  $\tr(B_R \tilde\rho_{ER}^\alpha B_R^\dagger) = \sum_x \tr(B_R \proj{x}
  B_R^\dagger) = \tr(B_R B_R^\dagger) = 1$.  In fact, the quantum states
  $B_R\tilde\rho_{ER}^\alpha B_R^\dagger$ and $B_R \rho_{ER}^\alpha B_R^\dagger$
  are close in trace distance:
  \begin{align}
    \norm[\big]{ B_R \, (\rho^\alpha_{ER} - \tilde\rho^\alpha_{ER} )
    \, B_R^\dagger }_1
    &= \norm*{ B_R \,
    `*( \sum\nolimits_{x\neq x'} \rho_\alpha^{x,x'}\otimes\ketbra{x}{x'} )
    \, B_R^\dagger }_1
    \nonumber\\
    &= \norm*{
      \sum\nolimits_{x\neq x'} \rho_\alpha^{x,x'}\otimes
      (B_R \,\ketbra{x}{x'} \, B_R^\dagger) }_1
    \nonumber\\
    &\leqslant \sum\nolimits_{x\neq x'}
      \norm[\big]{\rho_\alpha^{x,x'}}_1\cdot
      \norm{B_R \,\ketbra{x}{x'} \, B_R^\dagger}_1
    \nonumber\\
    & \leqslant \sum\nolimits_{x\neq x'}
      \norm[\big]{\rho_\alpha^{x,x'}}_1  \leqslant d_L^2 \, \nu\ ,
  \end{align}
  using our
  assumption~\eqref{eq:new-criterion-cert-2-assumption-rho-x-xp-small}, and
  noting that
  $\norm{B_R \,\ketbra{x}{x'} \, B_R^\dagger}_1 \leqslant \norm{B_R\ket{x}}_1 \,
  \norm{\bra{x'}B_R^\dagger}_1 = \norm{\bra{x}B_R^\dagger}_1 \,
  \norm{\bra{x'}B_R^\dagger}_1 \leqslant 1$ because
  $\tr\sqrt{\dmatrixel{x}{B_R^\dagger B_R}} \leqslant 1$.  Recalling the
  relation
  $P(\cdot,\cdot) \leqslant \sqrt{2\delta(\cdot,\cdot)} =
  \sqrt{\norm{(\cdot)-(\cdot)}_1}$ between the purified distance
  $P(\cdot,\cdot) = \sqrt{1 - F^2(\cdot,\cdot)}$ and the trace distance, we have
  \begin{align}
    P( B_R \, \rho^\alpha_{ER} \, B_R^\dagger ,
     B_R \, \tilde\rho^\alpha_{ER} \, B_R^\dagger ) \leqslant  d_L \sqrt{\nu}\ .
  \end{align}
  On the other hand, using again the joint concavity of the fidelity, we have
  \begin{align}
    F`\big( B_R\, \tilde\rho^\alpha_{ER} \, B_R^\dagger ,
    \zeta_\alpha\otimes\sigma_R )
    &= F`*( \sum\nolimits_x \rho^{x,x}_\alpha \otimes (B_R\, \proj{x} \, B_R^\dagger) ,
      \sum\nolimits_x \zeta_\alpha\otimes (B_R\,\proj{x}\, B_R^\dagger) )
      \nonumber\\
    & \geqslant
      \sum_x  \dmatrixel{x}{B^\dagger B}_R \, 
      F`*( \rho^{x,x}_\alpha \otimes
      \frac{B_R\, \proj{x} \, B_R^\dagger}{\dmatrixel{x}{B^\dagger B}_R} ,
      \zeta_\alpha\otimes
      \frac{B_R\, \proj{x} \, B_R^\dagger}{\dmatrixel{x}{B^\dagger B}_R}
      )
      \nonumber\\
    & =
      \sum_x  \dmatrixel{x}{B^\dagger B}_R \, 
      F`*( \rho^{x,x}_\alpha , \zeta_\alpha )
      \nonumber\\
    & \geqslant
      \sum_x  \dmatrixel{x}{B^\dagger B}_R \, 
      \sqrt{1 - \epsilon^{\prime 2}}
      \nonumber\\
    & \geqslant \sqrt{1 - \epsilon^{\prime 2}}\ ,
  \end{align}
  recalling our
  assumption~\eqref{eq:new-criterion-cert-2-assumption-rho-x-x-purif-close} and
  using the fact that $\tr(B^\dagger B) = \tr(BB^\dagger) = 1$; hence
  \begin{align}
    P`\big( B_R\, \tilde\rho^\alpha_{ER} \, B_R^\dagger ,
    \zeta_\alpha\otimes\sigma_R ) \leqslant \epsilon' \ .
  \end{align}
  By triangle inequality for the purified distance, we have
  \begin{align}
    P`\big( B_R \rho^\alpha_{BR} B_R^\dagger, \zeta_\alpha\otimes\sigma_R)
    &\leqslant
      P( B_R \, \rho^\alpha_{ER} \, B_R^\dagger ,
      B_R \, \tilde\rho^\alpha_{ER} \, B_R^\dagger )
      +
      P`\big( B_R\, \tilde\rho^\alpha_{ER} \, B_R^\dagger ,
      \zeta_\alpha\otimes\sigma_R )
      \nonumber\\
    &\leqslant d_L\sqrt{\nu}  + \epsilon' \ .
  \end{align}
  Returning to~\eqref{eq:prop-new-criterion-cert-aqecc-2-calc-1}, we now have
  $F`\big( B_R \rho^\alpha_{BR} B_R^\dagger, \zeta_\alpha\otimes\sigma_R)
  \geqslant \sqrt{1 - (d_L\sqrt{\nu} + \epsilon')^2}$ and hence
  \begin{align}
    F`\big(\widehat{\mathcal{N}\circ\mathcal{E}}(\sigma_{LR}),
    \zeta_{C'E}\otimes\sigma_R)
    \geqslant \sqrt{1 - (d_L\sqrt{\nu} + \epsilon')^2}\ .
  \end{align}
  As this holds for any $\ket{\sigma}_{LR}$, we deduce that
  \begin{align}
    f(\mathcal{N}\circ\mathcal{E}) \geqslant \sqrt{1 - (d_L\sqrt{\nu} + \epsilon')^2}\ ,
  \end{align}
  which implies
  \begin{align}
    \epsilon(\mathcal{N}\circ\mathcal{E})
    \leqslant d_L\sqrt{\nu} + \epsilon'\ .
    \tag*\qedhere
  \end{align}
\end{proof}


\section{Calculations for covariant code examples}
\label{appx:AppendixCalcCodes}
\subsection{Three-rotor secret-sharing code}
\paragraph{Sharp cutoff.}
We complete the exposition in the main text in
\cref{sec:maintext-truncated-Hayden-code} by calculating the approximation
parameter $\epsilon_{\mathrm{worst}}(\mathcal{N}\circ\mathcal{E}^{(m)})$ of the
constructed code.

The strategy is to apply \cref{prop:new-criterion-cert-aqecc-2}.  First write the operators~\eqref{eq:criterion-cert-aqecc-rho-xxp} in our situation,
\begin{align}
  \rho_i^{x,x'} = \tr_{A\setminus A_i}( V \ketbra{x}{x'} V^\dagger )\ .
\end{align}
We need to show that $\rho_i^{x,x}$ is approximately constant of $x$ and that
$\rho_i^{x,x'}$ is very small for $x\neq x'$.  The latter condition turns out to
be simple: for any $i$ and for any $x\neq x'$, we will see that
$\rho_i^{x,x'}=0$; hence we may take $\nu=0$ in
\cref{prop:new-criterion-cert-aqecc-2}.

For each $i$, we would like to show that there exists a state $\zeta_i$ such
that $\rho_i^{x,x}$ is close to $\zeta_i$ in fidelity distance for each $x$.  We
choose to work with the trace distance instead, and deduce that the states are
close in fidelity using the relation
$F(\cdot,\cdot) \geqslant \sqrt{1 - 2\delta(\cdot,\cdot)}$ between the fidelity
and the trace distance.  We bound the trace distance as follows.  For each $i$,
we find a positive semidefinite operator $\tau_i$ with the property tht
$\rho_i^{x,x}\geqslant \tau_i$ for all $x$.  This implies that
$\rho_i^{x,x} = \tau_i + \Delta_i^x$ for some positive semidefinite operators
$\Delta_i^x$ with $\tr(\Delta_i^x) = 1 - \tr(\tau_i)$.  Define
$\zeta_i = \tau_i + \xi_i$, for any freely chosen $\xi_i\geqslant 0$ with
$\tr(\xi_i) = 1 - \tr(\tau_i)$. Then, we have
$\rho_i^{x,x} - \zeta_i = \Delta_i^{x} - \xi_i$, and
$\delta(\rho_i^{x,x}, \zeta_i) = (1/2) \norm{\rho_i^{x,x} - \zeta_i}_1 \leqslant
(1/2) (\tr(\Delta_i^x) + \tr(\xi_i)) = 1 - \tr(\tau_i)$.  To summarize: If we
find, for each $i$, an operator $\tau_i\geqslant 0$ with
$\rho_i^{x,x}\geqslant \tau_i$ for all $x$, then we can deduce that there are
states $\zeta_i$ such that
\begin{align}
  F(\rho_i^{x,x}, \zeta_i) \geqslant \sqrt{1 - {\epsilon'}^2}\ ,
\end{align}
where $\epsilon' = \min_i \sqrt{2(1 - \tr(\tau_i))}$.

We may calculate the corresponding operators $\rho_i^{x,x'}$, starting with
$i=1$:
\begin{align}
  \rho_{1}^{x,x'}
  &= \tr_{A \setminus A_1}(V\ketbra{x}{x'}V^\dagger)
    \nonumber\\
  &= \frac1{2m+1} \sum_{y,y'=-m}^{m}
    \ketbra{-3y}{-3y'}\,\delta_{y-x,y'-x'}\,\delta_{2(x+y),2(x'+y')}
    \nonumber\\
  &= \frac{\delta_{x,x'}}{2m+1} \sum_{y=-m}^{m} \proj{-3y}_{A_1}\ ,
\end{align}
since the two Kronecker deltas force $x'=x$ and $y'=y$.
Similarly, we have
\begin{align}
  \rho_2^{x,x'}
  &= \frac{\delta_{x,x'}}{2m+1} \sum_{y=-m}^{m} \proj{y-x}_{A_2}\\
  \rho_3^{x,x'}
  &=  \frac{\delta_{x,x'}}{2m+1} \sum_{y=-m}^{m} \proj{2(x+y)}_{A_3}
  \ .
\end{align}

First of all, for each of $i=1,2,3$ we have that $\rho_i^{x,x'} = 0$ if $x\neq x'$.
Then, we have that $\rho_1^{x,x}$ is already independent of $x$, so we may
choose $\tau_1 = \rho_1^{1,1} = \rho_1^{x,x}\ \forall\;x$.
Next, $\rho_2^{x,x}$ is diagonal, with constant diagonal elements $1/(2m+1)$ at
states ${-m-x},{-m-x+1}, \ldots, {m-x}$.  We may thus choose
\begin{align}
  \tau_2 = \frac1{2m+1} \sum_{u=-m+h}^{m-h} \proj{u}\ ,
\end{align}
such that $\rho_2^{x,x} \geqslant \tau_2$ for all $x$ (\cref{fig:trunc-H-code-tau-2}).
\begin{figure}
  \centering
  \includegraphics{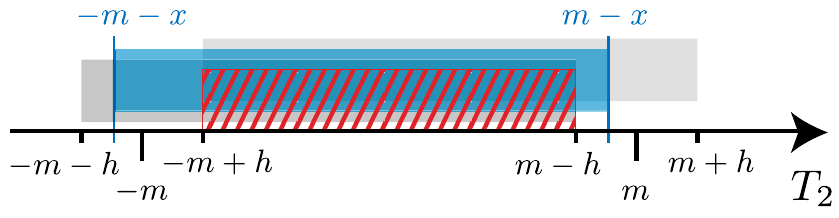}
  \caption{Finding the ``common minimal operator'' for the different
    $\rho_2^{x,x}$'s.  The solid rectangles illustrate the spectra of the
    different $\rho_2^{x,x}$.  The eigenvalues are all equal and the rectangles
    are displaced vertically for readability. The hatched region corresponds to
    a good choice for $\tau_2$.}
  \label{fig:trunc-H-code-tau-2}
\end{figure}
Finally, $\rho_3^{x,x}$ is also diagonal with elements $1/(2m+1)$ at states
$-2m+2x, -2m+2x+2, \ldots, 2m+2x$.  Similarly we may choose
\begin{align}
  \tau_3 = \frac1{2m+1} \sum_{u=-m+h}^{m-h} \proj{2u}\ ,
\end{align}
which guarantees that $\rho_3^{x,x} \geqslant \tau_3$ for each $x$.  We have
\begin{equation}
  \begin{aligned}
    \tr(\tau_1) &= 1\ ;
    \\
    \tr(\tau_2)
    &= \frac{2(m-h)+1}{2m+1} = 1 - \frac{2h}{2m+1}\ ;
    \\
    \tr(\tau_3) &= 1 - \frac{2h}{2m+1}\ ,
  \end{aligned}
\end{equation}
so we may set according to the above $\epsilon' = \sqrt{4h/(2m+1)}$.  According
to \cref{prop:new-criterion-cert-aqecc-2}, the code $V^{(m)}_{L\to A}$ is an
approximate quantum error-correcting code with
\begin{align}
  \epsilon_{\mathrm{worst}}(\mathcal{N}\circ\mathcal{E}^{(m)})
  \leqslant \epsilon'\ .
\end{align}
We have $1/(2m+1)\approx 1/2m$, and to first order in $h/m$, we have
\begin{align}
  \epsilon(\mathcal{N}\circ\mathcal{E}^{(m)})
  \lesssim
  \sqrt2 \sqrt{\frac{h}{m}}\ .
\end{align}
So our codes become good in the limit $h/m\to 0$.  

To compare with our bound~\eqref{eq:simple-bound}, we choose
$q_1=q_2=q_3=1/3$ and note that $\delta=0$, $\eta=0$, and $\Delta T_L = 2h$.
Also, we have
\begin{align}
  \Delta T_{1} &= 2\cdot 3m\ ;
  & \Delta T_{2} &= 2(m+h) \ ;
  & \Delta T_{3} &= 4(m+h) \ ;
\end{align}
so, for $m\gg h$, we have
$\max_i q_i^{-1} \Delta T_i \approx 18 m$.  Our bound then reads
\begin{align}
  \epsilon(\mathcal{N}\circ\mathcal{E})
  \geqslant \frac12 \frac{\Delta T_L}{\max_i q_i^{-1} \Delta T_i}
  \approx \frac12 \frac{2h}{18m}  = \frac1{18}\,\frac{h}{m}\ .
\end{align}


\paragraph{Smooth cutoff.}
Again, we make use of
\cref{prop:new-criterion-cert-aqecc-2}.
First, we compute the normalization factor as
\begin{align}
  c_w = \sum_{y=-\infty}^{\infty} \ee^{-\frac{y^2}{2 w^2}}
  = \sum_{y=-\infty}^{\infty} `*(\ee^{-\frac{1}{2 w^2}})^{(y^2)}
  = \vartheta_3`\Big(0,\ee^{-\frac{1}{2w^2}})\ ,
\end{align}
where $\vartheta_3(z,q)$ is Jacobi's theta function.\footnote{See DLMF:
  \url{http://dlmf.nist.gov/20}.  Our notation follows DLMF's notation.}  A
straightforward observation is that $c_w\geqslant 1$ (the term $y=0$ in the sum
is already equal to one).

We need to determine the operators $\rho_{1,2,3}^{x,x'}$.  We have
\begin{align}
  \rho_{1}^{x,x'}
  &= \tr_{A \setminus A_1}(V\ketbra{x}{x'}V^\dagger)
     \nonumber\\
  &=  c_w^{-1} \sum_{y,y'=-\infty}^{\infty}
    \ee^{-\frac{y^2}{4 w^2} - \frac{y^{\prime 2}}{4 w^2}}
    \ketbra{-3y}{-3y'}\,\delta_{y-x,y'-x'}\,\delta_{2(x+y),2(x'+y')}
    \nonumber\\
  &= \frac{\delta_{x,x'}}{c_w} \sum_{y=-\infty}^{\infty}
    \ee^{-\frac{y^2}{2 w^2}} \proj{-3y}_{A_1}\ .
\end{align}
Similarly, for the second and third systems,
\begin{align}
  \rho_2^{x,x'}
  &= \frac{\delta_{x,x'}}{c_w} \sum_{y=-\infty}^{\infty}
    \ee^{-\frac{(y+x)^2}{2 w^2}} \proj{y}_{A_2} \\
  \rho_3^{x,x'}
  &= \frac{\delta_{x,x'}}{c_w} \sum_{y=-\infty}^{\infty}
    \ee^{-\frac{(y-x)^2}{2 w^2}} \proj{2y}_{A_3} \ .
\end{align}

Hence, we have
$\norm{\rho_1^{x,x'}}_1 = \norm{\rho_2^{x,x'}}_1 = \norm{\rho_3^{x,x'}}_1 = 0$
for all $x\neq x'$, so the
conditions~\eqref{eq:new-criterion-cert-2-assumption-rho-x-xp-small} are
satisfied with $\nu=0$.

Now we need to verify the
conditions~\eqref{eq:new-criterion-cert-2-assumption-rho-x-x-purif-close}.  For
the first system, $\rho_1^{x,x}$ doesn't depend on $x$, so choosing
$\zeta_1 = \rho_1^{0,0}$ we have $P(\rho_1^{x,x}, \zeta_{1}) = 0$ for all $x$.
For the second system, we choose $\zeta_2 = \rho_2^{0,0}$ and calculate
\begin{align}
  F`\big(\rho_2^{x,x}, \zeta_2)
  &= \sum_{y=-\infty}^{\infty} \sqrt{\frac1{c_w}
    \ee^{-\frac{(y+x)^2}{2w^2}}} \sqrt{\frac1{c_w}{\ee^{-\frac{y^2}{2w^2}}}}
  = \frac1{c_w}\sum_{y=-\infty}^{\infty}
    \ee^{-\frac{(y+x)^2 + y^2}{4w^2}}
    \geqslant \ee^{-\frac{h^2}{8w^2}}\ ,
    \label{eq:Hayden-code-Gaussian-envelope-calc-1}
\end{align}
where the calculation of the last inequality is carried out below in
\cref{lemma:Hayden-code-Gaussian-envelope-calc-fid-expr}.  Hence
\begin{align}
  P`\big(\rho_2^{x,x}, \zeta_2) \leqslant
  \sqrt{1 - \ee^{-\frac{h^2}{4w^2}}}
  =   \frac{h}{2w}\sqrt{1 + O`\Big(`\Big(\frac{h}{w})^2)}
  = \frac{h}{2w} + O`\Big(`\Big(\frac{h}{w})^3)\ .
\end{align}
Now, we look at the third system.  Defining $\zeta_3 = \rho_3^{0,0}$, we have
\begin{align}
  F`\big(\rho_3^{x,x}, \zeta_3) = \sum_{y=-\infty}^{\infty} \sqrt{\frac1{c_w}
    \ee^{-\frac{(y-x)^2}{2w^2}}} \sqrt{\frac1{c_w}{\ee^{-\frac{y^2}{2w^2}}}}
  = \frac1{c_w}\sum_{y=-\infty}^{\infty}
    \ee^{-\frac{(y-x)^2 + y^2}{4w^2}}
    \geqslant \ee^{-\frac{h^2}{8w^2}}\ ,
\end{align}
invoking again the calculation in
\cref{lemma:Hayden-code-Gaussian-envelope-calc-fid-expr}.  Hence
\begin{align}
  P`\big(\rho_3^{x,x}, \zeta_3) \leqslant
  \sqrt{1 - \ee^{-\frac{h^2}{4w^2}}}
  = \frac{h}{2w} + O`\Big(`\Big(\frac{h}{w})^3)\ .
\end{align}

We are now in position to apply our criterion.
\cref{prop:new-criterion-cert-aqecc-2} tells us that
\begin{align}
  \epsilon(\mathcal{N}\circ\mathcal{E}^{(w)})
  \leqslant \sqrt{1 - \ee^{-\frac{h^2}{4w^2}}}
  = \frac{h}{2w} + O`\Big(`\Big(\frac{h}{w})^3)\ .
\end{align}

Hence, our code's performance scales as $(1/2)(h/w)$.  For instance, it performs
well in the limit $h/w \to 0$, for instance in the limit $w\to \infty$ with a
constant $h$.

Let's now see how our bound applies to our code (we need the more general bound,
because we are dealing with infinite-dimensional systems with an unbounded
charge observable).  We need to cut off tails of the codeword states on the
physical systems to make the range of charge values finite.
Choose cut-offs $W_1,W_2,W_3\geqslant 0$ for each physical system.  We would
like to compute an upper bound to
$\sum \psi_x \psi_{x'}^* \tr(\Pi_i^\perp\, \rho_i^{x,x'}) = \sum \abs{\psi_x}^2
\tr(\Pi_i^\perp\, \rho_i^{x,x})$, where $\Pi_i^\perp$ projects outside of the
cut-off region.  We have
\begin{align}
  \tr(\Pi_1^\perp\, \rho_1^{x,x}) =
  c_w^{-1} \sum_{\abs{3y} > W_1} \ee^{-\frac{y^2}{2w^2}}
  \leqslant
  c_w^{-1} \sum_{\abs{y} > \lfloor W_1/3\rfloor } \ee^{-\frac{y^2}{2w^2}}
  \leqslant \frac{2}{c_w}\,\frac{w^2}{\lfloor W_1/3\rfloor}\,
  \ee^{-\frac{(\lfloor W_1/3\rfloor)^2}{2w^2}}\ ,
\end{align}
where the bound is calculated in
\cref{lemma:Hayden-code-Gaussian-envelope-calc-tail-weight} below.  Then,
\begin{equation}
  \tr(\Pi_2^\perp\, \rho_2^{x,x}) =
  c_w^{-1} \sum_{\abs{y} > W_2} \ee^{-\frac{(y+x)^2}{2w^2}}
  \leqslant \frac{2}{c_w}\,\frac{w^2}{W_2-\abs{x}}\,
  \ee^{-\frac{(W_2-\abs{x})^2}{2w^2}}  
  \leqslant \frac{2}{c_w}\,\frac{w^2}{W_2-h}\,
  \ee^{-\frac{(W_2-h)^2}{2w^2}}\ .
\end{equation}
Similarly,
\begin{align}
  \tr(\Pi_3^\perp\, \rho_3^{x,x})
  &=
  c_w^{-1} \sum_{\abs{2y} > W_3} \ee^{-\frac{(y-x)^2}{2w^2}} \leqslant \frac{2}{c_w}\,\frac{w^2}{\lfloor W_3/2\rfloor - h}\,
    \ee^{-\frac{(\lfloor W_3/2\rfloor - h)^2}{2w^2}}\ .
\end{align}

Hence, choosing $W_1 = W_2 = W_3 =: W$ with $W\geqslant 2h$ and choosing for
simplicity $W$ as a multiple of $6$, we have
$\lfloor W_1/3\rfloor = W/3 \geqslant (1/3)(W - 2h)$, as well as
$W_2 - h\geqslant W - 2h$ and also $\lfloor W_3/2\rfloor - h = (1/2)(W - 2h)$;
furthermore $W - 2h \geqslant (1/2)(W-2h) \geqslant (1/3)(W - 2h)$.  Then,
\begin{align}
  \abs*{ \sum\nolimits_i \tr(\Pi_i^\perp \rho_i^{x,x}) }
  &\leqslant
    \frac{2}{c_w}\,\frac{w^2}{(1/3)(W - 2h)}\,
    \ee^{-\frac{(\frac13(W - 2h))^2}{2w^2}}
    +
    \frac{2}{c_w}\,\frac{w^2}{W - 2h}\,
    \ee^{-\frac{(W - 2h)^2}{2w^2}}
    \nonumber\\
  &\quad\ 
    +
    \frac{2}{c_w}\,\frac{w^2}{(1/2)(W - 2h)}\,
    \ee^{-\frac{(\frac12(W - 2h))^2}{2w^2}}
    \nonumber\\
  &\leqslant \frac{12}{c_w}\,\frac{w^2}{W-2h}\,\ee^{-\frac{(W-2h)^2}{18w^2}}
    \leqslant \frac{12\,w^2}{W-2h}\,\ee^{-\frac{(W-2h)^2}{18w^2}}
    =: \eta\ ,
\end{align}
recalling that $c_w\geqslant 1$.  Also, $\Delta T_i = 2W_i = 2W$ by
construction.  Furthermore $\Delta T_L = 2h$ and $\delta=0$.  So, our bound
reads (assuming that the noise erasure probabilities are $q_1=q_2=q_3=1/3$)
\begin{align}
  \epsilon(\mathcal{N}\circ\mathcal{E})
  &\geqslant \frac12\,\frac{1}{(\max_i q_i^{-1}) \cdot 2W}\, `*[2h - 2\eta]
    \nonumber 
   = \frac12\,\frac{1}{6W}\, `*[2h -
    \frac{24\,w^2}{W-2h}\,\ee^{-\frac{(W-2h)^2}{18w^2}}]
    \nonumber\\
  &\approx \frac12\,\frac{1}{6W}\, `*[2h -
    \frac{24\,w^2}{W}\,\ee^{-\frac{W^2}{18w^2}}]
   = \frac{h}{6W} -
    \frac{4\,w^2}{W^2}\,\ee^{-\frac{W^2}{18w^2}}\ .
\end{align}
considering the regime $W \gg h$, i.e., $W-2h\approx W$.  Now, if we choose the
cutoff $W = \beta w$ to be proportional to $w$, then we can write our bound as a
function of $h/w$:
\begin{align}
  \epsilon(\mathcal{N}\circ\mathcal{E})
  &\gtrsim
    \frac{1}{6\beta} \frac{h}{w} -
    \frac{4\,\ee^{-\beta^2/8}}{\beta^2}\ .
    \label{eq:bound-for-smooth-3rotor-code-with-beta}
\end{align}
The second term is exponentially suppressed in $\beta$; so choosing $\beta$ only
very moderately large, we get a bound which is effectively proportional to $h/w$
with a proportionality constant $1/(6\beta)$.

Now we find a suitable $\beta$ to plug
into~\eqref{eq:bound-for-smooth-3rotor-code-with-beta} to get a bound in terms
of $h/w$ only.  If we attempt to minimize the
bound~\eqref{eq:bound-for-smooth-3rotor-code-with-beta}, we get as minimization
condition
\begin{align}
  0 = \frac\partial{\partial\beta}`\big(\text{bound})
  = -\frac1{6\beta^2}\,\frac{h}{w} + \frac{8\,\ee^{-\beta^2/8}}{\beta^3}
  + \frac{\ee^{-\beta^2/8}}{\beta}
  =
  \frac1{\beta^2}`*[ -\frac{h}{6w} + \frac{8 + \beta^2}{\beta}\,\ee^{-\beta^2/8} ]\ .
\end{align}
Writing $z=\beta^2/4$ (i.e., $\beta = 2\sqrt{z}$) we obtain
$h/(6w) = \ee^{-z/2}\,(4+2z)/(\sqrt{z})$; the square of this equation gives
\begin{align}
  \frac{h^2}{36 w^2} = `*[ 4z + 16 + \frac{16}{z}] \ee^{-z}\ .
\end{align}
To render this equation tractable, and since we only have to come up with an
approximate educated guess for $\beta$, we may simplify this equation by keeping
the leading term, expecting that $z$ should be moderately large, yielding
\begin{align}
  `*(\frac{h}{12 w})^2 \approx z \ee^{-z}\ .
\end{align}
The solution to the equation $x^2 = z \ee^{-z}$ is given by the Lambert W
function\footnote{\url{https://dlmf.nist.gov/4.13}} with $z = -W`(-x^2)$.  Using
the expansion of the negative branch $W_{\mathrm{m}}$ of the function near
$z\to-\infty$, we have\footnote{\url{https://dlmf.nist.gov/4.13.E11}}
$-W_{\mathrm{m}}(-x^2) \approx \ln(1/x^2)$, and hence we may select
$z \approx \ln`\big((12w/h)^2) = 2\ln(12w/h)$.  This in turn yields the educated
guess $\beta = 2\sqrt{2\ln(12w/h)}$ to plug
into~\eqref{eq:bound-for-smooth-3rotor-code-with-beta}, and the bound becomes
\begin{align}
  \epsilon(\mathcal{N}\circ\mathcal{E})
  &\gtrsim
    \frac{h}{12w}`*[ \frac1{\sqrt{2\ln(12w/h)}} - \frac1{2\ln(12w/h)} ]
    \approx \frac{h/w}{12\sqrt{2\ln(w/h)}}\ ,
\end{align}
using $\sqrt{\ln(12w/h)} = \sqrt{\ln(w/h) + \ln(12)} \approx \sqrt{\ln(w/h)}$.

\begin{lemma}
  \label{lemma:Hayden-code-Gaussian-envelope-calc-fid-expr}
  We have for integer $x,h$, with $\abs{x}\leqslant h$ and with $h$ even,
  \begin{align}
    \frac1{c_w}\sum_{y=-\infty}^{\infty}
    \ee^{-\frac{(y \pm x)^2 + y^2}{4w^2}} \geqslant \ee^{-\frac{h^2}{8w^2}}\ .
  \end{align}
\end{lemma}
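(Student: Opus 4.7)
The plan is to reduce the sum to a single Gaussian-shifted theta-type sum by completing the square in the exponent, then handle the parity of $x$ separately. Since the argument is symmetric under $y\mapsto -y$, the $\pm$ cases coincide, so I will focus on $(y+x)^2+y^2$. Expanding, $(y+x)^2+y^2 = 2(y+x/2)^2 + x^2/2$, which gives
\begin{align}
\frac1{c_w}\sum_{y\in\mathbb{Z}} \ee^{-\frac{(y+x)^2+y^2}{4w^2}}
= \ee^{-\frac{x^2}{8w^2}} \cdot \frac{S(x/2)}{c_w},
\qquad S(a) := \sum_{y\in\mathbb{Z}} \ee^{-\frac{(y+a)^2}{2w^2}}.
\end{align}
Comparing to the desired right-hand side $\ee^{-h^2/(8w^2)}$, it suffices to show $S(x/2)/c_w \geq \ee^{-(h^2-x^2)/(8w^2)}$.

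If $x$ is even, then $x/2\in\mathbb{Z}$, so reindexing $y\mapsto y-x/2$ gives $S(x/2)=c_w$ exactly. The remaining inequality $\ee^{-x^2/(8w^2)} \geq \ee^{-h^2/(8w^2)}$ then follows immediately from $\abs{x}\leq h$.

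If $x$ is odd, then $x/2$ is a half-integer, and by the same reindexing $S(x/2)=S(1/2)$. The key estimate here is a Cauchy--Schwarz step on $c_w$ itself: using the identity $y^2 = \tfrac12[(y+\tfrac12)^2 + (y-\tfrac12)^2] - \tfrac14$, I write
\begin{align}
c_w = \ee^{\frac{1}{8w^2}}\sum_{y\in\mathbb{Z}} \ee^{-\frac{(y+1/2)^2}{4w^2}}\,\ee^{-\frac{(y-1/2)^2}{4w^2}}
\leq \ee^{\frac{1}{8w^2}}\sqrt{S(1/2)\, S(-1/2)} = \ee^{\frac{1}{8w^2}}\,S(1/2),
\end{align}
where the last equality uses the symmetry $S(1/2)=S(-1/2)$ (substitute $y\to -y-1$). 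Thus $S(1/2)/c_w \geq \ee^{-1/(8w^2)}$. Finally, the hypothesis that $h$ is even and $x$ is odd forces $\abs{x}\leq h-1$, whence $h^2-x^2 \geq 2h-1 \geq 1$ (noting $h\geq 2$, since $h=0$ would force $x=0$, which is even), and therefore $\ee^{-1/(8w^2)} \geq \ee^{-(h^2-x^2)/(8w^2)}$, completing the bound.

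The only nontrivial ingredient is the Cauchy--Schwarz step producing $S(1/2)\geq c_w\ee^{-1/(8w^2)}$; everything else is bookkeeping with the quadratic expansion and the parity constraint. The parity hypothesis on $h$ is genuinely needed: without it, the extremal case $\abs{x}=h$ with $x$ odd would yield $h^2-x^2=0<1$ and the Cauchy--Schwarz bound would fall short by exactly $\ee^{-1/(8w^2)}$, showing that this approach is essentially tight.
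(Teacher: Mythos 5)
Your proof is correct, and while it follows the paper's overall architecture (complete the square, split on the parity of $x$, handle the even case by reindexing), you replace the paper's key ingredient for the odd case with a genuinely different and more elementary argument. The paper establishes $S(1/2)\geqslant \ee^{-1/(8w^2)}\,c_w$ — in theta-function language, $\vartheta_2(0,q)\geqslant q^{1/4}\vartheta_3(0,q)$ with $q=\ee^{-1/(2w^2)}$ — as a corollary of a stronger statement (\cref{lemma:theta-3-incr-pure-imag}): that $\vartheta_3(z,q)\geqslant\vartheta_3(0,q)$ for all pure imaginary $z$ with $\Im(z)\geqslant 0$, proved via the Jacobi triple product and a term-by-term estimate using hyperbolic cosine identities, then specialized using the theta-function shift identity $\vartheta_2(0,q)=q^{1/4}\vartheta_3(\pi\tau/2,q)$. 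Your Cauchy--Schwarz step, which splits $y^2 = \tfrac12[(y+\tfrac12)^2+(y-\tfrac12)^2]-\tfrac14$ and uses the symmetry $S(1/2)=S(-1/2)$, proves exactly the inequality needed with no special-function machinery at all. The paper's route buys a more general fact about theta functions (monotonicity along the imaginary axis), which could conceivably be reused elsewhere; yours is shorter, self-contained, and arguably more transparent for this application. Your closing remark that $h$ even is genuinely needed — because otherwise $|x|=h$ odd would make $h^2-x^2=0$ and the bound would miss by $\ee^{-1/(8w^2)}$ — matches the role that parity plays in the paper's final chain of inequalities ($x^2+1\leqslant(|x|+1)^2\leqslant h^2$), and is a nice sanity check on the tightness of the estimate.
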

\begin{proof}[*lemma:Hayden-code-Gaussian-envelope-calc-fid-expr]
  First, we may assume without loss of generality that we have the ``$+$'' case
  in the exponent (or else simply send $x\to -x$).
  Completing the square, we have
  $(y+x)^2 + y^2 = 2y^2 + 2xy + x^2 = 2(y+x/2)^2 + x^2/2$, and hence
  \begin{align}
    \frac1{c_w}\sum_{y=-\infty}^{\infty}
    \ee^{-\frac{(y+x)^2 + y^2}{4w^2}}
    &= \frac1{c_w}\sum_{y=-\infty}^{\infty}
      \ee^{-\frac{(y+x/2)^2}{2w^2} - \frac{x^2}{8w^2}} = \frac{\ee^{-\frac{x^2}{8w^2}}}{c_w}\sum_{y=-\infty}^{\infty}
      \ee^{-\frac{(y+x/2)^2}{2w^2}}\ .
      \label{eq:Hayden-code-Gaussian-envelope-calc-2}
  \end{align}
  At this point we need to distinguish the case where $x$ is even from the case
  where $x$ is odd.  Assuming first that $x$ is even, we may redefine $y\to y+x/2$
  in the summation and we have
  \begin{align}
    \text{\eqref{eq:Hayden-code-Gaussian-envelope-calc-2} [$x$ even]}
    &= \frac{\ee^{-\frac{x^2}{8w^2}}}{c_w}\sum_{y=-\infty}^{\infty}
      \ee^{-\frac{y^2}{2w^2}} = \frac{\ee^{-\frac{x^2}{8w^2}}}{c_w} \cdot c_w
      =  \ee^{-\frac{x^2}{8w^2}} \geqslant \ee^{-\frac{h^2}{8w^2}} \ ,
  \end{align}
  recalling that $\abs{x}\leqslant h$.  In the case that $x$ is odd, we need to
  work a little bit more; we may redefine $y\to y+(x-1)/2$, and we have
  \begin{align}
    \text{\eqref{eq:Hayden-code-Gaussian-envelope-calc-2} [$x$ odd]}
    &= \frac{\ee^{-\frac{x^2}{8w^2}}}{c_w}\sum_{y=-\infty}^{\infty}
      \ee^{-\frac{(y+\frac12)^2}{2w^2}}
      = \frac{\ee^{-\frac{x^2}{8w^2}}}{c_w} \,
      \vartheta_2`\Big(0,\ee^{-\frac1{2w^2}})\ ,
      \label{eq:Hayden-code-Gaussian-envelope-calc-3}
  \end{align}
  using another theta function corresponding to this type of summation.
  \cref{lemma:theta-3-incr-pure-imag} shows that
  $\vartheta_2`\big(0,\ee^{-1/(2w^2)}) \geqslant \ee^{-1/(8w^2)}\,
  \vartheta_3`\big(0,\ee^{-1/(2w^2)})$, and so we have
  \begin{align}
    \text{\eqref{eq:Hayden-code-Gaussian-envelope-calc-3}}
    &\geqslant \ee^{-\frac{x^2+1}{8w^2}}
      \geqslant \ee^{-\frac{(\abs{x}+1)^2}{8w^2}}
      \geqslant \ee^{-\frac{h^2}{8w^2}}\ ,
  \end{align}
  where we have assumed that $h$ is even, and so $\abs{x}+1\leqslant h$.
\end{proof}

\begin{lemma}
  \label{lemma:Hayden-code-Gaussian-envelope-calc-tail-weight}
  We have, for $W\geqslant 0$,
  \begin{align}
    \sum_{\abs{y} > W} \ee^{-\frac{(y\pm x)^2}{2w^2}}
    \leqslant
    2\,\frac{w^2}{W-\abs{x}}\,\ee^{-\frac{(W-\abs{x})^2}{2w^2}}\ .
  \end{align}
\end{lemma}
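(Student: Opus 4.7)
The plan is to reduce this to a standard Gaussian tail estimate, handled separately on the two halves $y>W$ and $y<-W$, and then combine.

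First I would observe that the two sign choices $\pm x$ are related by $y\to -y$, so it suffices to treat one of them, say the ``$+$'' case. I then split the sum as
\begin{align*}
  \sum_{\abs y>W}\ee^{-(y+x)^2/(2w^2)}
  = \sum_{y>W}\ee^{-(y+x)^2/(2w^2)} + \sum_{y<-W}\ee^{-(y+x)^2/(2w^2)}\,.
\end{align*}
Changing variables $u=y+x$ in each sum, the two pieces become $\sum_{u>W+x}\ee^{-u^2/(2w^2)}$ and $\sum_{u<x-W}\ee^{-u^2/(2w^2)}$. By reflecting $u\to -u$ in the second sum and taking $|x|$ in place of $\pm x$, both tails can be written in the form $\sum_{u>a}\ee^{-u^2/(2w^2)}$ with $a\in\{W+|x|,\,W-|x|\}$. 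Since $W-|x|\le W+|x|$, the worst tail is at $a=W-|x|$, and the combined bound will carry a factor of $2$.

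Next I would bound a single tail $\sum_{u>a}\ee^{-u^2/(2w^2)}$ (sum over integers $u$) by an integral. The Gaussian $\ee^{-u^2/(2w^2)}$ is monotonically decreasing on $u\geq 0$, so for $a\geq 0$ the standard monotone-comparison argument gives
\begin{align*}
  \sum_{\substack{u\in\mathbb Z\\ u>a}}\ee^{-u^2/(2w^2)}
  \;\leq\; \int_{a}^{\infty}\ee^{-u^2/(2w^2)}\,du\,.
\end{align*}
(If the statement of the lemma is intended to be nonvacuous I can assume $W>|x|$, i.e., $a>0$; when $W\leq|x|$ the right-hand side of the lemma is either infinite or negative and there is nothing to prove.) Then I apply the elementary tail estimate
\begin{align*}
  \int_{a}^{\infty}\ee^{-u^2/(2w^2)}\,du
  \;\leq\; \frac{w^2}{a}\,\ee^{-a^2/(2w^2)}\qquad(a>0)\,,
\end{align*}
which is a one-line computation: write $\ee^{-u^2/(2w^2)}=-(w^2/u)\,\tfrac{d}{du}\ee^{-u^2/(2w^2)}$ times a correction, or more cleanly use $\int_a^\infty \ee^{-u^2/(2w^2)}du\leq \int_a^\infty (u/a)\,\ee^{-u^2/(2w^2)}du=(w^2/a)\ee^{-a^2/(2w^2)}$ since $u/a\geq 1$ on the integration range.

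Finally I would combine the two tails. The bound with $a=W+|x|$ is no larger than the one with $a=W-|x|$, so the total is at most
\begin{align*}
  2\cdot\frac{w^2}{W-|x|}\,\ee^{-(W-|x|)^2/(2w^2)}\,,
\end{align*}
which is precisely the claimed bound. No step looks delicate: the only care needed is the direction of the shift under $u=y+x$ (to identify $W-|x|$ as the smaller exponent) and the case assumption $W>|x|$; the Gaussian tail inequality itself is completely standard.
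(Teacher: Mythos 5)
Your proof is correct and matches the paper's approach: reduce to a one‑sided Gaussian tail by shifting the summation variable, note that of the two resulting cutoffs $W\pm|x|$ the smaller one $W-|x|$ dominates (giving the overall factor of $2$), compare the integer sum against $\int_a^\infty e^{-u^2/(2w^2)}\,du$, and close with a Gaussian tail estimate. The only cosmetic difference is that the paper collapses both tails to the $a=W-|x|$ one immediately and phrases the final estimate via the standard bound $\erfc(z)\leq e^{-z^2}/(z\sqrt\pi)$, whereas you derive the equivalent integral estimate directly by inserting $u/a\geq 1$ under the integrand.
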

\begin{proof}[*lemma:Hayden-code-Gaussian-envelope-calc-tail-weight]
  Assume $x\geqslant 0$, or else redefine $x\to -x$.  We have
  \begin{align}
    \sum_{\abs{y} > W} \ee^{-\frac{(y \pm x)^2}{2w^2}}
    \leqslant 2\cdot \sum_{y > W - x} \ee^{-\frac{y^2}{2w^2}}
    \leqslant 2\cdot \sum_{y \geqslant W - x + 1} \ee^{-\frac{y^2}{2w^2}}
    \leqslant 2\, \int_{W-x}^{\infty} dy \, \ee^{-\frac{y^2}{2w^2}}\ ,
    \label{eq:lemma-Hayden-code-Gaussian-envelope-calc-tail-weight-calc-1}
  \end{align}
  where the integral is necessarily an overestimation of the sum, as the sum can
  be seen as an integral of a step function, where each step is specified at the
  right edge by the value of the integrand function; this step function lies
  beneath the actual decreasing function $\ee^{-y^2/(2w^2)}$.  Setting
  $t=y/(w\sqrt2)$,
  \begin{align}
    \text{\eqref{eq:lemma-Hayden-code-Gaussian-envelope-calc-tail-weight-calc-1}}
    &= 2 \int_{\frac{W-x}{w\sqrt2}}^{\infty} dt\,w\sqrt{2}\,\ee^{-t^2}
                    = w\,\sqrt{2\pi} \,\frac{2}{\sqrt\pi}
      \int_{\frac{W-x}{w\sqrt2}}^{\infty} dt\,\ee^{-t^2}
                      = w\,\sqrt{2\pi} \, \erfc`*(\frac{W-x}{w\sqrt{2}})\ .
      \label{eq:lemma-Hayden-code-Gaussian-envelope-calc-tail-weight-calc-2}
  \end{align}
  We use the known bound\footnote{See for instance
    \url{http://dlmf.nist.gov/7.8.E4} or
    \url{http://mathworld.wolfram.com/Erfc.html}}
  \begin{align}
    \erfc`*(z) \leqslant \frac{\ee^{-z^2}}{z\sqrt{\pi}}\ ,
  \end{align}
  leading to
  \begin{align}
    \text{\eqref{eq:lemma-Hayden-code-Gaussian-envelope-calc-tail-weight-calc-2}}
    &\leqslant 2\,\frac{w^2}{W-x}\,\ee^{-\frac{(W-x)^2}{2w^2}}\ .
      \tag*\qedhere
  \end{align}
\end{proof}

Finally, we prove a property of the theta functions that we used above.

\begin{lemma}
  \label{lemma:theta-3-incr-pure-imag}
  Let $0< q \leqslant 1$, and let $z\in\mathbb{C}$ with $\Re(z)=0$ and
  $\Im(z)\geqslant 0$.  Then
  \begin{align}
    \vartheta_3(z,q) \geqslant \vartheta_3(0,q)\ .
    \label{eq:lemma-theta-3-incr-pure-imag-th33}
  \end{align}

  Furthermore, we have
  \begin{align}
    \vartheta_2(0,q) \geqslant q^{1/4}\,\vartheta_3(0,q)\ .
    \label{eq:lemma-theta-3-incr-pure-imag-th23}
  \end{align}
\end{lemma}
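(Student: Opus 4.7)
My plan is to treat the two inequalities separately by direct manipulation of the defining theta series, $\vartheta_3(z,q) = \sum_{n \in \mathbb{Z}} q^{n^2} \ee^{2inz}$ and $\vartheta_2(0,q) = \sum_{n \in \mathbb{Z}} q^{(n+1/2)^2}$. Throughout I take $0 < q < 1$ so that all series converge absolutely, making termwise pairings and rearrangements legitimate; the boundary value $q = 1$ is not needed for the code application, where $q = \ee^{-1/(2w^2)} < 1$ whenever $w$ is finite.

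For the inequality~\eqref{eq:lemma-theta-3-incr-pure-imag-th33}, I would substitute $z = it$ with $t \geqslant 0$ into the series to obtain $\vartheta_3(it, q) = \sum_{n \in \mathbb{Z}} q^{n^2}\, \ee^{-2nt}$, noting absolute convergence because $q^{n^2}$ decays super-exponentially in $\abs{n}$ while $\ee^{-2nt}$ grows only exponentially. Pairing the indices $n$ and $-n$ for each $n \geqslant 1$ collapses the exponentials into hyperbolic cosines,
\begin{equation*}
  \vartheta_3(it,q) \;=\; 1 + 2 \sum_{n \geqslant 1} q^{n^2}\, \cosh(2nt),
\end{equation*}
which is termwise at least $1 + 2 \sum_{n \geqslant 1} q^{n^2} = \vartheta_3(0,q)$ because $\cosh(2nt) \geqslant 1$ and every $q^{n^2}$ is positive.

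For the inequality~\eqref{eq:lemma-theta-3-incr-pure-imag-th23}, I first extract the factor $q^{1/4}$ via $q^{(n+1/2)^2} = q^{1/4}\, q^{n(n+1)}$, reducing the claim to $\sum_{n \in \mathbb{Z}} q^{n(n+1)} \geqslant \sum_{n \in \mathbb{Z}} q^{n^2}$. Using the symmetry $n(n+1) = (-n-1)(-n)$, the left side becomes $2 + 2q^2 + 2q^6 + 2q^{12} + \cdots$, while the right side is $1 + 2q + 2q^4 + 2q^9 + \cdots$. The key observation is that for each $n \geqslant 1$ the three exponents $n(n-1)$, $n^2$, $n(n+1)$ form an arithmetic progression with common difference $n$, giving the perfect-square factorization
\begin{equation*}
  q^{n(n-1)} - 2 q^{n^2} + q^{n(n+1)} \;=\; q^{n(n-1)} \bigl( 1 - q^n \bigr)^2.
\end{equation*}
A short combinatorial check---each exponent $m(m+1)$ with $m \geqslant 1$ appears in the outer terms of exactly two consecutive triples, the exponent $0$ appears once (from $n = 1$), and each $q^{n^2}$ with $n \geqslant 1$ appears once with coefficient $-2$---will confirm that summing the identity over $n \geqslant 1$ exactly reproduces the difference $\sum q^{n(n+1)} - \sum q^{n^2}$. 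Nonnegativity of each $q^{n(n-1)}(1 - q^n)^2$ then closes the argument.

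I do not expect a substantive obstacle: both parts reduce to elementary series manipulations with only nonnegative summands. The one place warranting attention is the bookkeeping in the regrouping step of part~(b), where one must verify that the coefficient of every exponent in the telescoped sum $\sum_{n \geqslant 1}\bigl[q^{n(n-1)} - 2 q^{n^2} + q^{n(n+1)}\bigr]$ matches the coefficient in the direct difference---but this is immediate once one notices that consecutive triples share endpoints via $(n+1)n = n(n+1)$.
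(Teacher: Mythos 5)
Your proof is correct, and it takes a genuinely different and more elementary route than the paper's. For inequality~\eqref{eq:lemma-theta-3-incr-pure-imag-th33}, the paper appeals to the infinite product representation of $\vartheta_3(z,q)/\vartheta_3(0,q)$ (DLMF 20.5.7) and shows every factor is $\geq 1$ via a hyperbolic-cosine identity; your direct substitution $z=it$ into the defining series, pairing $n$ with $-n$ to get $1 + 2\sum_{n\geq 1} q^{n^2}\cosh(2nt)$, bypasses the product formula entirely and gives a one-line termwise bound. For inequality~\eqref{eq:lemma-theta-3-incr-pure-imag-th23}, the paper invokes the functional identity $\vartheta_2(0,q) = q^{1/4}\vartheta_3(\tfrac12\pi\tau, q)$ (DLMF 20.2.12) and then applies the first inequality as a black box; you instead prove it independently of part (a), via the algebraic identity $q^{n(n-1)} - 2q^{n^2} + q^{n(n+1)} = q^{n(n-1)}(1-q^n)^2 \geqslant 0$ whose sum over $n\geqslant 1$ telescopes exactly to $\sum_{n\in\mathbb{Z}} q^{n(n+1)} - \sum_{n\in\mathbb{Z}} q^{n^2}$; the bookkeeping you flagged does check out once one writes $\sum_{n\geq 1}q^{n(n-1)} + \sum_{n\geq 1}q^{n(n+1)} = 1 + 2\sum_{m\geq 1}q^{m(m+1)}$ and compares with the bilateral sums. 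The upshot: the paper's route is faster if one is already willing to cite the two DLMF identities, while yours is self-contained from the series definitions, makes both parts logically independent, and avoids the functional-equation machinery altogether. Your explicit restriction to $0<q<1$ is harmless, since at $q=1$ the series diverge and the lemma is vacuous.
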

\begin{proof}[*lemma:theta-3-incr-pure-imag]
  We start by proving~\eqref{eq:lemma-theta-3-incr-pure-imag-th33}.  Writing
  $q = \ee^{i\pi\tau}$ with $\Re(\tau)=0$ and $\Im(\tau)\geqslant 0$, we
  have\footnote{See \url{http://dlmf.nist.gov/20.5.E7}, Eq.~(20.5.7)}
  \begin{align}
    \vartheta_3(z,q)
    = \vartheta_3(0,q) \cdot
    \prod_{n=1}^{\infty}
    \frac{\cos`*((n-\frac{1}{2})\pi\tau+z) \cos`*((n-\tfrac{1}{2})\pi\tau-z)}%
    {\cos^2`*((n-\frac12)\pi\tau)}
    =: \vartheta_3(0,q) \cdot \prod_{n=1}^{\infty} a_n\ .
  \end{align}
  We will show that the product is greater than $1$, by showing that
  $a_n\geqslant 1$ for each $n$.  We have
  \begin{align}
    a_n = \frac{\cos`*(i`*(a+b))\,\cos`*(i`*(a-b))}{\cos^2`*(i\,a)}\ ,
  \end{align}
  defining $a,b\geqslant 0$ as $a = (n-1/2)\,\pi\,\Im(\tau)$ and $b = \Im(z)$.
  Since $\cos`*(i\,\varphi) = \cosh`*(\varphi)$, we have
  \begin{align}
    a_n = \frac{\cosh`*(a+b)\,\cosh`*(a-b)}{\cosh^2`*(a)}\ .
  \end{align}
  (By the way, this is another way of seeing that $\vartheta_3(z,q)$ must be
  real and positive, since all the $a_n$ are real positive and
  $\vartheta_3(0,q)$ is real positive as given by its series representation.
  Recall that $z$ is pure imaginary with $\Im(z)\geqslant 0$, and that
  $0<q\leqslant 1$.)  With the usual properties of the hyperbolic functions, we
  have
  \begin{align}
    \cosh`*(a+b) \cosh`*(a-b)
    &= `*[\cosh(a)\cosh(b) + \sinh(a)\sinh(b)] `*[\cosh(a)\cosh(b) - \sinh(a)\sinh(b)]
      \nonumber\\
    &= \cosh^2(a)\cosh^2(b) - \sinh^2(a)\sinh^2(b)
      \nonumber\\
    &= \cosh^2(a)\cosh^2(b) `*[ 1 - \tanh^2(a)\tanh^2(b)]
      \nonumber\\
    &\geqslant \cosh^2(a)\cosh^2(b) `*[ 1 - \tanh^2(b)]
     = \cosh^2(a)\ ,
  \end{align}
  using $\tanh(a)\leqslant 1$ and $1/\cosh^2(b)=1-\tanh^2(b)$.  Hence finally,
  $a_n\geqslant 1$.  This proves~\eqref{eq:lemma-theta-3-incr-pure-imag-th33}.

  To prove~\eqref{eq:lemma-theta-3-incr-pure-imag-th23}, we invoke the following
  property of the theta functions,\footnote{See
    \url{http://dlmf.nist.gov/20.2.E12}, Eq.~(20.2.12)} valid for any
  $q=\ee^{i\pi\tau}$,
  \begin{align}
    \vartheta_2`\big(0,q) = q^{1/4} \, \vartheta_3`*(\frac12 \pi \tau, q)\ .
  \end{align}
  For $0<q \leqslant 1$, necessarily $\tau$ is pure imaginary with
  $\Im(\tau)\geqslant 0$; we may thus
  invoke~\eqref{eq:lemma-theta-3-incr-pure-imag-th33}, which
  proves~\eqref{eq:lemma-theta-3-incr-pure-imag-th23}.
\end{proof}


\subsection{Five-rotor perfect code}
\label{appx:AppendixCalPerfectCode}
The normalized encoding for this code is
\begin{equation}
\ket{x} \rightarrow\frac{1}{\sqrt{c_{w,x}}}\sum_{j,k,l,m,n \in \mathbb{Z}} e^{-\frac{1}{4w^{2}}\left(j^{2}+k^{2}+l^{2}+m^{2}+n^{2}\right)}T_{jklmnx}^{(\infty)}\ket{j,k,l,m,n}~,
\end{equation}
where $c_{w,x}$ is the normalization and $T^{(\infty)}$ is defined in \cref{eq:tensor2}.

\paragraph{Single erasure.}

We first calculate $\rho_{\ell}^{x,x}$ for $\ell\in\{1,2,3,4,5\}$
and then outline why $||\rho_{\ell}^{x,x^{\prime}\neq x}||_{1}=O(e^{-cw^{2}})$.
By the cyclic permutation symmetry of the code, we only have to calculate
$\rho_{1}^{x,x}$. Performing the partial trace and simplifying all
Kronecker delta functions leaves us with the diagonal reduced density
matrix
\begin{equation}
\rho_{1}^{x,x}=\frac{1}{c_{w,x}}\sum_{j\in\mathbb{Z}}\left(\sum_{k,l,m\in\mathbb{Z}}e^{-\frac{1}{2w^{2}}\left(j^{2}+k^{2}+l^{2}+m^{2}+\left[j+k+l+m-x\right]^{2}\right)}\right)\proj j \label{eq:interimsumsinglemode}
\end{equation}
Now we apply the Poisson summation formula,
\begin{equation}
\sum_{n\in\mathbb{Z}}f\left(n\right)=\sum_{n\in\mathbb{Z}}\int_{-\infty}^{\infty}dxe^{2\pi inx}f\left(x\right)\,,\label{eq:poisson}
\end{equation}
to each of the three sums above. Typically, the $n=0$ term on the right-hand-side is dominant (i.e., the leading order contribution in the large-$w$ limit), and taking only this term is equivalent to approximating the sum with a Gaussian integral. Each of the remaining
terms suppressed as $O(e^{-cw^{2}})$, where $c$ is a positive constant
increasing with $n$. Because $c$ increases with $n$, the $n+1$-th
term is subleading with respect to the $n$th term. Thus, the entire
sum of exponentially suppressed terms can itself be bounded by an
exponential (e.g., $e^{-2x}+e^{-3x}<e^{-x}$ for $x>1$). We omit
these corrections and focus on the dominant term $k=l=m=0$ after having applied Poisson summation to \cref{eq:interimsumsinglemode}:
\begin{equation}
\rho_{1}^{x,x}\sim\frac{\sqrt{2}\pi^{3/2}w^{3}}{c_{w,x}}\sum_{j\in\mathbb{Z}}e^{-\frac{5j^{2}-2jx+x^{2}}{8w^{2}}}\proj j\,.
\end{equation}
Forcing $\text{Tr}\{\rho_{1}^{x,x}\}=1$ and once again approximating
the resulting sum with an integral solves for the normalization $c_{w,x}$
in the large $w$ limit. Plugging that back into the above equation
and simplifying produces
\begin{equation}
\rho_{1}^{x,x}\sim\sqrt{\frac{5}{2\pi}}\frac{1}{2w}\sum_{j\in\mathbb{Z}}e^{-\frac{(x-5j)^{2}}{40w^{2}}}\proj{j}\,.\label{eq:rhoxx1}
\end{equation}
Now we calculate the fidelity of the above state to $\rho_{1}^{0,0}$.
Using the fact that the states commute with each other, taking the
square root of each entry in the resulting diagonal matrix, and applying
Poisson summation yields
\begin{align}
F^{2}\left(\rho_{1}^{x,x},\rho_{1}^{0,0}\right) =\sqrt{\frac{5}{2\pi}}\frac{1}{2w}\sum_{j\in\mathbb{Z}}e^{-\frac{50 j^2-10 j x+x^2}{80 w^2}}\sim e^{-\frac{x^{2}}{160w^{2}}}\geq e^{-\frac{h^{2}}{160w^{2}}}\,.
\end{align}
Plugging this into the infidelity yields the result~\eqref{eq:fiverotorepsone}.

Returning to the $x\neq x^{\prime}$ case, we show why those cases
do not significantly contribute. The reduced density matrix is of
the form
\[
\rho_{1}^{x,x^{\prime}}=\frac{1}{\sqrt{c_{w,x}c_{w,x^{\prime}}}}\sum_{j\in\mathbb{Z}}\left(\sum_{k,l,m\in\mathbb{Z}}\gamma_{j,k,l,m}^{x,x^{\prime}}\right)\ket j\bra{j+x-x^{\prime}}
\]
where $\gamma_{j,k,l,m}^{x,x^{\prime}}$ is a product of a Gaussian in the
variables $k,l,m$ (just like the $x=x^{\prime}$ case above) and a phase
$\propto2\pi\Phi$ (which goes away when $x=x^{\prime}$).  We first apply Poisson
summation to the internal three sums and evaluate the normalizations in the
large-$w$ limit.  In this case, the centers of the Gaussians in the $k,l,m$-sum
depend on $\Phi$ and the dominant term on the right-hand-side of
\cref{eq:poisson} may not longer be the center-of-mass term $n=0$. We will
however set $\Phi$ to be an irrational number close to zero from now on, i.e.,
taking $\Phi \ll 1$ while making sure that $\Phi w\rightarrow \infty$. This
makes sure that the center-of-mass mode is dominant.  Writing the norm and
applying Poisson summation to the remaining sum reveals
\begin{align}
\norm{\rho_{1}^{x,x^{\prime}}}_1 &\sim \sqrt{\frac{5}{2\pi}}\frac{1}{2w}e^{-2\pi^{2}\Phi^{2}w^{2}\left(x-\text{\ensuremath{x^{\prime}}}\right)^{2}}\sum_{j\in\mathbb{Z}}e^{-\frac{25j^{2}-30jx+20j x^{\prime}+13x^{2}-20x x^{\prime}+8x^{\prime 2}}{40w^{2}}} \nonumber\\
&=O\left(e^{-2\pi^{2}\left(x-\text{\ensuremath{x^{\prime}}}\right)^{2}\Phi^{2}w^{2}}\right)\,.
\end{align}
We see that the one-norm is exponentially suppressed in $w^2$ for the off-diagonal (i.e., $x\neq x^\prime$) reduced matrices.

\paragraph{Two erasures.}

We first calculate $\rho_{\ell,\ell^{\prime}}^{x,x}$ for $\ell,\ell^{\prime}\in\{1,2,3,4,5\}$
and then argue that $||\rho_{\ell,\ell^{\prime}}^{x,x^{\prime}\neq x}||_{1}=O(e^{-cw^{2}})$.
Due to the cyclic permutation symmetry, we only need to calculate
$\rho_{1,2}^{x,x}$ and $\rho_{1,3}^{x,x}$. Performing the partial
trace, simplifying the Kronecker delta functions, plugging in the
normalization, and applying Poisson summation yields
\begin{equation}
\rho_{1,2}^{x,x}\sim\sqrt{\frac{5}{3}}\frac{1}{2\pi w^{2}}\sum_{j,k\in\mathbb{Z}}e^{-\frac{10j^{2}+5jk+10k^{2}-5\left(j+k\right)x+x^{2}}{15w^{2}}}\proj{j,k}\sim\rho_{1,3}^{x,x}\,.
\end{equation}
In other words, both $\rho_{1,2}^{x,x}$ and $\rho_{1,3}^{x,x}$ are identical in
the large $w$ limit. Note that, unlike $\rho_1^{x,x}$, these matrices have
off-diagonal elements that are exponentially suppressed in $w^2$. These elements
have been ignored above, but we mention them in the $x\neq x^\prime$ case
below. Taking the fidelity between $\rho_{\ell,\ell^{\prime}}^{x,x}$ and
$\rho_{\ell,\ell^{\prime}}^{0,0}$ as before yields the result
$F^{2}(\rho_{1,2}^{x,x},\rho_{1,2}^{0,0})\geq
e^{-\frac{h^{2}}{60w^{2}}}$ claimed in~\eqref{eq:fiverotorepstwo}.

The $x\neq x^{\prime}$ case is more difficult this time because the
unapproximated reduced density matrix no longer has just one nonzero diagonal. Without any approximations, it is
\begin{equation}
\rho_{1,2}^{x,x^{\prime}}=\sqrt{\frac{5}{3}}\frac{1}{2\pi w^{2}}\sum_{j,j^{\prime},k\in\mathbb{Z}}\left(\sum_{l,m\in\mathbb{Z}}\gamma_{j,j^{\prime},k,l,m}^{x,x^{\prime}}\right)\ket j\bra{j^{\prime}}\otimes\ket k\bra{k+x^{\prime}-x+j-j^{\prime}}\ .
\end{equation} Applying Poisson summation to the internal two sums for $x\neq x^{\prime}$ reveals that all matrix elements are exponentially suppressed with $w^2$,
\begin{equation}
\sum_{l,m\in\mathbb{Z}}\gamma_{j,j^{\prime},k,l,m}^{x,x^{\prime}}=\sqrt{\frac{5}{3}}\frac{1}{2\pi w^{2}}O\left(e^{-\frac{4}{3}\pi^{2}\left[3\left(j^{\prime}-j\right)^{2}-3\left(j^{\prime}-j\right)\left(x^{\prime}-x\right)+\left(x^{\prime}-x\right)^{2}\right]\Phi^{2}w^{2}}\right)\,.
\end{equation} However, there are particular values of $(j,j^\prime)$ for which the function in the exponent above is minimized; we select those and show that the trace norm is exponentially suppressed in $w^2$. For even $x-x^\prime$, the band at $j^{\prime}=j-\frac{x-x^{\prime}}{2}$ decays the slowest. Ignoring all other bands and calculating the trace norm yields
\begin{equation}
\norm{\rho_{1,2}^{x,x^{\prime}}}_1 = O\left(e^{-\frac{1}{3}\pi^{2}\left(x-x^{\prime}\right)^{2}\Phi^{2}w^{2}}\right)~.
\end{equation}
For odd $x-x^{\prime}$, there are two bands $j^{\prime}=j-\frac{x-x^{\prime}\pm 1}{2}$ whose entries decay the slowest. Calculating the square root of $\rho_{1,2}^{x,x^{\prime}}\rho_{1,2}^{x,x^{\prime}\dagger}$ is more difficult since the resulting matrix is tri-diagonal. However, ignoring the off-diagonal entries, taking the square root, and bounding the resulting integral still yields exponential scaling with $w^2$.


\subsection{Thermodynamic codes}
\label{appx:AppendixCalcThermoCode}
Here, we carry out the calculations that are relevant for
\cref{sec:thermodynamic-codes} of the main text.

The operators $\rho_d^{m,m}$ (reduced states on $d$ consecutive sites) are
provided as:
\begin{align}
  \rho_d^{m,m} =
  \sum_{r=-d}^{d}
  K_{r,d,m}^{N} \, \proj{h_r^d}_d\ ,
\end{align}
with
\begin{align}
  K_{r,d,m}^{N} = \frac{
  \begin{pmatrix} d \\ d/2+r/2 \end{pmatrix}
  \begin{pmatrix} N-d \\ (N-d)/2+(m-r)/2 \end{pmatrix}
  }{
  \begin{pmatrix} N \\ N/2+m/2 \end{pmatrix}
  }
  \ .
\end{align}

The fidelity between two states which commute reduces to the Bhattacharyya
coefficient (the classical version of the fidelity):
\begin{align}
  F(\rho_d^{m,m}, \rho_d^{0,0})
  = \sum_{r=-d}^d \sqrt{ K_{r,d,m}^{N} } \sqrt{ K_{r,d,0}^{N} } \ .
\end{align}
The complicated calculation is deferred to
\cref{lemma:thermo-code-Heisenberg-fidelity-calc} below, which gives us:
\begin{align}
  F(\rho_d^{m,m}, \rho_d^{0,0})
  &\geqslant
    1 - O(N^{-2})\ .
\end{align}
This, in turn, tells us that
\begin{align}
  P(\rho_d^{m,m}, \rho_d^{0,0}) \leqslant O(N^{-1})\ .
\end{align}

The ``logical off-diagonal'' terms $\rho_d^{m,m'}$ for $m\neq m'$ are exactly
zero, because we made sure to space out the codewords in magnetization by
$2d+1$, following the construction of ref.~\cite{Brandao2017arXiv_chainAQECC}.

Hence, applying \cref{prop:new-criterion-cert-aqecc-2}, we see that our code
is an AQECC against the erasure of $d$ consecutive sites, with
\begin{align}
  \epsilon(\mathcal{N}\circ\mathcal{E}) \leqslant O(N^{-1})\ .
\end{align}
This matches exactly the scaling of our bound \eqref{eq:simple-bound}.

\begin{lemma}
  \label{lemma:thermo-code-Heisenberg-fidelity-calc}
  There exists a constant $D_{d,m}$ of $N$ such that (for constant $d,m$):
  \begin{align}
    F(\rho_d^{m,m}, \rho_d^{0,0})
    &\geqslant
      1 - \frac{D_{d,m}}{N^2} + O(N^{-3})\ .
  \end{align}
\end{lemma}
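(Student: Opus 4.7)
My plan is to establish the desired asymptotic bound by carefully expanding the binomial ratios $K_{r,d,m}^N$ around $N = \infty$ with $d, m$ fixed, exploiting the fact that the $O(1/N)$ terms in the fidelity will cancel by symmetry. Since both $\rho_d^{m,m}$ and $\rho_d^{0,0}$ are simultaneously diagonal in the Dicke basis $\{\ket{h_r^d}\}$, the quantum fidelity reduces to the Bhattacharyya coefficient $F = \sum_r \sqrt{K_{r,d,m}^N K_{r,d,0}^N}$, which I will write as $\sum_r K_{r,d,0}^N \sqrt{K_{r,d,m}^N/K_{r,d,0}^N}$ in order to expand about the $m = 0$ reference distribution.

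The first main step is to derive an asymptotic expansion of the ratio $K_{r,d,m}^N/K_{r,d,0}^N$. Using the local central limit theorem for binomials, or equivalently the Stirling expansion $\binom{n}{n/2 + a} = \binom{n}{n/2}\exp(-2a^2/n + O(a^4/n^3))$ for bounded $a$, I will obtain
\begin{align}
\frac{K_{r,d,m}^N}{K_{r,d,0}^N}
= \exp\left(\frac{mr}{N} + \frac{\alpha_{d,m,r}}{N^2} + O(N^{-3})\right)
\end{align}
for some explicit coefficient $\alpha_{d,m,r}$. Taking a square root and Taylor-expanding $\exp$ then gives
\begin{align}
\sqrt{K_{r,d,m}^N/K_{r,d,0}^N}
= 1 + \frac{mr}{2N} + \frac{1}{N^2}\beta_{d,m,r} + O(N^{-3})
\end{align}
with $\beta_{d,m,r}$ a polynomial in $r$ of degree $\leq 2$.

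The second step is to carry out the sum over $r$ against the weights $K_{r,d,0}^N$. The crucial observation is that $K_{r,d,0}^N$ is the hypergeometric-type distribution for the magnetization of $d$ sites conditioned on total magnetization $0$, which is symmetric under $r \mapsto -r$. Hence $\langle r \rangle_0 := \sum_r r\, K_{r,d,0}^N = 0$, so the would-be $O(1/N)$ contribution $\tfrac{m}{2N}\langle r\rangle_0$ to $F$ vanishes identically. What remains is $F = 1 - D_{d,m}/N^2 + O(N^{-3})$, where $D_{d,m}$ is assembled from $\langle r^2\rangle_0$ (which is $O(d)$ and independent of $N$ to leading order) and the subleading corrections in the binomial expansion.

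The only real obstacle is controlling the error terms uniformly in $r$ as $r$ ranges over $\{-d, \ldots, d\}$; since $d$ is fixed and $r$ bounded, the Stirling expansion holds term-by-term with uniform constants, and the finite sum over $r$ preserves the $O(N^{-2})$ scaling. I expect the bookkeeping to be slightly tedious but entirely routine, and no genuinely delicate estimate is needed beyond checking that the first moment of $r$ under $K_{\cdot,d,0}^N$ vanishes by the $r \leftrightarrow -r$ symmetry of that distribution.
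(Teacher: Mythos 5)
Your proof is correct, and it takes a genuinely different (and arguably cleaner) route to the key cancellation than the paper does. Both proofs use Stirling to second order and keep track of terms through $O(N^{-2})$, so the ``hard'' technical content is the same. Where they diverge is in how the would-be $O(1/N)$ contribution to $F$ is killed. The paper expands the symmetric object $\ln\sqrt{K_{r,d,m}^N K_{r,d,0}^N}$, whose $1/N$ coefficient is $A_{m,r}=r(m-r)/2$ --- a quantity with an \emph{even} part $-r^2/2$ that does not vanish when summed. The paper therefore has to evaluate $\sum_r \binom{d}{d/2+r/2}\, r(m-r)$ explicitly via the binomial moment identities $\sum_k\binom{d}{k}=2^d$, $\sum_k k\binom{d}{k}=d\,2^{d-1}$, $\sum_k k^2\binom{d}{k}=(d+d^2)2^{d-2}$, and then watch this contribution cancel against a matching $e^{d/(2N)}$ prefactor that falls out of the $\ln(N-d)$ expansion. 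You instead factor $F=\sum_r K_{r,d,0}^N\sqrt{K_{r,d,m}^N/K_{r,d,0}^N}$, so the reference-state normalizations cancel in the ratio and the $1/N$ coefficient becomes the purely \emph{odd} function $mr/N$; since $K_{r,d,0}^N=K_{-r,d,0}^N$ (immediate from $\binom{n}{k}=\binom{n}{n-k}$ applied to both binomial factors), the first-moment $\langle r\rangle_0$ vanishes by symmetry and no binomial-sum bookkeeping is needed. You do need to carry the Stirling expansion to one further order in the ratio to populate $\alpha_{d,m,r}$, and you need the remark you make at the end: since $\beta_{d,m,r}$ is a degree-$\le 2$ polynomial in $r$ on the bounded range $|r|\le d$, the average $\langle\beta\rangle_0$ is bounded uniformly in $N$, which is what lets you set $D_{d,m}$ to a constant. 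That observation is correct but worth stating in the write-up, since $K_{r,d,0}^N$ itself still depends on $N$.
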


\begin{proof}[*lemma:thermo-code-Heisenberg-fidelity-calc]
  We use Stirling's formula up to order $1/N^2$:
  \begin{align}
    \ln(N!) = N\ln(N) - N + \frac12\ln(2\pi N) + \frac1{12N} + O(N^{-3})\ ,
    \label{eq:Stirling}
  \end{align}
  (noting that there is in fact no term of order $1/N^2$).  Now, for any $x$,
  ignoring terms of order $O(N^{-3})$, we have:
  \begin{dmath*}
    \ln\begin{pmatrix} N \\ N/2 + x/2 \end{pmatrix}
    = { N\ln(N) + \frac{\ln(2\pi)}{2} + \frac{\ln(N)}{2} + \frac1{12 N} + O(N^{-3}) }
      - \left[
      `*(\frac{N}{2}+\frac{x}{2})\ln`*(\frac{N}{2}+\frac{x}{2})
      + \frac{\ln`(2\pi)}{2} + \frac12\ln`*(\frac{N}{2}+\frac{x}{2})
      + \frac1{6(N+x)}
    \right]
    - \left[
      `*(\frac{N}{2}-\frac{x}{2})\ln`*(\frac{N}{2}-\frac{x}{2})
      + \frac{\ln`(2\pi)}{2} + \frac12\ln`*(\frac{N}{2}-\frac{x}{2})
      + \frac1{6(N-x)}
    \right]\ .
  \end{dmath*}
  Using the expansions
  \begin{align}
    \ln`*(\frac{N}{2}\pm\frac{x}{2})
    &= \ln(N) - \ln(2) + \ln`*(1 \pm \frac{x}{N})  \nonumber\\
    &= \ln(N)-\ln(2) \pm \frac{x}{N} - \frac{x^2}{2N^2} \pm \frac{x^3}{3N^3}
      + O`*(N^{-4})\ ; \\
    \frac1{N\pm x}
    &= \frac1N `*(\frac1{1\pm x/N}) = \frac1N \mp \frac{x}{N^2} + O(N^{-3})\ ,
  \end{align}
  one continues, still keeping all the terms up to order $1/N^2$:
  \begin{dmath*}
      \ln\begin{pmatrix} N \\ N/2 + x/2 \end{pmatrix}
      = N\ln(2) - \frac{\ln(N)}{2} + \ln`*(\frac{2}{\sqrt{2\pi}}) - \frac1{4N}
      -\frac{x^2}{2N} + \frac{x^2}{2N^2}  + O(N^{-3})\ .
  \end{dmath*}
  Now we may apply this to calculate $\ln`*(\sqrt{K_{r,d,m}^N K_{r,d,0}^N})$,
  using the fact that
  $\ln(N-d) = \ln(N) + \ln(1-d/N) = \ln(N) - d/N - d^2/(2N^2) + O(N^{-3})$:
  \begin{dmath*}
    \ln\sqrt{K_{r,d,m}^N K_{r,d,0}^N}
    = \ln\begin{pmatrix}d\\d/2+r/2\end{pmatrix}
    + \frac12\ln\begin{pmatrix} N-d \\ (N-d)/2 + (m-r)/2 \end{pmatrix}
    + \frac12\ln\begin{pmatrix} N-d \\ (N-d)/2 - r/2 \end{pmatrix}
    - \frac12\ln\begin{pmatrix} N \\ N/2 + m/2 \end{pmatrix}
    - \frac12\ln\begin{pmatrix} N \\ N/2 \end{pmatrix}
    = \ln\begin{pmatrix}d\\d/2+r/2\end{pmatrix}
    - d\ln(2)  %
    + \frac12`*{ - \frac12`*[ \ln(N) - \frac{d}{N} - \frac{d^2}{2 N^2} ]
      - \frac1{4}`*(\frac1{N} + \frac{d}{N^2})
      - \frac{(m-r)^2}{2}`*(\frac1{N} + \frac{d}{N^2}) + \frac{(m-r)^2}{2 N^2}}
    + \frac12`*{ - \frac12`*[ \ln(N) - \frac{d}{N} - \frac{d^2}{2 N^2} ]
      - \frac1{4}`*(\frac1{N} + \frac{d}{N^2})
      - \frac{r^2}{2}`*(\frac1{N}+\frac{d}{N^2}) + \frac{r^2}{2 N^2} }
    - \frac12`*{ - \frac12\ln(N) - \frac1{4N} - \frac{m^2}{2N} + \frac{m^2}{2 N^2} }
    - \frac12`*{ - \frac12\ln(N) - \frac1{4N} }
     + O(N^{-3})
    = \ln`*(2^{-d}\begin{pmatrix}d\\d/2+r/2\end{pmatrix})
    + \frac{d}{2N} + \frac{A_{m,r}}{N} +
    \frac{B_{d,m,r}}{N^2}
    + O(N^{-3})\ ,
  \end{dmath*}
  with
  \begin{subequations}
    \begin{align}
      A_{m,r} &= \frac12\, r(m-r) \ ;
      \\  
      B_{d,m,r} &= \frac14\,`*[d^2 - d`*(1 + m^2 + 2r^2 - 2mr) + 2r^2 -2mr]\ .
    \end{align}
  \end{subequations}
  Using $0\leqslant \abs{r}\leqslant d$, write
  \begin{dmath*}
    B_{d,m,r} \geq \frac14`*[d^2 - d`*(1+m^2 + 2 d^2 + 2\abs{m}d) -2\abs{m}d]
    \geq \frac14`*[-2d^3 - d^2`*(2\abs{m} - 1) - d`*(1+m^2 + 2\abs{m})]
    ~ \mathrel{=:}~  - \frac14\, C_{d,m}\ .
  \end{dmath*}
  Then,
  \begin{align}
    F(\rho_d^{m,m}, \rho_d^{0,0})
    &\geq
    \ee^{d/(2N)} \, 2^{-d}\,
    \sum_{r=-d}^d \begin{pmatrix} d \\ d/2+r/2 \end{pmatrix}
    \exp`*{\frac{A_{m,r}}{N} + \frac{B_{d,m,r}}{N^2}  + O(N^{-3}) }
    \nonumber\\
    &\geq
    \exp`*{-\frac{C_{d,m}}{4N^2} + O(N^{-3})}\, \ee^{d/(2N)} \, 2^{-d} \, 
    \sum_{r=-d}^d \begin{pmatrix} d \\ d/2+r/2 \end{pmatrix}
    \exp`*{\frac{A_{m,r}}{N}}\ .
    \label{eq:fuiebfdljkncxmfabhsjkna}
  \end{align}
  Recall the identities
  \begin{subequations}
    \begin{align}
      \sum_{r=-d}^{d} \begin{pmatrix} d \\ d/2+r/2 \end{pmatrix}
     &= \sum_{k=0}^d \begin{pmatrix} d \\ k \end{pmatrix} = 2^d\ ;
      \\
      \sum_{k=0}^d \begin{pmatrix} d \\ k \end{pmatrix} k
     &= d\,2^{d-1}\ ;
      \\
      \sum_{k=0}^d \begin{pmatrix} d \\ k \end{pmatrix} k^2
     &= (d + d^2)\,2^{d-2}\ .
    \end{align}
  \end{subequations}
  We have
  $\exp`*{A_{m,r}/N} = 1 + r(m-r)/(2N) + r^2(m-r)^2/(8 N^{2}) + O(N^{-3}) \geq
  {1 + r(m-r)/(2N) + O(N^{-3})}$.  Replacing the summation index $r$ by
  $k=(d+r)/2=0,1,\ldots,d$, we calculate
  \begin{dmath*}
    \frac{2^{-d}}{2N} \sum_{r=-d}^d \begin{pmatrix} d \\ d/2+r/2 \end{pmatrix}
    r(m-r)
    = \frac{2^{-d}}{2N} \sum_{k=0}^d \begin{pmatrix} d \\ k \end{pmatrix}
    (2k-d)(m-2k+d)
    = \frac{2^{-d}}{2N} `*[ (-d\,m-d^2)\, \sum_{k=0}^d \begin{pmatrix} d \\ k \end{pmatrix}
    + (2m+2d+2d)\,  \sum_{k=0}^d \begin{pmatrix} d \\ k \end{pmatrix} k
    - 4 \sum_{k=0}^d \begin{pmatrix} d \\ k \end{pmatrix} k^2 ]
    = \frac{2^{-d}}{2N} `*[ `*(-d\,m-d^2)\, 2^d
    + `*(2m+2d+2d)\,  \frac{d}{2}\,2^{d}
    - 4 \frac{d+d^2}{4}\,2^{d} ]
    = -\frac{d}{2N}\ ,
  \end{dmath*}
  and then
  \begin{dmath*}
    2^{-d} \sum_{r=-d}^d \begin{pmatrix} d \\ d/2+r/2 \end{pmatrix}
    \exp`*{\frac{A_{m,r}}{N}}
    \geq
    2^{-d} \sum_{r=-d}^d \begin{pmatrix} d \\ d/2+r/2 \end{pmatrix}
    `*(1 + \frac{r(m-r)}{2N} + O(N^{-3}))
    = 1 - \frac{d}{2N} + O(N^{-3})\ .
  \end{dmath*}
  Finally, plugging into~\eqref{eq:fuiebfdljkncxmfabhsjkna} gives us
  \begin{align}
    F(\rho_d^{m,m}, \rho_d^{0,0})
    &\geq `*{ 1 - \frac{C_{d,m}}{4N^2} + O(N^{-3})}
    `*{ 1 + \frac{d}{2N} + \frac{d^2}{8N^2} + O(N^{-3})}
    `*{ 1 - \frac{d}{2N} + O(N^{-3})}
    \nonumber\\
    &\geq 1  - \frac{C_{d,m}}{4 N^2} - \frac{d^2}{8 N^2} + O(N^{-3})\ ,
  \end{align}
  so we may define $D_{d,m}=C_{d,m}/4 + d^2/8$, proving the claim.
\end{proof}


\section{Proof of the approximate Eastin-Knill theorem}
\label{appx:ProofUdBounds}
\subsection{Equivalence of the existence of a universal transversal gate set
  and the $\UU(d_L)$-covariance property of the code}

First, we show that the setting of the Eastin-Knill theorem is equivalent to
studying the $\UU(d_L)$-covariance property of the corresponding code.
More precisely, we show that given a code $V_{L\to A}$, if there exists a
mapping $u$ of logical unitaries $U_L$ to transversal physical unitaries
$u(U_L) = U_1\otimes\cdots\otimes U_n$ satisfying $V^\dagger u(U_L) V = U_L$ for
all $U_L$, where $u$ does not even have to be continuous, then the code
$V_{L\to A}$ is necessarily covariant with respect to the full unitary group on
the logical space.

The statement is pretty intuitive, because given any rule that maps logical
unitaries to physical transversal unitaries, we can compose the physical
unitaries corresponding to different logical unitaries, and presumably generate
a \emph{bona fide} representation by starting from a minimal generating set of
unitaries.  This intuition proves correct, though it is not immediately clear if
the mapping generated in this way is continuous.  Here we provide a derivation
that smooths out these technical details.
\begin{proposition}
  \label{prop:eastin-knill-wlog-full-representation}
  Let $V_{L\to A}$ be any code, with $A=A_1\otimes\cdots\otimes A_n$.  Suppose
  that for each unitary $U_L$ on $L$ there exists a transversal unitary
  $U_A = u(U_L) = u_1(U_L) \otimes \cdots\otimes u_n(U_L)$ such that
  $V^\dagger u(U_L) V = U_L$ for all $U_L$.  Then there exists a mapping $u'$
  that maps any $U_L$ to a transversal physical unitary
  $u'(U_L) = u_1'(U_L) \otimes \cdots \otimes u_n'(U_L)$ such that
  \begin{itemize}
  \item $u'$ is continuous;
  \item for all $U_L$, $V^\dagger u'(U_L) V = V^\dagger u(U_L) V = U_L$; and
  \item for any $U_L, U_L'$, we have
    $u'(U_L U_L') = u'(U_L)\, u'(U_L')$.
\end{itemize}
\end{proposition}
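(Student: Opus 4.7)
The plan is to upgrade the set-theoretic lift $u$ to a continuous representation by realising the transversal symmetry as a surjective Lie-group homomorphism and splitting it. I would introduce the closed Lie subgroup
\begin{equation*}
H \;=\; \bigl\{U_1\otimes\cdots\otimes U_n : U_i\in\UU(A_i),\ (U_1\otimes\cdots\otimes U_n)\,VV^\dagger = VV^\dagger\,(U_1\otimes\cdots\otimes U_n)\bigr\}
\end{equation*}
of transversal unitaries stabilising the code subspace, together with the smooth group homomorphism $\pi : H \to \UU(d_L)$ defined by $\pi(U_A) = V^\dagger U_A V$. By hypothesis $\pi$ is surjective; since $\UU(d_L)$ is connected and surjective Lie-group homomorphisms are open, the restriction $\pi|_{H_0}$ to the identity component of $H$ is already surjective, so the differential $d\pi_e : \mathfrak h_0 \to \mathfrak u(d_L)$ is a surjective Lie-algebra homomorphism.

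Next I would split $d\pi_e$ as Lie algebras using the reductive decomposition $\mathfrak u(d_L) = \mathfrak{su}(d_L) \oplus \RR\cdot i\Ident_L$. The Levi--Malcev theorem (which rests on Whitehead's second lemma for semisimple Lie algebras) guarantees that the extension $0 \to \ker d\pi_e \to d\pi_e^{-1}(\mathfrak{su}(d_L)) \to \mathfrak{su}(d_L) \to 0$ admits a Lie-algebra section $s_{\mathrm{ss}} : \mathfrak{su}(d_L) \to \mathfrak h_0$, and because $\SU(d_L)$ is simply connected this section integrates uniquely to a continuous group homomorphism $\tilde s_{\mathrm{ss}} : \SU(d_L) \to H_0$ whose composition with $\pi$ is the inclusion $\SU(d_L)\hookrightarrow \UU(d_L)$. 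For the central $\UU(1)$ factor I would pick any $T_0 \in i\mathfrak h_0$ with $V^\dagger T_0 V = \Ident_L$ (available by surjectivity of $d\pi_e$), symmetrise $T_0$ against $\tilde s_{\mathrm{ss}}(\SU(d_L))$ so that it commutes with every $\tilde s_{\mathrm{ss}}(g)$, and propose
\begin{equation*}
u'(e^{-i\theta}U) \;=\; e^{-i\theta T_0}\,\tilde s_{\mathrm{ss}}(U)\qquad (U\in\SU(d_L),\ \theta\in\RR)
\end{equation*}
as a candidate for the desired transversal representation of $\UU(d_L) = (\UU(1)\times\SU(d_L))/\ZZ_{d_L}$.

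The main obstacle will be ensuring that this assignment descends through the central $\ZZ_{d_L}$ quotient, which forces $T_0$ to have integer spectrum and to satisfy the compatibility $e^{-2\pi i T_0/d_L} = \tilde s_{\mathrm{ss}}(e^{2\pi i/d_L}\Ident_L)$; a generic infinitesimal lift of $i\Ident_L$ can fail both conditions on the orthogonal complement of the code subspace, and merely adding elements of $\ker d\pi_e$ does not cure this. My plan to overcome this is global rather than infinitesimal: the centraliser $C$ of $\tilde s_{\mathrm{ss}}(\SU(d_L))$ inside the compact Lie group $H_0$ is itself compact, and a fixed-point argument for the conjugation action of $\SU(d_L)$ on each coset of $\ker\pi\cap H_0$ lying in $\pi^{-1}(\UU(1))\cap H_0$ shows that $\pi(C)$ already covers the central $\UU(1)\subseteq\UU(d_L)$. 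A maximal torus of $C$ then projects onto $\UU(1)$ by a surjective homomorphism of compact tori, and such homomorphisms always admit group-theoretic sections, producing a closed one-parameter subgroup of $H_0$ that maps isomorphically onto $\UU(1)$ and commutes with $\tilde s_{\mathrm{ss}}$. Its generator is the integer-spectrum $T_0$ I need, and a final rephasing of $\tilde s_{\mathrm{ss}}$ matches the $\ZZ_{d_L}$ identification, so combining this torus with $\tilde s_{\mathrm{ss}}$ yields the continuous homomorphism $u' : \UU(d_L) \to H$ claimed by the proposition.
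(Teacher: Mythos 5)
Your approach via the structure theory of compact Lie groups is genuinely different from the paper's, which instead uses a compactness argument: for each basis generator $T_L^{(j)}$ of $\uu(d_L)$ it produces a transversal unitary $\tilde U^{(j)}$ arbitrarily close to the identity implementing $e^{-i\alpha T_L^{(j)}}$ for some small $\alpha$, extracts a $1$-local generator $T_A^{(j)}$ by taking logarithms factor by factor, and then exponentiates. Your $\SU(d_L)$ step --- splitting the surjection $d\pi_e:\mathfrak{h}_0\to\su(d_L)$ using reductivity of the compact Lie algebra $\mathfrak{h}_0$ (a semisimple ideal complement suffices; Levi--Malcev is more than you need), then integrating on the simply connected cover --- is correct and quite clean, and it buys a conceptually transparent reason why a bona fide Lie-algebra lift exists.

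The treatment of the central $\UU(1)$, however, has concrete gaps. (a) Compact group actions on compact manifolds need not admit fixed points ($\SU(2)$ on $S^2$), so the claimed ``fixed-point argument'' for the conjugation action on a coset of $\ker\pi\cap H_0$ is unjustified as stated; what does work is averaging an arbitrary Lie-algebra lift of $i\Ident_L$ over $\mathrm{Ad}(\tilde s_{\mathrm{ss}}(\SU(d_L)))$, since the Lie algebra is a linear space, and then concluding $\pi(C_0)=\UU(1)$ by connectedness. (b) More seriously, surjective homomorphisms of compact tori do \emph{not} always split: $(z,w)\mapsto z^2$ is a surjection $\UU(1)^2\to\UU(1)$ whose kernel $\ZZ_2\times\UU(1)$ is disconnected and which admits no continuous homomorphic section. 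A torus surjection splits exactly when its kernel is connected, and you have not established this for $\ker\pi\cap T$. (c) ``A final rephasing of $\tilde s_{\mathrm{ss}}$'' cannot patch the $\ZZ_{d_L}$ descent condition, because $\SU(d_L)$ is perfect for $d_L\geq2$ and hence has no nontrivial characters into $\ker\pi$; all of the freedom for matching the central identification lives in the choice of the $\UU(1)$-section $\sigma$, which is precisely where the splitting obstruction from (b) sits. Until you can arrange $\sigma(e^{2\pi i/d_L})=\tilde s_{\mathrm{ss}}(e^{2\pi i/d_L}\Ident_L)$ directly, your construction produces a representation of $\UU(1)\times\SU(d_L)$ but does not obviously descend to $\UU(d_L)$.
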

\begin{proof}[*prop:eastin-knill-wlog-full-representation]
  Observe first that for all $U_L$, because $u(U_L)$ implements a logical
  unitary, it must fix the code space $\Pi=VV^\dagger$.  Hence, we must
  necessarily have $[u(U_L), VV^\dagger] = 0$ for all $U_L$.

  Let $T_L^{(j)}$ be a basis of the Lie algebra $\uu(d_L)$ of $\UU(d_L)$.
  Let $(\vartheta_k)$ be a sequence of positive reals converging to zero, and
  for each $j$, consider the sequence of transversal physical unitaries
  $`\big(u`\big(\ee^{-i\vartheta_k T_L^{(j)}}))_k$.  Let $\delta>0$.  Since the
  sequence of unitaries is supported on a compact set, it admits a convergent
  subsequence and hence, there exist $\vartheta',\vartheta''$ such that
  $\abs{\vartheta'-\vartheta''}\leqslant\delta$ and
  $\norm[\big]{u`\big(\ee^{-i\vartheta'T_L^{(j)}}) -
    u`\big(\ee^{-i\vartheta''T_L^{(j)}})}_\infty\leqslant \delta$.  We define
  $\tilde{U}^{(j)} =
  u^\dagger`\big(\ee^{-i\vartheta''T_L^{(j)}})\,u`\big(\ee^{-i\vartheta'T_L^{(j)}})$,
  which then satisfies
  \begin{align}
    \norm[\big]{ \tilde{U}^{(j)} - \Ident }_\infty\leqslant \delta\ .
    \label{eq:pwbfdu2qnfidsjfewifhd}
  \end{align}
  We also have that
  $V^\dagger\, \tilde{U}^{(j)} V = V^\dagger\,
  u^\dagger`\big(\ee^{-i\vartheta''T_L^{(j)}})\,V V^\dagger\,
  u`\big(\ee^{-i\vartheta'T_L^{(j)}}) V =
  \ee^{i\vartheta''T_L^{(j)}}\ee^{-i\vartheta'T_L^{(j)}} =
  \ee^{-i\alpha{}T_L^{(j)}}$, with $\alpha=\vartheta'-\vartheta''$, where we
  recall that $[u(U_L),VV^\dagger ]=0$ for any $U_L$.  Define
  \begin{align}
    \tilde{U}_i^{(j)}
    &= \ee^{-i\chi_i^{(j)}}\,
      u_i^\dagger`\big(\ee^{-i\vartheta''T_L^{(j)}})\,u_i`\big(\ee^{-i\vartheta'(T_L^{(j)})})\ ;
    &
      T_i^{(j)} &= \alpha^{-1}\,i\log`\big(\tilde{U}_i^{(j)})\ ,
  \end{align}
  where $\chi^{(j)}_i$ is chosen such that there exists an eigenvector
  $\ket{\chi^{(j)}_i}$ of $\tilde{U}_i^{(j)}$ with eigenvalue exactly equal to
  one.  Choosing $\chi^{(j)}$ in $\intervalco{-\pi}{\pi}$ such that
  $\chi^{(j)} = \sum_i \chi_i^{(j)}~\mathrm{mod}~2\pi$, we then have
  $\tilde{U}_1^{(j)}\otimes\cdots\otimes\tilde{U}_n^{(j)} = \ee^{-i\chi^{(j)}}
  \, \tilde{U}^{(j)}$.
  Recall that for any operator $X$, we have
  $\norm{X}_\infty = \max_{\ket\phi,\ket\psi} \Re\tr`{\ketbra\psi\phi\,X}$ where
  the optimization ranges over vectors satisfying
  $\norm{\ket\phi},\norm{\ket\psi}_\infty \leqslant 1$. 
  We have using~\eqref{eq:pwbfdu2qnfidsjfewifhd} and for a suitably
  chosen phase $\ee^{-i\xi}$ that
  \begin{align}
    \delta
    &\geqslant \Re\tr`\Big{\ee^{-i\xi} \,`\Big(\bigotimes\proj{\chi_i^{(j)}})
    `\big[\tilde{U}^{(j)} - \Ident] }
      \nonumber\\
    &= \Re`\Big{\ee^{-i\xi} \, `\Big(\bigotimes\bra{\chi_i^{(j)}})
      `\Big[`\Big(\bigotimes\ee^{i\chi_i^{(j)}}\tilde{U}_i^{(j)}) - \Ident]
      `\Big(\bigotimes\ket{\chi_i^{(j)}}) }
    \nonumber \\
    &= \Re`\Big{\ee^{-i\xi} \, `\Big(\ee^{i\chi^{(j)}} - 1) }
    \nonumber \\
    &= \abs[\big]{ \ee^{i\chi^{(j)}} - 1 }\ ,
  \end{align}
  where the phase $\ee^{-i\xi}$ is chosen such that
  $\ee^{-i\xi}`\big(\ee^{i\chi^{(j)}}-1)$ is real positive.  This implies that
  $\delta\geqslant\Im`{\ee^{i\chi^{(j)}}}$ and
  $\Re`{\ee^{i\chi^{(j)}}}\geqslant 1-\delta$, which implies in turn
  $\abs{\chi^{(j)}}\leqslant\arcsin(\delta)\leqslant 2\delta$.  This also
  implies that
  $\norm[\big]{\ee^{-i\chi^{(j)}}\tilde{U}^{(j)} - \tilde{U}^{(j)}}_\infty
  =\abs[\big]{\ee^{-i\chi^{(j)}}-1} \norm[\big]{\tilde{U}^{(j)}}_\infty
  \leqslant \delta$, and hence by triangle inequality
  \begin{align}
    \norm[\big]{\ee^{-i\chi^{(j)}}\tilde{U}^{(j)} - \Ident}_\infty
    \leqslant
    \norm[\big]{\ee^{-i\chi^{(j)}}\tilde{U}^{(j)} - \tilde{U}^{(j)}}_\infty +
    \norm[\big]{\tilde{U}^{(j)} - \Ident}_\infty
    \leqslant 2\delta\ .
  \end{align}
  Similarly, for each $i$, we have
  $\norm[\big]{\tilde{U}_i^{(j)} - \Ident}_\infty =
  \Re\tr`\big[\ketbra{\psi_i}{\phi_i}`\big(\tilde{U}_i^{(j)}-\Ident)]$ for some
  $\ket{\psi_i}$,$\ket{\phi_i}$, and hence
  \begin{align}
    2\delta
    &\geqslant \norm[\big]{\ee^{-i\chi^{(j)}}\tilde{U}^{(j)} - \Ident}_\infty
    \nonumber\\
    &\geqslant \Re\tr`\bigg{ `\bigg[ %
      \ketbra{\psi_i}{\phi_i} \otimes \bigotimes_{i'\neq i} \proj{\chi_{i'}^{(j)}} 
    ] \, `*(%
      \bigotimes\tilde{U}_i^{(j)} - \Ident) }
    \nonumber\\
    &= \Re\tr`\bigg{ `\bigg[\ketbra{\psi_i}{\phi_i}\otimes\Ident]`\bigg[
      \tilde{U}_i^{(j)}\otimes \bigotimes_{i'\neq i} \proj{\chi_{i'}^{(j)}} 
      - \Ident \otimes \bigotimes_{i'\neq i} \proj{\chi_{i'}^{(j)}} ] }
      \nonumber\\
    &= \Re\tr`\big{ \ketbra{\psi_i}{\phi_i}`\big(\tilde{U}_i^{(j)} - \Ident) }
      = \norm[\big]{\tilde{U}_i^{(j)} - \Ident}_\infty \ .
  \end{align}
  This implies that all eigenvalues of $\tilde{U}_i^{(j)}$ are $\delta$-close to
  one, and hence the corresponding phases are all close to zero; more precisely,
  every eigenvalue $\ee^{i\gamma}$ of $\tilde{U}_i^{(j)}$ satisfies
  $2\delta\geqslant\abs{\ee^{i\gamma} - 1}$; by the same reasoning as above, the
  statements $2\delta\geqslant\Im`{\ee^{i\gamma}}$ and
  $\Re`{\ee^{i\gamma}}\geqslant 1-2\delta$ imply that
  $\abs{\gamma}\leqslant\arcsin(2\delta)\leqslant 4\delta$, and hence
  \begin{align}
    \norm[\big]{\alpha T_i^{(j)}}_\infty \leqslant 4\delta\ .
  \end{align}
  Now we set $T_A^{(j)} = \sum_i T_i^{(j)} - `(\chi^{(j)}/\alpha)\,\Ident$,
  which is a sum of local terms.  We then have
  $\ee^{-i\alpha T_A^{(j)}} = \ee^{i\chi^{(j)}}\,
  \ee^{-i\alpha{}T_1^{(j)}}\otimes\cdots\otimes\ee^{-i\alpha{}T_n^{(j)}} =
  \ee^{i\chi^{(j)}}\, \tilde{U}_1^{(j)}\otimes\cdots\otimes \tilde{U}_n^{(j)} =
  \tilde{U}^{(j)}$,
  with also $\norm[\big]{\alpha T_A^{(j)}}_\infty \leqslant 4n\delta+2\delta$.
  For any $U_L$, the unitary $u(U_L)$ commutes with the code space $\Pi$, and
  therefore $[\tilde{U}^{(j)}, \Pi]=0$.  Furthermore, since
  $\ee^{-i\alpha T_A^{(j)}}$ and $T_A^{(j)}$ share the same eigenspaces, we also
  have that $[T_A^{(j)}, \Pi] = [T_A^{(j)}, VV^\dagger] = 0$.
  We thus have
  $\ee^{-i\alpha{}T_L^{(j)}} = V^\dagger \tilde{U}^{(j)} V = V^\dagger
  \ee^{-i\alpha T_A^{(j)}} V = \ee^{-i\alpha V^\dagger T_A^{(j)} V}$, and thus
  by taking the logarithm, we obtain $V^\dagger T_A^{(j)} V = T_L^{(j)}$, where
  no ambiguities arise from taking the logarithm since the operators
  $\alpha T_A^{(j)}$ and $\alpha T_L^{(j)}$ have small norm, controlled by a
  suitably small choice of $\delta$, and hence do not straddle the branch cut.

  We can then define the mapping $u'(U_L)$ as the Lie group representation of
  $\UU(d_L)$ generated by the operators $T_A^{(j)}$ that span the corresponding
  Lie algebra.  More explicitly, for any $U_L$, we may write
  $i\log(U_L) = \sum c_j T_L^{(j)}$ for some unique set of real coefficients
  $c_j = c_j(U_L)$ (up the zero-measure set of unitaries that have an eigenvalue
  that coincides with the logarithm branch cut), and we set
  \begin{align}
    u'(U_L) = \exp`*(-i \sum c_j(U_L)\,T_A^{(j)})
    \label{eq:snkajfiuhdsopf}
  \end{align}
  where $c_j$ are the unique coefficients of the expansion of $i\log(U_L)$ in
  terms of the $T_L^{(j)}$ as defined above.  We have that $u'(U_L)$ is
  transversal because each $T_A^{(j)}$ is a sum of local terms.  Then we also
  have for any $U_L$ that
  $V^\dagger u'(U_L) V = V^\dagger \exp`\big(-i \sum c_j(U_L)\,T_A^{(j)}) V =
  \exp`\big(-i \sum c_j(U_L)\,V^\dagger T_A^{(j)} V) = \exp`\big(-i \sum
  c_j(U_L)\,T_L^{(j)}) = U_L$.  Because $u'$ is a Lie group representation, it
  is continuous and compatible with the group structure.
\end{proof}

\subsection{Proof of the approximate Eastin-Knill bound}

Recall that each irrep of $U(d_L)$ is represented by a \emph{Young diagram}
$\lambda$, where $\lambda=(\lambda_1,\lambda_2,\cdots, \lambda_{d_L})$, and
$\lambda_1\geq \lambda_2\geq\cdots\geq \lambda_{d_L}=0$, and
$\lambda_i \in
\ZZ$. %
The dimension of each irrep is given by the Weyl dimension formula, which for
the $\UU(d_L)$ group is equal to the Schur polynomial $S_\lambda$ evaluated at
the vector $(1,1,\cdots, 1)$. More explicitly, it can be evaluated to
\begin{equation}
  \label{eq:rep-dim}
  D_\lambda= \prod_{1\leq i  <  j \leq d_L} { \frac{\lambda_i- \lambda_j +j-i}{j-i}}\ .
\end{equation}
To derive \cref{thm:approximate-EK-main} we need to first prove few intermediate
results. First, we will prove a bound on $D_\lambda$, based on $\lambda_1$.
\begin{lemma}
  \label{lem:hard_bound}
  The symmetric representation has the minimal dimension among the
  representations with fixed $\lambda_1$. More precisely, the following
  inequality holds,
  \begin{equation}
    \label{lem:young-ineq}
    D_{\text{Sym}^{\lambda_1}}=\binom{d_L-1+\lambda_1}{d_L-1}\leq D_\lambda\ ,
  \end{equation}
  where $\lambda$ is a representation of $\UU(d_L)$ and
  $D_{\text{Sym}^{\lambda_1}}$ is the dimension of the symmetric representation
  with the Young diagram $\lambda=(\lambda_1,0,0,\cdots,0)$.
\end{lemma}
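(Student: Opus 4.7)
The plan is to prove the inequality by induction on $d_L$, exploiting the Weyl dimension formula and a suitable factorization that isolates the $\lambda_1$-dependence.

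First I would write the Weyl formula in the factorized form
\begin{equation*}
D_\lambda \;=\; \prod_{j=2}^{d_L}\frac{\lambda_1 - \lambda_j + j-1}{j-1}\,\cdot\, D^{(d_L-1)}_{\bar\lambda}\,,
\end{equation*}
where $\bar\lambda=(\lambda_2,\dots,\lambda_{d_L})$ is a valid Young diagram labeling an irrep of $U(d_L-1)$, and the second factor is exactly the Weyl dimension for $\bar\lambda$ in $d_L-1$ variables. This splitting singles out the contribution of the first row and sets up induction on $d_L$, with base case $d_L=1$ trivial.

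Second, I would actually prove a slightly stronger statement by induction: for every polynomial $U(d)$-irrep $\mu$ (not only those normalized with $\mu_d=0$), $D_\mu^{(d)} \geq \binom{\mu_1-\mu_d+d-1}{d-1}$. The strengthening is essentially free, since shifting all parts of $\mu$ by a common integer amounts to tensoring with a power of the determinant representation and does not change the dimension; this is the right form to apply to $\bar\lambda$, whose last part need not vanish. Plugging the strengthened inductive hypothesis into the factorization reduces the claim to the purely algebraic inequality
\begin{equation*}
\binom{\lambda_2-\lambda_{d_L}+d_L-2}{d_L-2}\,\prod_{j=2}^{d_L}(\lambda_1-\lambda_j+j-1) \;\geq\; \prod_{j=2}^{d_L}(\lambda_1-\lambda_{d_L}+j-1)\,.
\end{equation*}

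Third, I would establish this inequality by a more structural route: the $U(d_L)\!\downarrow\!U(d_L-1)$ branching rule gives $D_\lambda^{(d_L)}=\sum_\mu D^{(d_L-1)}_\mu$ over Gelfand--Tsetlin interlacing patterns $\lambda_i\geq \mu_i\geq \lambda_{i+1}$. Applying the strengthened inductive bound term by term, and noting that the interlacing constraints on $(\mu_2,\ldots,\mu_{d_L-2})$ are independent of $(\mu_1,\mu_{d_L-1})$, one obtains $D^{(d_L)}_\lambda\geq N\sum_{\mu_1,\mu_{d_L-1}}\binom{\mu_1-\mu_{d_L-1}+d_L-2}{d_L-2}$, where $N=\prod_{i=2}^{d_L-2}(\lambda_i-\lambda_{i+1}+1)\geq 1$ counts the intermediate freedom. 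The double sum collapses via the hockey-stick identity to an explicit alternating sum of four binomial coefficients, which must then be compared to $\binom{\lambda_1-\lambda_{d_L}+d_L-1}{d_L-1}$.

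The main obstacle is precisely this last comparison. A direct application of convexity of $x\mapsto\binom{x+d_L-1}{d_L}$ almost works, since the pair $(\lambda_1-\lambda_{d_L},\lambda_2-\lambda_{d_L-1})$ is more extreme than $(\lambda_1-\lambda_{d_L-1},\lambda_2-\lambda_{d_L})$ in the sense of majorization; however there is a subtle off-by-one shift that prevents a pure convexity argument and can be violated at the level of the double sum alone in certain ``flat'' configurations. The plan is therefore to carry out a short case split: when the middle parts are strictly decreasing the multiplicity $N>1$ provides enough slack to absorb the deficit, and when $\lambda_2=\cdots=\lambda_{d_L-1}$ (forcing $N=1$) one can instead apply the inductive bound to the conjugate Young diagram, whose first part is $\lambda'_1=d_L-\#\{j:\lambda_j=0\}$ and for which the factorization yields the required inequality directly. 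Combining the two cases closes the induction.
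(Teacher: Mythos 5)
Your route is genuinely different from the paper's: the paper fixes $\lambda_1$, shows $\log D_\lambda$ is strictly concave in $(\lambda_2,\dots,\lambda_{d_L-1})$ on the admissible simplex via a direct Hessian computation, and so locates the minimum at an extremal vertex, which a short calculation identifies as the symmetric irrep. Your inductive plan via the Weyl factorization $D_\lambda^{(d_L)}=\bigl[\prod_{j=2}^{d_L}\tfrac{\lambda_1-\lambda_j+j-1}{j-1}\bigr]D_{\bar\lambda}^{(d_L-1)}$ is a legitimate alternative, and both the factorization and the strengthened inductive claim $D_\mu^{(d)}\geq\binom{\mu_1-\mu_d+d-1}{d-1}$ are correct; the reduction to your ``purely algebraic inequality'' is sound.

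The third step is where the proposal breaks down, and it is also a detour you do not need. The case analysis is misdirected: in the flat case $\lambda_2=\cdots=\lambda_{d_L-1}$ (so $N=1$), after a Pascal identity the double-sum bound reduces to the superadditivity $g(\lambda_1)\geq g(\lambda_2)+g(\lambda_1-\lambda_2)$ of $g(x)=\binom{x+d_L-1}{d_L}$, which holds since $g$ is convex on $x\geq 0$ with $g(0)=0$; so the flat $N=1$ case in fact works on its own. The configurations where the double sum falls short are precisely those with $\lambda_2>\lambda_{d_L-1}$, where $N\geq 2$, and there you have not verified that $N$ supplies enough slack. Most importantly, the conjugate-diagram escape is invalid: conjugation does not preserve $\UU(d_L)$-irrep dimension at fixed $d_L$ (e.g.\ $\lambda=(2,0)$ vs.\ $\lambda'=(1,1)$ in $\UU(2)$ give dimensions $3$ and $1$), and for $\lambda_1>d_L$ the conjugate is not even a valid $\UU(d_L)$ diagram. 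Fortunately, none of this detour is needed: after normalizing $\lambda_{d_L}=0$, your step-two inequality reads $\prod_{j=2}^{d_L-1}\tfrac{(\lambda_2+j-1)(\lambda_1-\lambda_j+j-1)}{j-1}\geq\prod_{j=2}^{d_L-1}(\lambda_1+j-1)$, and after bounding $\lambda_j\leq\lambda_2$ the $j$th factor reduces by cross-multiplication to $\lambda_2(\lambda_1-\lambda_2)\geq 0$. That one-line term-by-term bound closes your induction directly; with it, your approach is both correct and more elementary than the branching-rule route you sketched.
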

\begin{proof}[*lem:hard_bound]
  Suppose that $\lambda_1=l$. We use the dimension formula
  \cref{eq:rep-dim}. Consider the logarithm of the dimension, which is (up to a
  fixed constant) equal to:
  \begin{equation}
    f(\lambda_2, \cdots ,\lambda_{d_L-1})
    = \sum_{1\leq i< j \leq d_L} \log (\lambda_i- \lambda_j +j-i)\ .
  \end{equation}
  Note that we fix $\lambda_1=l$ and $\lambda_{d_L}=0$, so they do not appear as
  parameters of $f$. Also, the vector
  $\hat \lambda=(\lambda_2,\cdots , \lambda_{d_L-1})$ is an integer vector in
  the simplex $\Delta$ with $d_L$ extremal points $\hat v_i\in \RR^{d_L-1}$,
  where $\hat v_0=(0,0,\cdots ,0)$, $\hat v_1=(l,0,\cdots ,0)$,$\cdots$, and
  $\hat v_{d_L-1}=(l,l,\cdots ,l)$.

  We first extend the function $f$ to all of the real points in $\Delta$, and
  show that $f$ is a concave function inside $\Delta$. This would show that the
  minimum of $f$ is attained at one of its extremal points.

  A direct computation of the Hessian of $f$, reveals that for
  $ 2\leq r,s \leq d_L-1$,
  \begin{align}
    H_{r,s}
    = \delta_{rs}\left[ -\sum_{1\leq i  \leq d_L, i\neq s}K_{is} \right]+(1-\delta_{rs}) K_{rs}\ ,
  \end{align}
  where $K_{rs}=1/(\lambda_s-\lambda_r+r-s)^2$. One can see that if
  $w=\sum_{2\leq i \leq d_L-1} \alpha_i e_i$ is an arbitrary vector, then
  \begin{equation}
    w^\dagger H w = - \left( \sum_{2\leq i \leq d_L-1}|\alpha_i|^2 (K_{1i}+K_{id_L})
      + \sum_{2\leq i<j \leq d_L-1} K_{ij} |\alpha_i -\alpha_j|^2\right)\ .
  \end{equation}
  This is a negative number, and shows that $f$ is strictly concave. Therefore,
  the minimum of $f$ is attained on one of the extremal point $\hat v_i$,
  $0\leq i \leq d_L-1$.  Using the Weyl-dimension formula we have,
  \begin{align}
    f(\hat v_i)= \prod_{j=0}^i {
    \frac{
    \binom{l+d_L-1-j}{l}
    }{
    \binom{l+j}{l}
    }
    }\ .
  \end{align}

  One can easily see that $f(\hat v_i)$ is increasing for $i\leq (d_L-1)/2$ and
  decreasing for $i\geq (d_L-1)/2$.  Therefore, its minimum is attained at
  $f(\hat v_0)= f(\hat v_{d_L-1})$.
\end{proof}

Consider a fixed element in the Cartan subalgebra of $\su(d_L)$, a
$d_L \times d_L$ matrix $T = \text{diag}(1,0,0,0,\cdots, -1)$, and $T_\lambda$,
the corresponding generator in the representation given by the Young diagram
$\lambda$.  We have the following lemma:

\begin{lemma}
  \label{lem:opnorm_length}
  It holds that
  $\norm{ T_\lambda }_\infty \leq \lambda_1$.
\end{lemma}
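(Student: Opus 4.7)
My plan is to prove the bound by identifying the eigenvalues of $T_\lambda$ with values of a linear functional on the weights of the irrep $V_\lambda$, then using the convex geometry of the weight system.

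First I would note that $T_\lambda$ is Hermitian and diagonal in any weight basis of $V_\lambda$. Indeed, writing $T=\epsilon_1-\epsilon_{d_L}$ as an element of the Cartan subalgebra (using the standard dual basis $\epsilon_i$ for $\RR^{d_L}$), a weight vector $v_\mu$ of weight $\mu=(\mu_1,\ldots,\mu_{d_L})$ satisfies $T_\lambda\,v_\mu = (\mu_1-\mu_{d_L})\,v_\mu$. Hence
\begin{align*}
  \norm{T_\lambda}_\infty \;=\; \max_{\mu\in\Omega(\lambda)}\, \abs{\mu_1-\mu_{d_L}}\ ,
\end{align*}
where $\Omega(\lambda)$ is the set of weights of $V_\lambda$.

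Next I would invoke the standard fact from representation theory (see, e.g., Fulton--Harris) that for a simple (or reductive) Lie group the weights of an irrep with highest weight $\lambda$ all lie in the convex hull of the Weyl group orbit of $\lambda$. For $\UU(d_L)$ (and $\SU(d_L)$), the Weyl group is the symmetric group $S_{d_L}$ acting by permutation of the coordinates of $\lambda$. Because $\mu \mapsto \mu_1-\mu_{d_L}$ is linear, its maximum absolute value over the convex hull coincides with its maximum absolute value over the Weyl orbit itself; therefore it suffices to bound $\abs{\lambda_{\sigma^{-1}(1)}-\lambda_{\sigma^{-1}(d_L)}}$ over all $\sigma\in S_{d_L}$.

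The final step is elementary. Since $\lambda_1\geq\lambda_2\geq\cdots\geq\lambda_{d_L}=0$, every component satisfies $\lambda_i\in[0,\lambda_1]$, and so for any permutation $\sigma$,
\begin{align*}
  \abs{\lambda_{\sigma^{-1}(1)}-\lambda_{\sigma^{-1}(d_L)}} \;\leq\; \lambda_1\ .
\end{align*}
Combining the three observations yields $\norm{T_\lambda}_\infty \leq \lambda_1$, with equality attained on the highest weight vector. The only real subtlety is the convex hull statement, but it is classical and can be cited directly; no calculation beyond noting that a linear function on a polytope is extremized at a vertex is needed.
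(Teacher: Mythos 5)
Your proof is correct, but it follows a genuinely different path from the paper's. The paper works concretely in the Gelfand--Tsetlin / semi-standard Young tableau basis: a tableau $m_\lambda$ is an eigenvector of $T_\lambda$ with eigenvalue $\#_1 m_\lambda - \#_{d_L} m_\lambda$, and since entries are strictly increasing down each column, no symbol can appear twice in a column, so $\#_i m_\lambda \leq \lambda_1$ (the number of columns) and the eigenvalues lie in $[-\lambda_1,\lambda_1]$. You instead invoke the general Lie-theoretic fact that the weight system of an irrep lies in the convex hull of the Weyl orbit of the highest weight, then note that the linear (hence extremized at a vertex) functional $\mu \mapsto \mu_1 - \mu_{d_L}$ evaluated on a coordinate permutation of $\lambda$ is bounded by $\lambda_1 - \lambda_{d_L} = \lambda_1$. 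Both arguments are valid and both crucially use the normalization $\lambda_{d_L}=0$ from the paper's setup. The paper's version is the more elementary and self-contained (it only needs that semi-standard tableaux index weight vectors and how the Cartan element acts on them), while yours trades that for the standard convexity theorem on weight systems, which is a heavier piece of machinery but arguably conceptually cleaner and more robust to changes in the specific generator $T$. One small remark: $|\mu_1-\mu_{d_L}|$ is convex rather than linear, but it is still maximized at an extreme point of the polytope, so your vertex argument goes through unchanged.
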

\begin{proof}[*lem:opnorm_length]
  A basis for the representation $\lambda$ is given by different semi-standard fillings of the Young diagram $\lambda$ with numbers $1 \cdots d_L$. 
  If we indicate fillings of the $\lambda$ by $m_\lambda$, then $\{\ket {m_\lambda}\}$ forms a basis for the representation $\lambda$. 
  Although this is not an orthogonal basis, if the number content of $m_\lambda$ and $m'_\lambda$ are different then $\ket {m_\lambda}$ and $\ket {m'_\lambda}$ are orthogonal. This basis diagonalizes $T_\lambda$.

  In particular, if $\#_i m_\lambda$ indicates the number of times that $i$ appears in the filling $m_\lambda$, then $\bra{m_\lambda} T_\lambda \ket {m_\lambda}= \#_1 m_\lambda - \#_{d} m_\lambda$.
  This immediately leads to the conclusion that the eigenvalues of $T_\lambda$ are $\#_1 m_\lambda - \#_{d} m_\lambda$, for different fillings $m_\lambda$. 

  For any semi-standard filling of the Young diagrams, the numbers are strictly increasing in the columns. Therefore, $\#_i m_\lambda \leq \lambda_1$, as there are no repeats in the columns. So we showed that eigenvalues of $T_\lambda$ are between $-\lambda_1$ and $\lambda_1$, which completes the proof.
\end{proof}

We are now in a position to prove \cref{thm:approximate-EK-main}.  First,
however, we prove a version of \cref{thm:approximate-EK-main} that provides a
stronger bound expressed as a binomial coefficient, which we will use to prove
the bounds stated in \cref{thm:approximate-EK-main}.
\begin{theorem}
  \label{thm:approximate-EK-stronger-version}
  Let $V_{L\to A}$ be an isometry that is covariant with respect to the full
  $\SU(d_L)$ group on the logical space, and write
  $\mathcal{E}(\cdot) =V(\cdot)V^\dagger$.  Consider the single erasure noise
  model represented by $\mathcal{N}$ in~\eqref{eq:noise-map-one-erasure} with
  equal erasure probabilities, $q_i=1/n$ for all $i$.  Then
  \begin{align}
    \max_i d_i &\geq
                 \binom{d_L-1+ \bigl\lceil{
                 (2n\epsilon_{\mathrm{worst}}(\mathcal{N}\circ\mathcal{E}))^{-1}
                 }\bigr\rceil
                 }{d_L-1}\ .
                 \label{eq:approx-EK-strongver-epsilonworst}
  \end{align}
  In terms of the average entanglement fidelity measure, the bound reads instead
  \begin{align}
    \max_i d_i &\geq
                 \binom{d_L-1+ \bigl\lceil{
                 (nd_L\epsilon_{\mathrm{e}}(\mathcal{N}\circ\mathcal{E}))^{-1}
                 }\bigr\rceil
                 }{d_L-1}\ .
                 \label{eq:approx-EK-strongver-epsilone}
  \end{align}
\end{theorem}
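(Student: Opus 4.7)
The plan is to combine the general covariance bound of \cref{thm:simple-main-result-one-erasure} (and its average-fidelity variant \eqref{eq:simple-bound-e}) with the two representation-theoretic lemmas \cref{lem:opnorm_length,lem:hard_bound}, which have already been proved above. The binomial form of the bound arises by tracking the largest Young-diagram row length $\lambda_1$ appearing in the physical representation carried by any one subsystem, and then converting that integer into a dimension via \cref{lem:hard_bound}.

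I would first reduce to the case where the encoding is actually $\SU(d_L)$-covariant for a \emph{bona fide} representation on the physical side: \cref{prop:eastin-knill-wlog-full-representation} guarantees that any code admitting a universal transversal gate set can be replaced, without changing its recovery performance, by one covariant in the Lie-group sense. Within that framework I choose the specific logical generator $T_L = \diag(1,0,\dots,0,-1)$, which is traceless, has $\Delta T_L = 2$, and satisfies $\norm{T_L - \tr(T_L)\Ident_L/d_L}_1 = 2$. The corresponding $T_A=\sum_i T_i$ is transversal, and each $T_i$ acts on $A_i$ as the image of $T_L$ under the representation $r_i$ of $\SU(d_L)$ carried by that subsystem.

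Next I decompose $r_i = \bigoplus_\lambda r_i^{(\lambda)}$ into irreps indexed by Young diagrams $\lambda$. Let $\lambda^{\max}(i)$ denote the irrep appearing in $r_i$ with the largest value of $\lambda_1$, and write $L := \max_i \lambda_1^{\max}(i)$. Two observations then follow immediately from the lemmas:
\begin{enumerate}
\item \cref{lem:opnorm_length} gives $\norm{T_i}_\infty \leqslant \lambda_1^{\max}(i) \leqslant L$, hence $\Delta T_i \leqslant 2L$ for every $i$.
\item \cref{lem:hard_bound} gives $\dim r_i^{(\lambda^{\max}(i))} \geqslant \binom{d_L-1+\lambda_1^{\max}(i)}{d_L-1}$, and since this irrep sits inside $A_i$, we get $d_i \geqslant \binom{d_L-1+\lambda_1^{\max}(i)}{d_L-1}$. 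Picking $i$ to be the subsystem achieving $L$ yields $\max_i d_i \geqslant \binom{d_L-1+L}{d_L-1}$.
\end{enumerate}

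Finally I plug the bound $\Delta T_i \leqslant 2L$ into \cref{thm:simple-main-result-one-erasure}, which under equal erasure probabilities $q_i = 1/n$ gives $\epsilon_{\mathrm{worst}} \geqslant \Delta T_L/(2n \max_i \Delta T_i) \geqslant 1/(2nL)$. Rearranging and using that $L$ is a nonnegative integer forces $L \geqslant \lceil (2n\epsilon_{\mathrm{worst}})^{-1}\rceil$, which combined with step (ii) above is exactly \eqref{eq:approx-EK-strongver-epsilonworst}. The parallel argument for \eqref{eq:approx-EK-strongver-epsilone} starts instead from the average-fidelity bound \eqref{eq:simple-bound-e}; with our choice of $T_L$ the numerator is $1/d_L$, yielding $\epsilon_{\mathrm{e}} \geqslant 1/(2nd_L L)$ and hence $L \geqslant \lceil (nd_L \epsilon_{\mathrm{e}})^{-1}\rceil$ after absorbing the factor of two (which can be seen either by keeping track of constants carefully or by using the sharper $\norm{T_L}_1/(2d_L)$ form of the bound).

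I do not anticipate a genuine obstacle: the heavy lifting, namely the dimension-minimization over fixed-$\lambda_1$ irreps (\cref{lem:hard_bound}) and the spectral control on $T_\lambda$ (\cref{lem:opnorm_length}), is already done. The only point requiring care is the reduction to a true representation via \cref{prop:eastin-knill-wlog-full-representation} so that the generators $T_i$ are well-defined objects to which the representation-theoretic lemmas apply; once that is in hand the remainder is bookkeeping of the integer ceiling and the constants.
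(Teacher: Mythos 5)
Your argument for the $\epsilon_{\mathrm{worst}}$ bound is correct and is essentially the same route the paper takes: apply \cref{lem:opnorm_length} to bound $\norm{T_i}_\infty$ by a Young-diagram row length, \cref{lem:hard_bound} to lower-bound the subsystem dimension by that row length, and \eqref{eq:simple-bound} to lower-bound the row length by $(2n\epsilon_{\mathrm{worst}})^{-1}$. (Two remarks of bookkeeping: the detour through \cref{prop:eastin-knill-wlog-full-representation} is superfluous here, since the theorem statement already postulates that $V$ is $\SU(d_L)$-covariant; and your variable $L=\max_i\lambda_1^{\max}(i)$ and the paper's $\max_i\norm{T_i}_\infty$ sit on opposite sides of the chain $\norm{T_i}_\infty\leqslant\lambda_1^{\max}(i)$, but both quantities land on the same final inequality once combined with the integer ceiling, so there is no loss either way.)

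The $\epsilon_{\mathrm{e}}$ branch has a genuine gap. Starting from \eqref{eq:simple-bound-e}, whose numerator with traceless $T_L=\diag(1,0,\ldots,0,-1)$ evaluates to $\norm{T_L}_1/(2d_L)=1/d_L$, you obtain
\begin{equation}
\epsilon_{\mathrm{e}} \;\geqslant\; \frac{1}{n}\cdot\frac{1/d_L}{2L} \;=\; \frac{1}{2\,n\,d_L\,L},
\end{equation}
which only gives $L\geqslant\lceil(2nd_L\epsilon_{\mathrm{e}})^{-1}\rceil$, a bound that is weaker than \eqref{eq:approx-EK-strongver-epsilone} by a factor of two inside the ceiling. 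Your remark that this factor can be ``absorbed'' by ``using the sharper $\norm{T_L}_1/(2d_L)$ form'' does not help, because that is exactly the form you already used; $\eqref{eq:simple-bound-e}$ \emph{is} the $\norm{T_L-\tr(T_L)\Ident/d_L}_1/(2d_L)$ alternative noted at the end of \cref{thm:main-result-full}, and it carries the extra $1/2$. The fix is to invoke the \emph{primary} form \eqref{eq:main-thm-bound-avg-entgl-fid}, whose numerator is $\norm{T_L-\mu(T_L)\Ident_L}_1/d_L$ with $\mu(T_L)$ a median eigenvalue. For your $T_L$ the median is $0$, the numerator is $2/d_L$, and you get $\epsilon_{\mathrm{e}}\geqslant 1/(nd_L L)$, whence $L\geqslant\lceil(nd_L\epsilon_{\mathrm{e}})^{-1}\rceil$ as required. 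This is how the paper produces the stated $\epsilon_{\mathrm{e}}$ bound (equivalently, via the substitution $\epsilon_{\mathrm{worst}}\to d_L\epsilon_{\mathrm{e}}/2$).
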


The bound in \cref{thm:approximate-EK-stronger-version} allows to derive
slightly stronger bounds than those obtained from the simplified expressions in
\cref{thm:approximate-EK-main}.  For instance, suppose that $d_L = d_i$ as in
the examples given in the main text.  The binomial coefficient $\binom{a+b}{b}$
is increasing in $b$, which can be seen using the recurrence relation
$\binom{a+b+1}{b+1} = \frac{a+b+1}{b+1}\binom{a+b}{b} \geqslant \binom{a+b}{b}$.
Also, the binomial coefficient $\binom{a+b}{b}$ for $b\geqslant2$ satisfies
$\binom{a+b}{b} \geqslant \binom{a+2}{2} = (a+2)(a+1)/2\geqslant a+2$ (assuming
$a\geqslant 1$).  Hence, if $d_L=d_i$, then
condition~\eqref{eq:approx-EK-strongver-epsilonworst} implies that
$\lceil{ (2n\epsilon_{\mathrm{worst}}(\mathcal{N}\circ\mathcal{E}))^{-1} }\rceil
\leqslant 1$, because otherwise we would have
$\binom{d_L-1 + \lceil{
    (2n\epsilon_{\mathrm{worst}}(\mathcal{N}\circ\mathcal{E}))^{-1}
  }\rceil}{d_L-1} \geqslant \binom{d_L-1 + 2}{2} \geqslant d_L+1$.  This implies
that, for $d_L=d_i$, we must have
$\epsilon_{\mathrm{worst}}(\mathcal{N}\circ\mathcal{E}) \geqslant 1/(2n)$.

\begin{proof}[*thm:approximate-EK-stronger-version]
  Combining \cref{lem:hard_bound} and \cref{lem:opnorm_length}, we get
  \begin{equation}
    D_\lambda\geq \binom{d_L-1+\lceil\norm{ T_\lambda }_\infty\rceil }{d_L-1}
  \end{equation}
  Now, we return to the original problem of approximate Eastin-Knill theorem,
  where the group $\SU(d_L)$ acts on physical subsystems.  We fix the generator
  $ T= \diag(1,0,0,0,\cdots, -1)$ of $\su(d_L)$, and let $T_i$ be the
  corresponding generator acting on the subsystem $i$.  Let
  $T_i=\bigoplus_\lambda T_\lambda$ be the decomposition of $T_i$ with respect
  to the decomposition of the representation on subsystem $i$, and assume that
  $\hat\lambda{(i)}$ is the Young diagram in this direct sum with the
  largest $\norm{T_\lambda}_\infty$. Therefore,
  $\norm{T_i}_\infty = \norm[\big]{T_{\hat{\lambda}(i)}}_\infty$, and we have:
  \begin{align}
    d_i \geq D_{\hat\lambda(i)}
    \geq \binom{d_L-1 + \lceil\norm{ T_{\hat\lambda(i)} }_\infty\rceil }{d_L-1}
    = \binom{d_L-1 + \lceil\norm{ T_{i} }_\infty\rceil }{d_L-1}\ .
    \label{eq:hjuytihljkvguyiuhjlknb}
  \end{align}
  This implies
  \begin{align}
    \max_i d_i
    \geqslant
    \max_i \binom{d_L-1 + \lceil\norm{ T_{i} }_\infty\rceil }{d_L-1}
    = \binom{d_L-1 + \lceil\max_i \norm{ T_{i} }_\infty\rceil }{d_L-1}\ .
  \end{align}
  Let $i'$ denote the index of the subsystem that maximizes
  $\norm{T_i}_\infty$, such that our bound~\eqref{eq:simple-bound} with
  $\Delta T_L=2$ and $\Delta T_i\leqslant 2\norm{T_i}_\infty$ reads
  \begin{align}
    \epsilon_{\mathrm{worst}} \geqslant \frac1{2n\norm{T_{i'}}_\infty}\ ,
    \label{eq:rteyutkgluhkljopihgh}
  \end{align}
  noting that
  $\max_i \Delta T_i \leqslant 2\max_i \norm{T_i}_\infty =
  2\norm{T_{i'}}_\infty$, and writing
  $\epsilon_{\mathrm{worst}} =
  \epsilon_{\mathrm{worst}}(\mathcal{N}\circ\mathcal{E})$ as a shorthand.
  Therefore,
  $\norm{T_{i'}}_\infty \geqslant (2n\epsilon_{\mathrm{worst}})^{-1}$, and we
  obtain
  \begin{align}
    \max_i d_i \geq
    \binom{d_L-1+ \bigl\lceil{
    (2n\epsilon_{\mathrm{worst} })^{-1}
    }\bigr\rceil
    }{d_L-1}\ .
  \end{align}
  If we had used the bound~\eqref{eq:main-thm-bound-avg-entgl-fid} instead
  of~\eqref{eq:simple-bound}, we would have instead
  of~\eqref{eq:rteyutkgluhkljopihgh} that
  \begin{align}
    \epsilon_{\mathrm{e}} \geqslant \frac{1}{n d_L\norm{T_{i'}}_\infty}\ ,
  \end{align}
  and we can perform the replacement
  $\epsilon_{\mathrm{worst}} \to (d_L/2)\,\epsilon_{\mathrm{e}}$ in the
  argument above.
\end{proof}

\begin{proof}[*thm:approximate-EK-main]
  We use the following standard inequality of binomial coefficients.  For
  integers $a,b>0$, we have the two lower bounds
  \begin{align}
    \binom{a + b}{a} \geqslant
    \begin{cases}
      `*(1 + \frac{a}{b})^b \\[1ex]
      `*(1 + \frac{b}{a})^a\ ,
    \end{cases}
    \label{eq:binom-lower-bound}
  \end{align}
  noting that $\binom{a + b}{b} = \binom{a + b}{a}$.
  Consider~\eqref{eq:approx-EK-strongver-epsilonworst}, with $a=d_L-1$ and
  $b=\lceil(2n\epsilon_{\mathrm{worst}})^{-1}\rceil$.  The first lower bound
  gives us
  \begin{align}
    \max_i \ln(d_i)
    &\geqslant
      b\, \ln\,`*(1 + \frac{d_L-1}{b})
      \geqslant b\,\ln\,\frac{d_L-1}{b}
      \geqslant \frac1{2n\epsilon_{\mathrm{worst}}}\,
      \ln\,`*[\frac{d_L-1}{(2n\epsilon_{\mathrm{worst}})^{-1}+1}]\ ,
  \end{align}
  and hence
  \begin{align}
    \max_i \ln(d_i)
    &\geqslant \frac1{2n\epsilon_{\mathrm{worst}}}\, \ln\,`*(d_L-1)
    - \frac{\ln\,`*(1 + (2n\epsilon_{\mathrm{worst}})^{-1})}{2n\epsilon_{\mathrm{worst}}}\ .
  \end{align}
  This proves~\eqref{eq:approximate-EK-di-geq-poly-dL}.

  Applying the second bound in~\eqref{eq:binom-lower-bound}
  to~\eqref{eq:approx-EK-strongver-epsilonworst}, we obtain
  \begin{align}
    \max_i d_i
    &\geqslant 
    `*[ \frac{d_L-1 + \lceil(2n\epsilon_{\mathrm{worst}})^{-1}\rceil}{d_L-1} ]^{d_L-1}
    \nonumber\\
    &= \exp`*{ (d_L-1)
      \ln`*(1 + \frac{\lceil(2n\epsilon_{\mathrm{worst}})^{-1}\rceil}{d_L-1}) }\ .
      \label{eq:urjhkgiuojklij}
  \end{align}
  We can rearrange~\eqref{eq:urjhkgiuojklij} to
  \begin{align}
    \exp`*{\frac{\max_i\ln(d_i)}{d_L-1}} - 1
    \geqslant \frac{\lceil(2n\epsilon_{\mathrm{worst}})^{-1}\rceil}{d_L-1}
    \geqslant \frac{(2n\epsilon_{\mathrm{worst}})^{-1}}{d_L-1}\ ,
  \end{align}
  which in turn implies
  \begin{align}
    \epsilon_{\mathrm{worst}}
    \geqslant \frac1{2n`(d_L-1)}\,`*[ \max_i `*(\exp`*{ \frac{\ln(d_i)}{d_L-1} } - 1) ]^{-1}\ .
    \label{eq:tfygufidjkblhfjhs}
  \end{align}
  Henceforth we let $i$ denote the index of the physical subsystem with largest
  dimension, i.e., $d_i = \max_{i'} d_{i'}$.
  For large $d_L$, we have
  \begin{align}
    (d_L-1) `*(\exp`*{ \frac{\ln(d_i)}{d_L-1} } - 1)
    = \ln(d_i) + O`\bigg(\frac{\ln^2(d_i)}{d_L})
    = \ln(d_i)\,`*[ 1 + O`\bigg(\frac{\ln(d_i)}{d_L}) ]\ ,
  \end{align}
  and thus
  \begin{align}
    \epsilon_{\mathrm{worst}}
    \geqslant \frac1{2n \, \max_i \ln(d_i)}`*[ 1 + O`*(\frac{\ln(d_i)}{d_L}) ]
    =  \frac1{2n \, \max_i \ln(d_i)} + O`*(\frac{1}{n d_L})\ ,
  \end{align}
  which is the desired bound~\eqref{eq:appek1}.
  The bound~\eqref{eq:approximate-EK-di-geq-exp-dL} follows
  from~\eqref{eq:urjhkgiuojklij} by noting that
  $\lceil (2n\epsilon_{\mathrm{worst}})^{-1}\rceil \geqslant
  (2n\epsilon_{\mathrm{worst}})^{-1}$ and that $\log(1+x) \geqslant \log(x)$.

  The alternative expressions for $\epsilon_{\mathrm{e}}$ follow from the use of
  the bound~\eqref{eq:approx-EK-strongver-epsilone}, following the same steps as
  above while effecting the replacement
  $\epsilon_{\mathrm{worst}} \to d_L \epsilon_{\mathrm{e}}/2$.
\end{proof}


\section{Circumventing the Eastin-Knill theorem by randomized constructions}   
\label{appx:Random_Consts}
\label{appx:Random_Theorems}

\subsection{Randomized constructions: Overview}
\label{subsec:Random_Consts-overview}
The proof of \cref{thm:randomcode} %
is technical, and relies on the recent developments in the representation theory
of $\UU(d)$, and new counting formulas for the \emph{Littlewood-Richardson}
coefficients. Here, we sketch the proof strategy, and refer the reader to
\cref{appx:Random_Theorems} for the technical details.

Although our randomized constructions do not properly work for producing good
$\UU(2)$-covariant codes,\footnote{More precisely, our techniques do not lead to
  proper lower bounds for the fidelity of recovery of random $\UU(2)$-covariant
  codes, but this might only be caused by not lower bounding the fidelity of
  recovery with strong enough inequalities.\\} for the $\UU(3)$ case we can find
explicit (non-asymptotic) bounds with a slightly different scaling. There, one can benefit from the fact that the fusion rules of $\UU(3)$ representation theory are known~\cite[Section~5]{rassart2004polynomiality}. We will not discuss
$\UU(3)$ case further, and will focus on $d_L\geq 4$ for the rest of this
section.

Consider codes that map logical information on the Hilbert space $\Hil_L$ to three physical
subsystems $\Hil_A = \Hil_{A_1}\otimes \Hil_{A_2}\otimes \Hil_{A_3}$, and denote by $d_i$ the dimension
of $\Hil_{A_i}$.  In order to precisely define what we mean by the random isometry $V$, consider the state corresponding to $V$ (similar to what we did in the analysis of correlation
functions in \cref{appx:corrbound}). The corresponding state, $\ket \Psi$, lives
on $\Hil_R\otimes \Hil_{A_1}\otimes \Hil_{A_2}\otimes \Hil_{A_3}$, where as before $\Hil_R \simeq \Hil_L$ is a
reference system. The covariance of $V$ translates to the invariance of
$\ket \Psi$:
\begin{align}
  `\big[ \overline{U} \ot r_1(U) \ot r_2(U) \ot r_3 (U) ] \, \ket \Psi_{RA_1A_2A_3}
  = \ket \Psi_{RA_1A_2A_3} \text{ for all } U\in \UU(d_L).
\end{align}

Therefore, $\Psi_{RA_1A_2A_3}$ lives on an invariant subspace of the unitary
group.  The projector to this invariant subspace is given by
\begin{align}
  \Pi_{RA_1A_2A_3} = \int dU\,\overline{U}\ot r_1(U) \ot r_2(U)\ot r_3(U)\ .
\end{align}
We denote by $d_P = \tr`\big(\Pi_{RA_1A_2A_3})$ the dimension of the invariant
subspace.  Further, define
$\Pi_{RA_i}:=\tr_{A\setminus A_i}`\big(\Pi_{RA_1A_2A_3})$ and
$\Pi_{\widehat{R A_i}}:=\tr_{RA_i}`\big(\Pi_{RA_1A_2A_3})$.

Now, we can chose the state $\ket \Psi_{RA_1A_2A_3}$ randomly from $\Pi_{RA_1A_2A_3}$,
and \emph{define} $V$ to be the corresponding isometry, i.e.,
$V_{L\rightarrow A_1A_2A_3}:= \bra{\Phi}_{LR}\ket \Psi_{RA_1A_2A_3}$, where
$\ket\Phi = \sum \ket{k}_L\ket{k}_R$ for some standard choice of bases on $\Hil_L$
and $\Hil_R$.

As in \cref{appx:corrbound}, we consider single erasures at known locations,
i.e., the noise channel is given by
$\mathcal{N}(\cdot) = \sum q_i \proj{i}_C\otimes\mathcal{N}^i(\cdot)$, where
$\mathcal{N}^i$ erases the $i$-th system as
per~\eqref{eq:noise-map-general-alpha}.  If the isometry $V$ is chosen at random
in the space of covariant isometries, then on average, the fidelity of recovery of the code defined by the isometry is lower bounded as follows.
\begin{lemma}
  \label{lem:avg_fidelity}
  Suppose that the covariant isometry $V$ is chosen randomly as above. Then, the
  infidelity of the code after erasure of subsystem $i\in \{1,2,3\}$,
  averaged over all covariant isometries, satisfies the following inequality:
  \begin{align}
    \frac12
    \EE[\epsilon^2_{\mathrm{e}}(\mathcal{N}^i\circ\mathcal{E})] 
    \leq \frac12 \norm*{ \frac{\Pi_{RA_i}}{d_P}-\frac{\Ident_{RA_i}}{d_L d_i} }_1
    +\frac12 \sqrt{d_L  d_i}\sqrt{\frac{\tr`\big(\Pi_{\widehat{RA_i}}^2)}{d_P^2}}\ .
  \end{align}
\end{lemma}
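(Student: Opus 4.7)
The plan is to combine the B\'eny--Oreshkov characterization of recovery fidelity with a standard second-moment calculation for Haar-random vectors in the invariant subspace, together with a triangle-inequality split that matches the two terms in the target bound.

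First, I would reduce the infidelity to a trace distance on the reference-plus-erased-subsystem. Recall from \eqref{eq:compl-channel-N-E-one-erasure} and \eqref{eq:Beny-Oreshkov-f-fixedinput} that the complementary channel of $\mathcal{N}^i\circ\mathcal{E}$ applied to the maximally entangled state is exactly $\Psi_{RA_i}$, and that $\Psi_R = \Ident_R/d_L$. Using the B\'eny--Oreshkov identity and the standard purified-distance inequality $1 - F^2(\sigma,\rho) \leq 2\delta(\sigma,\rho)$, one obtains, for \emph{any} fixed state $\zeta_{A_i}$,
\begin{equation}
\tfrac12\,\epsilon_{\mathrm{e}}^2(\mathcal{N}^i\circ\mathcal{E}) \;\leq\; \delta\bigl(\Psi_{RA_i},\;\zeta_{A_i}\otimes \Ident_R/d_L\bigr)\ .
\end{equation}
Choosing $\zeta_{A_i}=\Ident_{A_i}/d_i$ and applying the triangle inequality for the trace distance, and then taking the expectation over the random $V$, we split
\begin{equation}
\tfrac12\,\EE[\epsilon_{\mathrm{e}}^2] \;\leq\; \EE\bigl[\delta(\Psi_{RA_i},\Pi_{RA_i}/d_P)\bigr] \;+\; \delta\bigl(\Pi_{RA_i}/d_P,\;\Ident_{RA_i}/(d_L d_i)\bigr)\ .
\end{equation}
The second summand already matches the first term of the claimed bound, so only the expectation needs work.

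Next, I would control the first summand by the Hilbert--Schmidt norm via Cauchy--Schwarz, $\lVert X\rVert_1 \leq \sqrt{\dim}\,\lVert X\rVert_2$, followed by Jensen's inequality, reducing the problem to computing $\EE\!\bigl[\tr\!\bigl((\Psi_{RA_i}-\Pi_{RA_i}/d_P)^2\bigr)\bigr]$. For $\ket\Psi$ Haar-random on the image of $\Pi_{RA_1A_2A_3}$, one has $\EE[\ket\Psi\bra\Psi]=\Pi/d_P$ and
\begin{equation}
\EE\bigl[\ket\Psi\bra\Psi^{\otimes 2}\bigr]
= \frac{(\Pi\otimes\Pi)(\Ident + F_{RA_1A_2A_3})}{d_P(d_P+1)}\ ,
\end{equation}
using $[F_{RA_1A_2A_3},\Pi\otimes\Pi]=0$. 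The swap trick $\tr(\Psi_{RA_i}^2)=\tr[(F_{RA_i}\otimes\Ident_{BB'})\Psi^{\otimes 2}]$ (with $B=A\setminus A_i$) then gives two contributions: the $\Ident$ piece yields $\tr(\Pi_{RA_i}^2)$, and, crucially, the $F_{RA_1A_2A_3}=F_{RA_i}\otimes F_B$ piece collapses $F_{RA_i}^2=\Ident$ and leaves $\tr[(\Ident\otimes F_B)(\Pi\otimes\Pi)]=\tr(\Pi_{\widehat{RA_i}}^2)$, by partial-tracing over $RA_i$ before the swap. Subtracting $\tr(\Pi_{RA_i}^2)/d_P^2$ and dropping the negative term $-\tr(\Pi_{RA_i}^2)/[d_P^2(d_P+1)]$, one obtains the clean bound $\EE\lVert \Psi_{RA_i}-\Pi_{RA_i}/d_P\rVert_2^2 \leq \tr(\Pi_{\widehat{RA_i}}^2)/d_P^2$, producing exactly the second term in the statement. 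Assembling everything proves the lemma.

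The main technical obstacle is the partial-trace manipulation in step three: correctly handling the two tensor-product copies and verifying that swapping only the $B$ factors converts the expression to a purity-like quantity on the complementary system $\widehat{RA_i}$. Once the commutation $[F,\Pi\otimes\Pi]=0$ is exploited to move $F$ past $\Pi\otimes\Pi$ and the decomposition $F_{RA_1A_2A_3}=F_{RA_i}\otimes F_B$ is used to cancel $F_{RA_i}^2=\Ident$, the remaining calculation is a routine application of Cauchy--Schwarz, Jensen, and the triangle inequality, and all factors of $\tfrac12$ line up with the statement.
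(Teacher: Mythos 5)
Your proposal is correct and follows essentially the same route as the paper: B\'eny--Oreshkov plus the purified-to-trace-distance bound, a triangle-inequality split isolating the $\|\Pi_{RA_i}/d_P-\Ident/(d_Ld_i)\|_1$ offset, Cauchy--Schwarz and Jensen to pass to the Hilbert--Schmidt norm, and the Haar second moment via $\EE[\Psi^{\otimes 2}]=(\Pi\otimes\Pi)(\Ident+F)/[d_P(d_P+1)]$ together with the swap trick and dropping the negative $\tr(\Pi_{RA_i}^2)$ term. The only cosmetic difference is that you apply the swap on $RA_i$ and use $F=F_{RA_i}\otimes F_B$ to collapse $F_{RA_i}^2=\Ident$, whereas the paper computes $\tr(\Psi_{RA_i}^2)$ as $\tr[\Psi^{\otimes 2}\mathcal{F}_{A_2A_3}]$ using purity; both yield the same two terms.
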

Intuitively, \cref{lem:avg_fidelity} states that in order to get good quantum
codes we need to do the followings:
\begin{enumerate}
\item Control the constant offset,
  $\norm*{ d_P^{-1} \Pi_{RA_i} - \Ident_{RA_i}/(d_R d_i)}_1$.  This can be
  achieved by making sure that $\Pi_{RA_i}$ is close to a multiple of identity.
\item Control the \emph{fluctuations} by minimizing
  $d_p^{-2}\tr`\big(\Pi_{\widehat{RA_i}}^2)$. Note that this is the purity of density matrix $\Pi_{\widehat{RA_i}}/d_p$, so it would be small if $\Pi_{\widehat{RA_i}}$ is close to a multiple of a projector.
\end{enumerate}
\Cref{lem:avg_fidelity} is how far we can go without discussing the detailed
representation theory of $\UU(d_L)$. From now on, we focus on analyzing
$\Pi_{RA_i}$ and $\Pi_{\widehat{RA_i}}$.

Without loss of generality assume that $i=1$. Also, suppose
$\lambda, \mu,\nu$ are the Young diagrams defining the irreducible
representations $r_1$, $r_2$ and $r_3$. Similarly, $r_{e_1}(U)=U$, where
$e_{1}=(1,0,0,\cdots,0)$ is the Young diagram of the standard
representation. Now, we use representation theory techniques to explicitly compute
$\Pi_{RA_1}$ and $\Pi_{\widehat{RA_1}}=\Pi_{A_2A_3}$.

The degeneracies of fusion of different irreps of $\UU(d_L)$ are known, and
specified by the so called Littlewood-Richardson coefficients
$c^{\theta}_{\mu\nu}$:
\begin{align}
  r_{\mu} \ot r_{\nu} = \bigoplus_{\theta} r_{\theta} \ot I_{c^{\theta}_{\mu\nu}}\ .
\end{align}
A specific case of this formula which is also applicable to our analysis is a
version \emph{Pieri formula} (See Appendix~A.1
of~\cite{fulton2013representation}): If $e_i$ is the $i$-th computation basis
vector, then
\begin{align}
  \overline{r}_{e_1} \ot r_{\lambda} = \bigoplus_{i \in \mathcal I} r_{\lambda-e_i}\,
\end{align}
where $\mathcal I\in \{1,2,\cdots,d_L\}$ is the index set that $\lambda-e_i$ is
a valid Young diagram, i.e., a non-increasing sequence. In particular, if $\lambda$
is strictly decreasing then $\mathcal I= \{1,2,\cdots,d_L\}$. This relation can be derived by either directly applying the Littlewood-Richardson rule~\cite[Appendix~A.1]{fulton2013representation}, or starting from the standard Pieri formula and dualizing representations. With this, we have
\begin{equation}\label{eq:Exp_Pi}
\Pi_{RA_1A_2A_3}=\int dU(\bigoplus_{i\in \mathcal I} r_{\lambda-e_i}(U) ) \ot
(\bigoplus_{\theta} r_\theta(U) \ot I_{c^\theta_{\mu\nu}}).
\end{equation} 
From the Schur orthogonality relations for compact groups (Peter-Weyl theorem), we have that $\int dU[ \tr{\overline{r_{\beta}(U)}} ] r_{\alpha}(U) =\delta_{\alpha\beta} I_\alpha/d_\alpha $. Applying this to \cref{eq:Exp_Pi} we get the following explicit relations:
\begin{align}\label{eq:Pi1}
  &\Pi_{RA_1}=\bigoplus_{i\in \mathcal I}
    \frac{c_{\mu\nu(\lambda-e_i)}}{d_{\lambda-e_i}} I_{d_{\lambda-e_i}},\\
  &\Pi_{A_2A_3}=\bigoplus_{i\in \mathcal I}
    \frac{1}{d_{\lambda-e_i}} I_{d_{\lambda-e_i}}\ot I_{c_{\mu\nu(\lambda-e_i)}}\ \label{eq:Pi2},
\end{align}
where $c_{\mu\nu\lambda} := c^{\overline\lambda}_{\mu\nu}$, and
$\overline\lambda$ is the dual of $\lambda$. Recall that in order for the random
codes to perform well, we need that $\Pi_{RA_1}$ and $\Pi_{A_2A_3}$ to be close
to multiples of projectors. \Cref{eq:Pi1,eq:Pi2} show that to achieve this we
only need $\frac{c_{\mu\nu(\lambda-e_i)}}{d_{\lambda-e_i}}$ and
$ \frac{1}{d_{\lambda-e_i}}$ to be almost constants as $i$ varies. The following
lemma makes this observation quantitative:
\begin{lemma}
  \label{lem:smooth}
  Suppose that $0\leq \delta\leq 1/2$ is a real number such that for all $i\in \mathcal I$,
  \begin{eqnarray}
    1-\delta \leq &\frac{c_{\mu \nu (\lambda-e_i)}}{c_{\mu \nu \lambda}}&\leq 1+\delta \\
    1-\delta \leq &\frac{d_{ \lambda-e_i}}{d_\lambda}&\leq 1+\delta,
  \end{eqnarray}
  then,
  \begin{align}
    \frac12
    \EE[\epsilon^2_{\mathrm{e}}(\mathcal{N}^{1}\circ\mathcal{E})]
    \leq 4\delta + \frac{5}{2\sqrt{c_{\mu \nu \lambda}}}\ .
  \end{align}
\end{lemma}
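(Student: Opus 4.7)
The plan is to substitute the explicit block-diagonal forms of $\Pi_{RA_1}$ and $\Pi_{A_2A_3}$ (given by Eqs.~\eqref{eq:Pi1}, \eqref{eq:Pi2}) into the two terms of \cref{lem:avg_fidelity}, and bound each separately using the smoothness hypotheses. By the symmetry between the three subsystems it suffices to treat the erasure of $A_1$. I write $N=\abs{\mathcal{I}}$, $A_i = c_{\mu\nu(\lambda-e_i)}$, $B_i = d_{\lambda-e_i}$, $A = c_{\mu\nu\lambda}$, $B = d_\lambda$. Two useful identities will be $d_P = \sum_i A_i$ (from tracing $\Pi_{RA_1A_2A_3}$) and $d_L d_1 = d_L d_\lambda = \sum_i B_i$ (from $\dim(\bar{r}_{e_1}\otimes r_\lambda) = \sum_i d_{\lambda-e_i}$).

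For the first term $\tfrac12\,\norm{\Pi_{RA_1}/d_P - \Ident_{RA_1}/(d_L d_1)}_1$, both operators are block-diagonal with blocks of size $B_i$ in the decomposition $\bar{r}_{e_1}\otimes r_\lambda = \bigoplus_i r_{\lambda-e_i}$: on block $i$ the coefficients in front of the identity are $A_i/(d_P B_i)$ and $1/(d_L d_\lambda)$ respectively. Consequently the one-norm reduces to $\sum_i \abs{p_i - q_i}$, where $p_i = A_i/d_P$ and $q_i = B_i/(d_L d_\lambda)$ are two probability distributions on $\mathcal{I}$. The idea is then to compare each distribution to the uniform distribution $1/N$: the hypothesis $A_i \in [(1-\delta)A,(1+\delta)A]$ combined with $d_P \geq N A (1-\delta)$ yields $\abs{p_i - 1/N} \leq 2\delta/(N(1-\delta))$, and the analogous bound holds for $q_i$. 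Triangle inequality then gives $\sum_i \abs{p_i - q_i} \leq 4\delta/(1-\delta) \leq 8\delta$, so that this term contributes at most $2\delta/(1-\delta) \leq 4\delta$ for $\delta \leq 1/2$.

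For the second term $\tfrac12\sqrt{d_L d_1}\sqrt{\tr(\Pi_{A_2A_3}^2)/d_P^2}$, Eq.~\eqref{eq:Pi2} gives directly $\tr(\Pi_{A_2A_3}^2) = \sum_i A_i/B_i$. The smoothness bounds $A_i/B_i \leq (A/B)\cdot(1+\delta)/(1-\delta)$, $d_P \geq NA(1-\delta)$, and $d_L d_\lambda \leq NB(1+\delta)$ substitute in to yield
\begin{align*}
\sqrt{\frac{d_L d_1\,\tr(\Pi_{A_2A_3}^2)}{d_P^2}}
\;\leq\; \frac{1+\delta}{(1-\delta)^{3/2}}\,\frac{1}{\sqrt{c_{\mu\nu\lambda}}}
\;\leq\; \frac{5}{\sqrt{c_{\mu\nu\lambda}}}
\end{align*}
for $\delta\leq 1/2$ (since $(1+\delta)/(1-\delta)^{3/2} \leq 3\sqrt{2} \leq 5$). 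Halving gives a $5/(2\sqrt{c_{\mu\nu\lambda}})$ contribution, and adding the two bounds proves the lemma.

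The main obstacle is essentially bookkeeping the constants: a naive multiplicative bound of the form $A_i d_L d_\lambda/(d_P B_i) \in [((1-\delta)/(1+\delta))^2, ((1+\delta)/(1-\delta))^2]$ would give a worse constant $O(\delta)/(1-\delta)^2$ on the first term. The sharper route taken above compares each distribution $p_i,q_i$ to the uniform distribution $1/N$ before using the triangle inequality, which removes a factor of $(1-\delta)^{-1}$. The second term only requires the coarse multiplicative bounds since we need merely the $(1\pm\delta)$ factors to stay bounded rather than small.
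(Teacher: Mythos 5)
Your proof is correct and follows essentially the same route as the paper's: both express the first term as $\sum_i|p_i-q_i|$ for the two distributions $p_i=c_{\mu\nu(\lambda-e_i)}/d_P$, $q_i=d_{\lambda-e_i}/(d_L d_\lambda)$, compare each to the uniform distribution $1/|\mathcal I|$ to get the $4\delta$ contribution, and bound the purity term via $\tr(\Pi_{A_2A_3}^2)=\sum_i c_{\mu\nu(\lambda-e_i)}/d_{\lambda-e_i}$ together with the smoothness hypotheses to obtain the $5/(2\sqrt{c_{\mu\nu\lambda}})$ contribution. The only minor omission is that establishing $\lvert p_i - 1/N\rvert \leq 2\delta/(N(1-\delta))$ also uses the upper bound $d_P\leq NA(1+\delta)$, not just the lower bound you cite, though that direction in fact yields a smaller deviation and does not affect the result.
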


\Cref{lem:smooth} demonstrates that in order to get useful lower bounds on the
fidelity of the codes, one has to show that $d_\lambda$ and
$c_{\mu\nu\lambda}$ are stable under perturbations by $e_i$.  We
construct our irreps such that they achieve this stability.

Define $\abs{\lambda}: = \sum_i {\lambda_i}$ for arbitrary Young
diagram $\lambda$. It is known that if
$\abs{\mu}+\abs{\nu}+\abs{\lambda} \neq 0$, then $c_{\mu\nu\lambda}=0$.  Now,
the construction is as follows: Fix a triplet of Young diagrams
$(\hat \mu, \hat \nu, \hat \lambda)$ such that
$\abs{\hat \mu_i}+\abs{\hat \nu_i}+\abs{\hat \lambda_i} = 0$ and
set
\begin{align}
  (\mu,\nu,\lambda) = (N\hat \mu+e_1,N \hat \nu, N\hat \lambda)\ ,
\end{align}
for large values of $N$.  We used $N\hat \mu+e_1$ instead of
$N\hat \mu$ is to ensure that $\abs{\mu}+\abs{\nu}+\abs{\lambda-e_i} = 0$ as we
need $c_{\mu\nu(\lambda-e_i)}$ to be non-zero.

Showing smoothness of $d_\lambda$ is much simpler, because by the \emph{Weyl
  dimension formula} (see~\cite[Section 15.3]{fulton2013representation}) it is polynomial in
$\lambda=(\lambda_1,\lambda_2,\cdots,\lambda_{d_L})$. So by basic Taylor
expansion we have
$d_{\lambda+e_i}=d_\lambda + \partial d_\lambda/\partial \lambda_i +
1/2\,\partial^2 d_\lambda/\partial \lambda_i^2 +\cdots$. Note that the total
degree of the terms in the sum decreases by differentiation. Hence, $d_\lambda$ is the dominant term in the
expansion of $d_{\lambda+e_i}$ and other terms
are lower order in $N$. Therefore, there exist $N_0'$ and $C_0'$
such that for $N\geq N_0'$,
\begin{equation}
  \label{eq:as_1}
  1-\frac {C_0'}{N} \leq \frac{d_{\lambda-e_i}}{d_\lambda} \leq 1+\frac{C_0'}{N}.
\end{equation}
The Littlewood-Richardson coefficients are much more complicated. They can be
computed using efficient algorithms, such as the Littlewood-Richardson rule, but
there is no explicit formula. In fact, they are specific cases of the called
\emph{Kronecker coefficients} whose computation is known to be
NP-hard~\cite{Ikenmeyer2017cc_vanishing}. However, a series of new developments
in the representation theory of the unitary group has revealed interesting
polynomiality properties for the LR coefficients.

It is known that $c_{\mu\nu\lambda}$ as a function of $\mu$, $\nu$ and $\lambda$ (in
the $3d_L - 1$ dimensional subspace constrained by the condition
$\abs{\mu}+\abs{\nu}+\abs{\lambda} = 0$) is non-zero if and only if $(\mu,\nu, \lambda)$ is in a particular convex
cone. This cone, or~\emph{chamber complex}, is then divided to several sub-cones
or~\emph{chambers}. In Ref.~\cite{rassart2004polynomiality} it is shown that
$c_{\mu\nu\lambda}$ is a polynomial within each chamber. See
\cref{fig:chamberc}.

We chose $(\hat \mu, \hat \nu, \hat \lambda)$ such that it is in the
interior of one of the chambers, and $c_{N\hat \mu, N\hat \nu, N\hat \lambda}$
is not constant. If $N$ is large enough, $c_{\mu \nu (\lambda-e_i)}$ will remain
in the interior of the same chamber for all $i$, and are described by
the same polynomial. Hence, similar to $d_\lambda$, we have:
\begin{equation}
  \label{eq:as_2}
  1-\frac {C_0''}{N}
  \leq \frac{c_{\mu\nu(\lambda-e_i)}}{c_{\mu\nu\lambda}}
  \leq 1+\frac{C_0''}{N}\ ,
\end{equation} 
where $N\geq N_0''$ and for some $N_0''$ and $C_0''$. Clearly, these bounds the smoothness required for \cref{lem:smooth} to work, and therefore we
get our main theorem. Their detailed proof will be in
\cref{appx:Random_Theorems}.
\begin{theorem}
  \label{thm:main_random}
  Suppose that $d_L \geq 4$. There exist Young diagrams $\hat \lambda$,
  $\hat \mu$ and $\hat \nu$, an integer $N_0$, and a constant $C_0$, such that
  if $(\mu,\nu,\lambda)=(N \hat \mu+e_1,N \hat \nu, N\hat \lambda)$ and $V$ is a
  random covariant isometry in the sense of~\eqref{eqn:cov_def}, we have,
  \begin{equation}
    \label{eq:main_thm_1}
    \epsilon_{\mathrm{e}}(\mathcal{N}\circ\mathcal{E})\leq \frac{C_0}{\sqrt N}
    \quad \text{ for } \quad N\geq N_0\ .
  \end{equation}
  For these constructions, we have,
  \begin{equation}
    \label{eq:main_thm_2}
    \epsilon_{\mathrm{e}}(\mathcal{N}\circ\mathcal{E})
    \leq C_1 \, (\max_i d_i)^{-\frac{1}{d_L(d_L-1)}}\ .
  \end{equation}
\end{theorem}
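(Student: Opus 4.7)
The plan is to combine \cref{lem:avg_fidelity,lem:smooth} with an explicit construction of irreps $(\hat\mu,\hat\nu,\hat\lambda)$ chosen so that the rescaled triple $(\mu,\nu,\lambda) = (N\hat\mu + e_1, N\hat\nu, N\hat\lambda)$ makes both the Littlewood-Richardson coefficients $c_{\mu\nu\lambda}$ and the irrep dimensions $d_\lambda,d_\mu,d_\nu$ polynomial in $N$ of positive degree. Since \cref{lem:smooth} bounds $\EE[\epsilon_{\mathrm{e}}^2(\mathcal{N}^1\circ\mathcal{E})]$ in terms of the relative smoothness $\delta$ of the ratios $c_{\mu\nu(\lambda-e_i)}/c_{\mu\nu\lambda}$ and $d_{\lambda-e_i}/d_\lambda$ plus an additive $1/\sqrt{c_{\mu\nu\lambda}}$ term, polynomiality will give $\delta = O(1/N)$ and $c_{\mu\nu\lambda} = \Omega(N^k)$ for some $k\geq 1$, hence $\EE[\epsilon_{\mathrm{e}}^2(\mathcal{N}^1\circ\mathcal{E})] = O(1/N)$.

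The key step will be to pick $(\hat\mu,\hat\nu,\hat\lambda)$ in the interior of a single chamber of the Littlewood-Richardson chamber complex~\cite{rassart2004polynomiality}, with $\hat\lambda$ strictly decreasing so that the Pieri index set $\mathcal I$ equals $\{1,\ldots,d_L\}$, with $|\hat\mu|+|\hat\nu|+|\hat\lambda|=0$ so that after adding $e_1$ to $\mu$ and subtracting $e_i$ from $\lambda$ the charge-balance condition required for $c_{\mu\nu(\lambda-e_i)}\neq 0$ is maintained, and so that the polynomial governing $c$ on this chamber has positive degree in $N$. For all sufficiently large $N$ the entire family $\{(\mu,\nu,\lambda-e_i)\}_{i\in \mathcal I}$ will remain in this single open chamber, so $c_{\mu\nu(\lambda-e_i)}$ is described by one common polynomial in $N$ for every $i$; together with the explicit polynomial from the Weyl dimension formula for $d_\lambda$, a mean-value-theorem argument on both polynomials then delivers the required $O(1/N)$ smoothness bounds~\eqref{eq:as_1} and~\eqref{eq:as_2}, verifying the hypotheses of \cref{lem:smooth} with $\delta = O(1/N)$. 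The condition $d_L\geq 4$ enters at this step, since one needs enough ambient dimensions in the chamber complex to host a strictly decreasing $\hat\lambda$ alongside a non-constant polynomial for $c$.

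For erasures of the other two subsystems I would repeat the argument with the roles of $\mu,\nu,\lambda$ permuted, so analogous smoothness statements for $c_{(\mu-e_i)\nu\lambda}/c_{\mu\nu\lambda}$ and $c_{\mu(\nu-e_i)\lambda}/c_{\mu\nu\lambda}$ follow by applying the same chamber-polynomiality machinery to the two permuted Pieri formulas. Combining the three bounds via the identity $\epsilon_{\mathrm{e}}^2(\mathcal{N}\circ\mathcal{E}) = \sum_i q_i\,\epsilon_{\mathrm{e}}^2(\mathcal{N}^i\circ\mathcal{E})$ implicit in~\eqref{eq:uioekbodsuid} yields $\EE[\epsilon_{\mathrm{e}}^2(\mathcal{N}\circ\mathcal{E})] = O(1/N)$, and Markov's inequality then produces some specific covariant isometry $V$ achieving $\epsilon_{\mathrm{e}}(\mathcal{N}\circ\mathcal{E}) \leq C_0/\sqrt{N}$ for an appropriate constant $C_0 = C_0(d_L,\hat\mu,\hat\nu,\hat\lambda)$, proving~\eqref{eq:main_thm_1}. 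For~\eqref{eq:main_thm_2}, the Weyl dimension formula shows that $d_\lambda$ is a polynomial of degree $d_L(d_L-1)/2$ in the components of $\lambda$, so each physical dimension $d_i = \dim r_i$ grows as $\Theta(N^{d_L(d_L-1)/2})$; inverting this and substituting into~\eqref{eq:main_thm_1} gives $\epsilon_{\mathrm{e}} \leq C_1 (\max_i d_i)^{-1/(d_L(d_L-1))}$. The hardest part will be the explicit construction: I expect to need concrete combinatorial work with the LR chamber complex to exhibit a compatible triple $(\hat\mu,\hat\nu,\hat\lambda)$ simultaneously satisfying the chamber-interior, strictly-decreasing, balance, and positive-degree-polynomial conditions for all three subsystem analyses, and to quantitatively control the threshold $N_0$ and the constant $C_0$ in terms of $d_L$.
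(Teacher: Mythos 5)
Your plan follows the paper's proof essentially step for step (reduce to \cref{lem:smooth}, select a chamber-interior triple with $|\hat\mu|+|\hat\nu|+|\hat\lambda|=0$ so polynomiality gives $O(1/N)$ smoothness, combine the three erasure events by \cref{lem:inv-fid-to-global-fid}, derandomize via Markov, then invert the Weyl dimension formula). But there is one genuine error in the quantitative bookkeeping, and it sits exactly at the place where the hypothesis $d_L\geqslant 4$ does its work.

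You claim that $\delta=O(1/N)$ together with $c_{\mu\nu\lambda}=\Omega(N^k)$ for \emph{some $k\geqslant 1$} already gives $\EE[\epsilon_{\mathrm{e}}^2]=O(1/N)$. Look again at the bound of \cref{lem:smooth}:
\begin{equation*}
  \tfrac12\,\EE[\epsilon^2_{\mathrm{e}}(\mathcal{N}^1\circ\mathcal{E})]
  \;\leqslant\; 4\delta + \frac{5}{2\sqrt{c_{\mu\nu\lambda}}}\,.
\end{equation*}
The second term is $O(N^{-k/2})$; for this to be $O(1/N)$ you need $k\geqslant 2$, i.e.\ the Littlewood--Richardson polynomial must have degree at least two on the chamber you pick. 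Your stated reason for $d_L\geqslant 4$ --- ``enough ambient dimensions to host a strictly decreasing $\hat\lambda$ alongside a non-constant polynomial'' --- is beside the point: for $d_L=3$ one can perfectly well pick a strictly decreasing $\hat\lambda$ and a chamber where $c$ is a non-constant polynomial; the issue is that by \cref{lem:LR-growth} the maximal degree of the LR polynomial on any chamber is $\binom{d_L-1}{2}$, which equals $1$ for $d_L=3$ and jumps to $3$ for $d_L=4$. So for $d_L=3$ one only gets $\EE[\epsilon_{\mathrm{e}}^2]=O(N^{-1/2})$, i.e.\ $\epsilon_{\mathrm{e}}=O(N^{-1/4})$ --- a weaker scaling (the paper explicitly flags the $\UU(3)$ case as having ``slightly different scaling'' for precisely this reason). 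You should cite \cref{lem:LR-growth} and observe that it delivers $k=\binom{d_L-1}{2}\geqslant 3$ whenever $d_L\geqslant 4$, which is more than enough; the threshold $d_L\geqslant 4$ is exactly the first point where this degree reaches $2$.

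Two smaller points. First, when you permute $\mu,\nu,\lambda$ to treat erasure of $A_2$ and $A_3$, the Pieri step subtracts $e_i$ from $\mu$ (respectively $\nu$), so you actually need all three of $\hat\mu,\hat\nu,\hat\lambda$ strictly decreasing, not just $\hat\lambda$; this should be folded into your ``explicit construction'' requirements. Second, ``polynomiality'' alone does not guarantee positive degree --- you need to invoke the existence of a non-constant chamber, which again is the content of \cref{lem:LR-growth}; you gesture at this with ``so that the polynomial governing $c$ on this chamber has positive degree,'' but the quantitative requirement is degree $\geqslant 2$, not just $\geqslant 1$.
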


Finally, \cref{thm:randomcode} follows immediately from \cref{thm:main_random}.


\subsection{Randomized constructions: Detailed proofs}   
First, we prove \cref{lem:avg_fidelity}.
\begin{proof}[*lem:avg_fidelity]
  First, we express the error-correcting accuracy of the code $V$ according to
  the average entanglement fidelity in terms of the distance of the codewords to
  a maximally mixed state, including the reference system.  By
  B\'eny/Oreshkov~\eqref{eq:Beny-Oreshkov-f-fixedinput} (choosing
  $\zeta_E=\Ident_{A_i}/d_i$), we have
  \begin{align}
    f_{\mathrm{e}}(\mathcal{N}^i\circ\mathcal{E})
    \geqslant F`\big( \widehat{\mathcal{N}^i\circ\mathcal{E}}(\hat\phi_{LR}), 
    \zeta_E\otimes\hat\phi_R )
    =  F`*( \Psi_{RA_i}, \frac{\Ident_{A_i}}{d_i}\otimes\frac{\Ident_{R}}{d_L} )\ ,
  \end{align}
  and hence
  \begin{align}
    \epsilon^2_{\mathrm{e}}(\mathcal{N}^i\circ\mathcal{E})
    \leqslant 
    1 - F^2`\big( \Psi_{RA_i}, \frac{\Ident_{RA_i}}{d_L d_i} )
    \leqslant \norm*{\Psi_{RA_i} - \frac{\Ident_{RA_i}}{d_L d_i}}_1\ ,
    \label{eq:iaidoauojfnkjafd}
  \end{align}
  where we recall the usual relations between trace distance and the fidelity.

  We denote by $\EE$ the averaging over all possible invariant states
  $\Psi_{RA_1A_2A_3}$.  Taking an average over~\eqref{eq:iaidoauojfnkjafd} gives
  us
  \begin{align}
    \frac12\EE(\epsilon^2_{\mathrm{e}}(\mathcal{N}^i\circ\mathcal{E}))
    \leq \frac12\EE \, \norm*{\Psi_{RA_i}-\tau_{RA_i}}_1\ ,
  \end{align}
  where we write $\tau_{RA_i} = \Ident_{RA_i}/(d_L d_i)$.  Applying triangle
  inequality, Cauchy-Schwarz inequality, and the concavity of square root gives
  us (see Ref.~\cite{hayden2008decoupling} for similar calculations),
  \begin{align}
    \label{eq:lb1}
    \frac12 \EE(\epsilon^2_{\mathrm{e}}(\mathcal{N}^i\circ\mathcal{E}))
    &\leq \frac12 \EE \,\norm{ \Psi_{RA_i}-\tau_{RA_i} }_1
      \nonumber\\
    &\leq \frac 12\norm{ \EE \Psi_{RA_i} - \tau_{RA_i}}_1
      + \frac 12 \EE \norm{\Psi_{RA_i}-\EE \Psi_{RA_i}}_1
      \nonumber\\
    &\leq \frac 12 \norm{ \EE \Psi_{RA_i} - \tau_{RA_i} }_1 +
      \frac12 \sqrt{d_R d_i \left(\tr[\EE \Psi_{RA_i}^2]-\tr [(\EE \Psi_{RA_i})^2] \right)}.
  \end{align}
  Now, consider the rank-$d_P$ projector to the invariant space $\Pi_{RA_1A_2A_3}$
  that we constructed in \cref{sec:random-codes}. Define
  $L: \CC^{d_P} \rightarrow \Hil_A\otimes\Hil_R$ be the isometry mapping to
  the invariant space, which satisfies
  \begin{align}
    L^\dagger L &= \Ident_{d_P}\ ;\text{ and}
    & L L^\dagger &=\Pi_{RA_1A_2A_3}\ .
  \end{align}

  We can define $\ket \Psi_{RA}=L \ket \chi$, where $\ket \chi $ is a random state
  in $\CC^{d_p}$. Then,
  \begin{align}
    \EE \Psi_{RA_i} = \EE \tr_{A A_i}`*( L\chi L^\dagger)
    =\tr_{\widehat{R A_i}}`*( L (\EE \chi) L^\dagger)
    = \frac{1}{d_P}\tr_{\widehat{RA_i}} `*( L L^\dagger)
    = \frac{\Pi_{RA_i}}{d_P}\ ,
  \end{align}
  where we used $\EE \chi=\Ident/d_P$.  For simplicity, we henceforth set $i=1$
  without loss of generality.  If $\mathcal F_{A_2A_3}$ is flip operator swapping
  two copies of the Hilbert space $\Hil_{A_2A_3}=\Hil_{A_2} \ot \Hil_{A_3}$, we
  have
  \begin{align}
    \EE\tr[\Psi_{RA_1}^2]
    &=\EE \tr[ \Psi^{\ot 2} \mathcal F_{A_2A_3}]
      =\tr[ L^{\ot 2} \EE \chi^{\ot 2} L^{\dagger \ot 2} \mathcal F_{A_2A_3}]
      \nonumber\\
    &= \tr[ L^{\ot 2} \frac{I + \mathcal F}{d_P(d_P+1)} L^{\dagger \ot 2} \mathcal F_{A_2A_3}]
      = \frac{1}{d_P(d_P+1)}\tr[ \Pi_{RA}^{\ot 2}(\mathcal F_{RA_1}
      + \mathcal F_{A_2A_3})]
      \nonumber\\
    &=\frac{\tr(\Pi_{RA_1}^2)+\tr(\Pi_{A_2A_3}^2)}{d_P(d_P+1)}\ .
  \end{align}
  Substituting into~\eqref{eq:lb1} and applying basic inequalities lead to,
  \begin{align}
    \label{eq:lb2}
    \frac12 \EE(\epsilon^2_{\mathrm{e}}(\mathcal{N}^1\circ\mathcal{E}))
    &\leq
      \frac 12 \norm*{ \frac{ \Pi_{RA_i}}{d_P}-\tau_{RA_1} }_1
      + \frac12 \frac{\sqrt{d_Rd_1}}{d_P} \sqrt{\frac{1}{1+1/d_P}\tr (\Pi_{A_2A_3}^2)
      -\frac{1}{d_P+1}\tr (\Pi_{RA_1}^2)}
      \nonumber\\
    &\leq \frac 12 \norm*{ \frac{ \Pi_{RA_1}}{d_P}-\tau_{RA_1} }_1
      + \frac12 \frac{\sqrt{d_Rd_1}}{d_P} \sqrt{\tr (\Pi_{A_2A_3}^2)}\ ,
  \end{align}
  which is the desired formula.
\end{proof}

Next, we prove \cref{lem:smooth}.
\begin{proof}[*lem:smooth]
  For simplicity of exposition, define two probability distributions
  $p,q :\mathcal I \rightarrow \RR_{\geq 0}$,
  \begin{align}
    p_i &= \frac{c_{\mu\nu(\lambda-e_i)}}{d_P}\ ;
    &q_i &=\frac{d_{\lambda-e_i}}{d_R d_1}\ .
  \end{align}
  From \cref{lem:avg_fidelity}, we have,
  \begin{align}
    \frac12 \EE(\epsilon^2_{\mathrm{e}}(\mathcal{N}^i\circ\mathcal{E}))
    \leq \frac12 \norm*{ \frac{\Pi_{RA_i}}{d_P}-\tau_{RA_i} }_1
    + \frac12 \sqrt{d_L  d_1}\sqrt{\frac{\tr{\Pi_{A_2A_3}^2}}{d_P^2}}\ .
    \label{eq:ogydfaiudhojdklbhasiu}
  \end{align}
  We would like to bound both terms on the right hand side
  of~\eqref{eq:ogydfaiudhojdklbhasiu}.  We have
  \begin{align}
    \norm*{ \frac{ \Pi_{RA_i}}{d_P}-\tau_{RA_i} }_1
    = \sum_{i\in \mathcal I} d_{\lambda-e_i}
    \abs*{ \frac{c_{\mu\nu(\lambda-e_i)}}{d_P d_{\lambda-e_i}}-\frac{1}{d_R d_1}}
    = \sum_{i\in \mathcal I} \abs{p(i)-q(i)}\ .
  \end{align}
  Also,
  \begin{align}
    \tr`\big( \Pi_{A_2A_3}^2 )
    = \sum_{i\in \mathcal I} ( d_{\lambda-e_i} c_{\mu\nu(\lambda-e_i)})
    \frac{1}{d_{\lambda-e_i}^2}
    = \frac{d_P}{d_Rd_1} \sum_{i\in \mathcal I} \frac{p(i)}{q(i)}\ .
  \end{align}

  Now, the condition of the lemma can be written as
  \begin{align}
    1-\delta \leq \frac{p_i}{c_{\mu\nu\lambda}/d_P}\leq 1+\delta\ .
  \end{align}
  By summing over $i$, we get,
  \begin{align}
    \frac1{\abs{\mathcal I}(1+\delta)}
    \leq \frac{c_{\mu\nu\lambda}}{d_P}
    \leq \frac{1}{\abs{\mathcal I}( 1-\delta)}\ .
  \end{align}
  With some algebra, we obtain
  \begin{align}
    \abs*{ p_i - \frac{1}{\abs{\mathcal I}} }
    \leq \abs*{ p_i - \frac{c_{\mu\nu\lambda}}{d_P} }
    + \abs*{ \frac{c_{\mu\nu\lambda}}{d_P} - \frac{1}{\abs{\mathcal I}} }
    \leq \frac{\delta}{\abs{\mathcal I}(1-\delta) }
    + \frac{\delta}{\abs{\mathcal I}(1-\delta) } = \frac{4\delta}{\abs{\mathcal I} }\ .
  \end{align}
  Similarly,
  $\abs[\big]{ q_i - 1/\abs{\mathcal I} } \leq {4\delta}/{\abs{\mathcal{I}}}$.
  Therefore,
  \begin{align}
  \norm*{ \frac{ \Pi_{RA_i}}{d_P}-\tau_{RA_i} }_1
    = \sum_{i\in \mathcal I} \abs{p_i-q_i}
    \leq \sum_{i\in \mathcal I} \abs*{ p_i-\frac1{\abs{\mathcal I}}}
    +\abs*{q_i-\frac1{\abs{\mathcal I}}}
    \leq 8\delta\ .
  \end{align}
  On the other hand, $p_i\leq (1+\delta)\frac{c_{\mu\nu\lambda}}{d_P}$, and
  $1/q_i \leq \frac{d_Rd_1}{d_\lambda(1-\delta)}$. Now, we get that
  $p_i/q_i \leq (1+\delta)^2/(1-\delta)^2$. So,
  \begin{align}
    \tr`\big(\Pi_{A_2A_3}^2)
    = \frac{d_P}{d_Rd_1}
    \sum_{i\in \mathcal I}\frac{p_i}{q_i}
    \leq \frac{d_P\abs{\mathcal I}}{d_Rd_1} \left(\frac{1+\delta}{1-\delta}\right)^2
    \nonumber\\
    \leq \frac{d_P^2}{d_Rd_1 c_{\mu\nu\lambda}}\frac{(1+\delta)^2}{(1-\delta)^3}
    \leq 5^2 \frac{d_P^2}{d_Rd_1 c_{\mu\nu\lambda}}\ .
  \end{align}
  Substituting in the formula for the fidelity completes the proof.
\end{proof}
Next, we would like to prove our main theorem on random constructions,
\cref{thm:main_random}. Before that, we need to show that the
Littlewood-Richardson coefficients can grow significantly with the size the
Young diagrams. This is the content of next lemma:
\begin{lemma}
  \label{lem:LR-growth}
  In the chamber complex of Littlewood-Richardson coefficients discussed in
  \cref{sec:random-codes}, there are chambers in which $c_{\mu\nu\lambda}$ is
 a polynomial of degree $\binom{d_L-1}{2}$
\end{lemma}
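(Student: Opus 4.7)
My plan is to invoke the Knutson--Tao hive model, which expresses the Littlewood--Richardson coefficient $c_{\mu\nu}^{\lambda}$ as the number of integer lattice points in a rational polytope. Concretely, for partitions $\mu,\nu,\lambda$ of length at most $d_L$ with $|\mu|+|\nu|=|\lambda|$, $c_{\mu\nu}^{\lambda}$ counts integer-valued \emph{hives} on a size-$d_L$ triangular grid whose boundary entries are determined by $(\mu,\nu,\lambda)$ and whose interior entries satisfy the rhombus (concavity) inequalities. A size-$d_L$ triangle has $\binom{d_L+2}{2}$ vertices, $3d_L$ of which lie on the boundary and $\binom{d_L-1}{2}$ in the interior; hence $c_{\mu\nu\lambda}$ equals the number of lattice points in a rational polytope $P_{\mu\nu\lambda}\subset\mathbb{R}^{\binom{d_L-1}{2}}$ cut out by inequalities with $\{0,\pm1\}$ coefficients and constant terms that are linear in $(\mu,\nu,\lambda)$.

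Applying Ehrhart's theorem to the uniform scaling $(\mu,\nu,\lambda)\to N(\mu,\nu,\lambda)$ yields $c_{N\mu,N\nu,N\lambda}=\abs{N P_{\mu\nu\lambda}\cap\mathbb{Z}^{\binom{d_L-1}{2}}}$, a quasi-polynomial in $N$ of degree $\dim P_{\mu\nu\lambda}$. Combining this with Rassart's piecewise-polynomiality theorem, the polynomial representing $c_{\mu\nu\lambda}$ within a chamber has total degree equal to $\dim P_{\mu\nu\lambda}$, which is at most $\binom{d_L-1}{2}$. Therefore it suffices to exhibit a single triple $(\mu,\nu,\lambda)$ for which the hive polytope $P_{\mu\nu\lambda}$ is full-dimensional in $\mathbb{R}^{\binom{d_L-1}{2}}$.

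To produce such a triple I would take $\mu=\nu=M\rho$, $\lambda=2M\rho$, where $\rho=(d_L-1,d_L-2,\ldots,1,0)$ is the staircase partition and $M$ is a sufficiently large positive integer. I would then construct an interior hive by sampling a smooth strictly concave function $\varphi$ on the triangle whose linear interpolation along the boundary matches the data prescribed by $(M\rho,M\rho,2M\rho)$. Strict concavity of $\varphi$ makes every rhombus inequality strict at the resulting interior point, and by continuity a full-dimensional open ball of hives also satisfies all rhombus inequalities, so $\dim P_{M\rho,M\rho,2M\rho}=\binom{d_L-1}{2}$ and the lemma follows.

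The main obstacle is this last construction step: matching the piecewise-linear boundary data with a globally strictly concave function on the triangle. I would handle it by starting from the affine extension of the prescribed boundary and perturbing by a small strictly concave bump localized in the triangle's interior; for $M$ large enough, this perturbation is small relative to the boundary slopes and therefore does not spoil any rhombus inequality involving a boundary rhombus. Alternatively, one can bypass this explicit construction altogether by appealing to the Knutson--Tao saturation theorem together with standard facts that generic boundary data in the interior of the Horn cone yield a full-dimensional hive polytope.
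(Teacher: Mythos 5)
Your approach is genuinely different from the paper's. The paper avoids the hive model entirely: it starts from the identity $d_\mu d_\nu=\sum_\lambda c_{\mu\nu\lambda}d_\lambda$, takes $\mu=N\hat\mu$, $\nu=N\hat\nu$, and compares growth rates. Since $d_\mu,d_\nu$ and the typical $d_\lambda$ all grow as $N^{d_L(d_L-1)/2}$ by the Weyl dimension formula, while the number of $\lambda$'s with $c_{\mu\nu\lambda}\neq 0$ grows only as $N^{d_L-1}$ (the chamber complex is $(3d_L-1)$-dimensional and two $d_L$-dimensional slots are fixed), the \emph{average} LR coefficient must grow as $N^{(d_L-1)(d_L-2)/2}$. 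Some chamber polynomial therefore has degree at least $\binom{d_L-1}{2}$, and Rassart's Corollary~4.2 gives the matching upper bound. This averaging argument sidesteps having to produce any explicit witness triple. Your strategy---hive polytopes plus Ehrhart plus piecewise polynomiality, then exhibiting a full-dimensional hive polytope---is a legitimate alternative route, and the preliminaries (counting $\binom{d_L-1}{2}$ interior hive vertices, Ehrhart degree $=\dim P_{\mu\nu\lambda}$) are correct.

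However, your explicit witness fails. For $\mu=\nu=M\rho$ and $\lambda=2M\rho$ you have $\lambda_1=\mu_1+\nu_1$, which is equality in one of Horn's inequalities, so this triple lies on a facet of the LR cone, not in its interior. Worse, $2\rho$ is the highest weight of $V_\rho\otimes V_\rho$, so $c_{M\rho,M\rho}^{2M\rho}=1$ identically in $M$; by your own Ehrhart reasoning that forces $\dim P_{M\rho,M\rho,2M\rho}=0$. The concave-bump construction you describe cannot rescue this: the equality $\lambda_1=\mu_1+\nu_1$ forces a rhombus equality in the top corner of the hive, and no interior perturbation is compatible with that. You would instead need a triple strictly inside the Horn cone, e.g.\ $\mu=\nu=\rho$ but $\lambda$ a generic dominant weight with $|\lambda|=2|\rho|$ and $\lambda_1<\mu_1+\nu_1$; this is exactly the "standard facts about generic boundary data in the interior of the Horn cone" you mention in your fallback sentence, but that fallback is where the actual content of your lower bound lives and it is left unverified. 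As written, your proof of the existence of a degree-$\binom{d_L-1}{2}$ chamber has a gap.
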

\begin{proof}[*lem:LR-growth]
  Consider the following relation for the Littlewood-Richardson coefficients,
  derived by comparing dimensions:
  \begin{equation}
    \label{eq:most-trivial}
    d_\mu d_\nu=\sum_{\lambda}{c_{\mu\nu\lambda}d_\lambda}.
  \end{equation}
  Define the average of Littlewood-Richardson coefficients weighted by the
  dimension $d_\lambda$, i.e.,
  \begin{align}
    \overline c
    = \frac{\sum_{\lambda}{c_{\mu\nu\lambda}d_\lambda}}{\sum_{\lambda}{d_\lambda}}\ .
  \end{align}
  Also, assume that the number of $\lambda$'s where $c_{\mu\nu\lambda}\neq 0$ is
  $N_{\mu\nu}$ and the average dimension of $d_\lambda$, averaged over such
  $\lambda$'s is,
  \begin{align}
  \overline d
    = \frac{\sum_{\lambda \text{ where }c_{\mu\nu\lambda}\neq 0}{d_\lambda}}{N_{\mu\nu}}\ .
  \end{align}

  Now~\eqref{eq:most-trivial} becomes
  \begin{align}
    \frac{d_\mu d_\nu}{N_{\mu\nu}\overline d}=\overline c\ .
  \end{align}
  Consider the case where $\mu=N \mu_0$ and $\nu=N \nu_0$, for some fixed
  $\mu_0$ and $\nu_0$ and large $N$.  It is known that the dimension of the
  chamber complex is $3d_L-1$, see, e.g., Proposition~1 in~\cite{knutson2004honeycomb}.  Therefore, as two
  $d_L$ dimensional axis are fixed by $\mu$ and $\nu$, the section of the cone
  corresponding to $c_{\mu\nu\lambda}\neq 0$ is $d_L-1$ dimensional, and
  therefore $N_{\mu\nu}=O(N^{d_L-1})$. From the Weyl dimension formula, it is
  known that $d_\mu$, $d_\nu$ , and $\overline d$ are all
  $O\left(N^{d_L(d_L-1)/2}\right)$. So,
  \begin{align}
    \overline c= O\left((N^{(d_L-1)(d_L-2)/2}\right)\ .
  \end{align}
  This shows that there exists at least one chamber whose polynomial is at least
  degree $\binom{d_L-1}{2}$.  On the other hand, it is known that degree of the
  polynomials are bounded above by $\binom{d_L-1}{2}$ (see Corollary~4.2
  in~\cite{rassart2004polynomiality}). This completes the proof.
\end{proof}

\begin{proof}[*thm:main_random]
  We start from \cref{eq:as_1,eq:as_2}.  If we set $C_0=\max(C_0',C_0'')$ and
  $N_0=\max(N_0', N_0'')$, we have
  \begin{subequations}
    \begin{eqnarray}
      1-\frac {C_0}{N} \leq &\frac{d_{\lambda-e_i}}{d_\lambda}
      &\leq 1+\frac{C_0}{N}\ ;\\
      1-\frac {C_0}{N} \leq &~\frac{c_{\mu\nu(\lambda-e_i)}}{c_{\mu\nu\lambda}}
                              ~&\leq 1+\frac{C_0}{N}\ .
    \end{eqnarray}
  \end{subequations}
  Further, suppose that $\hat \mu$,$\hat \nu$, and $\hat \lambda$ where chosen
  such that $c_{\mu\nu\lambda}$ grows superlinearly as a function of $N$. This
  is possible for $d_L\geq 4$ as a result of \cref{lem:LR-growth}.  Using this
  fact and \cref{lem:smooth}, we get
  \begin{align}
    \EE(\epsilon^2_{\mathrm{e}}(\mathcal{N}^1\circ\mathcal{E}))
    = O( 1/N)\ .
  \end{align}
  In fact, the same relation holds for
  $\epsilon_{\mathrm{e}}(\mathcal{N}^2\circ\mathcal{E})$ and
  $\epsilon_{\mathrm{e}}(\mathcal{N}^3\circ\mathcal{E})$, and using the Markov
  inequality and the union bound we can show that there exists $\hat
  \mu$,$\hat \nu$, and $\hat \lambda$ for which
  \begin{align}
    \max`\big(\epsilon^2_{\mathrm{e}}(\mathcal{N}^1\circ\mathcal{E}),
    \epsilon^2_{\mathrm{e}}(\mathcal{N}^2\circ\mathcal{E}),
    \epsilon^2_{\mathrm{e}}(\mathcal{N}^3\circ\mathcal{E}))
    = O(1/N)\ .
  \end{align}
  As a consequence, and using \cref{lem:inv-fid-to-global-fid}, we
  get~\eqref{eq:main_thm_1}.  The second equation,~\eqref{eq:main_thm_2},
  follows from~\eqref{eq:main_thm_1} using the Weyl dimension which indicates that
 $d_i=O\left(N^{d_L(d_L-1)/2}\right)$.
\end{proof}


\section{Some general lemmas}
A first lemma relates the correctability of the code to the environment's
ability to distinguish two states in terms of the trace distance.
\begin{lemma}
  \label{lemma:aqecc-environtrdist}
  For any encoding channel $\mathcal{E}$ and noise channel $\mathcal{N}$, and
  for any two logical states $\sigma_L, \sigma'_L$, and if
  $\widehat{\mathcal{N}\circ\mathcal{E}}$ is a complementary channel of
  $\mathcal{N}\circ\mathcal{E}$, we have that
  \begin{align}
    \epsilon_{\mathrm{worst}}(\mathcal{N}\circ\mathcal{E}) \geqslant \frac12\,
    \delta`\big( \widehat{\mathcal{N}\circ\mathcal{E}}(\sigma_{L}),
    \widehat{\mathcal{N}\circ\mathcal{E}}(\sigma'_{L}) )\ .
  \end{align}
\end{lemma}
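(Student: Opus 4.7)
The plan is to chain together the Bény--Oreshkov duality for the worst-case entanglement fidelity with standard inequalities between the trace distance, the fidelity, and the purified distance $P(\rho,\sigma)=\sqrt{1-F^2(\rho,\sigma)}$.

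First I would invoke the Bény--Oreshkov identity~\eqref{eq:Beny-Oreshkov-f-worst}: there exists a state $\zeta$ on the environment such that, for every bipartite pure state $\ket\phi_{LR}$,
\begin{align*}
F\bigl((\widehat{\mathcal{N}\circ\mathcal{E}}\otimes\IdentProc[R]{})(\phi_{LR}),\,\zeta\otimes\phi_R\bigr)\;\geqslant\;f_{\mathrm{worst}}(\mathcal{N}\circ\mathcal{E})\;=\;\sqrt{1-\epsilon_{\mathrm{worst}}^2}\ ,
\end{align*}
or equivalently $P\bigl(\widehat{\mathcal{N}\circ\mathcal{E}}(\phi_{LR}),\,\zeta\otimes\phi_R\bigr)\leqslant \epsilon_{\mathrm{worst}}$. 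Crucially, the optimal $\zeta$ does not depend on $\ket\phi_{LR}$.

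Next, for each of the two logical states I would choose a purification $\ket{\phi}_{LR}$ of $\sigma_L$, respectively $\ket{\phi'}_{LR}$ of $\sigma'_L$, and apply the above bound to each. Since the fidelity is monotonically non-decreasing under partial trace (and therefore the purified distance is non-increasing), tracing out $R$ gives
\begin{align*}
P\bigl(\widehat{\mathcal{N}\circ\mathcal{E}}(\sigma_L),\,\zeta\bigr)\;\leqslant\;\epsilon_{\mathrm{worst}}\ ,\qquad P\bigl(\widehat{\mathcal{N}\circ\mathcal{E}}(\sigma'_L),\,\zeta\bigr)\;\leqslant\;\epsilon_{\mathrm{worst}}\ .
\end{align*}
Applying the triangle inequality for $P$ then yields
\begin{align*}
P\bigl(\widehat{\mathcal{N}\circ\mathcal{E}}(\sigma_L),\,\widehat{\mathcal{N}\circ\mathcal{E}}(\sigma'_L)\bigr)\;\leqslant\;2\epsilon_{\mathrm{worst}}\ ,
\end{align*}
and the standard inequality $\delta(\rho,\tau)\leqslant P(\rho,\tau)$ (already used elsewhere in the appendix) finishes the proof after dividing by $2$.

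There is no real obstacle: the only point requiring a bit of care is that the same $\zeta$ works for both logical states, which is precisely what the Bény--Oreshkov characterization of $f_{\mathrm{worst}}$ guarantees (the maximization over $\zeta$ sits outside the minimization over inputs). Everything else is the familiar monotonicity of fidelity under partial trace, the triangle inequality for the purified distance, and its relation to the trace distance.
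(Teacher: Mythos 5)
Your proof is correct and follows essentially the same route as the paper: fix the optimal $\zeta$ from the B\'eny--Oreshkov characterization of $f_{\mathrm{worst}}$, bound the environment output of each of the two logical states against $\zeta$, and finish with a triangle inequality and the relation $\delta\leqslant\sqrt{1-F^2}$. The only cosmetic differences are that the paper tests against the unentangled inputs $\sigma_L\otimes\proj{0}_R$ (so the reference factor cancels immediately in the trace distance), applies $\delta\leqslant\sqrt{1-F^2}$ first, and then uses the triangle inequality for $\delta$, whereas you use purifications plus monotonicity of $P$ under partial trace, apply the triangle inequality for $P$, and only convert to $\delta$ at the end; both orderings are valid and yield the same bound.
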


\begin{proof}[*lemma:aqecc-environtrdist]
  Let $\zeta$ be the state achieving the optimum
  in~\eqref{eq:Beny-Oreshkov-f-worst}.  We have
  \begin{align}
    \epsilon_{\mathrm{worst}}^2(\mathcal{N}\circ\mathcal{E})
    &= 1 - f_{\mathrm{worst}}^2(\mathcal{N}\circ\mathcal{E})
      \nonumber\\
    &= 1 - \min_{\phi_{L R}}
      F^2( \widehat{\mathcal{N}\circ\mathcal{E}}(\phi_{L R}),
      \mathcal{T}_\zeta(\phi_{L R}) )
      \nonumber\\
    &= \max_{\phi_{LR}}
      \bigl[ 1 - F^2( \widehat{\mathcal{N}\circ\mathcal{E}}(\phi_{L R}),
      \zeta\otimes\phi_R) \bigr]
      \nonumber\\
    &\geqslant \max_{\phi_{LR}}
      \delta`\big(\widehat{\mathcal{N}\circ\mathcal{E}}(\phi_{L R}),
      \zeta\otimes\phi_R)^2\ ,
  \end{align}
  recalling that the trace distance obeys
  $\delta(\rho,\sigma) \leqslant \sqrt{1-F^2(\rho,\sigma)}$ (see,
  e.g.,~\cite{Tomamichel2010IEEE_Duality}).  Choosing the optimization
  candidates $\sigma_L\otimes \proj0_R$ and $\sigma'_L\otimes \proj0_R$ in the
  last inequality, we obtain both
  \begin{align}
    \epsilon_{\mathrm{worst}}(\mathcal{N}\circ\mathcal{E})
    &\geqslant
      \delta`\big( \widehat{\mathcal{N}\circ\mathcal{E}}(\sigma_L), 
      \zeta )\ ;\\
    \epsilon_{\mathrm{worst}}(\mathcal{N}\circ\mathcal{E})
    &\geqslant
      \delta`\big( \widehat{\mathcal{N}\circ\mathcal{E}}(\sigma'_L), 
      \zeta )\ .
  \end{align}
  Hence, by triangle inequality,
  \begin{align}
    \delta`\big(\widehat{\mathcal{N}\circ\mathcal{E}}(\sigma_L) ,
    \widehat{\mathcal{N}\circ\mathcal{E}}(\sigma'_L) )
    \leqslant 
    \delta`\big( \widehat{\mathcal{N}\circ\mathcal{E}}(\sigma_L), \zeta) +
    \delta`\big( \zeta, \widehat{\mathcal{N}\circ\mathcal{E}}(\sigma'_L) )
    \leqslant 2 \, \epsilon_{\mathrm{worst}}(\mathcal{N}\circ\mathcal{E}) \ .
    \tag*\qedhere
  \end{align}
\end{proof}

The following lemma relates the global fidelity of the code to the fidelities
corresponding to the correction of individual errors.  Note that we do not
necessarily expect a similar result to hold for the worst-case entanglement
fidelity, because the worst-case input state might be different for each erasure
event.

\begin{lemma}
  \label{lem:inv-fid-to-global-fid}
  Let $\mathcal{N}_{A\to A}^\alpha$ and
  $\mathcal{N}_{A\to AC}(\cdot) = \sum q_\alpha
  \mathcal{N}^\alpha(\cdot)\otimes\proj{\alpha}_C$ correspond to a noise model of erasures
  at known locations, as given in~\eqref{eq:noise-map-general-alpha}.  Then, for
  any $\ket\phi_{LR}$, the average entanglement fidelity of the code with
  respect to $\ket\phi_{LR}$ is directly related to the individual fidelities of
  recovery for each possible erasure:
  \begin{align}
    f^2_{\ket{\phi}}(\mathcal{N}\circ\mathcal{E})
    = \sum q_\alpha \, f^2_{\ket{\phi}}(\mathcal{N}^\alpha\circ\mathcal{E})\ ,
  \end{align}
  and consequently,
  \begin{align}
    \epsilon^2_{\ket{\phi}}(\mathcal{N}\circ\mathcal{E})
    = \sum q_\alpha \, \epsilon^2_{\ket{\phi}}(\mathcal{N}^\alpha\circ\mathcal{E})\ .
  \end{align}
\end{lemma}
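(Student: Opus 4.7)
The plan is to exploit two basic facts in sequence: first, that the fixed-input entanglement fidelity $F^2_{\ket\phi}(\cdot)$ is linear in its channel argument, and second, that since the register $C$ appended to the output of $\mathcal{N}$ is classical, the optimal recovery map decomposes into a convex mixture of per-outcome recoveries.

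First, I would argue that any recovery map $\mathcal{R}_{AC\to L}$ may be taken, without loss in the achievable fidelity, to be of the ``conditional'' form $\mathcal{R}(\rho_{AC}) = \sum_\alpha \mathcal{R}^\alpha\bigl(\bra{\alpha}_C\rho_{AC}\ket{\alpha}_C\bigr)$, with each $\mathcal{R}^\alpha_{A\to L}$ a completely positive trace-preserving map. The reason is that the input to the recovery, namely $(\mathcal{N}\circ\mathcal{E}\otimes \IdentProc[R]{})(\proj\phi_{LR})$, is diagonal in the $\{\ket\alpha_C\}$ basis on the register $C$ by construction of the noise channel~\eqref{eq:noise-map-general-alpha}, and is therefore invariant under the dephasing channel on $C$; pre-composing any recovery with this dephasing yields a map of the stated conditional form without changing its action on the input state. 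Substituting into the overall channel gives
\begin{align*}
  \mathcal{R}\circ\mathcal{N}\circ\mathcal{E}
  \;=\; \sum_\alpha q_\alpha\,\mathcal{R}^\alpha\circ\mathcal{N}^\alpha\circ\mathcal{E}
\end{align*}
when restricted to logical inputs.

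Next, using the linearity of $F^2_{\ket\phi}(\mathcal{K}) = \bra\phi (\mathcal{K}\otimes\IdentProc[]{})(\proj\phi) \ket\phi$ in $\mathcal{K}$, I obtain
\begin{align*}
  F^2_{\ket\phi}(\mathcal{R}\circ\mathcal{N}\circ\mathcal{E})
  \;=\; \sum_\alpha q_\alpha\,F^2_{\ket\phi}(\mathcal{R}^\alpha\circ\mathcal{N}^\alpha\circ\mathcal{E})\ .
\end{align*}
Because the parameters $\{\mathcal{R}^\alpha\}$ are now independent of each other, the maximum over $\mathcal{R}$ on the left is attained by maximizing each summand on the right separately, yielding the identity $f^2_{\ket\phi}(\mathcal{N}\circ\mathcal{E}) = \sum_\alpha q_\alpha\,f^2_{\ket\phi}(\mathcal{N}^\alpha\circ\mathcal{E})$.

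The statement for $\epsilon^2_{\ket\phi}$ then follows immediately by subtracting both sides from $1 = \sum_\alpha q_\alpha$ and using the definition $\epsilon^2_{\ket\phi} = 1 - f^2_{\ket\phi}$. I do not anticipate any real obstacle in carrying out this plan; the only moderately subtle point is the dephasing observation used in the first step to justify restricting to conditional recoveries, which is a routine consequence of the fact that any classical-quantum state is invariant under dephasing in its classical basis.
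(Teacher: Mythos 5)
Your proof is correct and rests on the same underlying idea as the paper's: the classicality of the register $C$ means the optimal recovery decomposes into conditional maps $\mathcal{R}^\alpha$, one per erasure outcome. The paper proves the two inequalities $\leq$ and $\geq$ separately (bounding $\max_{\mathcal{R}}$ by the per-$\alpha$ maxima, then exhibiting the measure-and-recover map that saturates it), whereas you streamline this by first invoking invariance under dephasing on $C$ to restrict to conditional recoveries without loss of generality and then using linearity and independence of the per-$\alpha$ optimizations; the substance is the same.
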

\begin{proof}[*lem:inv-fid-to-global-fid]
  The average entanglement fidelity associated with the different noise
  channels can be written as:
  \begin{subequations}
    \begin{align}
      f^2_{\ket{\phi}}(\mathcal{N}\circ\mathcal{E})
      &= \max_{\mathcal{R}} \,
        \bra\phi_{LR}`\big[
        \mathcal{R}\circ\mathcal{N}\circ\mathcal{E}(\phi_{LR}) ] \ket\phi_{LR}\ ;
      \\
      f^2_{\ket{\phi}}(\mathcal{N}^\alpha\circ\mathcal{E})
      &= \max_{\mathcal{R}^\alpha} \,
        \bra\phi_{LR}`\big[
        \mathcal{R}^\alpha\circ\mathcal{N}^\alpha\circ\mathcal{E}(\phi_{LR}) ] \ket\phi_{LR}\ ,
        \label{eq:yuporpiqfs}
    \end{align}
  \end{subequations}
  where the optimizations range over recovery channels $\mathcal{R}_{AC\to L}$
  and $\mathcal{R}_{A\to L}^\alpha$, respectively.  We have
  \begin{align}
    \hspace*{4em}
    &\hspace*{-4em}
    \max_{\mathcal{R}} \,
    \bra\phi_{LR} `\big[
    \mathcal{R}\circ\mathcal{N}\circ\mathcal{E}(\phi_{LR}) ] \ket\phi_{LR}
      \nonumber\\
    &= 
      \max_{\mathcal{R}} \,\sum q_\alpha
      \bra\phi_{LR} `\big[
      \mathcal{R}`*(\proj{\alpha}_C\otimes`\big(\mathcal{N}^\alpha\circ\mathcal{E})(\phi_{LR})) ]
      \ket\phi_{LR}
      \nonumber\\
    &\leqslant \sum q_\alpha \max_{\mathcal{R}} \,
      \bra\phi_{LR} `\big[
      \mathcal{R}`*(\proj{\alpha}_C\otimes`\big(\mathcal{N}^\alpha\circ\mathcal{E})(\phi_{LR})) ]
      \ket\phi_{LR}
      \nonumber\\
    &\leqslant \sum q_\alpha \max_{\mathcal{R}^\alpha_{A\to L}} \,
      \bra\phi_{LR} `\big[
      \mathcal{R}^\alpha`*(`\big(\mathcal{N}^\alpha\circ\mathcal{E})(\phi_{LR})) ]
      \ket\phi_{LR}\ ,
  \end{align}
  showing that
  \begin{align}
    f^2_{\ket{\phi}}(\mathcal{N}\circ\mathcal{E})
    \leqslant
    \sum q_\alpha f^2_{\ket{\phi}}(\mathcal{N}^\alpha\circ\mathcal{E})\ .
  \end{align}

  Physically, the reverse inequality follows because a global recovery strategy
  is to measure the register containing the record that indicates which error
  occurred, and to apply the optimal recovery strategy corresponding to that
  error.  Specifically, if $\mathcal{R}^\alpha_{A\to L}$ are optimal choices
  in~\eqref{eq:yuporpiqfs} for each $\alpha$, then
  we define 
  \begin{align}
    \mathcal{R}_{AC\to L}(\cdot)
    = \sum \mathcal{R}^\alpha_{A\to L}`\big(\dmatrixel{\alpha}{ (\cdot) }_C)\ .
  \end{align}
  Then,
  \begin{align}
    f^2_{\ket{\phi}}(\mathcal{N}\circ\mathcal{E})
    &\geqslant
      \bra\phi_{LR}`\big[
      \mathcal{R}\circ\mathcal{N}\circ\mathcal{E}(\phi_{LR}) ] \ket\phi_{LR}
      \nonumber\\
    &= \sum q_\alpha \bra\phi_{LR}`\big[
      \mathcal{R}`\big( \proj{\alpha}_C\otimes`\big(\mathcal{N}^{\alpha}\circ\mathcal{E})(\phi_{LR}) ) ]
      \ket\phi_{LR}
      \nonumber\\
    &= \sum q_\alpha \bra\phi_{LR}`\big[
      \mathcal{R}^\alpha`\big( `\big(\mathcal{N}^{\alpha}\circ\mathcal{E})(\phi_{LR}) ) ]
      \ket\phi_{LR}
      \nonumber\\
    &= \sum q_\alpha f^2_{\ket{\phi}}(\mathcal{N}^\alpha\circ\mathcal{E})\ ,
  \end{align}
  as claimed.
\end{proof}

The following lemma is a technical consequence of the concavity of the fidelity
function.

\begin{lemma}
  \label{lemma:fidelity-trick-op-less-than}
  Let $\rho,\sigma$ be two (normalized) quantum states. Let $\tau\geqslant 0$
  with $\rho\geqslant\tau$.  Then
  \begin{align}
    F(\rho,\sigma) \geqslant \tr(\tau)\,F`\Big(\frac{\tau}{\tr(\tau)}, \sigma)\ .
  \end{align}
\end{lemma}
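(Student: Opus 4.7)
The plan is to exploit the concavity of the fidelity in its first argument. Since $\rho \geqslant \tau \geqslant 0$, the operator $\rho - \tau$ is positive semidefinite, and its trace satisfies $\tr(\rho - \tau) = 1 - \tr(\tau) \geqslant 0$. Thus, whenever $0 < \tr(\tau) < 1$, I can decompose $\rho$ as a genuine convex combination of two normalized quantum states,
\begin{align}
  \rho = \tr(\tau) \cdot \frac{\tau}{\tr(\tau)} + `\big(1 - \tr(\tau)) \cdot \frac{\rho - \tau}{1 - \tr(\tau)}\ .
\end{align}

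Then, invoking concavity of $\sigma' \mapsto F(\sigma',\sigma)$ (a standard consequence of joint concavity of the fidelity, as in~\cite{BookNielsenChuang2000}), I obtain
\begin{align}
  F(\rho,\sigma)
  \geqslant
  \tr(\tau)\, F`\Big(\frac{\tau}{\tr(\tau)}, \sigma)
  + `\big(1 - \tr(\tau))\, F`\Big(\frac{\rho-\tau}{1-\tr(\tau)}, \sigma)\ .
\end{align}
Since the fidelity between any two quantum states is nonnegative, dropping the second term yields the claimed inequality. The two boundary cases are handled separately and trivially: if $\tr(\tau) = 0$, the right-hand side is zero; if $\tr(\tau) = 1$, then $\rho - \tau \geqslant 0$ together with $\tr(\rho-\tau) = 0$ forces $\rho = \tau$, and the inequality reduces to equality.

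There is no real obstacle here; the only thing to be careful about is justifying that $\rho - \tau$ has a well-defined (nonnegative) trace and can be normalized to a state when nonzero, and to separately address the degenerate cases $\tr(\tau) \in \{0,1\}$. The rest is an immediate application of concavity of fidelity together with its nonnegativity.
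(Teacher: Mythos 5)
Your proof is correct and follows essentially the same route as the paper's: decompose $\rho$ as a convex combination of the normalized $\tau$ and the normalized remainder $\rho-\tau$, apply concavity of the fidelity in one argument, and drop the nonnegative second term. Your extra care with the degenerate cases $\tr(\tau)\in\{0,1\}$ is a minor refinement that the paper elides.
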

\begin{proof}[*lemma:fidelity-trick-op-less-than]
  Since $\rho\geqslant \tau$, we have $\rho - \tau =: \Delta \geqslant 0$.  Then
  $\rho = \tau+ \Delta = \tr(\tau)\,\frac{\tau}{\tr(\tau)} +
  \tr(\Delta)\,\frac{\Delta}{\tr(\Delta)}$, and by concavity of the fidelity,
  \begin{align}
    F(\rho,\sigma) = F`\Big(\tr(\tau)\,\frac{\tau}{\tr(\tau)} +
    \tr(\Delta)\,\frac{\Delta}{\tr(\Delta)} , \sigma)
    \geqslant  \tr(\tau)\, F`\Big(\frac{\tau}{\tr(\tau)} , \sigma) +
    \tr(\Delta)\, F`\Big(\frac{\Delta}{\tr(\Delta)} , \sigma)\ .
  \end{align}
  The claim follows by noting that
  $\tr(\Delta)\, F`\big(\Delta/\tr(\Delta) , \sigma)\geqslant 0$.
\end{proof}


%
\catcode`\&=12\relax %
\def\doibase#110.{https://doi.org/10.}%
\def\ {\unskip\space}%
\bibsep=2pt\relax

\end{document}